\newtheorem{theorem}{Theorem}[section]
\newtheorem{corollary}[theorem]{Corollary}
\newtheorem{lemma}[theorem]{Lemma}
\newtheorem{proposition}[theorem]{Proposition}
\newtheorem{remark}[theorem]{Remark}
\newtheorem{definition}[theorem]{Definition}
\renewcommand{\Re}{\operatorname{Re}}
\renewcommand{\Im}{\operatorname{Im}}
\begin{document}

\title{The Elliptic Ginibre Ensemble: A Unifying Approach to Local and Global Statistics\\ for Higher Dimensions}

\author{G. Akemann} 
 \email[Corresponding author: ]{akemann@physik.uni-bielefeld.de}
 \affiliation{
 Faculty of Physics, Bielefeld University, P.O. Box 100131, 
D-33501 Bielefeld, Germany 
}
\author{M. Duits}%
 \email{duits@kth.se}
\affiliation{
Department of Mathematics, Royal Institute of Technology (KTH), SE10044 Stockholm, Sweden.
}

\author{L. D. Molag}
 \email{lmolag@math.uni-bielefeld.de}
\affiliation{Faculty of Mathematics and Faculty of Physics, Bielefeld University, P.O. Box 100131, 
D-33501 Bielefeld, Germany 
%
}%

\date{\today} 

\begin{abstract}
The elliptic Ginibre ensemble of complex non-Hermitian random matrices allows to interpolate between the rotationally invariant Ginibre ensemble and the Gaussian unitary ensemble of Hermitian random matrices. It corresponds to a two-dimensional one-component Coulomb gas in a quadrupolar field, at inverse temperature $\beta=2$. Furthermore, it represents a determinantal point process in the complex plane with corresponding kernel of planar Hermite polynomials. Our main tool is a saddle point analysis of a single contour integral representation of this kernel. 
We provide a unifying approach to rigorously derive several known and new results of local and global spectral statistics, including in higher dimensions. 
First, we prove the global statistics in the elliptic Ginibre ensemble first derived by Forrester and Jancovici. The limiting kernel receives its main contribution from the boundary of the limiting elliptic droplet of support. In the Hermitian limit, there is a known correspondence between non-interacting fermions in a trap in $d$ real dimensions $\mathbb{R}^d$ and the $d$-dimensional harmonic oscillator. We present a rigorous proof for the local $d$-dimensional bulk (sine-) and edge (Airy-) kernel first defined by Dean et al., complementing recent results by Deleporte and Lambert. 
Using the same relation to the $d$-dimensional harmonic oscillator in  $d$ complex dimensions $\mathbb{C}^d$, we provide new local bulk and edge statistics at weak and strong non-Hermiticity, where the former interpolates between correlations in $d$ real and $d$ complex dimensions. For $\mathbb{C}^d$ with $d=1$ this corresponds to non-interacting fermions in a rotating trap.
\end{abstract}

\maketitle

\section{Introduction} \label{sec:1}

The mathematics of random matrices was developed by Dyson, Mehta, Wigner and others in the 50s and early 60s in the context of applications to nuclear physics. The topic continues to be very popular among mathematicians and physicists today. In mathematics, perhaps one of the reasons is that many different areas contribute from different angles, ranging from probability theory over combinatorics to analysis to name a few. We refer to \cite{LNV,AGZ} for introductions to various methods and classical results. 
In physics even today new applications continue to appear, and we will pick up upon a recent example on non-interacting fermions in a $d$ dimensional trap, compare\cite{DeDoMaSc3}.  A non exhaustive list of more applications can be found in \cite{ABD} .

In random matrix theory, a first general distinction is between Hermitian and non-Hermitian ensembles, with real respectively  complex eigenvalues. In the asymptotic expansion at large matrix dimension $N$ we distinguish local and global scales
that depend on the location in the spectrum. Typically, different techniques are applied to different regimes. For example, on global scales resolvent methods are a popular tool, where for instance free probability or Schwinger-Dyson also called loop equations apply. 
On a local scale examples are given by orthogonal polynomials and their asymptotic analysis, see e.g. \cite{AGZ} for all three approaches. 
Moreover, when switching from Hermitian to non-Hermitian ensembles, often these tools break down or have to be substantially modified.

It is one of the goals of this article to present a unifying approach for a specific example, where different regimes all follow essentially from a single tool, both for real and complex eigenvalues.  The model we will consider consists of Gaussian random matrices with complex normal elements, the complex elliptic Ginibre ensemble \cite{Girko,SCSS}, extending the classical ensembles of Ginibre \cite{Ginibre}. It may be possible to extend this approach to the elliptic symplectic and real Ginibre ensembles as well.
The complex elliptic Ginibre ensemble corresponds to a determinantal point process given by the 
 kernel of planar Hermite polynomials, to be defined below. At the same time it has its own, genuine statistical mechanics interpretation as a Coulomb gas in a quadrupolar field, and we refer to \cite{Peter} to a detailed discussion. 
We will use a single contour integral representation of the correlation kernel, to rigorously derive known results and uncover new results, including higher order corrections. Due to the intimate relation between the harmonic oscillator in $d$ dimensions and Hermite functions, we can extend our asymptotic analysis straightforwardly,  without encountering $d$ dimensional saddle points.
Furthermore, we can extend the regime of weak non-Hermiticity introduced by Fyodorov, Khoruzhenko and Sommers in $d=1$ \cite{FyKhSo1,FyKhSo2} to $d$ complex dimensions, that interpolates between real and complex eigenvalue statistics. 

In $d$ real dimensions we have seen many new results in the application to non-interacting fermions in a trap in the past 5 years. They can be described in terms of the $d$-dimensional harmonic oscillator Hamiltonian, with the eigenfunctions given in terms of Hermite functions, and we refer to a recent review by Dean, Le Doussal, Majumdar and Schehr \cite{DeDoMaSc3}. Moreover, also in $d = 1$ complex dimension such a map from a fermionic system to the complex Ginibre ensemble exists, see \cite{BSG}. It can be either viewed as fermions in $3$ dimensions confined to $2$ dimensions in a rotating harmonic trap, or equivalently as electrons in the plane subject to a perpendicular magnetic field, the so-called Landau-levels.  These results have already attracted mathematicians to very recently prove \cite{DeLa,Charlier} and extend existing results in the rapidly developing physics literature, compare \cite{DeDoMaSc3}. The elliptic Ginibre ensemble also has a physical interpretation as electrons in a quadrupolar magnetic field \cite{FoJa}. The limit of weak non-Hermiticity that we will investigate allows us to interpolate between the Hermite eigenfunctions in $d=1$ real dimension and the monomials of the Ginibre ensemble in $d=1$ complex dimension.

In the remainder of this section we will introduce our model, describe the relation to the $d$ dimensional harmonic oscillator, and state our results. The core of our steepest descent analysis is presented in Section \ref{sec:steepest}. In the subsequent sections our theorems are proven based on this analysis. Section \ref{sec:EGE} presents our results on the elliptic Ginibre ensemble, where among others we prove a result of Forrester and Jancovici \cite{FoJa} for the two-point cluster function, see also \cite{AmCr} for recent related results. In Section \ref{sec:fermions} we present the proofs for the local bulk and edge correlations for non-interacting fermions in $d$ real dimensions, the $d$ dimensional sine- and Airy-kernel, that were given in \cite{DeDoMaSc} for $d\geq2$, see also references in \cite{DeDoMaSc3}. Our proof complements a recent work \cite{DeLa} for non-Gaussian potentials. Finally in Section \ref{sec:general} we present new results in $d$ complex dimensions that display an interesting factorisation property. This section also includes the above mentioned weak non-Hermiticity regime, that interpolates to the results from the previous section.

\subsection{Introduction of the models}
A determinantal point process on $\mathbb C^d$ is a point process on $\mathbb C^d$  for which the correlation functions have the form 
$$
    \frac{1}{k!} \det \left(K(Z^{(j)},{Z^{(\ell)})} \right)_{j,\ell=1}^{k} dZ^{(1)} \cdots d Z^{(k)},
$$
where each of $d Z^{(1)}, \ldots, d Z^{(k)}$ are the $2d$ dimensional Lebesgue measure on $\mathbb C^d$ and $K: \mathbb C^d \times \mathbb C^d \to \mathbb C$ is a function called the correlation kernel. The correlation kernel determines the point process.  
In this paper we study a sequence of determinantal point processes corresponding to a family of kernels $\mathcal K_n: \mathbb C^d \times \mathbb C^d \to \mathbb C$, with $n \in \mathbb N$, defined by  
\begin{align} \label{eq:general_kernel_pre}
\mathcal K_n(Z,Z') = \frac{1}{\pi^d (1-\tau^2)^\frac{d}{2}} \sum_{\lvert j\rvert < n} \frac{\left(\frac{\tau}{2}\right)^{\lvert j\rvert}}{j_1!\cdots j_d!} \prod_{k=1}^d \sqrt{\omega(Z_{k}) \omega(Z'_{k})} H_{j_k} \left({\frac{Z_{k}}{\sqrt{2 \tau}}} \right)\overline{ H_{j_k}\left({\frac{Z'_{k}}{\sqrt{2 \tau}}} \right)},
\end{align}
where the summation is over all multi-indices $j=(j_1,\ldots,j_d)$, with $j_1, \ldots, j_d\geq 0$, such that $\lvert j \rvert = j_1+\ldots+j_d<n$,
\begin{equation}
    \omega(Z) = \exp\left(-\frac{1}{1-\tau^2} \left(|Z|^2-\frac{\tau}{2} (Z^2+\overline Z ^2)\right)\right)= \exp\left(-\frac{(\Re Z)^2}{1+\tau}-\frac{(\Im Z)^2}{1-\tau} \right),
\end{equation}
and $H_j$ is the Hermite polynomial of degree $j$, normalized such that 
\begin{equation} \label{eq:defHermitePolyNormalization}
    \frac{1}{\pi \sqrt{1-\tau^2}}\frac{\left(\frac{\tau}{2}\right)^j}{j!} \int_{\mathbb C} H_j \left({\frac{Z}{\sqrt{2 \tau}}}  \right) \overline{H_k\left({\frac{Z}{\sqrt{2 \tau}}} \right)}  \omega(Z) dZ = \delta_{jk}.
\end{equation}
From the orthogonality condition 
proven in \cite{EvM,PdF}
it is clear that the integral operator corresponding to the kernel $\mathcal K_n$ is a self-adjoint projection operator and thus this  kernel indeed (see for example \cite{Sosh}) defines a determinantal point process. Moreover, this process is an example of a so-called biorthogonal ensemble \cite{Bor}. The kernel in \eqref{eq:general_kernel_pre} is an example of a Bergman kernel on $\mathbb C^d$,  which were introduced by Berman \cite{Berman1}.  

The motivation for studying this process is that it is an umbrella for several specializations to models that have been studied independently in the literature. Indeed, for $0< \tau < 1$ and $d=1$ this is the 
complex
Elliptic Ginibre ensemble (EGE) 
\cite{Girko,SCSS}
and the point process can alternatively be introduced as the eigenvalues of random matrices  of the form $M = \sqrt{1+\tau} H_1 + i \sqrt{1-\tau} H_2$, where $H_1, H_2$ are $n\times n$ independent Hermitian matrices chosen from the Gaussian Unitary Ensemble (GUE) (i.e., with  distribution $\sim \exp -\text{Tr}(H_j^2),\ j=1,2$). In other words, we take $M$ randomly from the probability measure on the space of complex-valued $n \times n$ matrices that is proportional to 
\begin{align}
\sim
\exp\left(- \frac{1}{1-\tau^2} \text{Tr}{\left(M^* M - \frac{\tau}{2}  \left(M^2 +(M^*)^2\right)\right)}\right) dM.
\end{align}
Here $dM$ is the product  of the Lebesgue measure on each of the (complex-valued) 
matrix 
entries. 
The EGE has been studied intensively in the literature as it interpolates naturally between the classical finite Ginibre Ensemble $\tau=0$ \cite{Ginibre} and the GUE in the Hermitian limit $\tau \uparrow 1$. 
On a global scale the limiting support of the eigenvalues follows the elliptic law \cite{Girko,SCSS}, interpolating between the circular and semi-circle law.
On the level of local correlation functions the crossover from the Ginibre ensemble to GUE was first discussed 
in \cite{FyKhSo1,FyKhSo2}. 
It was called weak non-Hermiticity limit, see also   \cite{FyKhSo3} for a more detailed discussion, including the relation with planar  Hermite polynomials. An interesting feature of this crossover is this weak non-Hermiticity regime in which one lets $\tau \uparrow 0$ simultaneously as $n \to \infty$. In that  weak non-Hermiticity regime, the microscopic process in the bulk is given by the deformed sine kernel \cite{ACV}, and at the edge one gets a deformation of the Airy kernel \cite{Be}. See also \cite{AP} for an overview and  discussion of such deformed kernels, resulting from other ensembles. We will provide a new proof for these deformed kernels and compute similar deformations in higher dimensions. 

The Gaussian Unitary Ensemble can be  associated to the one dimensional harmonic oscillator (see for example \cite{Sosh}). Similarly, for $d>1$ the Hermitian limit $\tau \uparrow 1$ of \eqref{eq:general_kernel_pre} corresponds to the fermionic process associated to the  $d$-dimensional quantum harmonic oscillator:
$$
\sum_{j=1}^d \left(-\frac{d^2}{dx_j^2}+x^2_j\right).
$$
The eigenfunctions for a positive self-adjoint operator have the form
\begin{align}
    \Psi_{j_1,j_2,\ldots,j_d}(x) = \psi_{j_1}(x_1) \psi_{j_2}(x_2) \cdots \psi_{j_d}(x_d), \qquad  x\in \mathbb R^d, \qquad j_1,\ldots,j_d=0,1,2,\ldots
    \end{align}
    where
    \begin{align}
    \psi_j(x) = \frac{1}{\sqrt{2^j j! \sqrt\pi}} H_j(x) e^{-\frac{1}{2} x^2},  \qquad x\in\mathbb R,
    \end{align}
    and $H_j$ is the $j$-th Hermite polynomial. These eigenfunctions have the eigenvalue $j_1+j_2+\ldots+j_d+\frac{d}{2}$. Note that for $d>1$ the eigenvalues of this operator are degenerate. By forming and squaring the Slater determinant corresponding to all eigenfunctions with the $n$ smallest eigenvalues, we obtain  a probability measure on $\mathbb R^d$ that induces a determinantal point process with kernel 
    \begin{equation}\label{eq:kernelferm}
       \mathcal K_n^{\rm Fermi}(x,x')=
    \sum_{0\leq j_1+\ldots+ j_d \leq n-1} \Psi_{j_1,j_2,\ldots,j_d}(x) \Psi_{j_1,j_2,\ldots,j_d}(x').
    \end{equation}
    This is the Hermitian limit $\tau \uparrow 1$ of the determinantal point process above in the following sense 
    \begin{equation}\label{eq:deltalim}
     \lim_{\tau \uparrow 1} \mathcal K_n(x+iy,x'+i y') e^{+\frac{y'^2-y^2}{2(1-\tau)}} =\delta (y) \mathcal K_n^{\rm Fermi}(x,x').
     \end{equation}
Here $\delta(y)$ is the Dirac delta function 
in $d$ dimensions 
and this factor implies that, in the limit $\tau \uparrow 1$, all points will be 
in $\mathbb{R}^d$.
We have used that the kernel can be multiplied by a cocycle, without changing the correlation functions of the point process. This asymmetric choice results into one delta function per argument in front of the determinant in the correlation functions.

The fermionic process with kernel $\mathcal K_n^{\rm Fermi}$ has been studied by several authors. In the physics literature it  has been discussed in  for example in
\cite{KM,DeDoMaSc} where they computed the limiting processes at microscopic scales in the bulk and at the edge, see \cite{DeDoMaSc3} for a review and references.
Very recently, a rigorous derivation for these limiting processes has been found in \cite{DeLa} for more general potentials. We will present an alternative rigorous proof for the special case of the harmonic oscillator. 

The determinantal point process with general parameters $0<\tau < 1$ and $d>1$ is thus an overarching process for interesting and well-studied special cases. One of the main points of this writing is to discuss a unifying approach for the asymptotic study of this model. This approach will allow us to present alternative derivations of known results, as well as deriving new results for these special ensembles.  In particular, we give a rigorous computation of the 
global, long-range correlations for two points near the edge  of the elliptic Ginibre ensemble, a new derivation of the local scaling limits of the fermionic process on $\mathbb{R}^d$ and  local scaling limits in $d$ complex dimensions in the weak non-Hermiticity regime extending this.

\subsection{Statement of main results}

Our main results are on the asymptotic behavior of the determinantal point process with kernel \eqref{eq:general_kernel_pre} on $\mathbb C^d$. In this paragraph we will single out the most important results. The proofs will be given in later sections.

 \subsubsection{The Elliptic Ginibre Ensemble}
First of all, our approach can be used to prove well-known results for the  the elliptic Ginibre ensemble, which corresponds to \eqref{eq:general_kernel_pre} with special choices $d=1$ and $0<\tau <1$. 

Let us start by  scaling   the  kernel $\mathcal K_n$ as follows
\begin{align} \label{eq:general_kernel_d=1}
    \mathbb K_n(Z,Z')= n \mathcal K_n\left(\sqrt n\, Z, \sqrt n\, Z' \right).
\end{align}
With this rescaling  the mean density of points accumulates on the elliptic domain
$$
    \mathcal E_\tau=
        \left\{
            Z \in \mathbb C \ : \ \left(\frac{\Re Z}{1+\tau}\right)^2+\left(\frac{\Im Z}{1-\tau}\right)^2<1
        \right\},
$$
with mean limiting density converging to the uniform distribution on that domain, i.e., 
$$\lim_{n \to \infty} \frac{1}{n} \mathbb K_n(z,z)=\frac{\mathfrak{1}_{z \in \mathcal E_\tau } }{\pi (1-\tau^2)}.
$$
There are many references for this result in $d=1$, starting from \cite{Girko,SCSS} up to \cite{LeRi} for most recent results, including convergence rates for the density and kernel. It it will also be a special case of a more general result for $d\geq 1$ that we will prove in this paper (see Theorem \ref{thm:limitingDensity}).

Apart from the limiting density, also the fluctuations have been well studied. For instance, it is  well known that in the bulk of this domain the correlation kernel converges locally to that of the (infinite) Ginibre process. The fluctuations on the global scale witness an interesting behavior: the correlation kernel for points at macroscopic distance in the bulk decays exponentially (with $n$). For distinct points $Z, Z'$ on the boundary of the ellipse, the correlation kernel is of order $\sqrt n$. For the elliptic Ginibre ensemble this was first computed by Forrester and Jankovici \cite{FoJa}, extending the results for the Ginibre ensemble at $\tau=0$ \cite{Choquard}. Below we will give an rigorous derivation of their result. 
But before we come to that, let us start by proving a  single expression for the asymptotic behaviour of the kernel $\mathbb K_n$ that captures both the local and the global correlations simultaneously.   To state this result, we  first need some further notation. It turns out to be convenient to express our results in terms of elliptic coordinates 
$$ 
  Z = 2 \sqrt{\tau} \cosh (\xi+i \eta),
$$
where $\xi \geq 0$ and $\eta \in (-\pi,\pi]$ if $\xi>0$, and $\eta \in [0,\pi]$ if $\xi=0$. Using these elliptic coordinates the ellipse $\mathcal E_\tau$ can be characterized as $\xi \leq \xi_\tau=-\frac12 \log \tau$. We choose to present the following theorem explicitly in terms of $Z, Z'$ and the elliptic coordinates, but we mention that a less explicit formulation of a more universal character is discussed in Remark \ref{remark:AmeurCronvall} below. 
  
\begin{theorem} \label{thm:largenKnZW_intro}
    Consider the elliptic Ginibre Ensemble at $d=1$ and $0<\tau<1$.  For $\xi,\xi'>0$ and $\eta, \eta' \in (-\pi,\pi]$, let $$Z=2 \sqrt{\tau} \cosh (\xi+i \eta),\qquad \textrm{ and }  \qquad  Z'=2 \sqrt{\tau} \cosh (\xi'+i \eta').$$ Furthermore, set $\xi_+=\frac{1}{2}(\xi+\xi')$ and $\xi_\tau=-\frac12 \log \tau$ and assume $(\xi_+ -\xi_\tau)^2+(\eta-\eta')^2>0$. Then, as $n\to\infty$ we have
    \begin{align} \nonumber
    \mathbb K_n(Z,Z') 
    = & 
    \frac{n \mathfrak{1}_{\xi_+<\xi_\tau}}{\pi (1-\tau^2)}
   \exp\left(-\frac{n}{1-\tau^2}\frac{|Z|^2+|Z'|^2-2 Z \overline{Z'}}{2}\right)C_{\tau}^n(Z,Z')
    \\ \nonumber
    &\pm \sqrt{\frac{n}{32 \pi^3 \tau(1-\tau^2)}} 
    \frac{e^{- n (\xi-\xi_\tau)^2 g(\xi+i\eta)} e^{- n (\xi'-\xi_\tau)^2 g(\xi'+i\eta')} 
    }{\sinh\left(\xi_+-\xi_\tau + i \frac{\eta-\eta'}{2}\right)\sqrt{\sinh(\xi+i\eta) \sinh(\xi'-i\eta')}}D_{\tau}^n(Z,Z')\\ \label{eq:behavKnZWgeneral_intro}
    &\hspace{7cm} + \mathcal O\left(\frac{1}{\sqrt n} e^{- n (\xi-\xi_\tau)^2 g(\xi+i\eta)} e^{- n (\xi'-\xi_\tau)^2 g(\xi'+i\eta')}\right),
    \end{align} 
    where the $\pm$ sign can be expressed explicitly in terms of $(Z,Z')$, and the function $g: \mathbb C \to (0,\infty)$ is an explicit continuous function 
    given in \eqref{eq:defpthetas}, 
    that is bounded from below by a positive constant.
The factors $C_{\tau}^n(Z,Z')=\exp\left(- \frac{i n \tau }{ 1-\tau^2} \frac{\operatorname{Im}(Z^2-Z'^2)}{2}\right)$    and $D_{\tau}^n(Z,Z')= \exp\left(i n (\eta-\eta' - \frac{e^{-2\xi}}{2} \sin(2\eta)+\frac{e^{-2\xi'}}{2} \sin(2\eta'))\right)$
    denote cocycles at the respective orders.
    The $\mathcal O$ term is uniform on compact sets of $(Z,Z')$ satisfying $\xi, \xi'>0$ and $(\xi_+ -\xi_\tau)^2+(\eta-\eta')^2>0$.
    \end{theorem}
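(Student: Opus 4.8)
The proof specializes the steepest descent analysis of Section~\ref{sec:steepest} to $d=1$ and $0<\tau<1$, together with the rescaling \eqref{eq:general_kernel_d=1}. The starting point is a single contour integral representation of the kernel: inserting the classical integral representation of the Hermite polynomials into the defining sum \eqref{eq:general_kernel_pre} and carrying out the geometric summation over $j<n$, the truncation to $\lvert j\rvert<n$ produces a simple pole, and one arrives at an expression of the schematic form
$$
\mathbb K_n(Z,Z') \;=\; c_{n,\tau}\,\sqrt{\omega(\sqrt n\,Z)\,\omega(\sqrt n\,Z')}\;\frac{1}{2\pi i}\oint e^{n\Phi(s;Z,Z')}\,\frac{A(s;Z,Z')}{s-s_\star(Z,Z')}\,ds,
$$
with $c_{n,\tau}$ an explicit constant, $\Phi$ an explicit phase and $A$ an explicit slowly varying factor (both analytic in $s$ off a branch cut), and $s_\star=s_\star(Z,Z')$ the location of the pole generated by the partial sum. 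Passing to the elliptic coordinates $Z=2\sqrt\tau\cosh(\xi+i\eta)$, $Z'=2\sqrt\tau\cosh(\xi'+i\eta')$ trivializes the dependence of the critical points of $\Phi$ on the arguments.

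I would then carry out the saddle analysis. For $\xi,\xi'>0$ the phase $\Phi(\,\cdot\,;Z,Z')$ has a non-degenerate critical point; its real part is $\le 0$, attains the value $0$ exactly on the boundary $\xi=\xi_\tau$ of $\mathcal E_\tau$ and is strictly negative elsewhere, so it vanishes to second order there, and writing it as $-(\xi-\xi_\tau)^2 g(\xi+i\eta)-(\xi'-\xi_\tau)^2 g(\xi'+i\eta')$ both defines the continuous function $g$ of \eqref{eq:defpthetas} and turns the statement $g>0$ into the assertion that this zero has order exactly two. Deforming the integration contour onto the steepest descent path through this critical point, one crosses the pole $s_\star$ precisely when $\xi_+<\xi_\tau$, which is the origin of the indicator $\mathfrak{1}_{\xi_+<\xi_\tau}$; the corresponding residue, combined with the weight $\sqrt{\omega(\sqrt n Z)\omega(\sqrt n Z')}$, gives the leading term $\frac{n}{\pi(1-\tau^2)}\exp\big(-\frac{n}{1-\tau^2}\frac{|Z|^2+|Z'|^2-2Z\overline{Z'}}{2}\big)\,C_\tau^n(Z,Z')$, the imaginary part of the exponent at $s_\star$ producing the cocycle $C_\tau^n$. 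The remaining integral is evaluated by Laplace's method: $\Phi''$ at the critical point together with $A$ evaluated there produce the prefactor $\sqrt{n/(32\pi^3\tau(1-\tau^2))}$ and the denominator $\sqrt{\sinh(\xi+i\eta)\sinh(\xi'-i\eta')}$, the distance from the pole $s_\star$ to the saddle produces the factor $1/\sinh(\xi_+-\xi_\tau+i\frac{\eta-\eta'}{2})$, the phase of $A$ at the critical point produces the cocycle $D_\tau^n$, and the sign $\pm$ records the steepest descent direction, equivalently the chosen sheet of the square roots, which can be read off from $(Z,Z')$. The subleading term in Laplace's expansion relative to the $\sqrt n$ term produces the stated remainder $\mathcal O(n^{-1/2}e^{-n(\xi-\xi_\tau)^2 g(\xi+i\eta)}e^{-n(\xi'-\xi_\tau)^2 g(\xi'+i\eta')})$.

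For the uniformity, observe that on a compact set of $(Z,Z')$ with $\xi,\xi'>0$ and $(\xi_+-\xi_\tau)^2+(\eta-\eta')^2>0$ the critical point stays non-degenerate and bounded away from the branch cut, and $s_\star$ stays at a distance bounded below from the steepest descent contour except where it is actually crossed; in particular $\sinh(\xi_+-\xi_\tau+i\frac{\eta-\eta'}{2})$ is bounded away from $0$. The two hypotheses are exactly what excludes the degenerate configurations the analysis cannot treat in this form: the coalescence of the two saddles on the focal segment $\{\xi=0\}$, where the $\sinh$ factors degenerate and the $\sqrt n$-correction formula breaks down, and the collision of $s_\star$ with the saddle on $\partial\mathcal E_\tau$, which occurs only when both $\xi_+=\xi_\tau$ and $\eta=\eta'$ and would instead require an error-function type local parametrix. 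Given these, the genuine obstacle is the global part of the steepest descent carried out in Section~\ref{sec:steepest}: constructing a single deformation of the contour that is valid simultaneously in the bulk, near the edge and outside it, bounding the integrand away from the critical point by something uniformly exponentially smaller, and correctly tracking whether the residue at $s_\star$ is collected. Once that master estimate is available, what remains for the theorem is the lengthy but routine translation of its output into the elliptic coordinates and the explicit identification of $g$, of the cocycles $C_\tau^n,D_\tau^n$, and of the sign $\pm$.
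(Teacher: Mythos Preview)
Your proposal follows the same route as the paper: specialize the steepest descent of Section~\ref{sec:steepest} to $d=1$, read off the residue at $s=\tau$ (giving the Ginibre term with the indicator $\mathfrak 1_{\xi_+<\xi_\tau}$), apply Laplace's method at the saddle $a^{-1}$, and translate everything into elliptic coordinates using Lemmas~\ref{prop:valueomegaWZeFtau} and~\ref{prop:weightsFa-1}. The high-level structure and the identification of what the two hypotheses exclude are correct.

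Two attribution errors are worth flagging, since they would lead you astray if you tried to fill in the details. First, the phase $F$ at the saddle $a^{-1}$ does \emph{not} have non-positive real part; it is only after multiplying $e^{nF(a^{-1})}$ by the weight factors $\sqrt{\omega(\sqrt n Z)\omega(\sqrt n Z')}$ that one obtains the exponent $-(\xi-\xi_\tau)^2 g(\xi+i\eta)-(\xi'-\xi_\tau)^2 g(\xi'+i\eta')$ (this is exactly Lemma~\ref{prop:weightsFa-1}, and the positivity of $g$ is proved there separately, not merely by reading off the order of vanishing). Second, the cocycle $D_\tau^n$ does not come from the slowly varying factor $A$ but from the imaginary part of $F(a^{-1})$ itself, via Proposition~\ref{prop:Finsaddleelliptic}. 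With these corrections your outline matches the paper's proof.
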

    
    The first term on the right-hand side of \eqref{eq:behavKnZWgeneral_intro} is, up to a rescaling and a cocycle, the infinite Ginibre kernel. This  is the term that gives us the density of particles and the local correlations in the bulk. The second term on the right-hand side will have exponential decay  as long as  $\xi\neq \xi_\tau$ or $\xi'\neq \xi_\tau$.  However, for  $Z,Z'$  two  points at the edge of the ellipse we will have $\xi'=\xi=\xi_\tau$, and if these points are different then we also have $\eta\neq \eta'$. For such points we see that the second term on the right-hand side of \eqref{eq:behavKnZWgeneral_intro} is no longer exponentially decaying, but is of order $\sqrt n$ and  becomes the dominant term. 
    
    \begin{remark}
    A version of Theorem \ref{thm:largenKnZW_intro} is valid under the less rectrictive condition that $\xi, \xi'$ and $n^{1-2\nu}((\xi_+-\xi_\tau)^2+\sin^2 \frac{\eta-\eta'}{2})$ are bounded from below by a positive constant, for some fixed $0<\nu<\frac{1}{6}$. Then 
     \eqref{eq:behavKnZWgeneral_intro} holds when we replace $\frac{1}{\sqrt n}$ by $n^{3\nu}$ in the $\mathcal O$-term.
     One derives this by rescaling the integration variables by a factor $n^{-\frac{1}{2}+\nu}$ locally around the saddle point in the steepest descent analysis. We omit the details. 
    \end{remark}

    One direct consequence  of Theorem \ref{thm:largenKnZW_intro} concerns the two-point cluster function, which is defined as
\begin{align}
T_2(Z,Z') = - \lvert\mathbb K_n(Z,Z')\rvert^2. 
\end{align}
We prove the following result for $T_2$:
\begin{theorem} \label{thm:2ptCluster_intro}
Consider the elliptic Ginibre Ensemble $d=1$ and $0<\tau<1$. 
For every $Z_0, Z'_0\in \partial \mathcal E_\tau$ and $Z_0\neq Z'_0$ there exists an open neighbourhood $U$ of $(Z_0, Z'_0)$ such that
\begin{align} \label{eq:cluster2_intro}
T_2(Z,Z')
= -\frac{n}{16 \tau \pi^3 (1-\tau^2)} \frac{e^{- 2 n (\xi-\xi_\tau)^2 g(\xi+i\eta)} e^{- 2 n (\xi'-\xi_\tau)^2 g(\xi'+i\eta')}}{\cosh (2(\xi_+-\xi_\tau)) - \cos(\eta-\eta')} \frac{1}{|\sinh(\xi+i\eta) \sinh(\xi'+i\eta')|}  \left(1+ \mathcal O\left(\frac{1}{n}\right)\right), 
\end{align}  
uniformly for $(Z,Z') \in U$ as $n\to\infty$. 
In particular, as $n \to \infty$
\begin{align} \label{eq:cluster1_intro}
T_2(Z_0,Z'_0)
=& -\frac{n}{16 \tau \pi^3 (1-\tau^2)} \frac{1}{1-\cos(\eta_0-\eta'_0)} \frac{1}{|\sinh(\xi_\tau+i\eta_0) \sinh(\xi_\tau-i\eta'_0)|}  \left(1+ \mathcal O\left(\frac{1}{n}\right)\right).
\end{align} 

\end{theorem}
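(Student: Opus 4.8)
This statement is a direct consequence of Theorem~\ref{thm:largenKnZW_intro}: the plan is to take the modulus squared of the right-hand side of \eqref{eq:behavKnZWgeneral_intro} for pairs $(Z,Z')$ close to a pair of distinct edge points, where the ``bulk'' term is exponentially negligible and the two-point cluster function is governed by the $\sqrt n$-term alone. Fix $Z_0,Z_0'\in\partial\mathcal E_\tau$ with $Z_0\neq Z_0'$. On the boundary every point has elliptic coordinate $\xi=\xi_\tau=-\tfrac12\log\tau>0$, and the boundary map $\eta\mapsto 2\sqrt\tau\cosh(\xi_\tau+i\eta)$ is injective, so $\eta_0\neq\eta_0'$; thus the non-degeneracy hypothesis $(\xi_+-\xi_\tau)^2+(\eta-\eta')^2>0$ holds at $(Z_0,Z_0')$ and, by continuity, on a neighbourhood. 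I would then fix an open neighbourhood $U$ of $(Z_0,Z_0')$ small enough that: (a) the elliptic coordinates are well defined on $U$, with $\xi,\xi'$ bounded below by a positive constant and $\overline U$ a compact set on which $\xi,\xi'>0$ and $(\xi_+-\xi_\tau)^2+(\eta-\eta')^2>0$, so that Theorem~\ref{thm:largenKnZW_intro} applies with a uniform $\mathcal O$-term; and (b) the strict inequality
\[
\frac{|Z-Z'|^2}{2(1-\tau^2)}\;>\;(\xi-\xi_\tau)^2 g(\xi+i\eta)+(\xi'-\xi_\tau)^2 g(\xi'+i\eta')+\delta
\]
holds on $U$ for some constant $\delta>0$. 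Choice (b) is possible because $(\xi-\xi_\tau)^2 g(\xi+i\eta)+(\xi'-\xi_\tau)^2 g(\xi'+i\eta')$ vanishes at $(Z_0,Z_0')$ while $|Z_0-Z_0'|^2>0$, and both sides are continuous; the relevant quantity is $|Z-Z'|^2$ because $\Re\bigl(|Z|^2+|Z'|^2-2Z\overline{Z'}\bigr)=|Z-Z'|^2$.

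Next, substitute $(Z,Z')\in U$ into \eqref{eq:behavKnZWgeneral_intro} and write $\mathbb K_n=A_n+B_n+E_n$ for the bulk term, the $\sqrt n$-term, and the error. The cocycles $C_\tau^n$ and $D_\tau^n$ have modulus one — their arguments are purely imaginary for real $\xi,\eta,\xi',\eta'$ — and $g$ is real and positive, so $|A_n|=\tfrac{n\,\mathfrak{1}_{\xi_+<\xi_\tau}}{\pi(1-\tau^2)}\exp\bigl(-\tfrac{n}{2(1-\tau^2)}|Z-Z'|^2\bigr)$, while $|B_n|$ is comparable, uniformly on $U$, to $\sqrt n\,\exp\bigl(-n(\xi-\xi_\tau)^2 g(\xi+i\eta)-n(\xi'-\xi_\tau)^2 g(\xi'+i\eta')\bigr)$, because $g$ is bounded above and below and the $\sinh$-denominators are bounded away from $0$ and $\infty$ on $U$. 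By inequality (b) we get $|A_n|/|B_n|=\mathcal O(\sqrt n\,e^{-\delta n})$, and $|E_n|/|B_n|=\mathcal O(1/n)$ since $E_n$ carries the same two exponential factors as $B_n$ but with $n^{-1/2}$ in place of $n^{1/2}$. Expanding $|\mathbb K_n|^2=|A_n+B_n+E_n|^2$, every term other than $|B_n|^2$ is $\mathcal O(1/n)\,|B_n|^2$ uniformly on $U$ (the cross terms involving $A_n$ are exponentially smaller still), so $|\mathbb K_n(Z,Z')|^2=|B_n|^2(1+\mathcal O(1/n))$ uniformly on $U$.

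Finally, one computes $|B_n|^2$ explicitly: using $|D_\tau^n|=1$, $|\sqrt w|^2=|w|$, the identity $|\sinh(a+ib)|^2=\tfrac12(\cosh 2a-\cos 2b)$ with $a=\xi_+-\xi_\tau$ and $b=\tfrac12(\eta-\eta')$, and $|\sinh(\xi'-i\eta')|=|\sinh(\xi'+i\eta')|$, one finds that $|B_n|^2$ is exactly the prefactor multiplying $1+\mathcal O(1/n)$ on the right-hand side of \eqref{eq:cluster2_intro}; since $T_2=-|\mathbb K_n|^2$ this gives \eqref{eq:cluster2_intro}, and \eqref{eq:cluster1_intro} is its specialisation to $Z=Z_0$, $Z'=Z_0'$, where $\xi=\xi'=\xi_\tau$ makes the two Gaussian factors equal $1$ and $\cosh(2(\xi_+-\xi_\tau))=1$. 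I expect no genuine obstacle beyond the bookkeeping in the first paragraph: one must arrange simultaneously that Theorem~\ref{thm:largenKnZW_intro}'s error is uniform on $\overline U$ and that the bulk term $A_n$ remains exponentially subdominant to the edge term $B_n$ throughout $U$ — precisely the content of condition (b) — after which everything reduces to elementary hyperbolic-function algebra and the observation that $|\mathbb K_n|^2$ is insensitive to the cocycles, to the branch of the square root, and to the overall $\pm$ sign in \eqref{eq:behavKnZWgeneral_intro}.
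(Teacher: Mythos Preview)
Your proof is correct and follows essentially the same route as the paper: you apply Theorem~\ref{thm:largenKnZW_intro}, argue by continuity that the bulk term $A_n$ is exponentially dominated by the edge term $B_n$ on a neighbourhood of $(Z_0,Z_0')$ (your condition (b) is exactly the inequality $\operatorname{Re}F(a^{-1})>\operatorname{Re}F(\tau)$ the paper invokes via \eqref{eq:RtauRa-Compare}), and then take the modulus squared using the identity $|\sinh(a+ib)|^2=\tfrac12(\cosh 2a-\cos 2b)$. If anything you spell out more of the bookkeeping than the paper does.
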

For the complex Ginibre ensemble at $\tau=0$, Theorem \ref{thm:2ptCluster_intro} was first investigated by 
Choquard, Piller and Rentsch \cite{Choquard}. It was extended to the EGE by Forrester and Jancovici \cite{FoJa}. In their final result they  only give the numerator in \eqref{eq:cluster2_intro} on the ellipse with $\xi=\xi'=\xi_\tau$, but \eqref{eq:cluster2_intro} can be reconstructed from their derivation.
The peaked behavior of the correlation kernel near the edge is a phenomenon that one expects to be universal and hold for general 2D Coulomb gases.  A general result in this direction can be found in a recent work of Ameur-Cronvall \cite{AmCr} (see Remark \ref{remark:AmeurCronvall} below). 

Another manifestation of this phenomenon can be seen in fluctuations of linear statistics for $2D$ Coulomb gases and the corresponding Gaussian log-correlated fields. Indeed, the boundary gives a non-trivial contribution, see for example \cite{LeSy, AmHeMa,RiVi}. This is in sharp contrast to the global fluctuations for tiling models, that are universally governed by the Gaussian Free Field with  Dirichlet boundary conditions. Since the EGE  interpolates between a 2D and 1D Coulomb gas, it is natural to ask how this transition takes place.   In the Hermitian limit, it is the behavior on the boundary that survives and dictates the global fluctuations. In a forthcoming work we will  return to this issue, and describe the transition from the one dimensional to the two-dimensional log-correlated fields in detail. 

\begin{remark}
    The first term on the right-hand side of  \eqref{eq:behavKnZWgeneral_intro} contributes whenever $\xi_+<\xi_\tau$ and this may happen even if one of the points is outside the elliptic domain $\mathcal E_\tau$. In fact, for $(Z,Z') = 2\sqrt\tau (\cosh(\xi+i\eta),\cosh(\xi'+i\eta))$, where $\xi<\xi_\tau<\xi'$ such that $\xi_+<\xi_\tau$ (and $\eta$ arbitrary), a careful analysis of our arguments below will show that the first term is dominant over the second term on the right-hand side of \eqref{eq:behavKnZWgeneral_intro}. As the first term is representing the infinite Ginibre Ensemble describing the local correlations, we find it remarkable that it is still dominant for these points that are at macroscopic distance. 
 \end{remark}

 \begin{remark} \label{remark:AmeurCronvall}
    Our results should be compared to a very recent paper by Ameur and Cronvall \cite{AmCr}(Theorem 1.3 in particular). They consider a normal matrix model with general potential $Q$. Then the corresponding mean limiting density has a compact support, which is called `the droplet'. With $S_Q$ denoting the (closure of the) droplet, they consider the unique conformal map $\phi = \phi_Q : \mathbb C\setminus S_Q\to \{Z\in \mathbb C : |Z| > 1\}$ with $\phi(\infty)=\infty$ and $\phi'(\infty)>0$. Such a conformal map can always be extended to an open set that contains the boundary of $S_Q$. In a $\delta_n$-neighborhood of $\mathbb C\setminus S_Q$, with $\delta_n$ of order $\sqrt{\frac{\log \log n}{n}}$, and under the condition that $|\phi(Z) \overline{\phi(Z')}-1|\geq \mu$ for some constant $\mu>0$, they find that a corresponding correlation kernel satisfies
\begin{align} \label{eq:AmeurCronvall1}
\mathbb K_n(Z,Z') = \sqrt{\frac{n}{2\pi}} e^{n G_Q(Z,Z')} 
\left(\phi(Z) \overline{\phi(Z')}\right)^n
\frac{\sqrt{\phi'(Z) \overline{\phi'(Z')}}}{\phi(Z) \overline{\phi(Z')} -1}
\left(1+\mathcal O(n^{-\beta})\right),
\end{align}
uniformly as $n\to\infty$, where $\beta$ is any number in $(0,\frac{1}{4})$, and $G_Q$ is a function that can be explicitly determined from $Q$ (see the paper for details). For the elliptic Ginibre ensemble the droplet is $\mathcal E_\tau$, and it is not hard to see that
\begin{align} \label{eq:AmeurCronvall2}
\phi(Z) = \frac{Z+\sqrt{Z^2-4\tau}}{2}.
\end{align}
Hence, in our case, the conformal map $\phi$ extends much further than the boundary of the droplet, we only have to exclude the motherbody $[-2\sqrt\tau, 2\sqrt\tau]$, i.e., the interval where the zeros of the (rescaled) Hermite polynomials accumulate. We mention that $\phi(Z) \overline{\phi(Z')}=\tau e^{\xi+\xi'+i(\eta-\eta')}$ 
by Corollary \ref{cor:saddleinZW}. Their result should hold in particular for $Z$ and $Z'$ on or microscopically close to the ellipse boundary. Indeed, substituting \eqref{eq:AmeurCronvall2} in \eqref{eq:AmeurCronvall1}, and determining $G_Q$, their result agrees with \eqref{eq:cluster1_intro} when expressed in elliptic coordinates. 
Moreover, as Theorem \ref{thm:largenKnZW_intro} shows, their result \eqref{eq:AmeurCronvall1} actually holds on a $\delta$-neighborhood, i.e., $\delta_n$ does not have to approach $0$. Note that, with these notations, Theorem \ref{thm:largenKnZW_intro} can be expressed as
\begin{multline*}
\mathbb K_n(Z,Z') = 
 n\frac{\mathfrak{1}_{|\phi(Z) \overline{\phi(Z')}|<1}}{\pi (1-\tau^2)}
   \exp\left(-\frac{n}{1-\tau^2}\frac{|Z|^2+|Z'|^2-2 Z \overline{Z'}}{2}\right)C_{\tau}^n(Z,Z')
 \\
+ \sqrt{\frac{n}{2\pi}} e^{n G_Q(Z,Z')} 
\left(\phi(Z) \overline{\phi(Z')}\right)^n
\frac{\sqrt{\phi'(Z) \overline{\phi'(Z')}}}{\phi(Z) \overline{\phi(Z')} -1}
\left(1+\mathcal O\left(\frac{1}{n}\right)\right),
\end{multline*}
as $n\to\infty$, uniformly for $Z$ and $Z'$ that stay a distance $\mu>0$ away from the motherbody $[-2\sqrt\tau, 2\sqrt\tau]$, and such that \text{$|\phi(Z) \overline{\phi(Z')}-1|\geq \mu$}. It is an interesting question whether such a result can be generalized to a larger class of potentials. \end{remark}

\subsubsection{Fermions in $d$ real dimensions $\mathbb{R}^d$}
A second set of special cases that we wish to present is that of fermions in $d$ dimensions. As before we scale $\mathcal K_n^{\rm Fermi}$ in \eqref{eq:kernelferm} as
$$
   \mathbb K_n^{\rm Fermi}(X,X')=n^{d/2} \mathcal K_n^{\rm Fermi}(\sqrt{n}\, X,\sqrt{n}\, X)
$$
We will proceed by deriving several results on that process, starting with the limiting density. 
The standard scalar product and norm on $\mathbb{R}^d$ is denoted by $\langle X,X'\rangle$ and $\lvert X \rvert=\sqrt{\langle X,X\rangle}$ respectively, for $X,X'\in\mathbb{R}^d$.

\begin{theorem} \label{thm:fermionsBulkDensity_intro}
    Let $X\in\mathbb R^d$. If $\lvert X\rvert < \sqrt 2$, then as $n\to\infty$ we have 
    \begin{align} \label{eq:fermionsBulkDensity1_intro}
    \frac{1}{n^d/d!} \mathbb K_n^{\rm Fermi} (X,X) =  \frac{d!}{2^d \pi^\frac{d}{2}} \frac{(2-\lvert X \rvert^2)^\frac{d}{2}}{\Gamma\left(\frac{d}{2}+1\right)} 
    +\mathcal O\left(\frac{1}{n}\right),
    \end{align}
    where the convergence is uniform on compact subsets.    
    If $\lvert X\rvert >\sqrt 2$, then $R_1(X)$ vanishes exponentially as $n\to\infty$, uniformly on compact subsets. 
    \end{theorem}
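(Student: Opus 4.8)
The plan is to run the same steepest-descent machinery on a single contour integral for $\mathcal K_n^{\rm Fermi}$. By Mehler's formula $\sum_{j\ge 0}\psi_j(u)^2 z^j=(\pi(1-z^2))^{-1/2}\exp(-u^2\tfrac{1-z}{1+z})$, together with the encoding of the energy cutoff $\mathbf 1_{|j|<n}=\tfrac{1}{2\pi i}\oint_{|z|=r}\tfrac{z^{|j|}}{z^n(1-z)}\,dz$ for $0<r<1$, exactly as in Section \ref{sec:steepest} one is led, after the scaling $X\mapsto\sqrt n X$, to
\[
\mathbb K_n^{\rm Fermi}(X,X)=\frac{n^{d/2}}{2\pi i}\oint_{|z|=r}\frac{e^{-nf(z)}}{(1-z)\,\bigl(\pi(1-z^2)\bigr)^{d/2}}\,dz,\qquad f(z)=\log z+|X|^2\frac{1-z}{1+z}.
\]
Here $d$ enters only through the exponent $d/2$ of the amplitude, so no $d$-dimensional saddle points occur. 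The critical points of $f$ solve $(1+z)^2=2|X|^2z$, i.e.\ $z_\pm=|X|^2-1\pm\sqrt{|X|^2(|X|^2-2)}$ with $z_+z_-=1$: for $|X|^2<2$ they lie on the unit circle, $z_\pm=e^{\pm i\theta}$ with $\cos\theta=|X|^2-1$; for $|X|^2>2$ they are real with $0<z_-<1<z_+$; and they coalesce at $z=1$ when $|X|^2=2$. The crucial observation is that $\Re f\equiv 0$ on the unit circle, so on admissible contours ($|z|=r$, $r\uparrow 1$) the integrand concentrates near $z=1$, which is simultaneously the pole/branch point of the amplitude; this is the mechanism producing the order $n^d$.

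For $|X|<\sqrt 2$ (the bulk) I would split the contour into a local piece $\{|1-z|\le n^{-1+\delta}\}$ and a global remainder. On the local piece substitute $z=1-w/n$: then $z^{-n}e^{-n|X|^2\frac{1-z}{1+z}}\to e^{wa}$ with $a=\tfrac12(2-|X|^2)>0$, while $(1-z)^{-1}\bigl(\pi(1-z^2)\bigr)^{-d/2}\sim n^{1+d/2}\bigl(2^{d/2}\pi^{d/2}\bigr)^{-1}w^{-1-d/2}$ and $dz=-dw/n$; the image of the local arc flattens, as $n\to\infty$, to a vertical Bromwich line, and the inverse-Laplace identity $\tfrac{1}{2\pi i}\int_{(c)}w^{-s}e^{wa}\,dw=a^{s-1}/\Gamma(s)$ with $s=1+\tfrac d2$ gives
\[
\mathbb K_n^{\rm Fermi}(X,X)=\frac{n^d}{2^{d/2}\pi^{d/2}}\frac{a^{d/2}}{\Gamma(1+\tfrac d2)}\bigl(1+O(1/n)\bigr)=n^d\,\frac{(2-|X|^2)^{d/2}}{2^d\pi^{d/2}\,\Gamma(\tfrac d2+1)}\bigl(1+O(1/n)\bigr),
\]
which is the claim after dividing by $n^d/d!$. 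The global remainder is of order $n^{(d-1)/2}$: the amplitude is $O(1)$ there except near $z=-1$, where $e^{-nf}$ is super-exponentially small, and the oscillatory factor $e^{-in\arg z}$ in $z^{-n}$ produces the usual stationary-phase gain off the saddles $z_\pm$; since $n^{(d-1)/2}/n^d=O(1/n)$ for $d\ge 1$, this is absorbed into the error.

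For $|X|>\sqrt 2$, write $\cosh u=|X|^2-1$ with $u>0$, so that $z_-=e^{-u}\in(0,1)$. Taking the contour $|z|=z_-$, a short computation gives $\min_{|z|=z_-}\Re f=f(z_-)=\sinh u-u>0$, hence $|e^{-nf(z)}|\le e^{-n(\sinh u-u)}$ on the whole contour while the amplitude stays bounded (the contour avoids $z=\pm1$); therefore $\mathbb K_n^{\rm Fermi}(X,X)=O\bigl(n^{d/2}e^{-cn}\bigr)$ with $c=\sinh u-u>0$. Uniformity on compact subsets in both regimes is automatic, since $a$, $z_\pm$, $u$ and all implied constants depend continuously on $X$.

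The main obstacle is the transitional regime $|X|\uparrow\sqrt 2$, where $z_\pm$ collide with the amplitude's singularity at $z=1$ and the clean local/global split degenerates; restricting to compact subsets of $\{|X|<\sqrt 2\}$, as in the statement, keeps $z=1$ a fixed distance from $z_\pm$ and circumvents this — a uniform treatment across the edge would require the Airy-type analysis used for the edge theorem rather than the present one. The remaining care is bookkeeping: justifying the contour choice near the branch points $z=\pm1$ of $(1-z^2)^{d/2}$, and checking that on the local piece replacing $z^{-n}e^{-n|X|^2\frac{1-z}{1+z}}$ by its limit $e^{wa}$ and the truncated local contour by the full Bromwich line costs only $O(1/n)$.
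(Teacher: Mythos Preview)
Your approach is essentially the same as the paper's: the same single-contour integral representation (your $f$ is the paper's $-F+|X|^2$ after the identification $z=z'=|X|$, $\tau=1$), the same observation that $\Re f\equiv 0$ on the unit circle, and the same mechanism whereby the main contribution of order $n^d$ comes from the singularity at $s=1$ via the local substitution $s=1-w/n$ and an inverse-Gamma/Bromwich identity, while the saddle points $e^{\pm i\theta}$ (the paper's $a,a^{-1}$) contribute only $O(n^{(d-1)/2})$. For $|X|>\sqrt 2$ your saddle $z_-=e^{-u}$ is exactly the paper's $a^{-1}=e^{-2\xi}$, and your rate $\sinh u-u$ matches the paper's exponent.

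The one place where the paper is more explicit is precisely the bookkeeping you flag at the end: rather than working on the circle $|z|=r$ with $r\uparrow 1$ and arguing that the local arc flattens to a Bromwich line, the paper deforms $\gamma_0$ to an unbounded contour that passes through the two saddles $a,a^{-1}$ on the unit circle and then wraps around the cut $[1,\infty)$ (its Figure~5, contour $\gamma_3$). The wrap-around piece is what makes the local analysis at $s=1$ clean for all $d$ (one literally computes $\int_{\tilde\gamma}e^{F'(1)t}(-t)^{-d/2}\,dt$), and it removes any need to approach $z=-1$ at all, so your concern about the amplitude blowing up there never arises. If you want a fully rigorous write-up, adopting this explicit contour is the simplest fix; everything else in your argument goes through unchanged.
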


Note that for $d=1$, this is the semi-circle law for the Gaussian Unitary Ensemble. Note also that the vanishing exponent at the boundary depends on the dimension $d$.

Now that we have the limiting density, the next question is on the microscopic processes in the bulk and at the edges. The following results we discussed in \cite{DeDoMaSc}. Very recently Deleporte and Lambert rigorously proved the universality of these results, using results for Schr\"odinger operators \cite{DeLa}. Here we obtain a rather direct proof for the special case that the operator is the quantum harmonic oscillator. 

We start with the fluctuations in the bulk:
\begin{theorem} \label{thm:fermionsScalingBulk_intro}
    Let $X, U, V\in\mathbb R^d$ and suppose that $\lvert X\rvert < \sqrt 2$. For fixed $X$, denote $\nu(X) = 2^{-d} (2-\lvert X\rvert^2)^\frac{d}{2}$.\\
    Then as $n\to\infty$ we have
    \begin{align} \label{eq:fermionsScalingBulk_intro}
    \frac{1}{\nu(X) n^d}{\mathbb K}_n^{\rm Fermi}\left(X+\frac{U}{\nu(X)^\frac{1}{d} n}, X + \frac{V}{\nu(X)^\frac{1}{d} n}\right) 
    =  \frac{J_\frac{d}{2}\left(2\lvert U-V\rvert\right)}{(\pi\lvert U-V\rvert)^\frac{d}{2}}
    +\mathcal O\left(\frac{1}{n}\right),
    \end{align}
    uniformly for $(U,V)$ in compact sets. 
    \end{theorem}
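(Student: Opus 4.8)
The plan is to apply the steepest-descent analysis of Section~\ref{sec:steepest} to the single contour integral representation of $\mathcal K_n^{\rm Fermi}$, exploiting that the $d$-dimensional fermionic kernel factors coordinatewise into a product of one-dimensional Mehler kernels; the only genuinely $d$-dimensional ingredient, the degree constraint $j_1+\dots+j_d\le n-1$, is encoded by a single extra contour variable. Using $\mathfrak{1}_{\,m<n}=\frac{1}{2\pi i}\oint_{|w|=r}\frac{dw}{w^{n-m}(1-w)}$ (for $0<r<1$) together with Mehler's formula for $\sum_{j\ge 0}\psi_j(x)\psi_j(x')w^j$ in each coordinate, one obtains, after rescaling the arguments by $\sqrt n$,
\begin{equation*}
\mathcal K_n^{\rm Fermi}(\sqrt n\,X,\sqrt n\,X')=\frac{1}{2\pi i}\oint_{|w|=r}\frac{e^{n\Phi(w)}}{(\pi(1-w^2))^{d/2}(1-w)}\,dw,\qquad \Phi(w)=\frac{4\langle X,X'\rangle w-(|X|^2+|X'|^2)(1+w^2)}{2(1-w^2)}-\log w,
\end{equation*}
which is the $\tau\uparrow 1$ degeneration of the representation used throughout the paper ($\Phi$ here is local notation depending on $X,X'$). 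I would then substitute $X\mapsto X+\tfrac{U}{\nu(X)^{1/d}n}$ and $X'\mapsto X+\tfrac{V}{\nu(X)^{1/d}n}$ into this formula.

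The integrand is analytic on $\mathbb C\setminus(\{0\}\cup\{\pm1\})$ — fixing the branch of $(1-w^2)^{-d/2}$ with cuts along $[1,\infty)$ and $(-\infty,-1]$ when $d$ is odd — and vanishes as the radius is sent to infinity, so the circle $|w|=r$ can be collapsed onto small loops around $w=1$ and $w=-1$ (around the two cuts, for $d$ odd). The loop around $w=-1$ is exponentially small when $X\ne 0$ and of relative size $n^{-1}$ for $X$ near the origin; the stationary points $w_\pm=e^{\pm i\theta}$ with $\cos\theta=|X|^2-1$ on the unit circle — which govern the exponentially small macroscopic regime and, after colliding with $w=1$ as $|X|\uparrow\sqrt 2$, the edge — contribute at relative order $n^{-(d+1)/2}$ in the present microscopic bulk scaling and get absorbed into the error term. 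Thus the entire leading contribution comes from $w=1$, the point at which both the Mehler prefactor $(1-w^2)^{-d/2}$ and the truncation factor $(1-w)^{-1}$ are singular.

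The heart of the proof is the local analysis at $w=1$. Setting $w=1-\zeta/n$ and using the identity $1-\tfrac12|X|^2=2\,\nu(X)^{2/d}$, one finds, uniformly for $\zeta$ in compact sets,
\begin{equation*}
n\,\Phi(w)=2\,\nu(X)^{2/d}\,\zeta-\frac{|U-V|^2}{2\,\nu(X)^{2/d}\,\zeta}+O(n^{-1}),\qquad \frac{1}{(\pi(1-w^2))^{d/2}(1-w)}\sim \frac{n^{d/2+1}}{(2\pi)^{d/2}\,\zeta^{d/2+1}},
\end{equation*}
while $dw=-\tfrac1n\,d\zeta$ and the $\zeta$-contour becomes a loop around the origin (and around the negative real axis when $d$ is odd). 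The substitution $\zeta=\tfrac{|U-V|}{2\nu(X)^{2/d}}s$ then reduces the $s$-integral to Hankel's contour integral $J_\mu(z)=\frac{1}{2\pi i}\int_{\mathcal H}s^{-\mu-1}e^{(z/2)(s-1/s)}\,ds$ with $\mu=d/2$, $z=2|U-V|$. Keeping track of the algebraic prefactors, of the normalization $\tfrac1{\nu(X)n^d}$, and of the $n^{d/2}$ from $\mathbb K_n^{\rm Fermi}=n^{d/2}\mathcal K_n^{\rm Fermi}(\sqrt n\,\cdot,\sqrt n\,\cdot)$, all powers of $\nu(X)$ and of $n$ cancel, leaving exactly $\tfrac{J_{d/2}(2|U-V|)}{(\pi|U-V|)^{d/2}}$; the value at $U=V$ follows by continuity and is consistent with the density in Theorem~\ref{thm:fermionsBulkDensity_intro}.

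The main obstacle is upgrading this contour manipulation to rigorous, uniform estimates. One must (i) localize the loop around $w=1$, showing that the part on which $n|w-1|$ is large is negligible — this rests on $\Phi(w)<0$ for real $w>1$, which forces $|X|$ to stay away from $\sqrt 2$ so that $w_\pm$ stays away from $1$; (ii) bound the unit-circle stationary points and the loop around $w=-1$ uniformly for $X$ in compact subsets of $\{|X|<\sqrt 2\}$, including near the origin where $w_\pm\to -1$; and (iii) make the expansion at $w=1-\zeta/n$ uniform for $(U,V)$ in compacts and extract the $O(n^{-1})$ rate from the second-order Taylor terms and from the $w=-1$ loop. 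These are precisely the ingredients assembled in Section~\ref{sec:steepest}, so in the body the argument should reduce to specializing that analysis to the base point $w=1$.
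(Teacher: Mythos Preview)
Your approach is essentially the paper's: the single–contour representation you write down is exactly \eqref{eq:fermionsKnI}–\eqref{eq:defIn} with the Gaussian weight absorbed (your $\Phi$ equals the paper's $F$ minus $\tfrac12(|X|^2+|X'|^2)$), the local change $w=1-\zeta/n$ and the identification of the limiting integral with Hankel's representation of $J_{d/2}$ are precisely what the paper does on its contour $\gamma_3$, and your prefactor bookkeeping is correct.

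There is, however, a genuine gap in your treatment of the subdominant piece. After collapsing $|w|=r$ onto loops around $\pm 1$, you claim the loop around $w=-1$ is ``exponentially small when $X\ne 0$'' and \emph{separately} list the stationary points $w_\pm=e^{\pm i\theta}$ as contributing at relative order $n^{-(d+1)/2}$. These are not two different objects. For $X\ne 0$ the exponent $\Phi$ has a simple pole at $w=-1$ with residue proportional to $-|X+X'|^2\approx -4|X|^2$, so $\operatorname{Re}\Phi(w)\to +\infty$ as $w\to -1^-$ along the real axis; a small loop around $-1$ therefore passes through a region where the integrand is exponentially \emph{large} in $n$, and no local expansion $w=-1+\zeta/n$ with bounded $\zeta$ captures the dominant behaviour (the saddle of the resulting Bessel–type integral sits at $|\zeta|\sim n|X|$, outside the regime where the local approximation is valid). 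The only way to bound that loop is to deform it through the saddle points $w_\pm$ on the unit circle --- and the deformed contour is, up to homotopy, exactly the paper's $\gamma_1$ from Figure~\ref{Fig2a}. So the ``loop around $-1$'' \emph{is} the saddle–point contribution, of relative size $n^{-(d+1)/2}$, not exponentially small; the paper simply keeps this as $\gamma_1$ from the outset and invokes the saddle–point estimates of Section~\ref{sec:steepest}. Your claim that it is of relative size $n^{-1}$ near $X=0$ is correct, since then the residue at $-1$ is itself $O(n^{-2})$ and a genuine local analysis at $-1$ applies.
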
 
    For $d=1$, the leading term on the right-hand side of \eqref{eq:fermionsScalingBulk_intro} reduces to the sine-kernel. This follows from a well-known expression for the Bessel function $J_{\frac 12}(z)=\sin(z)\sqrt{2/\pi z}$. It may be interesting to point out that, to the best of our knowledge, the derivation of \eqref{eq:fermionsScalingBulk_intro} is different from other derivations in the literature, and, in our opinion, this is an interesting independent contribution to the vast literature on the universality of the sine process. 

The next result is on the microscopic processes near the edges. 
    
\begin{theorem} \label{thm:fermionsEdgeDensity_intro}
    Let $X, U, V\in\mathbb R^d$ and suppose that $X$ is fixed and $\lvert X\rvert = \sqrt 2$. 
    Then as $n\to\infty$ 
    \begin{align} \label{eq:fermionsAiryd_intro}
    \frac{1}{(n^\frac{2}{3} \sqrt 2)^d} \mathbb K_n^{\rm Fermi}\left(X+\frac{U}{n^\frac{2}{3} \sqrt 2}, X + \frac{V}{n^\frac{2}{3} \sqrt 2}\right) 
    =  \frac{1}{(2\pi)^d}\int_{\mathbb R^d} e^{-i \langle Q,U-V\rangle} \int_0^\infty\operatorname{Ai}\left(2^{2/3} \lvert Q\rvert^2 + \frac{\langle U+V,X\rangle}{2^{1/3}\lvert X\rvert}+s\right) ds\,d^d Q
    +\mathcal O\left(n^{-\frac{1}{3}}\right).
    \end{align}
    \end{theorem}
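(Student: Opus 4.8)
\emph{Proof strategy.} The backbone of the argument is the single contour integral representation of the fermionic kernel on which all our asymptotics rest. I would start from $\mathcal K_n^{\rm Fermi}(x,x')=\sum_{|j|\le n-1}\Psi_j(x)\Psi_j(x')$, encode the simplex constraint by $\mathfrak{1}_{|j|\le n-1}=\frac{1}{2\pi i}\oint_{|w|=r}\frac{w^{|j|}}{w^n(1-w)}\,dw$ with $0<r<1$, interchange sum and integral, and resum each one-dimensional factor $\sum_{j\ge 0}\psi_j(x_k)\psi_j(x'_k)w^j$ by Mehler's formula. This gives
\begin{equation}\label{eq:planfermirep}
\mathcal K_n^{\rm Fermi}(x,x')=\frac{1}{2\pi i}\oint_{|w|=r}\frac{dw}{w^n(1-w)\bigl(\pi(1-w^2)\bigr)^{d/2}}\exp\!\left(\frac{-\tfrac12(|x|^2+|x'|^2)(1+w^2)+2\langle x,x'\rangle w}{1-w^2}\right),
\end{equation}
which is the harmonic-oscillator specialization of the representation treated in Section~\ref{sec:steepest}. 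I would then substitute the edge coordinates $x=\sqrt n\,X+\tfrac{U}{\sqrt 2\,n^{1/6}}$, $x'=\sqrt n\,X+\tfrac{V}{\sqrt 2\,n^{1/6}}$ with $|X|=\sqrt 2$. Writing the numerator in \eqref{eq:planfermirep} as $-B\tfrac{1-w}{1+w}-(A-B)\tfrac{1+w^2}{1-w^2}$ with $A=\tfrac12(|x|^2+|x'|^2)$, $B=\langle x,x'\rangle$ and $A-B=\tfrac12|x-x'|^2=\frac{|U-V|^2}{4n^{1/3}}$, the integrand becomes $\exp(n\Phi(w)+\cdots)$ with $\Phi(w)=-\log w-2\tfrac{1-w}{1+w}$. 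A short computation gives $\Phi(1)=\Phi'(1)=\Phi''(1)=0$ and $\Phi'''(1)=-\tfrac12$: thus $w=1$ is a cubic critical point of $\Phi$, lying exactly at the singularity $w=1$ of the algebraic factor $(1-w)^{-1}(1-w^2)^{-d/2}$. This confluence produces the Airy behaviour, and since $\Phi$ is a function of the single variable $w$ no $d$-dimensional saddle arises.

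Next comes the local analysis near $w=1$. I would deform $|w|=r$ to a contour that, away from $w=1$, stays where $\Re\Phi$ is negative, and in an $n^{-1/3}$-window around $w=1$ follows the steepest-descent path of the cubic $-\tfrac1{12}(w-1)^3$ (wrapping around the singularity at $w=1$). With $w=1+s\,n^{-1/3}$ the exponent collapses to
\begin{equation}\label{eq:planlocalexp}
-\frac{s^3}{12}+\frac{\langle X,U+V\rangle}{2\sqrt 2}\,s+\frac{|U-V|^2}{4s}+o(1),
\end{equation}
where the $\mathcal O(n^{2/3})$ and $\mathcal O(n^{1/3})$ contributions from $-n\log w$ and from the $n$-part of $B$ cancel exactly, forced by $\Phi(1)=\Phi'(1)=\Phi''(1)=0$. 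The algebraic factor contributes $\tfrac{ds}{s}$ from $(1-w)^{-1}$ together with a factor $\propto s^{-d/2}$ from $(1-w^2)^{-d/2}$, and one checks that all powers of $n$ cancel against the combined prefactor $n^{d/2}(n^{2/3}\sqrt 2)^{-d}$, so the limit is $\mathcal O(1)$. The only non-polynomial term in \eqref{eq:planlocalexp}, namely $\tfrac{|U-V|^2}{4s}$, I would linearize by the $d$-dimensional Gaussian identity
\begin{equation}\label{eq:plangauss}
\exp\!\left(\frac{|U-V|^2}{4s}\right)=\left(\frac{-s}{\pi}\right)^{d/2}\int_{\mathbb R^d}\exp\!\bigl(s|Q|^2+i\langle Q,U-V\rangle\bigr)\,d^dQ,
\end{equation}
valid on the part of the contour with $\Re s<0$: this is exactly the device that produces the extra $d$-dimensional integral without any multidimensional saddle point, and the Gaussian parameter $|Q|^2$ then appears as the shift $2^{2/3}|Q|^2$ of the Airy argument. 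After inserting \eqref{eq:plangauss} the $s^{-d/2}$ cancels, and the remaining $s$-integral, upon $s\mapsto-4^{1/3}\sigma$, becomes $\frac{1}{2\pi i}\int\frac{d\sigma}{\sigma}e^{\sigma^3/3-z\sigma}=\int_0^\infty\operatorname{Ai}(z+r)\,dr$ with $z=2^{2/3}|Q|^2+\frac{\langle U+V,X\rangle}{2^{1/3}|X|}$; the $\tfrac1\sigma$, and hence the antiderivative form $\int_0^\infty\operatorname{Ai}$, comes precisely from the $(1-w)^{-1}$ in \eqref{eq:planfermirep}. Tracking the constants $\pi^{-d}$, $2^{-d}$ and the orientation then reproduces the right-hand side of \eqref{eq:fermionsAiryd_intro}.

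The hard part is turning this into a rigorous, uniform statement. The main obstacles I foresee are: (i) choosing the global deformation so that $\Re\Phi$ is bounded away from $0$ outside the $n^{-1/3}$-window — delicate because $\Re\Phi\equiv0$ on the circle $|w|=1$, exactly where the critical point lives, so the bound degenerates as one approaches $w=1$ and must be traded off against the algebraic blow-up; (ii) controlling the Taylor remainders of $\Phi$ and of the algebraic prefactor on the local scale, uniformly for $(U,V)$ in compact sets; and (iii) justifying \eqref{eq:plangauss} and the interchange of the $Q$- and $s$-integrations, including the analytic continuation needed because \eqref{eq:plangauss} converges only on the $\Re s<0$ part of the steepest-descent contour. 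Steps (i) and (iii) are where the real care is needed; (ii) is routine. I would expect the identical scheme, but with $w=1$ now a \emph{regular} point of $\Phi$ (local scale $n^{-1}$, the cubic in \eqref{eq:planlocalexp} replaced by a term linear in $s$), to give the Bessel kernel of Theorem~\ref{thm:fermionsScalingBulk_intro} and, on the diagonal, Theorem~\ref{thm:fermionsBulkDensity_intro}; there the $\Re s<0$ constraint in \eqref{eq:plangauss} is what cuts the $Q$-integral down to the Fermi ball $|Q|<2$.
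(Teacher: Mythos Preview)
Your proposal is correct and follows essentially the same route as the paper: the single contour integral via Mehler, the cubic critical point at $w=1$ with local scale $n^{-1/3}$, the Gaussian linearisation of the $|U-V|^2/s$ term to produce the $d$-dimensional $Q$-integral, and the $(1-w)^{-1}$ factor yielding $\int_0^\infty\operatorname{Ai}$ via $1/\sigma=\int_0^\infty e^{-r\sigma}dr$. The only differences are cosmetic (the paper takes $s=1-t\,n^{-1/3}$ so that the Airy contour sits in $\Re t>0$, and it phrases the Gaussian identity as the heat-kernel propagator \eqref{eq:propagator} rather than your \eqref{eq:plangauss}), and the paper's choice of global contour (Figure~\ref{Fig2aaa}) addresses your obstacle~(i) in exactly the way you anticipate.
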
     
In \cite{DeDoMaSc} the integrated Airy function was written as $\operatorname{Ai}_1(\zeta) = \int_\zeta^\infty \operatorname{Ai}(s) ds$.    
    For $d=1$, the expression yields the Airy kernel\cite{DeDoMaSc}. 
    Note that in \cite{BDMS} the $d$ dimensional integral was performed explicitly for $d\geq 2$, reducing it to a single integral including a Bessel function.

    \subsubsection{The general case: Strong non-Hermiticity regime in $\mathbb{C}^d$}

    We now proceed in stating our main results on the general case \eqref{eq:general_kernel_pre}, starting with the limiting density on $\mathbb C^d$. As in \eqref{eq:general_kernel_d=1} we scale the kernel $\mathcal K_n$ as 
    \begin{equation} \label{eq:gener_kernel_d}
        \mathbb K_n(Z,Z')= n^d \mathcal K_n\left(\sqrt n\, Z, \sqrt n\, Z' \right).
\end{equation}
    We show that under this rescaling the points will accumulate on the  $2d-$dimensional ellipsoidal domain
    \begin{align*}
    \mathcal E_\tau^d := \left\{Z\in \mathbb C^d : \frac{\lvert \operatorname{Re} Z\rvert^2}{(1+\tau)^2}
    + \frac{\lvert \operatorname{Im} Z\rvert^2}{(1-\tau)^2} < 1\right\}. 
    \end{align*}
    The limiting density will be uniform on this domain:
    \begin{theorem} \label{thm:limitingDensity}
        Fix $\tau\in (0,1)$. Let $Z\in \mathbb C^d$ and assume that $Z\not\in \partial \mathcal E_\tau^d$. As $n\to\infty$ we have
        \begin{align}
        \mathbb K_n(Z,Z) = \frac{n^d \mathfrak{1}_{Z\in \mathcal E_\tau^d}}{\pi^d (1-\tau^2)^d} + \mathcal O\left(e^{-cn}\right),
        \end{align}
        where $c>0$ can be chosen such that the $\mathcal O$ term is uniform on compact subsets of $\mathbb C^d \setminus \partial \mathcal E_\tau^d$.
        \end{theorem}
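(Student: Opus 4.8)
We outline a proof via a single contour integral representation followed by a steepest descent analysis; this is the diagonal, and simplest, instance of the general scheme of Section~\ref{sec:steepest}. Because the summand in \eqref{eq:general_kernel_pre} factorizes over the coordinates $k=1,\dots,d$ and $\overline{H_{j_k}(w)}=H_{j_k}(\overline w)$, the partial sum over $\lvert j\rvert<n$ can be written as $\frac{1}{2\pi i}\oint_\gamma\frac{dw}{w^n(1-w)}\prod_{k=1}^d\big(\sum_{j\ge0}(\cdots)w^j\big)$ on a small positively oriented circle $\gamma$ around the origin, the factor $1/(1-w)$ generating the partial sums. Each inner generating function is evaluated by Mehler's formula (valid for $\lvert\tau w\rvert<1$), and the $\omega$-weights precisely cancel the value of the Mehler exponent at $w=1$. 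After the rescaling \eqref{eq:gener_kernel_d} this gives
\[
\mathbb K_n(Z,Z)=\frac{n^d}{\pi^d(1-\tau^2)^{d/2}}\cdot\frac1{2\pi i}\oint_\gamma\frac{e^{n\Phi(w)}}{(1-w)\,(1-\tau^2w^2)^{d/2}}\,dw,
\]
with $\Phi(w)=(aw-bw^2)/(1-\tau^2w^2)-(a-b)/(1-\tau^2)-\log w$, where $a=\lvert Z\rvert^2$ and $b=\tau(\lvert\Re Z\rvert^2-\lvert\Im Z\rvert^2)$, and $\log$ the principal branch near $\mathbb R_{>0}$. Crucially $a,b$ are real even for $Z\in\mathbb C^d$, so the integral is genuinely one-dimensional; no $d$-dimensional saddle point appears, which is the whole point of the unified approach.

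The geometry of the phase is what drives the dichotomy. One has $\Phi(1)=0$, $\Phi(w)\to+\infty$ as $w\downarrow0$, and $w=1/\tau$ is the (at worst polar) singularity of $\Phi$ closest to the origin on the positive axis. A direct computation gives
\[
\Phi'(1)=\frac{(1-\tau)^2\lvert\Re Z\rvert^2+(1+\tau)^2\lvert\Im Z\rvert^2-(1-\tau^2)^2}{(1-\tau^2)^2}=\frac{\lvert\Re Z\rvert^2}{(1+\tau)^2}+\frac{\lvert\Im Z\rvert^2}{(1-\tau)^2}-1,
\]
so $\Phi'(1)<0$ exactly when $Z\in\mathcal E_\tau^d$. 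Since $\Phi$ descends from $+\infty$ at $0^+$, this means the relevant positive real critical point $w_\ast\in(0,1/\tau)$ of $\Phi$ lies to the right of the pole, $w_\ast\in(1,1/\tau)$, precisely when $Z\in\mathcal E_\tau^d$, and in $(0,1)$ when $Z\notin\overline{\mathcal E_\tau^d}$; in either case $\Phi(w_\ast)<\Phi(1)=0$.

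Now deform the contour. If $Z\in\mathcal E_\tau^d$, push $\gamma$ outward onto a steepest descent contour through $w_\ast\in(1,1/\tau)$; this crosses the simple pole at $w=1$, whose residue is $-(1-\tau^2)^{-d/2}$ because $e^{n\Phi(1)}=1$. Multiplying by $n^d/(\pi^d(1-\tau^2)^{d/2})$ yields exactly $n^d/(\pi^d(1-\tau^2)^d)$, while the remaining integral is $\mathcal O(n^{d-1/2}e^{n\Phi(w_\ast)})=\mathcal O(e^{-cn})$. If $Z\notin\overline{\mathcal E_\tau^d}$, deform $\gamma$ onto a descent contour through $w_\ast\in(0,1)$ without crossing $w=1$, so the whole expression is $\mathcal O(e^{-cn})$, in agreement with $\mathfrak 1_{Z\in\mathcal E_\tau^d}=0$. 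Uniformity on compact subsets of $\mathbb C^d\setminus\partial\mathcal E_\tau^d$ follows because $w_\ast$ and $\Phi(w_\ast)$ depend continuously on $(a,b)$ and stay bounded away from $0$, $1$, and $1/\tau$ there.

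The substantive part of the argument is the global construction of the steepest descent contour on which $\Re\Phi\le\Phi(w_\ast)$ with a uniform gap, avoiding the branch points $\pm1/\tau$ (and, in the interior case, the segment $[-1,1]$ carrying the pole and the accumulating zeros of the Hermite polynomials), together with the verification that the positive critical point relevant to the deformation is unique and that $\Phi(w_\ast)<0$ throughout the region of interest. Since the diagonal is the simplest slice of the full kernel asymptotics, this is essentially the content of Section~\ref{sec:steepest}; what is specific to Theorem~\ref{thm:limitingDensity} is the identification of the residue at $w=1$ with the claimed constant and the sign computation for $\Phi'(1)$, which pins the phase transition exactly on $\partial\mathcal E_\tau^d$.
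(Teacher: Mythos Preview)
Your approach is the paper's, after the change of variable $s=\tau w$: your pole at $w=1$ is the paper's $s=\tau$, your $\Phi$ is the paper's $F$ shifted by a constant, and the residue you compute is exactly the bulk density. The identification $\Phi'(1)=\frac{\lvert\Re Z\rvert^2}{(1+\tau)^2}+\frac{\lvert\Im Z\rvert^2}{(1-\tau)^2}-1$ is correct and is precisely what locates the transition on $\partial\mathcal E_\tau^d$.

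There is, however, a genuine gap on the set $B_\tau^d:=\{Z\in\mathbb R^d:\lvert Z\rvert\le 2\sqrt\tau\}\subset\mathcal E_\tau^d$. When $\Im Z=0$ one has $b=\tau a$, the singularity at $w=1/\tau$ cancels, and $\Phi(w)=\frac{aw}{1+\tau w}-\frac{a}{1+\tau}-\log w$; the critical equation $\Phi'(w)=0$ becomes $\tau^2w^2+(2\tau-a)w+1=0$ with discriminant $a(a-4\tau)$, which is \emph{negative} for $a=\lvert Z\rvert^2<4\tau$. Hence there is no real critical point $w_\ast$ at all, and your deformation through $w_\ast$ cannot be carried out. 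In the paper's coordinates this is the case $z,z'\in[-\sqrt2,\sqrt2]$: the saddles sit on the unit circle $\lvert s\rvert=1$ and $\Re F$ is constant there (Theorem~\ref{lem:as=1}(iv)), so a circle through the saddle is useless. The paper's fix (see its proof of Theorem~\ref{thm:limitingDensity}) is not to locate a saddle but to deform to a circle $\lvert s\rvert=r$ with $\tau<r<1$ and use that on the diagonal $F(s)=2(\Re z)^2\frac{s}{1+s}+2(\Im z)^2\frac{s}{1-s}-\log s+\log\tau$ has nonnegative coefficients in front of two M\"obius maps, whence $\Re F(s)\le F(r)<F(\tau)$ on that circle. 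Your closing paragraph gestures at a difficulty but misplaces it: the obstruction lives over the real ball $B_\tau^d$ in $Z$-space, and the ``accumulating zeros of the Hermite polynomials'' play no role in the $w$-plane, which is a purely auxiliary variable.
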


        The next step is to compute the local scaling limit of the kernel in the bulk. For simplicity we restrict our selves to the origin, but the same limit holds for any other point in the bulk. 
        \begin{theorem} \label{thm:Cd_bulk_intro}
        Let $Z\in \mathcal E_\tau^d$, and $U, V\in\mathbb{C}^d$. Then we have 
        the following factorisation into one-dimensional Ginibre kernels:
        \begin{align*}
            \lim_{n\to\infty} \frac{1}{n^d} \mathbb K_n\left(Z+\frac{U}{\sqrt n}, Z+\frac{V}{\sqrt n}\right) \prod_{j=1}^n C_{\tau}^n(U_j, V_j)
            =  \frac{1}{\pi^d (1-\tau^2)^d} \prod_{j=1}^d
   \exp\left(-\frac{|U_j|^2+|V_j|^2-2 U_j \overline{V_j}}{2(1-\tau^2)}\right),
            \end{align*}
            where $C_{\tau}^n(U_j,V_j) = \exp\left(\frac{i n \tau }{ 1-\tau^2} \frac{\operatorname{Im}(U_j^2-V_j^2)}{2}\right)$ is a cocycle. The convergence is uniform on compact sets. 
        \end{theorem}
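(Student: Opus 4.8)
The plan is to compare $\mathcal K_n$ with the kernel obtained by dropping the constraint $\lvert j\rvert<n$, which factorises exactly by Mehler's formula, and then to show that in the bulk the discarded tail is exponentially small. Introduce the $\tau$-deformed one-dimensional orthonormal functions $\psi^{(\tau)}_j(w)=\big(\frac{(\tau/2)^j}{\pi\sqrt{1-\tau^2}\,j!}\big)^{1/2}H_j\!\big(\frac{w}{\sqrt{2\tau}}\big)\sqrt{\omega(w)}$, so that \eqref{eq:general_kernel_pre} reads $\mathcal K_n(W,W')=\sum_{\lvert j\rvert<n}\prod_{k=1}^d\psi^{(\tau)}_{j_k}(W_k)\overline{\psi^{(\tau)}_{j_k}(W_k')}$, and set $\mathcal K_\infty(W,W')=\sum_{j\ge 0}\prod_{k}\psi^{(\tau)}_{j_k}(W_k)\overline{\psi^{(\tau)}_{j_k}(W_k')}$, the sum over all multi-indices. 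Applying Mehler's formula $\sum_{j\ge 0}\frac{(u/2)^j}{j!}H_j(x)H_j(y)=(1-u^2)^{-1/2}\exp\!\big(\frac{2xyu-(x^2+y^2)u^2}{1-u^2}\big)$ coordinatewise with $u=\tau$ gives the exact factorisation
\begin{equation*}
\mathcal K_\infty(W,W')=\prod_{k=1}^d\mathcal K^{(1)}_\infty(W_k,W_k'),\qquad
\mathcal K^{(1)}_\infty(z,z')=\frac{1}{\pi(1-\tau^2)}\exp\!\Big(-\frac{|z|^2+|z'|^2-2z\overline{z'}}{2(1-\tau^2)}-\frac{i\tau\operatorname{Im}(z^2-z'^2)}{2(1-\tau^2)}\Big),
\end{equation*}
which is, up to a cocycle, a product of $d$ copies of the one-dimensional infinite-Ginibre kernel.

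The heart of the matter is the estimate $\mathcal K_\infty(W,W')-\mathcal K_n(W,W')=\mathcal O(e^{-cn})$ for $W_k=\sqrt n\,Z_k+U_k$, $W_k'=\sqrt n\,Z_k+V_k$, with $Z$ in a compact subset of the open ellipsoid $\mathcal E_\tau^d$ and $(U,V)$ in a compact set. By Cauchy--Schwarz in the multi-index $j$ it suffices to bound $\sum_{\lvert j\rvert\ge n}\prod_k\lvert\psi^{(\tau)}_{j_k}(W_k)\rvert^2$. For any fixed $r\in(1,1/\tau)$ one has $r^{\lvert j\rvert}\ge r^n$ on $\{\lvert j\rvert\ge n\}$, so this tail is at most $r^{-n}\sum_{\text{all }j}r^{\lvert j\rvert}\prod_k\lvert\psi^{(\tau)}_{j_k}(W_k)\rvert^2=r^{-n}\prod_k\big(\sum_{j_k\ge 0}r^{j_k}\lvert\psi^{(\tau)}_{j_k}(W_k)\rvert^2\big)$, and each inner sum is again evaluated by Mehler's formula, now with $u=\tau r$. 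Substituting $W_k=\sqrt n Z_k+U_k$, each factor equals $\frac{(1-\tau^2r^2)^{-1/2}}{\pi(1-\tau^2)^{1/2}}\exp\!\big(n\,\Phi_{Z_k}(r)+\mathcal O(\sqrt n)\big)$ with $\Phi_z(r)=\frac{r|z|^2-\tau r^2\operatorname{Re}z^2}{1-\tau^2r^2}-\frac{|z|^2-\tau\operatorname{Re}z^2}{1-\tau^2}$, hence $\sum_{\lvert j\rvert\ge n}\prod_k\lvert\psi^{(\tau)}_{j_k}(W_k)\rvert^2\le C(1-\tau^2r^2)^{-d/2}\exp\!\big(n(\sum_k\Phi_{Z_k}(r)-\log r)+\mathcal O(\sqrt n)\big)$. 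Now $\Phi_z(1)=0$, so $\sum_k\Phi_{Z_k}(r)-\log r$ vanishes at $r=1$; a direct computation shows its derivative at $r=1$ equals $\frac{\lvert\operatorname{Re}Z\rvert^2(1-\tau)^2+\lvert\operatorname{Im}Z\rvert^2(1+\tau)^2}{(1-\tau^2)^2}-1$, which is strictly negative precisely when $Z$ lies in the open ellipsoid. Therefore, for $Z$ in a compact subset of $\mathcal E_\tau^d$ there is an $r>1$, close to $1$ so $r<1/\tau$, with $\sum_k\Phi_{Z_k}(r)-\log r\le -2c<0$, and the tail is $\mathcal O(e^{-cn})$ uniformly. (The same estimate, with explicit subleading corrections, can alternatively be read off from the single contour integral representation of $\mathcal K_n$ used in Section~\ref{sec:steepest}, by deforming the contour through the simple pole at $t=1$.)

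It remains to expand $\mathcal K^{(1)}_\infty(\sqrt n Z_k+U_k,\sqrt n Z_k+V_k)$: the terms of order $n$ in the exponent cancel and those of order $\sqrt n$ are purely imaginary, so this quantity equals $\frac{1}{\pi(1-\tau^2)}\exp\!\big(-\frac{|U_k|^2+|V_k|^2-2U_k\overline{V_k}}{2(1-\tau^2)}\big)$ times unimodular factors of cocycle form $c(U_k)/c(V_k)$. Since a correlation kernel may be multiplied by a cocycle without changing the determinantal point process, absorbing these factors — which include the $\prod_j C_\tau^n(U_j,V_j)$ displayed in the statement — and dividing by $n^d$ yields exactly the asserted limit, uniformly on compact sets in $(U,V)$. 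All dependence on $Z$ has disappeared into the cocycles and the exponentially small remainder, which is precisely why the same limit holds at every bulk point.

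The step I expect to be the main obstacle is the sign dichotomy at $r=1$: one must verify that the derivative of $\sum_k\Phi_{Z_k}(r)-\log r$ at $r=1$ is negative uniformly on compacts of the open ellipsoid and vanishes exactly on $\partial\mathcal E_\tau^d$. This is the analytic reflection of being "in the bulk" and is the only place where the precise geometry of $\mathcal E_\tau^d$ enters; everything else — the two Mehler reductions, the Cauchy--Schwarz bound, and the Gaussian bookkeeping of cocycles — is routine.
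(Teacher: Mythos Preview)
Your proof is correct and shares its overall architecture with the paper's: both write the kernel as the full Mehler sum minus a tail $\sum_{\lvert j\rvert\ge n}$, identify the full sum with a product of Ginibre kernels, and reduce the tail estimate to the diagonal via Cauchy--Schwarz (the paper uses the equivalent AM--GM formulation in \eqref{eq:sum-sumjgeqn}). The difference lies in how the diagonal tail is controlled. The paper invokes Theorem~\ref{thm:limitingDensity}, whose proof relies on the contour-integral machinery of Section~\ref{sec:steepest} (deforming to a circle of radius $r<\lvert a\rvert^{-1}$, with special care near the motherbody $B_\tau^d$). You instead apply Mehler a second time with parameter $\tau r$, $r\in(1,1/\tau)$, obtaining the exponent $\sum_k\Phi_{Z_k}(r)-\log r$ whose derivative at $r=1$ you correctly compute to be $\frac{\lvert\Re Z\rvert^2}{(1+\tau)^2}+\frac{\lvert\Im Z\rvert^2}{(1-\tau)^2}-1<0$ on $\mathcal E_\tau^d$. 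Your argument is more elementary and self-contained for this particular statement, requiring no steepest descent analysis and no case distinction near $B_\tau^d$; the paper's route, on the other hand, reuses Theorem~\ref{thm:limitingDensity}, which it needs anyway, and fits into the unified contour-integral framework that also delivers the sharper error terms of Theorems~\ref{thm:onepointAsymp} and~\ref{thm:onepointAsympd}.
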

        It is important to observe that the factors $C_{\tau}^n$  drop out when computing the determinant for the correlation functions and are therefore not relevant. 
        It is interesting that for $\mathbb C^d$ the limiting kernel is a product of bulk kernels for $d=1$, but in the Hermitian limit $\tau \uparrow 1$ it is a more complicated function of the $2d$ variables. 
           The connection between these two regimes is provided by the weak non-Hermiticity regime below. A counterpart of Theorem \ref{thm:Cd_bulk_intro} in the setting of compact complex manifolds can be found in \cite[Theorem 1.1]{Berman} (indeed, $\mathbb C^d$ is compactified by the complex projective space $\mathbb P^n$).
         
Before we move to the next section let us mention that  also at strong non-Hermiticity a particular edge regime should exist, that generalises the well-known complementary error function kernel, known as the Faddeeva plasma kernel, at $d=1$, compare \cite{TV}. However, to attain such a regime   for $d>1$, a different saddle point analysis would have to be made that is more involved. Very recently, this analysis has been carried out in \cite{Molag}, and a higher dimensional analogue of the Faddeeva plasma kernel emerges as the asymptotic behavior of the correlation kernel near the edge $\partial\mathcal E_\tau^d$.

        \subsubsection{The general case: Weak non-Hermiticity regime in $\mathbb{C}^d$}

        The next results provide an interpolation between these correlations in $\mathbb{R}^d$ and in $\mathbb{C}^d$ at strong non-Hermiticity.
  
\begin{theorem} \label{thm:weakNonHd_intro}
    Let $U, V\in \mathbb C^d$ with product $UV=\sum_{j=1}^dU_jV_j$, and assume that $X\in\mathbb R^d$ satisfies $\lvert X\rvert < 2$. 
    Denote 
    \begin{align*}
    \nu(X) = \frac{1}{(2\pi)^d} (4-\lvert X \rvert^2)^\frac{d}{2} \qquad\text{ and }\qquad
    \tau = 1 - \frac{\kappa}{\nu(X)^\frac{1}{d} n},
    \end{align*} 
    for some constant $\kappa>0$.  Then, as $n\to\infty$, we have 
    \begin{multline} \label{eq:thm:weakNonHd_intro}
    \frac{1}{\nu(X)^2 n^{2d}} \mathbb K_n\left(X+\frac{U}{\nu(X)^\frac{1}{d} n}, X+\frac{V}{\nu(X)^\frac{1}{d} n}\right)
    = C_X(U,V)
       \frac{e^{-\frac{\lvert \operatorname{Im} U\rvert^2+\lvert \operatorname{Im} V\rvert^2}{2\kappa}}}{(4 \kappa)^\frac{d}{2}} 
   4 \int_0^{1} e^{- \kappa \pi^2 t^2}  t^d \widehat{J}_{\frac{d}{2}-1}\left((U-\overline V)^2 \pi^2 t^2\right) dt
    + \mathcal O\left(\frac{1}{n}\right),
    \end{multline}
    where we define the even function $\widehat{J}_\nu(z^2)=(2/z)^\nu J_\nu(z)$ and 
        $C_X(U,V)= e^{i \nu(X)^{-\frac{1}{d}} \frac{1}{2} \langle X, \operatorname{Im}(U-V)\rangle}$ is a cocycle. 
    \end{theorem}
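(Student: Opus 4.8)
The plan is to combine the single contour integral representation of $\mathcal K_n$ from Section~\ref{sec:steepest} with the weak non-Hermiticity scaling and a steepest descent analysis. The kernel \eqref{eq:general_kernel_pre} factorises over the $d$ coordinates apart from the coupling constraint $|j|<n$, which one linearises by writing $\mathfrak{1}_{|j|<n}=\frac{1}{2\pi i}\oint_{|w|=r}\frac{w^{|j|}}{(1-w)\,w^{n}}\,dw$ for $0<r<1$, and then resumming each coordinate with Mehler's formula $\sum_{m\ge0}\frac{(\tau w/2)^m}{m!}H_m(a)H_m(b)=\mathcal M(a,b;\tau w)$, where $\mathcal M(a,b;s)=(1-s^2)^{-1/2}\exp\bigl(\frac{s(a+b)^2}{2(1+s)}-\frac{s(a-b)^2}{2(1-s)}\bigr)$. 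This yields
\begin{equation*}
\mathcal K_n(Z,Z')=\frac{1}{\pi^d(1-\tau^2)^{d/2}}\,\frac{1}{2\pi i}\oint_{|w|=r}\frac{dw}{(1-w)\,w^{n}}\prod_{k=1}^d\Bigl[\sqrt{\omega(Z_k)\omega(Z'_k)}\,\mathcal M\Bigl(\tfrac{Z_k}{\sqrt{2\tau}},\tfrac{\overline{Z'_k}}{\sqrt{2\tau}};\tau w\Bigr)\Bigr].
\end{equation*}
Inserting $\tau=1-\kappa/(\nu(X)^{1/d}n)$, $Z_k=\sqrt n\,X_k+U_k/(\nu(X)^{1/d}\sqrt n)$ and $Z'_k=\sqrt n\,X_k+V_k/(\nu(X)^{1/d}\sqrt n)$ (so that $Z_k-\overline{Z'_k}=(U_k-\overline V_k)/(\nu(X)^{1/d}\sqrt n)$ since $X\in\mathbb R^d$), the integrand becomes $e^{n\Phi(w)}$ times lower order factors, where $\Phi$ is the phase function analysed in Section~\ref{sec:steepest}.

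I then separate the three ingredients of \eqref{eq:thm:weakNonHd_intro}. The prefactor $\prod_k\sqrt{\omega(Z_k)\omega(Z'_k)}$ contributes, through the $(\operatorname{Im}\cdot)^2/(1-\tau)$ piece of $\omega$, a Gaussian in $\operatorname{Im}U,\operatorname{Im}V$ — which is of order $1$ precisely because $1-\tau\asymp1/n$ balances the $O(1/n)$ in the squared imaginary displacement and, in the normalisation chosen for $\nu(X)$, equals $e^{-(\lvert\operatorname{Im}U\rvert^2+\lvert\operatorname{Im}V\rvert^2)/(2\kappa)}$ — while the cross terms between $\sqrt n\,X_k$ and the local variables give the cocycle $C_X(U,V)$. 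The remaining work is in the $w$-integral. Since $\tau$ is within $O(1/n)$ of $1$, the factor $1-\tau w$ is of order $1/n$ only for $w$ in an $O(1/n)$-neighbourhood of $1$, where a pole (at $w=1$), a branch point of $(1-\tau^2w^2)^{-d/2}$ and an essential singularity of the Mehler factors (both at $w=1/\tau$) coalesce, and where the term $-\frac{w(U-\overline V)^2}{4\nu(X)^{2/d}n(1-\tau w)}$ in the exponent, with $(U-\overline V)^2=\sum_k(U_k-\overline V_k)^2$, becomes of order $1$. A steepest descent deformation shows that this neighbourhood carries the dominant contribution, of order $n^d$, whereas the saddles of $\Phi$ lie at distance $O(1)$ from it and contribute only at order $n^{(d-1)/2}$, i.e.\ within the claimed error. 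Rescaling the coalescing region by a factor of order $1/n$: the factors $w^{-n}$ and the leading $\tau$- and $X$-dependence of $e^{n\Phi(w)}$ produce the Gaussian damping $e^{-\kappa\pi^2t^2}$ and the power $t^d$ in the rescaled variable $t$, whose range $(0,1)$ is inherited from the cutoff $|j|<n$ (it is the radial momentum reaching the Fermi surface); and the $(U-\overline V)$-dependence, entering as $\exp\bigl(\mathrm{const}\cdot(U-\overline V)^2/(1-\tau w)\bigr)$, produces after the local (Hankel-type) integration around the coalescing singularities the $\Gamma$-function normalisations $1/\Gamma(m+\tfrac d2)$, whose resummation is precisely the spherical-average factor $\widehat J_{d/2-1}\bigl((U-\overline V)^2\pi^2t^2\bigr)$. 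Collecting the three ingredients together with the Jacobians (which account for the constants $4$ and $(4\kappa)^{-d/2}$) gives \eqref{eq:thm:weakNonHd_intro}.

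The main obstacle is precisely this double-scaling regime. Unlike the Hermitian limit, where $\Phi$ has genuine isolated saddles producing the sine and Airy kernels, or strong non-Hermiticity, where the residue at $w=1$ alone reproduces the one-dimensional Ginibre kernels, here a pole, a branch point and an essential singularity all merge at rate $1/n$, so the standard isolated-saddle estimates do not apply directly. One must deform the contour through the merging region and rescale, show that the rescaled contribution converges to the stated $t$-integral with integrand controlled uniformly up to $t=0$ (where $\widehat J_{d/2-1}$ is regular) and up to $t=1$ (the Fermi surface, where the Gaussian $e^{-\kappa\pi^2t^2}$ is harmless but the degenerating factor $(1-\tau^2w^2)^{-d/2}$ must be handled with care), and verify that the complementary part of the steepest descent contour is negligible; the uniformity in $(U,V)$ on compact sets needed for the $\mathcal O(1/n)$ error rests on the quantitative analysis of Section~\ref{sec:steepest}. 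Finally, the interpolation is read off from \eqref{eq:thm:weakNonHd_intro}: letting $\kappa\to\infty$ after rescaling $U,V$ collapses the $t$-integral back onto the full $d$-dimensional momentum ball and recovers the sine-Bessel kernel of Theorem~\ref{thm:fermionsScalingBulk_intro}, whereas letting $\kappa\to0$ after the corresponding rescaling recovers the product of one-dimensional Ginibre kernels of Theorem~\ref{thm:Cd_bulk_intro}.
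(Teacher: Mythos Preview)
Your framework is the same as the paper's: the single contour integral for $\mathcal K_n$, the identification of the dominant region as an $O(1/n)$ neighbourhood of $s=1$ where the pole at $s=\tau$, the branch point and the essential singularity from the Mehler factor coalesce, and the observation that the distant saddle points contribute only at subleading order. The weight computation producing the Gaussian in $\operatorname{Im}U,\operatorname{Im}V$ and the cocycle is also correct.

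The genuine gap is the passage from the local contour integral to the real integral $\int_0^1 e^{-\kappa\pi^2 t^2}t^d\,\widehat J_{d/2-1}(\ldots)\,dt$. After the rescaling $s=1-t/n$ one does \emph{not} obtain a real integral over $(0,1)$; one obtains a Hankel-type contour integral of the form
\[
\frac{1}{2\pi i}\int_{\tilde\gamma}\exp\Bigl(\pi^2\tilde\nu(X)^2\,t-\frac{(U-\overline V)^2}{4\tilde\nu(X)^2\,t}\Bigr)\frac{dt}{t^{d/2}(t-\tilde\kappa)},
\]
where $\tilde\gamma$ encircles $(-\infty,0]$. Your sentence ``the factors $w^{-n}$ \ldots produce the Gaussian damping $e^{-\kappa\pi^2 t^2}$ and the power $t^d$ in the rescaled variable $t$, whose range $(0,1)$ is inherited from the cutoff $|j|<n$'' conflates two different variables called $t$ and does not explain how the contour integral becomes a real integral with that Gaussian and that range. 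The paper handles this by a differentiation-in-parameter trick: setting $T(\nu)=\frac{-1}{2\pi i}\int_{\tilde\gamma}e^{\pi^2\nu^2 t-\zeta^2/(4t)}\,\frac{dt}{t^{d/2}(t-\tilde\kappa)}$, one computes $\frac{d}{d\nu}\bigl(e^{-\tilde\kappa\pi^2\nu^2}T(\nu)\bigr)$, which kills the awkward factor $(t-\tilde\kappa)^{-1}$, leaving a pure Hankel integral that is a Bessel function by the Schl\"afli representation; integrating back from $T(0^+)=0$ to $\nu=\tilde\nu(X)$ is what produces the real integral over $(0,1)$ and the factor $e^{-\kappa\pi^2 t^2}$. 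Your ``resummation of $1/\Gamma(m+d/2)$'' gestures at the Bessel function but does not address the $(t-\tilde\kappa)^{-1}$ factor or the origin of the finite integration range; without that step the argument is incomplete.

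A minor correction at the end: you have the interpolation limits reversed. Small $\kappa$ means $\tau$ close to $1$, hence the Hermitian limit; $\kappa\downarrow 0$ recovers the sine-Bessel kernel of Theorem~\ref{thm:fermionsScalingBulk_intro}, while $\kappa\to\infty$ (with $U,V$ rescaled by $\sqrt\kappa$) gives the factorised Ginibre kernel of Theorem~\ref{thm:Cd_bulk_intro}.
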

    
Notice that the first factor on the right-hand side of \eqref{eq:thm:weakNonHd_intro} is a cocycle, and, upon multiplication with another cocycle as in \eqref{eq:deltalim}, the second factor becomes a delta function as $\kappa\downarrow 0$.  In the limit $\kappa\downarrow 0$, the remaining integral in \eqref{eq:weakNonHd2} becomes proportional to \cite[6.561.5]{Grad}
\begin{align}
\xi^{1-\frac{d}{2}} \int_0^{1}  t^\frac{d}{2} J_{\frac{d}{2}-1}\left(
\xi\pi t\right) dt
= 
\frac{\pi^{
-1}}{
\xi^\frac{d}{2}} J_\frac{d}{2}\left(\xi\pi 
\right),
\end{align}
where the right-hand side is again even in $\xi$. Indeed, this limit provides  exactly 
the claimed interpolation,  
the bulk limit for non-interacting fermions in $d$ real dimensions when $U, V\in\mathbb R^d$ in Theorem \ref{thm:fermionsScalingBulk_intro}. 

The interpolation to strong non-Hermiticity in the limit $\kappa\to \infty$ can also be checked. Here, the  complex arguments have to be rescaled, keeping $U_j/\sqrt{\kappa}=u_j$ and $V_j/\sqrt{\kappa}=v_j$ fixed in the limit. After rescaling the integration variable in \eqref{eq:weakNonHd2} $t\to s =\sqrt{\kappa}\pi\nu(X)t$, the integral extends to infinity and can be computed, using \cite[6.631.4]{Grad}
\begin{equation*}
\int_0^\infty t^{\nu+1}J_\nu(bt) e^{-at^2} =\frac{b^\nu}{(2a)^{\nu+1}} e^{-b^2/4a}.
\end{equation*}
Thus, the exponential function on the right-hand side factorises in the $u_j$ and $v_j$. Collecting all prefactors
we arrive at Theorem \ref{thm:Cd_bulk_intro} in terms of these rescaled variables.\\


Finally, we also  compute the weak non-Hermiticity limit near the edge.
      
\begin{theorem} \label{thm:EGEdbulkWeakNonH_intro}
    Let $\tau = 1 - \kappa n^{-\frac{1}{3}}$, for some constant $\kappa>0$. Let $U,V\in\mathbb C^d$ and let $X = (1+\tau) \Omega$, where $\Omega$ is a fixed element of the unit sphere in $\mathbb R^d$. Then, as $n\to\infty$, we have 
    \begin{multline*}
    \frac{1}{n^\frac{4 d}{3}} \mathbb K_n\left(X+\frac{U}{n^\frac{2}{3}}, X+\frac{V}{n^\frac{2}{3}}\right)
    = 
    C_\Omega^n(U,V)
  \frac{1}{(\kappa\pi)^{\frac{d}{2}}(2\pi)^d}
   e^{-\frac{\lvert \operatorname{Im} U\rvert^2+\lvert \operatorname{Im} V\rvert^2}{2\kappa}}
e^{\frac{1}{6}\kappa^3+\frac{1}{2}\kappa \langle U+\overline V, \Omega\rangle}\\
    \times 
    \int_{\mathbb R^{d}} e^{-i Q(U-\overline V)}\int_0^\infty e^{2^{\frac23}\kappa s}
    \operatorname{Ai}\left(2^\frac{2}{3} \lvert Q\rvert^2 + 2^{-\frac{4}{3}} (\kappa^2 + 2 \Omega(U+\overline V)) +s\right)  d^{d}Q\ ds
    +\mathcal O(n^{-\frac{1}{3}}),
    \end{multline*}
    where $C_\Omega^n(U,V) = e^{i n^\frac{1}{3} \langle \Omega, \operatorname{Im}(U-V)\rangle}$ is a cocycle. 
    \end{theorem}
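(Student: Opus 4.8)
\medskip
\noindent\emph{Proof proposal.}
The strategy is the steepest-descent analysis that drives the rest of the paper, applied to the weak non-Hermiticity edge scaling. Resumming the Hermite bilinear forms in \eqref{eq:general_kernel_pre} by Mehler's formula and encoding the constraint $|j|<n$ by a geometric series gives, after the rescaling \eqref{eq:gener_kernel_d}, a single-variable contour integral of the schematic form
\begin{equation*}
\mathbb{K}_n(Z,Z') = \frac{c_{n,\tau,d}}{2\pi i}\oint \frac{dw}{1-w}\,\frac{1}{(1-\tau^2 w^2)^{d/2}}\,e^{\,n\mathcal{S}(w;Z,Z')},\qquad \mathcal{S}(w;Z,Z') = -\log w + \sum_{k=1}^d\phi(Z_k,Z'_k;w),
\end{equation*}
with $\phi$ built from the complex Mehler exponent and the weight $\omega$, the contour a small circle around $w=0$ to be deformed through the critical region, and $c_{n,\tau,d}$ collecting the prefactors together with the (gauge-dependent) leftover of $\sqrt{\omega(\sqrt nZ)\omega(\sqrt nZ')}$. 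This is the object set up in Section \ref{sec:steepest}; by Corollary \ref{cor:saddleinZW} its saddle point, for a point on the edge sphere, tends to $w=1$ as $\tau\uparrow1$, which is precisely where the pole $1/(1-w)$ and the branch point $1/\tau$ of $(1-\tau^2w^2)^{-d/2}$ also accumulate. The exponent $n^{-1/3}$ in the theorem is exactly the rate at which these three features merge.

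First I would substitute $\tau = 1-\kappa n^{-1/3}$, $X=(1+\tau)\Omega$, $Z=X+U n^{-2/3}$, $Z'=X+V n^{-2/3}$ and zoom in near the coalescence via $w = 1+\zeta n^{-1/3}$. Then $w^{-n}$ contributes $e^{-\zeta n^{2/3}+\frac12\zeta^2 n^{1/3}-\frac13\zeta^3+o(1)}$; evaluating $\phi$ at $Z=Z'=X$ and using that $X$ lies on the edge makes the $n^{2/3}$ and $n^{1/3}$ terms cancel (the second-order saddle condition), leaving a cubic phase $-\tfrac13\zeta^3$ together with the $\kappa^3/6$ and $\tfrac12\kappa\langle U+\overline V,\Omega\rangle$ contributions after a convenient shift and the $U,V$-expansion of $\phi$. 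The pole $1/(1-w)\sim -n^{1/3}/\zeta$ together with $\zeta^{-1}=\int_0^\infty e^{-\zeta s}\,ds$ produces the integrated-Airy variable $\int_0^\infty(\cdots)\,ds$, the weight $e^{2^{2/3}\kappa s}$ arising from the mismatch between the zero of $1-w$ and that of $1-\tau w$. The heart of the $d$-dimensionality is the Mehler factor: near $w=1$ one finds $(1-\tau^2w^2)^{-d/2}\sim(2(\kappa-\zeta))^{-d/2}n^{d/6}$ and a residual Mehler exponent $\sim -(U-\overline V)^2/(4(\kappa-\zeta))$, and the exact Gaussian identity
\begin{equation*}
(\kappa-\zeta)^{-d/2}\,e^{-(U-\overline V)^2/4(\kappa-\zeta)} = \pi^{-d/2}\int_{\mathbb R^d} e^{-iQ(U-\overline V)}\,e^{-(\kappa-\zeta)|Q|^2}\,d^dQ
\end{equation*}
simultaneously generates the Fourier integral $\int_{\mathbb R^d}e^{-iQ(U-\overline V)}\,d^dQ$ and, crucially, linearizes the $\zeta$-dependence: the term $+\zeta|Q|^2$ combines with $-\tfrac13\zeta^3$, so that after rescaling $\zeta$ the variable $|Q|^2$ lands \emph{inside} the Airy function as $2^{2/3}|Q|^2$. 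The anisotropic weight $\omega$ at the shifted arguments contributes the Gaussian $e^{-(|\operatorname{Im}U|^2+|\operatorname{Im}V|^2)/2\kappa}$ (from the $1/(1-\tau)$ direction) and, after the one-sided choice of gauge, the cocycle $C_\Omega^n$. Collecting everything and rescaling $\zeta$ and $s$ yields the stated formula.

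The main obstacle is to make this rigorous: one must choose an explicit steepest-descent contour for $w$ that near $w=1$ follows the correct rays of the cubic in $\zeta$, keeps $\operatorname{Re}\zeta>0$ (so that the $s$- and $Q$-integrals converge and the identity above applies), stays on the correct side of the almost-merged pole and branch cut, and, away from $w=1$, has $\operatorname{Re}\mathcal{S}$ strictly below its critical value uniformly for $U,V$ in compact sets. Controlling the error to $\mathcal O(n^{-1/3})$ through this triple coalescence, while the genuinely one-dimensional radial degeneration and the $d$ transverse directions both ride on the single variable $w$, is where the work lies; the point, as throughout the paper, is that no $d$-dimensional saddle point is needed — the transverse directions are dispatched by the linear Gaussian/Fourier integral in $Q$ rather than by further stationary phase. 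As built-in consistency checks one recovers Theorem \ref{thm:fermionsEdgeDensity_intro} in the limit $\kappa\downarrow0$ (the $\operatorname{Im}$-Gaussian collapsing to a delta function after absorbing a cocycle as in \eqref{eq:deltalim}) and, in the opposite regime, the edge analogue of Theorem \ref{thm:Cd_bulk_intro}.
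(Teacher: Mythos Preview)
Your proposal is essentially the paper's proof. Both start from the single Mehler contour integral, zoom in at the point where saddle, pole, and branch point coalesce at rate $n^{-1/3}$, turn the cubic local phase into an Airy function, rewrite the simple pole as $\int_0^\infty e^{-(\cdot)s}\,ds$ to produce the integrated-Airy variable, and apply the Gaussian--Fourier identity to the Mehler factor to generate the $d$-dimensional $Q$-integral with $|Q|^2$ landing linearly inside the Airy argument. The weight $\omega$ supplies the $e^{-(|\operatorname{Im}U|^2+|\operatorname{Im}V|^2)/2\kappa}$ and the cocycle, exactly as you describe.

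The one genuine difference is your choice of local chart. You use $w=s/\tau$ and center at the pole via $w=1+\zeta n^{-1/3}$; the paper centers at the branch point/degenerate saddle via $s=1-t\,n^{-1/3}$. These are related by $t=\kappa-\zeta+O(n^{-1/3})$, so the approaches are equivalent, but the paper's variable is cleaner: in $t$ the $n^{2/3}$ and $n^{1/3}$ contributions from $-n\log s$ and from the $(z+z')^2$ Mehler term cancel \emph{termwise}, leaving a pure $t^3/12$; the propagator identity is applied with parameter $t$, so no stray $e^{-\kappa|Q|^2}$ appears; and the weight $e^{\kappa s}$ comes directly from $\tfrac{1}{t-\kappa}=\int_0^\infty e^{-(t-\kappa)s}\,ds$. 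In your $\zeta$ variable the same final expression emerges, but only after a further shift (completing the cube) that absorbs the quadratic $\kappa\zeta^2/4$ and cancels the $e^{-\kappa|Q|^2}$ against constants from the expansion of $(\kappa-\zeta)^3$; your sketch skips this step and misattributes the origin of the $e^{2^{2/3}\kappa s}$ factor. This is bookkeeping, not a gap, but the paper's centering makes the error control and the identification of the Airy argument considerably more transparent.
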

    
It is not difficult to see that in the limit $\kappa\to0$ we recover the real $d$-dimensional Airy-kernel at the edge \eqref{eq:fermionsAiryd_intro}, after removing the appropriate cocyle and the delta function $\delta(\Im U)$ resulting from the limit.

In principle, Theorem \ref{thm:EGEdbulkWeakNonH_intro} could be taken as a starting point to reach strong non-Hermiticity at the edge in the limit $\kappa\to\infty$, in terms of rescaled variables $U/\sqrt{\kappa}$ and $V/\sqrt{\kappa}$. Clearly, this requires a saddle point analysis on $\mathbb{R}^d$ that depends on the dimension. For $d=1$ the matching of the above theorem with strong non-Hermiticity at the edge has been performed in \cite{ABe}.

\subsection{Overview of the proofs}

The proofs of our main results are based on an integral representation of the kernel \eqref{eq:general_kernel_pre}, rescaled as in \eqref{eq:gener_kernel_d}. We start by recalling the Mehler kernel: for $z, w\in\mathbb C$
\begin{equation} \label{eq:Mehler}
    \frac{1}{(1-\tau^2)^{1/2}} \exp\left(-\frac{\tau^2(z^2+w^2)}{1-\tau^2} + \frac{2\tau z w}{1-\tau^2}\right)
= \sum_{j=0}^{\infty} \frac{(\tau/2)^j}{j!} H_j(z) H_j(w).
\end{equation}
The square root $(1-\tau^2)^{1/2}$ is taken such that it is analytic in $\mathbb C\setminus \left((-\infty,-1]\cup [1, \infty)\right)$ and takes positive values for $\tau \in (-1,1)$. 
By taking products in the $d$ variables we obtain 
\begin{equation} \label{eq:MehlerFordFactors}
    \frac{1}{(1-\tau^2)^{d/2}} \exp\left(-\frac{\tau^2\sum_{j=1}^d (z_j^2+w_j^2)}{1-\tau^2} + \frac{2\tau \sum_{j=1}^d z_j w_j}{1-\tau^2}\right)
= \sum_{j_1, \ldots,j_d=0}^{\infty} \frac{(\tau/2)^{j_1+\ldots +j_d}}{j_1! \cdots j_d!} \prod_{\ell=1}^d H_{j_\ell}(z_\ell) H_{j_\ell}(w_\ell).
\end{equation}
Note that \eqref{eq:general_kernel_pre} is a product of the weight functions and a cut-off of this infinite series, and this cut-off can be obtained by a contour integral. 
We define, for $z,w \in \mathbb C^d$
\begin{equation}
    K_n(z,w)= \frac{1}{2\pi i} \oint_{\gamma_0} \exp\left(-\frac{s^2\sum_{j=1}^d (z_j^2+w_j^2)}{1-s^2} + \frac{2 s \sum_{j=1}^d z_j w_j}{1-s^2}\right)
    \frac{1-(\tau/s)^n}{s-\tau} \frac{ds}{(1-s^2)^{d/2}},
\end{equation} 
where $\gamma_0$ is a small (e.g., $|\gamma_0|<\tau$) counter-clockwise oriented contour around $s=0$. As before, the square root $(1-s^2)^{d/2}$ is taken such that it is analytic in $\mathbb C\setminus \left((-\infty,-1)\cup (1, \infty)\right)$ and takes positive values for $s \in (-1,1)$.  Note that we can rewrite this as 
\begin{align} \nonumber
    K_n(z,w)&= - \frac{1}{2\pi i} \oint_{\gamma_0} \exp\left(-\frac{s^2\sum_{j=1}^d (z_j^2+w_j^2)}{1-s^2} + \frac{2 s \sum_{j=1}^d z_j w_j}{1-s^2}\right)
    \frac{(\tau/s)^n}{s-\tau} \frac{ds}{(1-s^2)^{d/2}}\\ \label{eq:integral_representation_kernel}
    &=- \frac{1}{2\pi i} \oint_{\gamma_0} \exp\left(\frac{s\sum_{j=1}^d (z_j+w_j)^2}{2(1+s)} - \frac{s\sum_{j=1}^d (z_j-w_j)^2}{2(1-s)}\right)
    \frac{(\tau/s)^n}{s-\tau} \frac{ds}{(1-s^2)^{d/2}},
\end{align} 
provided that we take $\gamma_0$ such that it does not go around  the pole $s=\tau$. 
Then \eqref{eq:gener_kernel_d} has the representation
\begin{equation} \label{eq:kernelboldKtoK}
 \mathbb K_n(Z,Z')= \frac{n^d}{\pi^d (1-\tau^2)^\frac{d}{2}} 
 \sqrt{\prod_{\ell=1}^d \omega(\sqrt n Z_\ell) \omega(\sqrt n Z'_\ell)}  K_n\left(\sqrt{\frac{n}{2\tau}} Z, \sqrt{\frac{n}{2\tau}} {Z'}\right).
\end{equation}
To prove our main results we are thus left to compute a steepest descent analysis of the single integral representation of  $K_n$. An important point is that the integrand depends only in a simple way on the dimension $d$. This should be compared to more standard strategies based on integral formulas of the Hermite functions. Indeed, in \cite{Be}  the analysis was based on a double integral formula for the kernel in the case $d=1$. By following that approach one would obtain a representation with a $2d$ dimensional integral, and the complexity increases with the dimension $d$. Using the representation \eqref{eq:integral_representation_kernel} we are able to perform an analysis for all dimensions $d$ simultaneously.  

In Section \ref{sec:steepest} we will perform a steepest descent analysis for the integral \eqref{eq:integral_representation_kernel} using general parameters. These results will be translated to the case $d=1$ in Section \ref{sec:EGE} and used to derive various asymptotic results, including those mentioned above. Similarly, the Hermitian limit $\tau \uparrow 1$ will be discussed in Section \ref{sec:fermions} and the general non-Hermitian case will be treated in \text{Section \ref{sec:general}}.


\section{Steepest descent analysis} \label{sec:steepest}

In this section we consider a fixed $\tau\in(0,1]$ and here we let $d$ be a complex number. Using steepest descent arguments, we shall find the large $n$ behavior of 
\begin{align} \label{eq:defIn}
I_n(d,\tau;z,z') := -\frac{1}{2\pi i} \oint_{\gamma_0} \frac{e^{n F(s)}}{s-\tau} \frac{ds}{(1-s^2)^\frac{d}{2}},
\end{align}
where $\gamma_0$ is a small contour with positive orientation, that encloses $0$ but not $\tau$, $(1-s^2)^\frac{d}{2}$ is defined with cut $(-\infty,-1]\cup [1,\infty)$ and positive on $(-1,1)$, and $F(s) = F(\tau;z, z'; s)$ is defined by
\begin{align} \label{eq:defF}
F\left(\tau;z, z'; s\right) :=& \frac{s(z+z')^2}{2(1+s)}-\frac{s (z-z')^2}{2(1-s)} - \log s + \log \tau.
\end{align}
The branch for the logarithm is not relevant for \eqref{eq:defIn}. We remark that, since $\tau$ merely enters \eqref{eq:defF} in the form of an additive constant, the dependence on $\tau$ is irrelevant for the steepest descent analysis of $F$ when $\tau$ is fixed. 

Note that \eqref{eq:defIn} coincides with $K_n\left(\sqrt{\frac{n}{2\tau}} Z, \sqrt{\frac{n}{2\tau}}\, \overline{Z'}\right)$, as defined via \eqref{eq:integral_representation_kernel}, if we correctly identify $z$ and $z'$. 

\subsection{The saddle points and associated expressions}

It turns out that the saddle points of $F$ have a remarkably simple form if we view them in elliptic coordinates. We denote
\begin{align} \label{eq:defEllipticCoord}
\begin{array}{l}
z = \sqrt 2 \cosh\left(\xi+i\eta\right),\\
z' = \sqrt 2 \cosh\left(\xi'+i\eta'\right),
\end{array}
\end{align}
where $\xi\geq 0$ and $\eta\in (-\pi,\pi]$ when $\xi>0$, while $\eta\in [0,\pi]$ when $\xi=0$, and similarly for $\xi'$ and $\eta'$. Note that any constant value of $\xi$ corresponds to an ellipse with vertex $\sqrt 2 \cosh \xi$ and co-vertex $\sqrt 2 \sinh \xi$. We will see later that the case $\xi=\xi_\tau$ is of particular importance for the asymptotics of our models, where
\begin{align}
\xi_\tau = -\frac{1}{2} \log \tau. 
\end{align}
Next we define the expressions 
\begin{align} \label{eq:defab}
a = e^{\xi+\xi'} e^{i(\eta+\eta')}\qquad\text{ and }\qquad 
b = e^{\xi-\xi'} e^{i(\eta-\eta')}.
\end{align}
We will show that the saddle points generically are given by $a,a^{-1},b$ and $b^{-1}$. Degenerate situations occur when saddle points collide with each other or coalesce with one of the poles at $s=\pm 1$. It  is important to know  when these different situations happen and this is explained in  the following proposition.  

\begin{proposition} \label{prop:saddlePointsDef}
Let $z,z'\in \mathbb C\setminus \{-\sqrt 2, \sqrt 2\}$, then the saddle points of $s\mapsto F(\tau;z,z';s)$ are simple, and we have the following:
\begin{itemize}
\item[(i)] When $z \neq  \pm z'$, there are exactly four saddle points  given by $a, a^{-1}, b$ and $b^{-1}$.
\item[(ii)] When $z=\pm z'$ and $z\neq 0$, there are exactly two  saddle points, which are given by $a$ and $a^{-1}$.
\end{itemize}
If $z\in \{-\sqrt{2},\sqrt{2}\}$ or $z'\in \{-\sqrt{2},\sqrt{2}\}$, then all saddle points have order two, and we have the following:
\begin{itemize}
    \item[(iii)] If $z\in \{-\sqrt 2, \sqrt 2\}$ and $z'\not\in \{-\sqrt 2, \sqrt 2\}$, then we have two saddle points $a=b^{-1}$ and $a^{-1}=b$.
        \item[(iv)] If $z\not\in \{-\sqrt 2, \sqrt 2\}$ and $z'\in \{-\sqrt 2, \sqrt 2\}$, then we have two saddle points $a=b$ and $a^{-1}=b^{-1}$;
    \item[(v)] If $z=\pm z'\in\{-\sqrt 2, \sqrt 2\}$, then we have one saddle point $a^{-1}=a=b=b^{-1}=\pm 1$.
\end{itemize}
Finally, when $z=z'=0$ there are no saddle points.
\end{proposition}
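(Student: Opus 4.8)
The plan is to turn the saddle‑point equation $F'(s)=0$ into a polynomial equation and to factor that polynomial explicitly in the elliptic coordinates \eqref{eq:defEllipticCoord}; the five cases then fall out by recording which roots of the polynomial collide with the poles of $F'$. Differentiating \eqref{eq:defF} gives $F'(s)=\frac{(z+z')^2}{2(1+s)^2}-\frac{(z-z')^2}{2(1-s)^2}-\frac{1}{s}=P(s)/\bigl(s(1-s^2)^2\bigr)$, where $P(s):=\frac{(z+z')^2}{2}s(1-s)^2-\frac{(z-z')^2}{2}s(1+s)^2-(1-s^2)^2$ is a polynomial of degree exactly $4$ with leading coefficient $-1$. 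The crux is the identity, valid for $z=\sqrt2\cosh u$ and $z'=\sqrt2\cosh v$,
\[
P(s)=-\bigl(s^2-2\cosh(u+v)\,s+1\bigr)\bigl(s^2-2\cosh(u-v)\,s+1\bigr)=-(s-a)(s-a^{-1})(s-b)(s-b^{-1}),
\]
with $a=e^{u+v}$ and $b=e^{u-v}$ as in \eqref{eq:defab}. To prove it I would first record the elementary relations $(z\pm z')^2=2\bigl(1\pm\cosh(u+v)\bigr)\bigl(1\pm\cosh(u-v)\bigr)$, which come from $z^2+z'^2=2+\cosh 2u+\cosh 2v$, $\cosh 2u+\cosh 2v=2\cosh(u+v)\cosh(u-v)$ and $zz'=\cosh(u+v)+\cosh(u-v)$; substituting these into $P$, the identity is a one‑line comparison of the coefficients of $s^0,\dots,s^4$. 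I also note the by‑products $P(0)=-1$ and $P(\pm1)=-2(z\mp z')^2$.

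Given the identity, the saddle points of $F$ are exactly those of $a,a^{-1},b,b^{-1}$ that do not lie in $\{0,\pm1\}$, counted with multiplicity, and the order of a saddle point equals the multiplicity of the corresponding root of $P$; a root of $P$ at $0$ or $\pm1$ produces no saddle point but cancels (part of) that pole of $s(1-s^2)^2$. None of the roots is $0$, and $\pm1$ is a root of $P$ precisely when $z=\pm z'$ (by $P(\pm1)=-2(z\mp z')^2$). From here each case is read off mechanically. If $z\neq\pm z'$ and $z,z'\notin\{-\sqrt2,\sqrt2\}$, then $\{0,\pm1\}$ contains no root and the four roots are pairwise distinct — $a^2=1$ would force $z=\pm z'$, and $a=b$ or $a=b^{-1}$ would force $z'\in\{-\sqrt2,\sqrt2\}$ resp. $z\in\{-\sqrt2,\sqrt2\}$ — giving four simple saddle points, which is (i). If exactly one of $z,z'$ lies in $\{-\sqrt2,\sqrt2\}$, the relation $e^{2u}=1$ (resp. $e^{2v}=1$) forces $a=b^{-1}$ (resp. $a=b$), so $P(s)=-(s-\rho)^2(s-\rho^{-1})^2$ with $\rho,\rho^{-1}$ still distinct and outside $\{0,\pm1\}$, i.e. two order‑two saddle points, which is (iii)/(iv). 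If $z=z'$ with $z\neq 0$ and $z\notin\{-\sqrt2,\sqrt2\}$, then (with $v=u$) $F$, hence $F'$, is holomorphic at $s=1$, the factor $s^2-2\cosh(u-v)s+1$ of $P$ equals $(s-1)^2$ and cancels the $(1-s)^2$ in the denominator, leaving $F'(s)=-(s-a)(s-a^{-1})/\bigl(s(1+s)^2\bigr)$ with two simple zeros $a=e^{2u}$ and $a^{-1}$; the case $z=-z'$ is analogous with $1$ replaced by $-1$, which together give (ii). If $z=z'\in\{-\sqrt2,\sqrt2\}$, both quadratic factors of $P$ collapse onto the same double root, $P(s)=-(s-1)^4$, so $F'(s)=-(s-1)^2/\bigl(s(1+s)^2\bigr)$: one order‑two saddle point at $1$ (and at $-1$ in the symmetric case $z=-z'$), which is (v). Finally $z=z'=0$ gives $\cosh(u+v)=-1$, $\cosh(u-v)=1$, hence $P(s)=-(1-s^2)^2$ and $F'(s)=-1/s$, which has no zeros.

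The only genuine difficulty is the bookkeeping of elliptic representatives in the degenerate cases (ii) and (v). With the fundamental‑domain convention of \eqref{eq:defEllipticCoord}, $z=z'$ forces $v=u$, so the surviving pair in (ii) is literally $\{a,a^{-1}\}$; but for $z=-z'$ that same convention may instead put the double root at $\pm1$ on the $\{a,a^{-1}\}$‑pair (when $z,z'$ lie on the co‑vertex segment $[-\sqrt2,\sqrt2]$), so the assertion ``$a$ and $a^{-1}$'' — and the explicit value $\pm1$ in (v) — must be read after the harmless substitution $v\mapsto-v$, i.e. after choosing the representative of $z'$ for which $\cosh(u-v)=\pm1$. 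The remaining point needing attention is that the coalescing saddles in (iii)–(v) have order \emph{exactly} two; this follows from the explicit factorisation, since in (iii)/(iv) the saddle point is a double root of $P$ while $s(1-s^2)^2$ is nonvanishing there, and in (v) it is a quadruple root of $P$ while $s(1-s^2)^2$ vanishes there to order two, so in both situations $F'$ has a zero of order exactly two.
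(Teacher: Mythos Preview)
Your proof is correct and follows essentially the same route as the paper: both compute $F'(s)$ as a rational function with denominator $s(1-s^2)^2$, factor the quartic numerator as $-(s^2-2\cosh(u+v)s+1)(s^2-2\cosh(u-v)s+1)$ via the elliptic parametrisation, and then read off the cases by tracking which roots coincide with one another or with the poles $\pm 1$. Your treatment is slightly more systematic (recording $P(0)$ and $P(\pm1)$ explicitly) and you rightly flag the labelling subtlety in case (ii) for $z=-z'$ on the real segment $[-\sqrt2,\sqrt2]$, where the fundamental-domain convention makes $a=-1$ rather than $b=-1$; the paper navigates this point too but less explicitly.
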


\begin{proof}
First we prove cases (i), (iii) and (iv). One may verify that $a$ and $b$ do not equal $1$ or $-1$ in each of these cases. 
A simple calculation yields
\begin{align} \label{eq:dFroots}
F'(s) = -\frac{s^4 - 2 z z' s^3 + 2(z^2+z'^2-1) s^2 - 2 z z' s+1}{s (s^2-1)^2}. 
\end{align}
A well-known multiplication formula for hyperbolic cosines gives
\begin{align}
z z' = \cosh(\xi+\xi'+i(\eta+\eta'))+\cosh(\xi-\xi'+i(\eta-\eta')),
\end{align}
and
\begin{align}
z^2+z'^2-1
&= \cosh(2\xi+2i\eta) + \cosh(2\xi'+2i\eta')+1\\
&= 2 \cosh(\xi+\xi'+i(\eta+\eta')) \cosh(\xi-\xi'+i(\eta-\eta'))+1.
\end{align}
This means that
\begin{align} \nonumber
(s-a)&(s-a^{-1})(s-b)(s-b^{-1})\\ \nonumber
&= (s^2-2 \cosh(\xi+\xi'+i(\eta+\eta')) s +1) (s^2-2 \cosh(\xi-\xi'+i(\eta-\eta')) s +1)\\ \label{eq:saddlePolynomial}
&= s^4 - 2 z z' s^3 + 2(z^2+z'^2-1) s^2 - 2 z z' s+1.
\end{align}
Comparing with \eqref{eq:dFroots}, we conclude that $F$ has saddle points $a, a^{-1}, b$ and $b^{-1}$. 

Now let us consider case (ii). Then we have
\begin{align}
F'(s) = -\frac{s^2\mp 2(z^2- 1))s+1}{s(s\pm 1)^2}.
\end{align}
Indeed, we have
\begin{align}
(s-a)(s-a^{-1})
= s^2\mp 2 \cosh(2\xi+2i \eta) s +1
= s^2\mp 2(z^2-1) s +1.
\end{align}
We may conclude that $a$ and $a^{-1}$ are the saddle points in this case, if we can argue that $a \neq \mp 1$. If this was the case, then we would have $\xi=\xi'=0$ and either $\eta=\eta'\in\{0,\pi\}$ or $\eta=\pi-\eta'$. The first option is excluded, and the other option corresponds with $z=-z'$ and $a=-1 \neq 1$. 

The reader may verify that case (v) can be proved by some straightforward algebra. 
\end{proof}


It is clear from Proposition \ref{prop:saddlePointsDef} and the definition in \eqref{eq:defab} that $a^{-1}$ is always the closest saddle point to the origin. It is on the unit circle when both $z, z'\in [-\sqrt 2, \sqrt 2]$, but otherwise strictly inside the unit disk. In general, the radii of $b$ and $b^{-1}$ are sandwiched between those of $a^{-1}$ and $a$. Only when $z\in [-\sqrt 2, \sqrt 2]$ or $z'\in [-\sqrt 2, \sqrt 2]$, can it happen that $b$ or $b^{-1}$ have the same radius as $a^{-1}$. See also Figure \ref{Fig1a}.
\begin{corollary} \label{cor:saddleinZW}
Assume that $z, z'\in \mathbb C\setminus [-\sqrt 2, \sqrt 2]$.
When $z\neq\pm z'$, the saddle points of $s\mapsto F(\tau;z,z';s)$ can be expressed as
\begin{align}
\begin{array}{ll}
a = \frac{1}{2} \left(z+\sqrt{z^2-2}\right) \left(z'+\sqrt{z'^2-2}\right), & a^{-1} = \frac{1}{2} \left(z-\sqrt{z^2-2}\right) \left(z'-\sqrt{z'^2-2}\right) \\
{ } & { }\\
b = \frac{1}{2} \left(z+\sqrt{z^2-2}\right) \left(z'-\sqrt{z'^2-2}\right), & b^{-1} = \frac{1}{2} \left(z-\sqrt{z^2-2}\right) \left(z'+\sqrt{z'^2-2}\right).
\end{array}
\end{align}
When $z=\pm z'$, there are only two saddle points, which are given by $a$ and $a^{-1}$ above.\\
In particular, the saddle points are analytic functions of $(z,z')$ on $(\mathbb{C}\setminus [-\sqrt 2,\sqrt 2])^2$.\\
\end{corollary}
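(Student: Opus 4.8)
The plan is to reduce the corollary, in each variable separately, to solving the quadratic that defines the elliptic coordinate, and then to read off the saddle points from Proposition \ref{prop:saddlePointsDef}. First I would record the elementary fact that $w\mapsto\sqrt 2\cosh w$ restricts to a bijection from the half-strip $\{\,\xi+i\eta:\xi>0,\ -\pi<\eta\leq\pi\,\}$ onto $\mathbb C\setminus[-\sqrt 2,\sqrt 2]$ (confocal ellipses filling up the slit exterior, which is the standard conformal picture already implicit in the discussion of elliptic coordinates above). Consequently, for $z\in\mathbb C\setminus[-\sqrt 2,\sqrt 2]$ the number $t:=e^{\xi+i\eta}$ is unambiguously defined, and $t+1/t=\sqrt 2\,z$ shows that $t$ is the root of
\[
T^2-\sqrt 2\,z\,T+1=0
\]
with $|T|>1$, the other root being $1/t=e^{-(\xi+i\eta)}$ with $|1/t|<1$. (The two roots cannot both lie on the unit circle: then $T_1T_2=1$ would force $T_2=\overline{T_1}$ and hence $\sqrt 2\,z=T_1+T_2=2\operatorname{Re}T_1\in[-2,2]$, i.e. $z\in[-\sqrt 2,\sqrt 2]$, which is excluded.)

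Next I would solve this quadratic as $T=\tfrac1{\sqrt 2}\bigl(z\pm\sqrt{z^2-2}\bigr)$, fixing the analytic branch of $\sqrt{z^2-2}$ on the connected domain $\mathbb C\setminus[-\sqrt 2,\sqrt 2]$ by the normalization $\sqrt{z^2-2}/z\to1$ as $z\to\infty$; such a single-valued branch exists because $z^2-2$ has winding number two about the origin along any loop in $\mathbb C\setminus[-\sqrt 2,\sqrt 2]$ encircling the slit. From $\bigl(z+\sqrt{z^2-2}\bigr)\bigl(z-\sqrt{z^2-2}\bigr)=2$ one sees that at infinity one factor blows up while the other tends to $0$, so for large $|z|$ the root $\tfrac1{\sqrt 2}\bigl(z+\sqrt{z^2-2}\bigr)$ is the one of modulus $>1$. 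Since the two roots have moduli $e^{\xi}$ and $e^{-\xi}$ and therefore never coincide ($\xi>0$), continuity on the connected set $\mathbb C\setminus[-\sqrt 2,\sqrt 2]$ propagates this identification everywhere, giving
\[
e^{\xi+i\eta}=\frac{z+\sqrt{z^2-2}}{\sqrt 2},\qquad e^{-(\xi+i\eta)}=\frac{z-\sqrt{z^2-2}}{\sqrt 2},
\]
and likewise with $z,\xi,\eta$ replaced by $z',\xi',\eta'$.

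Substituting these expressions into the definitions \eqref{eq:defab}, i.e. $a=e^{\xi+i\eta}e^{\xi'+i\eta'}$, $b=e^{\xi+i\eta}e^{-(\xi'+i\eta')}$ and correspondingly for $a^{-1},b^{-1}$, produces precisely the four displayed formulas. Proposition \ref{prop:saddlePointsDef}(i) then identifies these as the saddle points of $s\mapsto F(\tau;z,z';s)$ when $z\neq\pm z'$, and part (ii) gives the two saddle points $a,a^{-1}$ when $z=\pm z'$; here the hypothesis $z,z'\notin[-\sqrt 2,\sqrt 2]$ rules out $z,z'\in\{0,\pm\sqrt 2\}$, so none of the degenerate cases (iii)–(v) nor the case $z=z'=0$ can occur. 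Analyticity on $(\mathbb C\setminus[-\sqrt 2,\sqrt 2])^2$ is then immediate, since each of $z\pm\sqrt{z^2-2}$ is analytic and nowhere vanishing there (a zero would force $z^2=z^2\mp2$), so $a,a^{-1},b,b^{-1}$ are products and quotients of such functions.

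The only genuinely delicate point — the part I would be most careful about — is the branch bookkeeping for $\sqrt{z^2-2}$: confirming that $\mathbb C\setminus[-\sqrt 2,\sqrt 2]$ really does carry a single-valued analytic square root of $z^2-2$, and that the behaviour at infinity correctly selects which root of $T^2-\sqrt 2\,z\,T+1=0$ is $e^{\xi+i\eta}$ rather than $e^{-(\xi+i\eta)}$. Everything else is routine, resting only on the hyperbolic multiplication formulas and the factorization \eqref{eq:saddlePolynomial} already obtained in the proof of Proposition \ref{prop:saddlePointsDef}.
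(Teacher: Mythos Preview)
Your proposal is correct and follows essentially the same approach as the paper: both derive $e^{\xi+i\eta}=(z+\sqrt{z^2-2})/\sqrt 2$ from the elliptic-coordinate relation $z=\sqrt 2\cosh(\xi+i\eta)$ and then substitute into \eqref{eq:defab}, invoking Proposition \ref{prop:saddlePointsDef}. The paper's own proof is much more telegraphic about the branch of $\sqrt{z^2-2}$ and the identification of which quadratic root is $e^{\xi+i\eta}$; your added care on these points is justified and fills in exactly what the paper leaves implicit.
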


\begin{proof}
With $\xi >0$ and $\eta\in (-\pi,\pi]$, we have $\xi+i\eta = \operatorname{arccosh} \frac{z}{\sqrt 2}$. The inverse hyperbolic cosine can alternatively be written as $z\mapsto \log\left(z+\sqrt{z^2-1}\right)$ (where we might have to add multiples of $2\pi i$ depending on our choice of cut). We infer that
\begin{align}
e^{\xi+i\eta} = \frac{z+\sqrt{z^2-2}}{\sqrt 2}.
\end{align}
The proof follows after the observation that taking the multiplicative inverse of this expression amounts to changing the $+$ sign to a $-$ sign in the right-hand side.  
\end{proof}

\begin{proposition} \label{prop:Finsaddleelliptic}
Let $z, z'\in \mathbb C$. We have
\begin{align} \label{prop:Finsaddleelliptica}
F(a) &= 1 + \log \tau - \xi - \xi' - i(\eta+\eta') +  \frac{1}{2} e^{2 (\xi + i \eta)} + \frac{1}{2} e^{2 (\xi'+ i \eta')},\\ \label{prop:Finsaddleelliptica-}
F(a^{-1}) &= 1 + \log \tau + \xi + \xi' + i(\eta+\eta') +  \frac{1}{2} e^{-2 (\xi + i \eta)} + \frac{1}{2} e^{-2 (\xi'+ i \eta')},\\
F(b) &= 1 + \log \tau - \xi + \xi' - i(\eta-\eta') +  \frac{1}{2} e^{2 (\xi + i \eta)} + \frac{1}{2} e^{-2 (\xi'+ i \eta')},\\
F(b^{-1}) &= 1 + \log \tau + \xi - \xi' + i(\eta-\eta') +  \frac{1}{2} e^{-2 (\xi + i \eta)} + \frac{1}{2} e^{2 (\xi'+ i \eta')}.
\end{align}
These identities hold up to multiples of $2\pi i$ depending on the choice of the branch of the logarithm in \eqref{eq:defF} but this is irrelevant for the integral $I_n$ in \eqref{eq:defIn}.
\end{proposition}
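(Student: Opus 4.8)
\emph{Proof strategy.} The plan is to uniformize the elliptic coordinates through a single exponential variable and reduce all four evaluations to one master computation. Set $u=e^{\xi+i\eta}$ and $v=e^{\xi'+i\eta'}$, so that by \eqref{eq:defEllipticCoord} we have $z=(u+u^{-1})/\sqrt 2$ and $z'=(v+v^{-1})/\sqrt 2$, and by \eqref{eq:defab} (or Corollary \ref{cor:saddleinZW}) the four saddle points are $a=uv$, $a^{-1}=(uv)^{-1}$, $b=uv^{-1}$ and $b^{-1}=u^{-1}v$. The two elementary identities that make everything collapse are the factorizations
\begin{align*}
z+z'=\frac{(u+v)(uv+1)}{\sqrt 2\,uv},\qquad\qquad z-z'=\frac{(u-v)(uv-1)}{\sqrt 2\,uv},
\end{align*}
which follow at once from bringing $u+u^{-1}\pm(v+v^{-1})$ onto the common denominator $uv$.

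Next I would carry out the master computation, i.e. evaluate $F$ at $s=uv$. Substituting the factorizations into \eqref{eq:defF}, the first term $\frac{s(z+z')^2}{2(1+s)}$ becomes $\frac{(u+v)^2(uv+1)}{4uv}$ and the second term $-\frac{s(z-z')^2}{2(1-s)}$ becomes $\frac{(u-v)^2(uv-1)}{4uv}$, in each case after cancelling one copy of the factor $1\pm uv$ between numerator and denominator. Expanding and adding the two numerators yields $2uv(u^2+v^2+2)$, so that the first two terms of $F$ sum to $1+\tfrac12 u^2+\tfrac12 v^2$. Since the remaining part of \eqref{eq:defF} at $s=uv$ is $-\log(uv)+\log\tau=-(\xi+i\eta)-(\xi'+i\eta')+\log\tau$ (up to an irrelevant multiple of $2\pi i$), and $u^2=e^{2(\xi+i\eta)}$, $v^2=e^{2(\xi'+i\eta')}$, this gives precisely \eqref{prop:Finsaddleelliptica}.

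The remaining three identities then cost nothing further. Since $\cosh$ is even, replacing $u$ by $u^{-1}$ leaves $z$ unchanged and replacing $v$ by $v^{-1}$ leaves $z'$ unchanged; hence the master formula applied with $(u,v)$ replaced respectively by $(u^{-1},v^{-1})$, $(u,v^{-1})$ and $(u^{-1},v)$ evaluates $F$ at $a^{-1}$, $b$ and $b^{-1}$, and produces \eqref{prop:Finsaddleelliptica-} and the last two lines of the proposition after substituting $u^{\pm1}=e^{\pm(\xi+i\eta)}$ and $v^{\pm1}=e^{\pm(\xi'+i\eta')}$ into the logarithmic and exponential terms. I would close with two small remarks: the cancellation of the factor $1\pm uv$ is legitimate even when that factor vanishes, because the factorizations show that in that case $z=\mp z'$, so the numerator $(z\pm z')^2$ vanishes as well and the identity persists by continuity in $(z,z')$; and the branch of the logarithm in \eqref{eq:defF} only shifts $F$ by a multiple of $2\pi i$, which is immaterial for $I_n$ in \eqref{eq:defIn}. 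There is no genuine obstacle here; the only point requiring a little care is the bookkeeping of the signs of the linear terms $\pm(\xi+i\eta)\pm(\xi'+i\eta')$ across the four cases.
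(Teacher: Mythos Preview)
Your proof is correct and essentially follows the paper's own argument, which also reduces $F(a)$ to $1+\tfrac12 a(b+b^{-1})-\log a+\log\tau$ (equal to your $1+\tfrac12 u^2+\tfrac12 v^2-\log(uv)+\log\tau$) via the factorizations $(1\pm a)(1\pm a^{-1})(1\pm b)(1\pm b^{-1})=2(z\pm z')^2$, which are exactly your identities for $z\pm z'$ rewritten. Your one improvement over the paper is the symmetry observation: where the paper says the other three formulas ``follow in similar fashion'' and appeals to continuity for $z=\pm z'$, you obtain all four at once from the single master computation by the substitutions $(u,v)\mapsto(u^{\pm1},v^{\pm1})$, which leave $z,z'$ fixed.
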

\begin{proof}
First we treat the case $z\neq \pm z'$. Taking $s=-1$ and $s=1$ in \eqref{eq:saddlePolynomial} we find
\begin{align} \label{eq:aa-bb-1}
(1+a) (1+a^{-1}) (1+b) (1+b^{-1}) &= 2 (z+z')^2.\\ \label{eq:aa-bb-2}
(1-a) (1-a^{-1}) (1-b) (1-b^{-1}) &= 2 (z-z')^2
\end{align}
This implies that
\begin{align*}
F(a) &= \frac{(1+a)(1+b)(1+b^{-1})}{4} + \frac{(1-a)(1-b)(1-b^{-1})}{4} - \log a + \log \tau\\
&= \frac{4 + 2a(b+b^{-1})}{4} - \log a + \log \tau,
\end{align*}
which, expressed in elliptic coordinates, becomes \eqref{prop:Finsaddleelliptica}. One can also extract from this reasoning that, given a fixed $z' \neq \pm \sqrt 2$, the function $z\mapsto F(a)$ is entire. Hence the case $z=\pm z'$ follows by continuity. The cases where both $z, z'\in \{-\sqrt 2, \sqrt 2\}$ follow by some straightforward algebra. 

The expressions in $a^{-1}, b$ and $b^{-1}$ follow in similar fashion. 
\end{proof}

For the saddle point contributions we also need to understand $F''$ in the saddle points. 

\begin{proposition} \label{prop:F''a-1}
We have
\begin{align}
F''(a^{\pm 1}) = \mp 2 a^{\mp 2} \frac{\sinh(\xi+i\eta) \sinh(\xi'+i\eta')}{\sinh(\xi+i\eta+\xi'+i\eta')}
\qquad\text{ and }\qquad
F''(b^{\pm 1}) = \pm 2 b^{\mp 2} \frac{\sinh(\xi+i\eta) \sinh(\xi'+i\eta')}{\sinh(\xi+i\eta-\xi'-i\eta')},
\end{align}
unless these are not well-defined, this happens when $z, z'\in\{-\sqrt 2, \sqrt 2\}$ for both identities,  $z=-z'\in (-\sqrt 2, \sqrt 2)$ for the first identity, and $z=z'\in (-\sqrt 2, \sqrt 2)$ for the second identity. 
\end{proposition}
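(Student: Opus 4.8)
The plan is to read $F''$ off the factored form of $F'$ and then simplify the resulting rational expression by passing to elliptic coordinates. Combining \eqref{eq:dFroots} with the factorization \eqref{eq:saddlePolynomial}, we have
\[
F'(s) = -\frac{(s-a)(s-a^{-1})(s-b)(s-b^{-1})}{s(s^2-1)^2}.
\]
By Proposition \ref{prop:saddlePointsDef}, in the generic case ($z,z'\notin\{-\sqrt2,\sqrt2\}$ and $z\neq\pm z'$) every saddle point is a simple zero of $F'$ and is distinct from the poles at $0,\pm1$; hence $F''$ at such a point equals the limit of $F'(s)$ divided by the linear factor that vanishes there, i.e.\ the remaining numerator over $s(s^2-1)^2$ evaluated at that point. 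For instance,
\[
F''(a) = -\frac{(a-a^{-1})(a-b)(a-b^{-1})}{a(a^2-1)^2},
\]
and analogously for $a^{-1},b,b^{-1}$, with numerator the product of the differences to the three other saddle points.

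Next I would substitute the elliptic parametrization \eqref{eq:defEllipticCoord}--\eqref{eq:defab}. Writing $u=\xi+i\eta$, $v=\xi'+i\eta'$, so that $a=e^{u+v}$ and $b=e^{u-v}$, the elementary identity $e^X-e^Y=2e^{(X+Y)/2}\sinh\tfrac{X-Y}{2}$ gives $a-a^{-1}=2\sinh(u+v)$, $a-b=2e^{u}\sinh v$, $a-b^{-1}=2e^{v}\sinh u$, and $a^2-1=2e^{u+v}\sinh(u+v)$. Inserting these together with $a=e^{u+v}$ into the expression for $F''(a)$, all powers of $e^{u+v}$ and one factor $\sinh(u+v)$ cancel, leaving $F''(a)=-2a^{-2}\sinh u\,\sinh v/\sinh(u+v)$, which is the stated formula. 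The identities for $F''(a^{-1})$, $F''(b)$, $F''(b^{-1})$ follow from the same computation under $u\mapsto-u,\,v\mapsto-v$ (respectively $v\mapsto-v$), which is exactly what produces the inversion of $a$ or $b$, the sign flips $\mp,\pm$, and the change of the argument of $\sinh$ in the denominator; nothing new needs to be checked.

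Finally I would dispose of the degenerate configurations. If two saddle points collide, or a saddle point coalesces with a pole at $\pm1$, then the point in question is either a double zero of $F'$ (so the simple-zero argument breaks down) or a pole of $F'$ (so $F''$ there is not represented by the formula at all). By Proposition \ref{prop:saddlePointsDef} this occurs precisely when $z$ or $z'\in\{-\sqrt2,\sqrt2\}$, where all saddle points have order two, and when $z=\pm z'$, where $b,b^{-1}$ merge into $\mp1$; keeping the branch choices in \eqref{eq:defEllipticCoord} consistent, one checks that these are exactly the excluded sets in the statement, and that in each of them the relevant factor $\sinh(u\pm v)$ in the denominator indeed vanishes. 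I expect this last bit of bookkeeping — matching the degenerate elliptic coordinates to the exceptional list — to be the only mildly delicate point; the rest is a direct computation.
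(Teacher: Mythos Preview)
Your core computation is correct, and it takes a somewhat different route from the paper. The paper differentiates $F$ a second time to obtain
\[
F''(s)=-\frac{(z+z')^2}{(1+s)^3}-\frac{(z-z')^2}{(1-s)^3}+\frac{1}{s^2},
\]
then substitutes $s=a^{-1}$ and simplifies the three-term expression using the product identities \eqref{eq:aa-bb-1}--\eqref{eq:aa-bb-2}, eventually arriving at $(a-b)(a-b^{-1})/(1-a^{-2})$ before passing to elliptic coordinates. Your approach instead exploits the factored form of $F'$ and the simple-zero observation $F''(a)=\lim_{s\to a}F'(s)/(s-a)$, then applies the identity $e^X-e^Y=2e^{(X+Y)/2}\sinh\tfrac{X-Y}{2}$ directly. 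This bypasses the intermediate algebra and gives the result in fewer steps; the paper's route, on the other hand, produces an explicit closed form for $F''(s)$ at an arbitrary point, which could be useful elsewhere.

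One point to tighten: your factorization of $F'$ is only valid for $z\neq\pm z'$, so you still owe a justification of the $a^{\pm1}$ identity when $z=z'$ (which is \emph{not} in the excluded set). The cleanest fix is continuity: both sides are analytic in $(z,z')$ on the region where $a^{\pm1}$ stays away from the poles and $\sinh(u+v)\neq 0$, so the identity extends across the diagonal. The paper handles this case with an equally brief ``can also be worked out'' remark, so your level of detail is comparable.
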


Note however that $b$ is not a saddle point when $z=\pm z'$ due to Proposition \ref{prop:saddlePointsDef}, unless $z, z'\in\{-\sqrt 2, \sqrt 2\}$.

\begin{proof}
We have that
\begin{align}
F''(s) &= - \frac{(z+z')^2}{(1+s)^3} - \frac{(z-z')^2}{(1-s)^3} + \frac{1}{s^2}. 
\end{align}
In what follows we assume that $z\neq\pm z'$. Using \eqref{eq:aa-bb-1} and \eqref{eq:aa-bb-2}, we find
\begin{align} \nonumber
F''(a^{-1}) &= - \frac{1}{2} \frac{(1+b)(1+b^{-1})(1+a)}{(1+a^{-1})^2} - \frac{1}{2} \frac{(1-b)(1-b^{-1})(1-a)}{(1-a^{-1})^2} + a^2\\
&= \frac{2-a (b+b^{-1})}{1-a^{-2}} + a^2
= \frac{(a - b)(a - b^{-1})}{1-a^{-2}}. 
\end{align}
An easy calculation shows that
\begin{align}
(a-b)(a-b^{-1}) &= 4 a \sinh(\xi+i\eta) \sinh(\xi'+i\eta').
\end{align}
The case $z=\pm z'$ can also be worked out, and yields the same formula in the end.

The expressions in $a, b$ and $b^{-1}$ follow in similar fashion. 
\end{proof}

    In the next paragraph we will make the first step in the  steepest descent analysis by deforming the contour $\gamma_0$ such that it passes through one or more saddle points. We will show that we can always deform the contour such that it passes through the saddle point $a^{-1}$. In special cases, it passes through other saddle points as well. To give the reader some intuition of how $\gamma_0$ can be deformed, we mention that, in general (i.e., $z\neq z'$), the set of $s$ such that $\operatorname{Re} F(s)\geq r$ consist of three small  regions when $r$ is a large positive number. One region is a neighborhood of the origin, another region has $1$ on its boundary, while the last region has $-1$ on its boundary. Since our initial contour has to go around the origin,  we can always deform, for sufficiently large $r$, the contour such that it goes around the region enclosing the origin.  As we lower $r$, the regions grow in size, and start to merge, exactly when they touch in a saddle point (or several saddle points simultaneously). Since we always take our contour through $a^{-1}$, we should visualize the region(s) $\operatorname{Re} F(s)\geq r$ in the case $r=\operatorname{Re} F(a^{-1})$, as has been done in Figures \ref{Fig1}-\ref{Fig2aab} below. 

\subsection{Integration contour}

We will now discuss how to deform the integration contour in the saddle point analysis. Our choice is based on the following result. 

\begin{theorem} \label{lem:as=1}
Let $z,z' \in\mathbb C$. We have the inequality
\begin{align} \label{eq:as=1}
\operatorname{Re} F(s) &\leq \operatorname{Re} F(a^{-1}), & |s| = |a|^{-1}.
\end{align} 
\begin{itemize}
\item[(i)] When $z\not\in [-\sqrt 2, \sqrt 2]$ and $z'\not\in [-\sqrt 2, \sqrt 2]$, we have equality if and only if $s = a^{-1}$.
\item[(ii)] When $z\not\in [-\sqrt 2, \sqrt 2]$ and $z'\in [-\sqrt 2, \sqrt 2]$, we have equality if and only if $s = a^{-1}$ or $s=b^{-1}$. 
\item[(iii)] When $z\in [-\sqrt 2, \sqrt 2]$ and $z'\not\in [-\sqrt 2, \sqrt 2]$, we have equality if and only if $s = a^{-1}$ or $s=b$.
\item[(iv)] When $z\in [-\sqrt 2, \sqrt 2]$ and $z'\in [-\sqrt 2, \sqrt 2]$, we have equality for all $s$. 
\end{itemize}
\end{theorem}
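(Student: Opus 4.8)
\emph{Strategy.} Write $\zeta=\xi+i\eta$, $\zeta'=\xi'+i\eta'$, $a=e^{\zeta+\zeta'}$, and $\rho:=|a|^{-1}=e^{-(\xi+\xi')}$, so that $a^{-1}=\rho e^{i\theta_0}$ with $\theta_0=-(\eta+\eta')$. I would first remove the logarithm. Writing $F(s)=p(s)-\log s+\log\tau$ with $p(s):=\frac{s(z+z')^2}{2(1+s)}-\frac{s(z-z')^2}{2(1-s)}$ and noting that $|sa|=1$ on the circle $|s|=\rho$, one gets $\operatorname{Re}\big(F(s)-F(a^{-1})\big)=\operatorname{Re}\big(p(s)-p(a^{-1})\big)$ there, so \eqref{eq:as=1} is equivalent to $\operatorname{Re}p(s)\le\operatorname{Re}p(a^{-1})$ for $|s|=\rho$. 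A partial fraction expansion gives $p(s)=(z^2+z'^2)-\frac{(z-z')^2}{2(1-s)}-\frac{(z+z')^2}{2(1+s)}$. Case~(iv), $z,z'\in[-\sqrt2,\sqrt2]$, is then immediate: there $\rho=1$, the coefficients of $p$ are real, and $\operatorname{Re}\frac1{1\pm s}=\tfrac12$ for $|s|=1$, so $\operatorname{Re}p(s)\equiv\tfrac12(z^2+z'^2)$ is constant.

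In the remaining cases $\rho<1$, so $\pm1$ lie off the circle and $p(s)-p(a^{-1})=\frac{Q(s)}{1-s^2}$ with $Q(s)=(p(s)-p(a^{-1}))(1-s^2)$ an explicit quadratic polynomial with $Q(a^{-1})=0$; multiplying by $\overline{(1-s^2)}$ and using $\bar s=\rho^2/s$, one sees that on $|s|=\rho$ the sign of $\operatorname{Re}(p(s)-p(a^{-1}))$ is that of the real trigonometric polynomial $N(\theta):=\operatorname{Re}\big[Q(\rho e^{i\theta})(1-\rho^2 e^{-2i\theta})\big]$, which has degree $\le2$. I would then record three facts about $N$ at $\theta_0$: $N(\theta_0)=0$, since $Q(a^{-1})=0$; $N'(\theta_0)=0$, since $a^{-1}$ is a saddle point of $F$ and $\tfrac{d}{d\theta}\operatorname{Re}F(\rho e^{i\theta})=-\operatorname{Im}(sF'(s))$ vanishes at $s=a^{-1}$; and, using Proposition~\ref{prop:F''a-1}, $\tfrac{d^2}{d\theta^2}\operatorname{Re}F(\rho e^{i\theta})\big|_{\theta_0}=\operatorname{Re}\big(-a^{-2}F''(a^{-1})\big)=-2\operatorname{Re}\big(1/(\coth\zeta+\coth\zeta')\big)<0$, because $\operatorname{Re}\coth(\xi+i\eta)=\frac{\sinh2\xi}{\cosh2\xi-\cos2\eta}\ge0$, strictly when $\xi>0$, so $\operatorname{Re}(\coth\zeta+\coth\zeta')>0$ in cases (i)--(iii). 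Hence $\theta_0$ is a strict local maximum of $N$ with $N(\theta_0)=0$.

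The crux is to show $N(\theta)\le0$ for all $\theta$, with equality only at the claimed points. Passing to $w=e^{i\theta}$, the polynomial $\tilde N(w):=2w^2N(\theta)$ is self‑inversive of degree $\le4$ with a double root at $w_0:=e^{i\theta_0}$; since $N$ is real on $|w|=1$, we have $N\le0$ there iff $\tilde N$ has no odd‑order root on the unit circle, i.e.\ iff the two remaining roots either form a reciprocal pair off the circle (then $\tilde N=\alpha_2(w-w_0)^2(w-r)(w-1/\bar r)$) or coincide in a double root $w_1$ on the circle (then $\tilde N=\alpha_2(w-w_0)^2(w-w_1)^2$). In either alternative, using $(w-w_0)^2=-ww_0|w-w_0|^2$ on $|w|=1$ and the analogous identity for $w_1$, one finds that $N(\theta)$ equals a constant — automatically real by self‑inversivity — times $|w-w_0|^2$ (times $|w-r|^2$ resp.\ $|w-w_1|^2$), hence of constant sign on the circle; together with the local maximum of value $0$ just found, that constant must be $\le0$, which is \eqref{eq:as=1}. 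The equality set is then $\{a^{-1}\}$ in the first alternative and $\{a^{-1},\rho w_1\}$ in the second, and a short computation using Proposition~\ref{prop:Finsaddleelliptic} identifies the extra equality point $\rho w_1$ as $b^{-1}$ when $z'\in[-\sqrt2,\sqrt2]$ (then $|b^{-1}|=\rho$ and $\operatorname{Re}F(b^{-1})=\operatorname{Re}F(a^{-1})$) and as $b$ when $z\in[-\sqrt2,\sqrt2]$, matching cases (i)--(iii); the sub‑cases $\alpha_2=0$ (where $\tilde N$ drops degree) and $z$ or $z'\in\{-\sqrt2,\sqrt2\}$ (where saddles collide, cf.\ Proposition~\ref{prop:saddlePointsDef}) are degenerate and treated directly.

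\emph{Main obstacle.} The one genuinely non‑soft point is ruling out that the two remaining roots of $\tilde N$ are \emph{distinct simple} roots on the unit circle — which would make $N$ change sign, and is exactly the statement that $\operatorname{Re}F$ restricted to $\{|s|=\rho\}$ has no local maximum strictly above $\operatorname{Re}F(a^{-1})$. This is where the particular form \eqref{eq:defF} of the potential enters, and I expect to settle it by an explicit (if lengthy) computation: expanding $Q$ in elliptic coordinates yields $\tilde N$ explicitly, and one checks that the discriminant of $\tilde N(w)/(w-w_0)^2$ is negative unless $z$ or $z'\in[-\sqrt2,\sqrt2]$, in which case it vanishes — equivalently, that the two roots are reciprocal off the circle in case~(i) and coincide on the circle in cases~(ii)--(iii). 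Everything else above is soft. (Alternatively, one may run the argument through the level‑set picture described before the statement: for large $r$ the set $\{\operatorname{Re}F\ge r\}$ has a component around the origin which grows as $r$ decreases and first becomes singular, on the circle $|s|=|a|^{-1}$, precisely at $a^{-1}$; but making the relevant connectivity statements rigorous is of comparable difficulty.)
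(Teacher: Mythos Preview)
Your approach is correct and takes a genuinely more direct route than the paper's. Both proofs reduce $\operatorname{Re}F(a^{-1}\zeta)-\operatorname{Re}F(a^{-1})$ on $|\zeta|=1$ to a self-inversive degree-$4$ numerator with a built-in factor $(\zeta-1)^2$ (your $(w-w_0)^2$), leaving a self-inversive quadratic $\alpha\zeta^2+\beta\zeta+\gamma$ whose root configuration decides everything. The paper, however, does not argue directly from the discriminant sign: it first proves the special case $\xi=\xi'$ by a separate Blaschke-factor computation (Lemma~\ref{lem:as=1sameEllipse}), then propagates the inequality to general $(\xi,\xi')$ by a continuous deformation along a path in $(z,z')$-space, using the discriminant only to rule out a new zero of $h$ appearing \emph{at} the critical value (Lemma~\ref{prop:localmax1}: discriminant zero forces $\xi=0$ or $\xi'=0$). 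Your argument bypasses both the Blaschke step and the deformation: once the two remaining roots are known to lie off the circle (or to coincide on it), $N$ has constant sign and the strict local maximum at $\theta_0$ pins it to $\le0$.

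Your ``main obstacle'' is genuine but the paper already contains the computation you need. In the elliptic coordinates of Lemma~\ref{lem:formulasalphabeta} one has $\frac{\overline a}{a}\beta=-\sinh(2\xi+2\xi')\in\mathbb R$ and $4|\alpha|^2=|\sinh(2\xi)e^{2i\eta}+\sinh(2\xi')e^{2i\eta'}|^2$; the identity worked out in the proof of Lemma~\ref{prop:localmax1} gives
\[
|\beta|^2-4|\alpha|^2=2\sinh(2\xi)\sinh(2\xi')\bigl(\cosh(2\xi+2\xi')-\cos 2(\eta-\eta')\bigr)\ge 0,
\]
with equality iff $\xi=0$ or $\xi'=0$. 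Since $\gamma=\frac{a^2}{\overline a^2}\overline\alpha$ and $\frac{\overline a}{a}\beta$ is real, the condition $|\beta|\ge 2|\alpha|$ is precisely the condition that the roots of the self-inversive quadratic are a reciprocal pair off the unit circle (strict inequality) or a double root on it (equality), exactly the dichotomy you want. So your route trades the Blaschke-factor lemma and the homotopy argument for a single sign computation that the paper in effect already supplies; the paper's route is longer but makes the boundary cases (ii)--(iii) emerge from the deformation picture rather than from a separate identification of the double root.
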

Before we come to the proof of this theorem (which will take the rest of this paragraph), we discuss how we deform the contour $\gamma_0$.  Instead of working with the exact steepest descent paths leaving from the relevant saddle points, we will work with different paths that are of steep descent but not necessarily steepest. The benefit is that they are particularly simple. Indeed, apart from the  special case where $z$ and $z'$ both are in $[-\sqrt 2, \sqrt 2]$, we deform the contour $\gamma_0$ to the circle with radius $|a|^{-1}$, centered at the origin. By Theorem \ref{lem:as=1}(i), if neither $z \in [-\sqrt 2, \sqrt 2]$ nor $z' \in [-\sqrt 2, \sqrt 2]$ then $a^{-1}$ is the only saddle point on that circle and this point gives the main contribution to the integral over the circle. This is the situation in Figure \ref{Fig1}.  In case either $z \in [-\sqrt 2, \sqrt 2]$  or $z' \in [-\sqrt 2, \sqrt 2]$ then  either $b$ or $b^{-1}$ is a second saddle point on the circle, as illustrated in Figure \ref{Fig1a}. In case both $z \in [-\sqrt 2, \sqrt 2]$  and $z' \in [-\sqrt 2, \sqrt 2]$ the circle is no longer a valid choice for a saddle point analysis since the real part of $F$ is constant on that circle. Still, starting from the circle it is not difficult to see that one can deform the circle slightly to obtain contours on which the real part of $F$ is maximal only in the saddle points. To avoid cumbersome details, we will often not explicitly spell out the contours in these cases, but explain our choice by means of a figure.

We will prove Theorem \ref{lem:as=1} in several steps, beginning with the case that $z$ and $z'$ are on the same ellipse. 

\begin{lemma} \label{lem:as=1sameEllipse}
Theorem \ref{lem:as=1} holds when $\xi=\xi'$.
\end{lemma}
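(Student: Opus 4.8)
\textbf{Proof plan for Lemma \ref{lem:as=1sameEllipse}.}

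The plan is to parametrize the circle $|s|=|a|^{-1}$ by $s = |a|^{-1} e^{i\theta}$ and study the function $\theta \mapsto \operatorname{Re} F(s)$ directly, showing it attains its maximum precisely at the angle(s) corresponding to the saddle point(s) on that circle. When $\xi = \xi'$ we have $|a|^{-1} = e^{-2\xi}$ and, crucially, $|b| = |b^{-1}| = 1$, so the two saddle points $a^{-1}$ and (on the unit circle) $b, b^{-1}$ are separated in modulus unless $\xi = 0$; the circle of interest contains only $a^{-1}$ among the four saddle points (for $\xi>0$), which simplifies the analysis. First I would use Proposition \ref{prop:Finsaddleelliptic} to record $\operatorname{Re} F(a^{-1}) = 1 + \log\tau + 2\xi + 2\xi' \cdot\tfrac12(\ldots)$ — more precisely $1 + \log \tau + (\xi+\xi') + \operatorname{Re}\big(\tfrac12 e^{-2(\xi+i\eta)} + \tfrac12 e^{-2(\xi'+i\eta')}\big)$, specialized to $\xi=\xi'$ — as the target value.

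Next I would compute $\operatorname{Re} F(|a|^{-1} e^{i\theta})$ explicitly. Writing $s = e^{-2\xi} e^{i\theta}$ and substituting into \eqref{eq:defF}, the term $-\log s$ contributes $2\xi$ to the real part, a constant in $\theta$; the rational terms $\frac{s(z+z')^2}{2(1+s)} - \frac{s(z-z')^2}{2(1-s)}$ must be expanded with $z, z'$ in elliptic coordinates with common $\xi$. The key algebraic step is to rewrite $\frac{s}{1\pm s}$ and clear denominators so that $\operatorname{Re} F$ becomes a trigonometric polynomial (or a ratio thereof) in $\theta$ that visibly has a single interior maximum. I expect that, after using the factorization \eqref{eq:saddlePolynomial} of the numerator of $F'$ and the identities \eqref{eq:aa-bb-1}–\eqref{eq:aa-bb-2}, one can write $F(s) - F(a^{-1})$ in a form like $\int_{a^{-1}}^{s} F'(u)\,du$ along the circular arc and bound its real part; alternatively, differentiate $\operatorname{Re} F$ along the circle and show the only critical point in $(-\pi,\pi]$ is at $s = a^{-1}$, with the second derivative there negative. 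Since on $|s| = |a|^{-1}$ the only saddle point is $a^{-1}$ (when $\xi > 0$), $\frac{d}{d\theta}\operatorname{Re} F$ vanishes only there, and one checks it is a maximum rather than a minimum by a sign computation using $F''(a^{-1})$ from Proposition \ref{prop:F''a-1}.

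The main obstacle I anticipate is the boundary/degenerate case $\xi = \xi' = 0$, i.e.\ $z, z' \in [-\sqrt 2, \sqrt 2]$, where $|a|^{-1} = 1$ and all four saddle points lie on the unit circle; there Theorem \ref{lem:as=1}(iv) asserts $\operatorname{Re} F$ is \emph{constant} on the circle. I would handle this separately: on $|s|=1$ write $s = e^{i\theta}$ and $z = \sqrt2\cos\eta$, $z' = \sqrt2\cos\eta'$ (real), and check by direct computation that the $\theta$-dependence of $\operatorname{Re}\big(\frac{s(z+z')^2}{2(1+s)} - \frac{s(z-z')^2}{2(1-s)}\big)$ cancels (this is essentially the statement that $\frac{s}{1+s}$ has constant real part $\tfrac12$ on $|s|=1$, away from $s=-1$, and similarly $\operatorname{Re}\frac{s}{1-s} = -\tfrac12$). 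The second, more technical obstacle is ensuring the intermediate case $\xi = \xi'$ but $\eta \neq \eta'$ and $\xi$ small is treated uniformly — as $\xi \downarrow 0$ the maximum at $a^{-1}$ degenerates and merges with the constant-on-circle behavior, so the strict inequality in case (i) must be shown to survive for all $\xi > 0$, which requires the $\theta$-derivative computation to be done with enough care to see the non-degeneracy. I would organize the write-up so that the generic $\xi = \xi' > 0$ computation is the core, with $\xi=0$ appended as a short special case.
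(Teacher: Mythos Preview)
Your treatment of the degenerate case $\xi=\xi'=0$ is essentially correct and matches the paper: on $|s|=1$ one has $\operatorname{Re}\frac{s}{1+s}=\tfrac12$ and $\operatorname{Re}\frac{s}{1-s}=-\tfrac12$ (away from the poles), so $\operatorname{Re} F$ is constant on the unit circle.

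However, your argument for $\xi=\xi'>0$ contains a genuine gap. You write: ``Since on $|s|=|a|^{-1}$ the only saddle point is $a^{-1}$ (when $\xi>0$), $\frac{d}{d\theta}\operatorname{Re} F$ vanishes only there.'' This inference is false. A critical point of the restriction $\theta\mapsto\operatorname{Re} F(|a|^{-1}e^{i\theta})$ is a point where $\operatorname{Im}\bigl(sF'(s)\bigr)=0$, not where $F'(s)=0$; the two conditions coincide only at genuine saddle points, but there will in general be additional points on the circle where $sF'(s)$ is real and nonzero. Indeed, any smooth periodic function has at least two critical points (a maximum and a minimum), so $\operatorname{Re} F$ restricted to the circle cannot have $a^{-1}$ as its \emph{only} critical point. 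Your plan therefore does not, as written, rule out a second local maximum elsewhere on the circle.

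The paper avoids this difficulty by a Blaschke-factor argument rather than by counting critical points. Using \eqref{eq:aa-bb-1}--\eqref{eq:aa-bb-2} and the fact that $b+b^{-1}=2\cos(\eta-\eta')$ when $\xi=\xi'$, one rewrites each of the two rational pieces of $F(a^{-1}s)$ (up to an additive constant) as a positive multiple of a Blaschke factor $\frac{\overline{a}^{\,-1}\pm s}{1\pm a^{-1}s}$. A Blaschke factor maps the unit circle bijectively onto itself, so its real part attains a \emph{unique} maximum, at $s=1$. Since at least one of the two coefficients $1\pm\cos(\eta-\eta')$ is strictly positive, the sum has a unique maximum at $s=1$, i.e.\ $\operatorname{Re} F$ is uniquely maximized at $a^{-1}$ on the circle $|s|=|a|^{-1}$. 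This is the missing idea: an explicit structural form that forces uniqueness of the maximum, rather than an attempt to enumerate critical points.
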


\begin{proof}
Let us first consider the case $\xi=\xi'>0$. Then we have $|a|^{-1}<1$. Using \eqref{eq:aa-bb-1}, we have
\begin{align} \nonumber
-2 (z+z')^2 \frac{1}{1+a^{-1} s} 
&= -(2+b+b^{-1}) (1+a^{-1})^2 a \frac{1}{1+a^{-1} s}\\ \nonumber
&= -(2+b+b^{-1}) (1+a^{-1})^2 a \left(-\frac{a^{-1}}{1-|a|^{-2}} \frac{\overline a^{-1} + s}{1+a^{-1} s} + \frac{1}{1-|a|^{-2}}\right)\\ \label{eq:Blaschke}
&= 2 (1 + \cos(\eta-\eta')) \frac{|1+a^{-1}|^2}{1-|a|^{-2}} \left(\frac{1+a^{-1}}{1+\overline a^{-1}} \frac{\overline a^{-1} + s}{1+a^{-1} s}\right) + \text{constant},
\end{align}
where we used that $2 + (b+b^{-1}) = 2 (1 + \cos(\eta-\eta'))$. A Blaschke factor maps the unit circle bijectively to itself. Hence the real part of \eqref{eq:Blaschke} attains its maximum in $s=1$ when restricted to the unit circle, and this maximum is unique unless $1+\cos(\eta-\eta')=0$. 
Entirely analogously, using \eqref{eq:aa-bb-2}, we have
\begin{align*}
-2 (z-z')^2 \frac{1}{1-a^{-1} s} 
=  2 (1 - \cos(\eta-\eta')) \frac{|1-a^{-1}|^2}{1-|a|^{-2}} \left(\frac{1-a^{-1}}{1-\overline a^{-1}} \frac{-\overline a^{-1} + s}{1-a^{-1} s}\right) + \text{constant}.
\end{align*}
Again, we infer that the real part attains its maximum in $s=1$ when restricted to the unit circle, and this maximum is unique unless $1 - \cos(\eta-\eta')=0$. 

By the preceding, we conclude that the real part of
\begin{align} \label{eq:diffwithFconst}
\frac{(z+z')^2}{2} \frac{s}{1+s}
- \frac{(z-z')^2}{2} \frac{s}{1-s}
= -\frac{(z+z')^2}{2} \frac{1}{1+s}
- \frac{(z-z')^2}{2} \frac{1}{1-s} + \text{constant},
\end{align}
attains its maximum in $s=a^{-1}$ when restricted to $|s|=|a|^{-1}$.  At least one of $2 \pm (b+b^{-1})$ is non-zero, hence this maximum is unique. Since the real part of \eqref{eq:diffwithFconst} differs from $\operatorname{Re} F(s)$ by only an additive constant on the circle $|s|=|a|^{-1}$, we conclude that $\operatorname{Re} F$ attains its maximum uniquely in $s=a^{-1}$. 

Finally, we treat the case $\xi=\xi'=0$. Then we have, using \eqref{prop:Finsaddleelliptica-} in the last line, that for all $t\in (-\pi,\pi]$
\begin{align*}
\operatorname{Re} F(e^{it}) &= \frac{2(\cos\eta+\cos\eta')^2}{2} \operatorname{Re}\left(\frac{e^\frac{it}{2}}{2 \cos\frac{t}{2}}\right) 
- \frac{2(\cos\eta-\cos\eta')^2}{2} \operatorname{Re}\left(\frac{e^\frac{it}{2}}{-2 i \sin\frac{t}{2}}\right) + \log\tau\\
&= 1 + \frac{\cos 2\eta}{2} + \frac{\cos 2\eta'}{2}+\log\tau= \operatorname{Re} F(a^{-1}).
\end{align*}
This proves case (iv). 
\end{proof}

To prove Theorem \ref{lem:as=1} for the remaining $(z,z')$, we introduce the following auxiliary function. 

\begin{definition} \label{def:defh}
For $\zeta\in \mathbb C\setminus \{a, - a, \overline a^{-1}, -\overline a^{-1}\}$, we define the function $h(\zeta)=h(\tau;z,z';\zeta)$ by
\begin{align}
h(\tau;z,z';\zeta) = \frac{(\zeta-1)^2 (\alpha \zeta^2+\beta \zeta+\frac{a^2}{\overline a^2} \overline \alpha)}{(\zeta^2-a^2)(\zeta^2-\overline a^{-2})},
\end{align}
where $\alpha = \frac{1}{2}(z^2+z'^2)- \operatorname{Re} F(a^{-1}) + \log \tau + \log |a|$ and $\beta =  2 \alpha +  \overline{a^{-1} z z'} - a z z'$.\\
\end{definition}

Before clarifying the relation between $h$ and $\operatorname{Re} F$, we derive formulas for $\alpha$ and $\beta$ in terms of elliptic coordinates \eqref{eq:defEllipticCoord}. 

\begin{lemma} \label{lem:formulasalphabeta}
With $\alpha$ and $\beta$ as in Definition \ref{def:defh}, we have
\begin{align*} 2\alpha &= \sinh(2\xi) e^{2i\eta} + \sinh(2\xi') e^{2i\eta'}\\
-\frac{\overline a}{a} \beta &= \sinh(2\xi+2\xi'). 
\end{align*}
\end{lemma}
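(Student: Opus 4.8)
The plan is to establish both identities by direct substitution, reducing everything to elementary manipulations with hyperbolic cosines and exponentials; there is no genuinely hard step, so the task is mainly to organize the bookkeeping.

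For the formula for $2\alpha$, I would start from the definition $\alpha = \tfrac12(z^2+z'^2) - \operatorname{Re} F(a^{-1}) + \log\tau + \log|a|$ in Definition~\ref{def:defh}. Inserting $z^2 = 1 + \cosh(2\xi+2i\eta)$ (from $z=\sqrt2\cosh(\xi+i\eta)$ and $\cosh^2 w = \tfrac12(1+\cosh 2w)$) and the analogous identity for $z'$, the value $\operatorname{Re} F(a^{-1}) = 1 + \log\tau + \xi + \xi' + \tfrac12 e^{-2\xi}\cos 2\eta + \tfrac12 e^{-2\xi'}\cos 2\eta'$ obtained by taking real parts in Proposition~\ref{prop:Finsaddleelliptic}, and $\log|a| = \xi+\xi'$, the constants $1$, $\log\tau$, $\xi+\xi'$ all cancel in pairs, and one is left with $\alpha = \tfrac12\cosh(2\xi+2i\eta) - \tfrac12 e^{-2\xi}\cos 2\eta + (\xi,\eta \leftrightarrow \xi',\eta')$. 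Writing $\cosh(2\xi+2i\eta) = \tfrac12(e^{2\xi+2i\eta}+e^{-2\xi-2i\eta})$ and $e^{-2\xi}\cos 2\eta = \tfrac12(e^{-2\xi+2i\eta}+e^{-2\xi-2i\eta})$, the $e^{-2\xi-2i\eta}$ terms cancel, and the rest factors as $\tfrac14 e^{2i\eta}(e^{2\xi}-e^{-2\xi}) = \tfrac12\sinh(2\xi)e^{2i\eta}$; doubling and adding the primed term gives the claim.

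For the formula for $-\tfrac{\overline a}{a}\beta$, I would use $\beta = 2\alpha + \overline{a^{-1}zz'} - azz'$ together with the product-to-sum identity $zz' = 2\cosh u\cosh v = \tfrac12(e^{u+v}+e^{-(u+v)}+e^{u-v}+e^{-(u-v)})$, where $u=\xi+i\eta$, $v=\xi'+i\eta'$, so $a = e^{u+v}$. Multiplying by $a$ gives $azz' = \tfrac12(e^{2(u+v)}+1+e^{2u}+e^{2v})$; similarly $a^{-1}zz' = \tfrac12(1+e^{-2(u+v)}+e^{-2u}+e^{-2v})$, and conjugating (using $\overline z = \sqrt2\cosh\overline u$) yields $\overline{a^{-1}zz'} = \tfrac12(1+e^{-2(\overline u+\overline v)}+e^{-2\overline u}+e^{-2\overline v})$. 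Subtracting, each paired difference collapses: $e^{2u}-e^{-2\overline u} = e^{2i\eta}(e^{2\xi}-e^{-2\xi}) = 2\sinh(2\xi)e^{2i\eta}$, and $e^{2(u+v)}-e^{-2(\overline u+\overline v)} = 2\sinh(2\xi+2\xi')e^{2i(\eta+\eta')} = 2\sinh(2\xi+2\xi')\tfrac{a}{\overline a}$. Hence $azz'-\overline{a^{-1}zz'} = 2\alpha + \sinh(2\xi+2\xi')\tfrac{a}{\overline a}$, and substituting into the definition of $\beta$ the term $2\alpha$ cancels, leaving $\beta = -\tfrac{a}{\overline a}\sinh(2\xi+2\xi')$, i.e.\ $-\tfrac{\overline a}{a}\beta = \sinh(2\xi+2\xi')$.

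The only points requiring care — none of them a real obstacle — are: $\operatorname{Re} F(a^{-1})$ is well defined despite the branch ambiguity in Proposition~\ref{prop:Finsaddleelliptic}, since a change of branch adds only a purely imaginary constant; the symbol $\log\tau$ denotes the real logarithm of $\tau\in(0,1)$, so its two occurrences in the definition of $\alpha$ and in $\operatorname{Re} F(a^{-1})$ cancel; and although the formulas for the saddle points and for $F$ at the saddle points are derived for generic $(z,z')$ and then extended by continuity, $a$ and $a^{-1}$ themselves are defined for all $(z,z')$, so the resulting identities for $\alpha$ and $\beta$ hold without restriction.
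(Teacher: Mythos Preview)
Your proof is correct and follows essentially the same approach as the paper: for $\alpha$ you expand $z^2+z'^2$ and $\operatorname{Re} F(a^{-1})$ in elliptic coordinates and cancel, and for $\beta$ you expand $azz'$ and $\overline{a^{-1}zz'}$ as sums of exponentials so that the differences collapse to $\sinh$ terms and the $2\alpha$ cancels. The only cosmetic difference is that the paper multiplies through by $\overline a/a$ before expanding (computing $a^{-1}\overline{zz'}-\overline a\,zz'$ instead of $\overline{a^{-1}zz'}-azz'$), which amounts to the same calculation.
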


\begin{proof}
From \eqref{prop:Finsaddleelliptica-} we deduce that
\begin{align} \nonumber
\operatorname{Re} F(a^{-1}) - \log\tau - \log|a|
&= 1+\log\tau + \xi+\xi'+ \frac{1}{2} e^{-2 \xi} \cos(2\eta) + \frac{1}{2} e^{-2\xi'} \cos(2\eta') - \log\tau - \log |a|\\ \label{eq:alphaEq1}
&= 1 + \frac{1}{2} e^{-2 \xi} \cos(2\eta) + \frac{1}{2} e^{-2\xi'} \cos(2\eta').
\end{align}
On the other hand, we have
\begin{align} \nonumber
z^2+z'^2
&= 2\cosh^2(\xi+i\eta)+2\cosh^2(\xi'+i\eta')\\ \nonumber
&= 2 + \cosh(2\xi+2i\eta) + \cosh(2\xi'+2i\eta')\\ \label{eq:alphaEq2}
&= 2 + \cosh(2\xi) \cos(2\eta) + \cosh(2\xi') \cos(2\eta')
+ i(\sinh(2\xi) \sin(2\eta) + \sinh(2\xi') \sin(2\eta')).
\end{align}
Combining \eqref{eq:alphaEq1} and \eqref{eq:alphaEq2}, we obtain
\begin{align} \nonumber
2\alpha &= \sinh(2\xi) \cos(2\eta) + \sinh(2\xi') \cos(2\eta')
+ i (\sinh(2\xi) \sin(2\eta) + \sinh(2\xi') \sin(2\eta'))\\ \label{eq:alphaEq3}
&= \sinh(2\xi) e^{2i\eta} + \sinh(2\xi') e^{2i\eta'}. 
\end{align}
Now let us verify the formula for $\beta$. We have
\begin{align*} 
2\overline a z z'
&= 2\overline a(\cosh(\xi+\xi'+i(\eta+\eta'))+\cosh(\xi-\xi'+i(\eta-\eta'))\\ 
&= e^{2(\xi+\xi')} + e^{-2i (\eta+\eta')}
+ e^{2\xi-2i\eta'} + e^{2\xi'-2i\eta},
\end{align*}
and similarly
\begin{align*} 
2 a^{-1} \overline{z z'} = e^{-2 i(\eta+\eta')} + e^{-2(\xi+\xi')} + e^{-2\xi'-2i\eta} +  e^{-2\xi-2i\eta'}.
\end{align*}
Hence, we have
\begin{align*}
a^{-1} \overline{z z'} - \overline a z z'
= - \sinh(2\xi+2\xi') - \sinh(2\xi) e^{-2i\eta'}
-\sinh(2\xi') e^{-2i\eta}.
\end{align*}
Combining this with \eqref{eq:alphaEq3},  we conclude that 
\begin{align*}
\frac{\overline a}{a} \beta &= 2\frac{\overline a}{a} \alpha + a^{-1} \overline{z z'} - \overline a z z'
= - \sinh(2\xi+2\xi'). 
\end{align*}
\end{proof}

The following lemma clarifies the relation between $h$ and $\operatorname{Re} F$. 

\begin{lemma} \label{lem:RFandhrelation}
For all $\zeta$ on the unit circle, we have
\begin{align}
\operatorname{Re} F(a^{-1} \zeta) = \operatorname{Re} F(a^{-1}) +  h(\zeta).
\end{align}
\end{lemma}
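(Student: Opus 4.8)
The statement to prove is Lemma~\ref{lem:RFandhrelation}: for all $\zeta$ on the unit circle, $\operatorname{Re} F(a^{-1}\zeta) = \operatorname{Re} F(a^{-1}) + h(\zeta)$. The natural approach is to compute $F(a^{-1}\zeta)$ directly from the definition \eqref{eq:defF}, take real parts termwise, and then match against the rational function $h$ defined in Definition~\ref{def:defh}. I would first write
\[
F(a^{-1}\zeta) = \frac{(z+z')^2}{2}\,\frac{a^{-1}\zeta}{1+a^{-1}\zeta} - \frac{(z-z')^2}{2}\,\frac{a^{-1}\zeta}{1-a^{-1}\zeta} - \log(a^{-1}\zeta) + \log\tau.
\]
The logarithmic term contributes $-\log|a^{-1}| - \log|\zeta| + i(\cdots) = \log|a| + i(\cdots)$ in real part (since $|\zeta|=1$), a \emph{constant} on the unit circle, which I will absorb into the $\operatorname{Re} F(a^{-1})$ piece. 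So the content of the lemma is really about the two Cauchy-type fractions.

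\textbf{Key steps.} The second step is to use $|\zeta|=1$, i.e. $\overline\zeta = \zeta^{-1}$, to compute the real parts of the fractional terms as rational functions of $\zeta$ alone. For a term like $\frac{(z+z')^2}{2}\frac{a^{-1}\zeta}{1+a^{-1}\zeta}$, I would write $2\operatorname{Re}(w) = w + \overline w$ and use $\overline{\frac{a^{-1}\zeta}{1+a^{-1}\zeta}} = \frac{\overline a^{-1}\zeta^{-1}}{1+\overline a^{-1}\zeta^{-1}} = \frac{\overline a^{-1}}{\zeta + \overline a^{-1}} = \frac{1}{\overline a\,\zeta + 1}$. Combining, the real part of this term becomes a ratio whose denominator involves $(1+a^{-1}\zeta)(1+\overline a\,\zeta)$, and similarly for the $(z-z')^2$ term with denominator $(1-a^{-1}\zeta)(1-\overline a\,\zeta)$. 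Then I would put everything over the common denominator $(1+a^{-1}\zeta)(1+\overline a\,\zeta)(1-a^{-1}\zeta)(1-\overline a\,\zeta) = (1-a^{-2}\zeta^2)(1-\overline a^2\zeta^2)$, which up to the factor $a^{-2}\overline a^2$ equals $(\zeta^2-a^2)(\zeta^2-\overline a^{-2})$ — exactly the denominator appearing in $h$. This is the structural coincidence that makes the lemma plausible; after clearing denominators the claim reduces to a polynomial identity in $\zeta$ of degree at most $4$.

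\textbf{Matching the numerator.} The final step is to verify that the numerator equals $(\zeta-1)^2(\alpha\zeta^2 + \beta\zeta + \tfrac{a^2}{\overline a^2}\overline\alpha)$ with the stated $\alpha,\beta$. I would check this by evaluating both sides at a few convenient points: at $\zeta=1$ both fractional terms and their conjugates behave in a controlled way (the numerator must vanish to order $2$, matching the $(\zeta-1)^2$ factor — indeed $\zeta=1$ gives $s=a^{-1}$, the saddle point, and the ``$h(\zeta)=0$'' there is consistent with the lemma since $\operatorname{Re} F(a^{-1}\cdot 1)=\operatorname{Re} F(a^{-1})$); at $\zeta=a^2$ and $\zeta=\overline a^{-2}$ (poles of $h$, where I check the residues match the poles $s=\pm$ of the original fractions, i.e. $a^{-1}\zeta = a$ or $a^{-1}\zeta=\overline a^{-1}$); and by matching leading coefficients as $\zeta\to\infty$, which should pin down $\alpha$, since $\alpha = \tfrac12(z^2+z'^2) - \operatorname{Re} F(a^{-1}) + \log\tau + \log|a|$ is precisely the combination one gets from the large-$\zeta$ asymptotics of $\operatorname{Re} F(a^{-1}\zeta)$ minus the constant. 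The coefficient $\beta$ (and the reality/conjugation structure forcing the $\zeta^0$ coefficient to be $\tfrac{a^2}{\overline a^2}\overline\alpha$) then comes from matching one more coefficient, and Lemma~\ref{lem:formulasalphabeta} is available to rewrite these in elliptic coordinates if a sanity check is desired.

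\textbf{Main obstacle.} The proof is essentially a bookkeeping computation with no conceptual difficulty; the main obstacle is purely organizational — keeping track of the conjugates $\overline a$, $\overline\alpha$ and the factors of $a^{-2}\overline a^2$ correctly when clearing denominators, and confirming that the resulting degree-$4$ numerator polynomial genuinely factors as $(\zeta-1)^2$ times the quadratic with the asserted coefficients rather than merely agreeing at sample points. To make the argument airtight I would either (a) verify agreement at five points (degree $\le 4$ forces identity), using $\zeta = 1$ (double root, so also match the derivative there), $\zeta = -1$, $\zeta = a^2$, $\zeta=\overline a^{-2}$, and $\zeta\to\infty$, or (b) expand both numerators symbolically and compare all coefficients. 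Approach (a) is cleaner to present. One should also note the degenerate cases flagged in Definition~\ref{def:defh} (where $a \in \{a,-a,\overline a^{-1},-\overline a^{-1}\}$ collides, i.e. $a$ on the unit circle, which is the case $z,z'\in[-\sqrt2,\sqrt2]$ already handled separately in Lemma~\ref{lem:as=1sameEllipse}), so that $h$ is well-defined wherever we apply this lemma.
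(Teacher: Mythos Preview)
Your approach is essentially the same as the paper's: both compute $2\operatorname{Re} F(a^{-1}\zeta)$ on the unit circle as a rational function of $\zeta$ (the paper packages this via the auxiliary meromorphic function $g(\zeta)=F(\tau;z,z';a^{-1}\zeta)+F(\tau;\overline z,\overline{z'};\overline a^{-1}/\zeta)$, which is exactly your $w+\overline w$ trick), put it over the common denominator $(\zeta^2-a^2)(\zeta^2-\overline a^{-2})$, and identify the degree-$4$ numerator. The one place the paper is slightly slicker than your plan~(a) is in extracting the $(\zeta-1)^2$ factor: rather than checking values and derivatives at sample points, it observes directly that $g'(1)=2\operatorname{Re}\bigl(i a^{-1}F'(a^{-1})\bigr)=0$ because $a^{-1}$ is a saddle point of $F$, forcing the numerator to be divisible by $(\zeta-1)^2$; the remaining quadratic cofactor is then read off from the leading and third-order coefficients, which is exactly where $\alpha$ and $\beta$ come from.
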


\begin{proof}
First, we define
\begin{align} \label{eq:defAuxg}
g(\zeta) = F\left(\tau;z,z';a^{-1}\zeta\right)+F\left(\tau;\overline z, \overline{z'}; \overline a^{-1}/\zeta\right).
\end{align}
Notice that $g(\zeta) = 2 \operatorname{Re} F(a^{-1}\zeta)$ for $\zeta$ on the unit circle. We can rewrite $g$ as
\begin{align*}
g(\zeta) &= \frac{(z^2+z'^2) \zeta^2 - 2 a z z' \zeta}{\zeta^2-a^2}
+ \frac{-(\overline{z^2+z'^2}) \overline a^{-2} + 2 \overline{z z' a^{-1}} \zeta}{\zeta^2-\overline a^{-2}} + 2\log|a| + 2\log\tau\\
&= \frac{q(\zeta)}{(\zeta^2-a^2)(\zeta^2-\overline a^{-2})} + 2\log|a| + 2 \log\tau,
\end{align*}
where some straightforward algebra shows that
\begin{align*}
q(\zeta) &= (\zeta^2-\overline a^{-2}) [(z^2+z'^2)\zeta^2 - 2 a z z' \zeta] + \overline a^{-2} (\zeta^2-a^2)[2 \overline{a z z'} \zeta - \overline{(z^2+z'^2)}]\\
&= (z^2+z'^2) \zeta^4 
- 2 a z z' \zeta^3
- \overline a^{-2} (z^2+z'^2) \zeta^2
+ 2 \overline a^{-2} a z z' \zeta\\
&\qquad + 2 \overline{a^{-1} z z'} \zeta^3
-  \overline a^{-2} \overline{(z^2+z'^2)} \zeta^2
- 2 a^2 \overline{a^{-1} z z'} \zeta
+ \overline a^{-2} a^2 \overline{(z^2+z'^2)}\\
&= (z^2+z'^2) \zeta^4 
+2(\overline{a^{-1} z z'} - a z z') \zeta^3
-2 \overline a^{-2} \operatorname{Re}(z^2+z'^2) \zeta^2\\
&\qquad + 2 (\overline a^{-2} a z z' - a^2 \overline{a^{-1} z z'}) \zeta
+ \overline a^{-2} a^2 \overline{(z^2+z'^2)}.
\end{align*}
Consequently, for $\gamma =  \operatorname{Re} F(a^{-1}) - \log\tau-\log|a|$ we have
\begin{multline*}
q(\zeta) - 2 \gamma  (\zeta^2-a^2) (\zeta^2-\overline a^{-2})\\
= 2\alpha \zeta^4
+2(\overline{a^{-1} z z'} - a z z') \zeta^3
+ 2\left((a^2+\overline a^{-2}) \gamma - \overline a^{-2} \operatorname{Re}(z^2+z'^2)\right) \zeta^2
+ 2 (\overline a^{-2} a z z' - a^2 \overline{a^{-1} z z'}) \zeta
+2\frac{a^2}{\overline a^2} \overline \alpha.
\end{multline*}
In other words, for $\zeta$ on the unit disc, we have (suppressing the first and second order term)
\begin{align} \label{eq:numeratorReFazetaa}
\operatorname{Re} F(a^{-1}\zeta) - \operatorname{Re} F(a^{-1})
= \frac{\alpha \zeta^4+(\overline{a^{-1} z z'} - a z z') \zeta^3
+\ldots+\frac{a^2}{\overline a^2} \overline \alpha}{(\zeta^2-a^2)(\zeta^2-\overline a^{-2})}.
\end{align}
That $s=a^{-1}$ is a saddle point of $F$, implies that $\zeta=1$ is a saddle point of $g$. This is because
\begin{align*}
i g'(1) = \left. \frac{d}{dt} g(e^{i t})\right|_{t=0} &= \left. i a^{-1} e^{i t} F'\left(\tau;z,z'; a^{-1} e^{i t}\right) - i \overline a^{-1} e^{- it} F'\left(\tau;\overline z,\overline{z'}; \overline a^{-1} e^{- i t}\right)\right|_{t=0}\\
&= i a^{-1} F'\left(\tau;z,z'; a^{-1}\right) + \overline{i a^{-1} F'\left(\tau;z,z'; a^{-1}\right)}
= 2\operatorname{Re}\left(i a^{-1} F'(a^{-1})\right) = 0.
\end{align*}
We conclude that the numerator in \eqref{eq:numeratorReFazetaa} must be divisible by $(\zeta-1)^2$, i.e.,
\begin{align*}
\alpha \zeta^4+(\overline{a^{-1} z z'} - a z z') \zeta^3
+\ldots+\frac{a^2}{\overline a^2} \overline \alpha
= (\zeta-1)^2 \left(\alpha \zeta^2 + \beta \zeta + \frac{a^2}{\overline a^2} \overline \alpha\right)
\end{align*}
for some constant $\beta$, that can easily be determined from the third order term and turns out to coincide with $\beta$ as introduced in Definition \ref{def:defh}. It follows that $g(\zeta)=g(1)+2 h(\zeta)$. On the unit circle, this gives us the formula that we are after.\\
\end{proof}

We infer from Lemma \ref{lem:RFandhrelation} that finding the maximum of $\operatorname{Re} F$ on the circle $|s|=|a|^{-1}$, is equivalent to finding the maximum of $h$ on the unit circle. In particular, looking at what we want to prove in Theorem \ref{lem:as=1}, this maximum should be $0$, and it should be attained in $\zeta=1$. A first step towards proving this is the following lemma. 

\begin{lemma} \label{prop:localmax1}
Let $z,z'\in \mathbb C\setminus [-\sqrt 2, \sqrt 2]$. Then $\zeta=1$ is a local maximum of the restriction of $h$ to the unit circle satisfying $h(\zeta)=0$, and it is unique with this property. 
\end{lemma}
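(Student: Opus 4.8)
The plan is to analyze the rational function $h$ restricted to the unit circle $\zeta = e^{it}$ as a function of the real variable $t$, and to show that $t=0$ (i.e. $\zeta=1$) is a local maximum with value $0$, and that it is the only local maximum with that value. The first observation is that $h(1)=0$: this is immediate from Lemma~\ref{lem:RFandhrelation}, since $\operatorname{Re} F(a^{-1}\cdot 1) = \operatorname{Re} F(a^{-1})$. Next, differentiating, I would show $h'(1)=0$. This follows from the construction: in the proof of Lemma~\ref{lem:RFandhrelation} it is shown that $\zeta=1$ is a critical point of $g(\zeta)=2\operatorname{Re} F(a^{-1}\zeta)$ on the unit circle (because $a^{-1}$ is a saddle point of $F$), and since $h(\zeta) = \tfrac12 g(\zeta) - \tfrac12 g(1)$ on the circle, the same holds for $h$. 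Alternatively, one reads it off directly from the factor $(\zeta-1)^2$ in the numerator of $h$.

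The key quantitative step is the second-derivative test: I would compute $\left.\frac{d^2}{dt^2} h(e^{it})\right|_{t=0}$ and show it is strictly negative, so that $\zeta=1$ is a strict local maximum. By the relation $h(e^{it}) = \operatorname{Re} F(a^{-1}e^{it}) - \operatorname{Re} F(a^{-1})$, this second derivative is governed by $F''(a^{-1})$; more precisely $\left.\frac{d^2}{dt^2}\operatorname{Re} F(a^{-1}e^{it})\right|_{t=0} = -\operatorname{Re}\!\big(a^{-2}F''(a^{-1})\big)$ (using $F'(a^{-1})=0$ to kill the lower-order term). By Proposition~\ref{prop:F''a-1},
\[
a^{-2} F''(a^{-1}) = -2 a^{-4}\,\frac{\sinh(\xi+i\eta)\sinh(\xi'+i\eta')}{\sinh(\xi+\xi'+i(\eta+\eta'))},
\]
so one needs $\operatorname{Re}$ of the right-hand quantity to be strictly positive when $z,z'\notin[-\sqrt2,\sqrt2]$. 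I would argue this by noting that $\xi,\xi'>0$ guarantees all the hyperbolic sines are nonzero (and $\sinh(\xi+\xi'+i(\eta+\eta'))\neq0$ since $\xi+\xi'>0$), so the expression is nonzero; its sign is then pinned down using the representation of $h$ via Blaschke-type factors as in \eqref{eq:Blaschke}–\eqref{eq:diffwithFconst} in the proof of Lemma~\ref{lem:as=1sameEllipse}, where $\operatorname{Re}(\text{Blaschke factor})$ is manifestly maximized at the boundary point, forcing the curvature to have the correct sign. Strictness uses that $z,z'\notin[-\sqrt2,\sqrt2]$ means $|a|^{-1}<1$ strictly and at least one of the coefficients $2\pm(b+b^{-1})$ is nonzero.

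For uniqueness among local maxima with $h=0$, the cleanest route is a degree/zero-counting argument. On the unit circle, $h(e^{it}) - 0$ has a double zero at $t=0$; more precisely, the condition $h(\zeta)=0$ on $|\zeta|=1$ amounts to the numerator polynomial $(\zeta-1)^2(\alpha\zeta^2+\beta\zeta+\tfrac{a^2}{\bar a^2}\bar\alpha)$ vanishing there (the denominator has no zeros on the circle since $|a|\neq 1$). The factor $(\zeta-1)^2$ accounts for the maximum at $\zeta=1$; any other point on the circle where $h=0$ must be a root of the quadratic $\alpha\zeta^2+\beta\zeta+\tfrac{a^2}{\bar a^2}\bar\alpha$. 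Using the explicit formulas from Lemma~\ref{lem:formulasalphabeta} for $\alpha,\beta$ in elliptic coordinates, I would show that the roots of this quadratic on the unit circle, if any, are not local \emph{maxima} of $h$ — either they are local minima (e.g. by checking $h\leq 0$ nearby via the sign analysis above, so any interior zero of $h$ restricted to the circle with $h\le 0$ around it is a max only at $\zeta=1$), or the quadratic simply has no unimodular roots. Concretely: since $h\le 0$ on the whole circle would already give uniqueness trivially, the real content is that $h$ is \emph{not} identically achieving $0$ anywhere else as a max; the degenerate case where the quadratic does vanish on the circle (which is exactly the borderline $z$ or $z'\in[-\sqrt2,\sqrt2]$, excluded here) is where a second maximum point $b$ or $b^{-1}$ appears, consistent with Theorem~\ref{lem:as=1}(ii)–(iii).

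The main obstacle I anticipate is the uniqueness claim rather than the local-maximum claim: establishing $h'(1)=0$ and $h''(1)<0$ is a bounded computation feeding on Propositions~\ref{prop:F''a-1} and the Blaschke representation, but ruling out a \emph{second} local maximum with $h=0$ requires genuinely understanding the global shape of $h$ on the circle. I expect the right tool is to combine the factorization $(\zeta-1)^2(\alpha\zeta^2+\beta\zeta+\tfrac{a^2}{\bar a^2}\bar\alpha)/[(\zeta^2-a^2)(\zeta^2-\bar a^{-2})]$ with the parametrization of Lemma~\ref{lem:formulasalphabeta} to show the quadratic factor has modulus-one roots only in the excluded degenerate regime, so that for $z,z'\notin[-\sqrt2,\sqrt2]$ the only zero of $h$ on the circle is the double one at $\zeta=1$, which by the curvature computation is a maximum — and hence the unique local maximum with value $0$.
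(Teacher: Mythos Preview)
Your outline tracks the paper's proof in its first half (vanishing and critical point at $\zeta=1$, second derivative via $a^{-2}F''(a^{-1})$), but there are two genuine gaps.

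\textbf{The sign of the second derivative.} Invoking the Blaschke representation from Lemma~\ref{lem:as=1sameEllipse} is not legitimate here: that argument is specific to the case $\xi=\xi'$, where the Blaschke factorization of $s\mapsto (z\pm z')^2/(1\pm s)$ works because $(z\pm z')^2$ has a definite sign. For general $\xi\neq\xi'$ that structure is lost. The paper instead computes directly: using Proposition~\ref{prop:F''a-1} one has $a^{-2}F''(a^{-1}) = 2\,\frac{\sinh(\xi+i\eta)\sinh(\xi'+i\eta')}{\sinh(\xi+\xi'+i(\eta+\eta'))}$ (your factor $-a^{-4}$ is a slip), and then expands $\overline{\sinh(\xi+\xi'+i(\eta+\eta'))}\sinh(\xi+i\eta)\sinh(\xi'+i\eta')$ via addition formulas to exhibit its real part as $\tfrac12|\sinh(\xi+i\eta)|^2\sinh(2\xi') + \tfrac12|\sinh(\xi'+i\eta')|^2\sinh(2\xi)$, which is strictly positive when $\xi,\xi'>0$. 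This is an honest computation you have to do; it is not a consequence of the $\xi=\xi'$ case.

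\textbf{Uniqueness.} Your zero-counting sketch is vague and in places circular (you invoke ``$h\le 0$ nearby'', which is essentially Theorem~\ref{lem:as=1}, the theorem this lemma is meant to feed into). The paper's argument is sharper and avoids this: suppose $\tilde\zeta\neq 1$ on the unit circle is a local maximum with $h(\tilde\zeta)=0$. Then $h(\tilde\zeta)=0$ and $\tfrac{d}{dt}h(e^{it})|_{t=\arg\tilde\zeta}=i\tilde\zeta h'(\tilde\zeta)=0$, so the numerator $(\zeta-1)^2(\alpha\zeta^2+\beta\zeta+\tfrac{a^2}{\bar a^2}\bar\alpha)$ has a \emph{double} zero at $\tilde\zeta$. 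Since $\tilde\zeta\neq 1$, the quadratic itself must have $\tilde\zeta$ as a double root, forcing its discriminant to vanish: $(\tfrac{\bar a}{a}\beta)^2=4|\alpha|^2$. Via Lemma~\ref{lem:formulasalphabeta} this reads $\sinh^2(2(\xi+\xi')) = |\sinh(2\xi)e^{2i\eta}+\sinh(2\xi')e^{2i\eta'}|^2$, and a short expansion shows the left side strictly exceeds the right unless $\xi=0$ or $\xi'=0$ --- contradiction. The point you are missing is the passage from ``local max with value $0$'' to ``double root of the quadratic'', which converts the problem into a single algebraic identity; trying instead to rule out all unimodular roots of the quadratic, or to argue they are minima, is both harder and unnecessary.
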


\begin{proof}
From the explicit form of $h$ it is obvious that $h$ has a saddle point in $\zeta=1$. Thus we have 
\begin{align*}
\left. \frac{d}{dt} h\left(e^{i t}\right)\right|_{t=0} = i h'(1) = 0.
\end{align*}
Using the relation $g(\zeta)=g(1)+2 h(\zeta)$, with $g$ as defined in \eqref{eq:defAuxg}, we also have that
\begin{align*}
\left. 2\frac{d^2}{dt^2} h(e^{it})\right|_{t=0} = \left. \frac{d^2}{dt^2} g(e^{it})\right|_{t=0} &= - a^{-2} F''(\tau;z,z'; a^{-1}) - \overline a^{-2} F''(\tau;\overline z,\overline{z'}; \overline a^{-1})\\
&= -2 \operatorname{Re}\left(a^{-2} F''(a^{-1})\right)\\
&= -4 \operatorname{Re}\left(\frac{\sinh(\xi+i\eta) \sinh(\xi'+i\eta')}{\sinh(\xi+i\eta+\xi'+i\eta')}\right),
\end{align*}
where we used Proposition \ref{prop:F''a-1} to obtain the last line. For $\zeta=1$ to be a local maximum we need this expression to be negative.  This follows by using some identities for hyperbolic functions, namely
\begin{multline*}
\overline{\sinh(\xi+i\eta+\xi'+i\eta')} \sinh(\xi+i\eta) \sinh(\xi'+i\eta')\\
= |\sinh(\xi+i\eta)|^2 \overline{\cosh(\xi'+i\eta')} \sinh(\xi'+i\eta')
+ |\sinh(\xi'+i\eta')|^2 \overline{\cosh(\xi+i\eta)} \sinh(\xi+i\eta)\\
= \frac{1}{2} |\sinh(\xi+i\eta)|^2 \left(\sinh(2\xi') + i \sin(2\eta')\right)
+ \frac{1}{2} |\sinh(\xi'+i\eta')|^2 \left(\sinh(2\xi) + i \sin(2\eta)\right).
\end{multline*}
This will have positive real part in general, except when $\xi=\xi'=0$, when $\xi=0$ and $\eta\in \{0,\pi\}$, or when $\xi'=0$ and $\eta'\in \{0,\pi\}$. All these cases are excluded by the conditions of the lemma. 

At this point, we know that $\zeta=1$ is a local maximum on the unit circle with $h(\zeta)=0$. In order to reach a contradiction, we assume that there exists another point with this property. That is, we have $\tilde\zeta\neq 1$ such that the restriction of $h$ to the unit circle has a local maximum in $\zeta=\tilde\zeta$, and $h(\tilde\zeta)=0$. Then we have 
\begin{align*}
i\tilde\zeta h'\left(\tilde\zeta\right) = \left. \frac{d}{dt} h\left(e^{i t}\right)\right|_{t=\arg \tilde\zeta} = 0.
\end{align*}
This means that $\alpha \zeta^2+\beta \zeta+\frac{a^2}{\overline a^2} \overline \alpha$ must have only one root $\tilde\zeta$, with multiplicity $2$. Hence the discriminant is $0$, i.e., $\beta^2 - 4 \alpha \frac{a^2}{\overline a^2} \overline \alpha = 0$. 
We can rewrite this as
\begin{align*}
\left(\frac{\overline a}{a} \beta\right)^2 = 4 |\alpha|^2.  
\end{align*}
Using the formulas from Lemma \ref{lem:formulasalphabeta}, we can write this as
\begin{align*}
\sinh^2 (2(\xi+\xi'))
&= \left|\sinh(2\xi) e^{2i\eta} + \sinh(2\xi') e^{2i\eta'}\right|^2.
\end{align*}
We have
\begin{multline} \label{eq:sinhsform1}
\sinh^2(2\xi+2\xi') = \left(\sinh(2\xi)\cosh(2\xi')+\sinh(2\xi') \cosh(2\xi)\right)^2\\
= \sinh^2(2\xi)\cosh^2(2\xi')
+\sinh^2(2\xi')\cosh^2(2\xi)
+ 2 \sinh(2\xi) \sinh(2\xi') \cosh(2\xi) \cosh(2\xi'),
\end{multline}
and 
\begin{align} \nonumber
&\left|\sinh(2\xi) e^{2i\eta} + \sinh(2\xi') e^{2i\eta'}\right|^2\\ \nonumber
&= \left(\sinh(2\xi) \cos(2\eta) + \sinh(2\xi') \cos(2\eta')\right)^2 
+ \left(\sinh(2\xi) \sin(2\eta) + \sinh(2\xi') \sin(2\eta')\right)^2\\ \label{eq:sinhsform2}
&= \sinh^2 (2\xi) + \sinh^2 (2\xi') 
+ 2\sinh(2\xi) \sinh(2\xi') \cos(2(\eta-\eta')).
\end{align}
Subtracting \eqref{eq:sinhsform2} from \eqref{eq:sinhsform1}, we find
\begin{align*}
2 \sinh^2(2\xi) \sinh^2(2\xi') + 2 \sinh(2\xi) \sinh(2\xi')  \left(\cosh(2\xi) \cosh(2\xi') - \cos(2(\eta-\eta'))\right) = 0.
\end{align*}
Since $\cosh(2\xi) \cosh(2\xi') - \cos(2(\eta-\eta')\geq 0$, this equation can only be satisfied when $\xi=0$ or $\xi'=0$.\\ 
\end{proof}

Note that Lemma \ref{prop:localmax1} does not state that there are no other local maxima, but simply that they cannot satisfy $h(\zeta)=0$. However, by Lemma \ref{lem:as=1sameEllipse}, we know that $h(\zeta)<0$ for all $|\zeta|=1$ with $\zeta\neq 1$, in the case that $\xi=\xi'>0$. The idea now, is that we can continuously deform $h$ to cases where $\xi, \xi'>0$ with $\xi\neq \xi'$. By Lemma \ref{prop:localmax1}, during this process of deformation, it must remain true that $h(\zeta)<0$ for all $|\zeta|=1$ with $\zeta\neq 1$. Ignoring some caveats, this is the intuition behind what remains of the proof of Theorem \ref{lem:as=1}.\\

\noindent\textit{Proof of Theorem \ref{lem:as=1}.}
The case $z,z'\in [-\sqrt 2, \sqrt 2]$ corresponds to $\xi=\xi'=0$, and this case follows from Lemma \ref{lem:as=1sameEllipse}. We exclude it in what follows. 
Without loss of generality, we may assume that $z,z'\not\in [-\sqrt 2,\sqrt 2]$ in proving the inequality \eqref{eq:as=1}. The justification for this, is that the inequality should still hold in the limits $\xi\to 0$ with $\xi'>0$ fixed, and $\xi'\to 0$ with $\xi>0$ fixed. Now let us define the following continuous path on $\left(\mathbb C\setminus [-\sqrt 2, \sqrt 2]\right)^2$.
\begin{align*}
\Gamma(\lambda) &:= \sqrt 2 \left(\cosh\left(\frac{\xi+\xi'}{2}+\frac{\xi-\xi'}{2} \lambda + i\eta\right),\cosh\left(\frac{\xi+\xi'}{2}-\frac{\xi-\xi'}{2} \lambda+ i \eta'\right)\right), & \lambda\in [0,1].
\end{align*}
This path starts in a point where both components are on the same ellipse, and ends in $(z,z')$. Note that both components of $\Gamma$ have a finite distance to $[-\sqrt 2, \sqrt 2]$. The observation that $a$ is invariant under $\Gamma$ is somewhat helpful intuitively, but not crucial in what follows. Suppose that 
\begin{align*}
\lambda_0 = \inf\left\{\lambda\in[0,1] : \max_{|\zeta|=1} h(\tau;\Gamma(\lambda);\zeta) > 0\right\}
\end{align*}
exists. We will show that this leads to a contradiction. We may construct a decreasing sequence $(\lambda_k)_k$ in $[0,1]$ such that $\lambda_k\to \lambda_0$, and a corresponding sequence of global maxima $\zeta(\lambda_k)$ with $h(\tau;\Gamma(\lambda_k);\zeta(\lambda_k))>0$. Without loss of generality, we may assume that this sequence $\zeta(\lambda_k)$ converges to some point $\zeta(\lambda_0)$ on the unit circle. By continuity and given that $\lambda_0$ is an infimum, we must have $h(\tau;\Gamma(\lambda_0);\zeta(\lambda_0)) = 0$. 
Then Lemma \ref{prop:localmax1} forces $\zeta(\lambda_0)$ to equal $1$. We infer that
\begin{align*} 
h(\tau;\Gamma(\lambda_0);\zeta) 
= \lim_{k\to\infty} h(\tau;\Gamma(\lambda_k);\zeta)  
= \frac{(\zeta-1)^3 q_0(\zeta)}{(\zeta^2-a^2) (\zeta^2 - \overline a^{-2})}, 
\end{align*}
where $q_0(\zeta)$ is some polynomial of order $1$. This would imply that $h''(\tau;\Gamma(\lambda_0);1)=0$, which contradicts with our arguments in the first half of the proof of Lemma \ref{prop:localmax1}. We conclude that $\lambda_0$ does not exist, i.e., we have $h(\tau;\Gamma(\lambda);\zeta)\leq 0$ for all $|\zeta|=1$. This is actually a strict inequality when $\zeta\neq 1$ by Lemma \ref{prop:localmax1}. This proves the inequality \eqref{eq:as=1} for $z,z'\not \in [-\sqrt 2,\sqrt 2]$, and also that the maximum is only attained in $s=a^{-1}$. As stated before, by taking limits we may conclude that \eqref{eq:as=1} is valid for $(z,z')\not\in [-\sqrt 2, \sqrt 2]^2$. 

What remains is to find out when \eqref{eq:as=1} is an equality for some $\zeta \neq 1$. By the preceding, this can only happen if $\xi=0$ or $\xi'=0$. Without loss of generality, we take $\xi=0$ and $\xi'>0$. It follows from Proposition \ref{prop:Finsaddleelliptic} that $F(b)-F(a^{-1}) = -2i\eta+i \sin(2\eta)$ when this is the case. Since $b$ is a saddle point of $F$, it follows with similar arguments as before that $h$ then has a double zero in $\zeta = a b = e^{2i\eta}$. Some rewriting yields 
\begin{align*}
h(e^{it}) &= -16 \alpha e^{2i\eta} \overline a^2 \frac{\sin^2 \frac{t}{2} \sin^2 \frac{t-2\eta}{2}}{|e^{2it}-a^2|^2}, & t\in (-\pi,\pi].
\end{align*}
Indeed, using Lemma \ref{lem:formulasalphabeta}, $16 \alpha e^{2i\eta} \overline a^2 = 8 e^{2\xi'} \sinh(2\xi')>0$, and we clearly see that $\zeta=1$ and $\zeta=e^{2i\eta}$ are the only points where the global maximum is attained. 
\qed

\subsection{The large $n$ behavior of $I_n(d,\tau;z,z')$}

Having found a suitable contour to deform $\gamma_0$ to, we can now complete the method of steepest descent and find out the large $n$ behavior of $I_n(d,\tau;z,z')$. Let us start by mentioning that by deforming $\gamma_0$ in to $|s|=|a|^{-1}$, we  pick up a residue at $s=\tau$ if $|a|^{-1}> \tau$, and thus we can write:
\begin{equation}
    I_n(d,\tau;z,z') = \frac{e^{n F(\tau)}}{(1-\tau^2)^\frac{d}{2}} \mathfrak{1}_{|a|^{-1}>\tau} -\frac{1}{2\pi i} \oint_{\gamma_0} \frac{e^{n F(s)}}{s-\tau} \frac{ds}{(1-s^2)^\frac{d}{2}},
\end{equation}
where $\mathfrak{1}_{|a|^{-1}>\tau}$ denotes the indicator function of all $(z,z')$ such that $|a|^{-1}>\tau$. In the figures indicating the integration contours below, we depict the residue contribution by a small clockwise circle around $\tau$. Note that, when $a^{-1}\neq \tau$, but $|a|^{-1}=\tau$, we may slightly alter the integration path $|s|=|a|^{-1}$ such that $\tau$ is not enclosed. We shall tacitly do that when necessary.

As remarked before there are a few exceptions where we cannot deform $\gamma_0$ to $|s|=|a|^{-1}$, corresponding to the case $z, z'\in [-\sqrt 2, \sqrt 2]$. It can be divided in the case $z\neq \pm z'$ and the case(s) $z=\pm z'$. The corresponding deformations of $\gamma_0$ are depicted in Figure \ref{Fig17} and Figure \ref{Fig2a} respectively. 

 As a preparation, we start with the following inequality.
\begin{proposition} \label{thm:errorTermUniform1}
Let $d$ be a positive integer. For all $\tau\in (0,1]$, $z, z'\in\mathbb C$, and $n=1,2,\ldots$, we have
\begin{align} \label{thm:errorTermUniform1d}
\left|I_n(d,\tau;z,z') - \frac{e^{n F(\tau)}}{(1-\tau^2)^\frac{d}{2}} \mathfrak{1}_{|a|^{-1}>\tau}\right|
\leq  \frac{1}{(1-|a|^{-2})^\frac{d}{2}} \frac{e^{n \operatorname{Re} F(a^{-1})}}{\left|1- |a|\tau\right|}.
\end{align} 
Furthermore, for $d=1$ we have
\begin{align} \label{thm:errorTermUniform1d=1}
\left|I_n(1,\tau;z,z') - \frac{e^{n F(\tau)}}{\sqrt{1-\tau^2}} \mathfrak{1}_{|a|^{-1}>\tau}\right|
\leq  K \frac{e^{n \operatorname{Re} F(a^{-1})}}{\left|1- |a|\tau\right|},
\end{align}
where 
$K= \displaystyle\frac{1}{\pi} \int_0^\pi \frac{dt}{\sqrt{2 \sin t}}$.
\end{proposition}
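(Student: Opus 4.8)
The plan is to deform $\gamma_0$ in \eqref{eq:defIn} onto the circle $C:=\{|s|=|a|^{-1}\}$, on which $\operatorname{Re}F$ is controlled by Theorem~\ref{lem:as=1}, and then estimate the resulting integral by brute force. First note that both estimates are vacuous when $|a|\tau=1$, and that \eqref{thm:errorTermUniform1d} is vacuous when $|a|=1$ (i.e. $z,z'\in[-\sqrt2,\sqrt2]$, where the right-hand side is $+\infty$); so I may assume $|a|>1$ and $|a|\tau\neq1$, leaving for a separate limiting argument only the borderline case $|a|=1$ in \eqref{thm:errorTermUniform1d=1}, where $C$ is the unit circle through the branch points $s=\pm1$ but $\operatorname{Re}F$ is constant $=\operatorname{Re}F(a^{-1})$ on $|s|=1$ by Theorem~\ref{lem:as=1}(iv), so the integral is still absolutely convergent and the estimates below apply verbatim with $r=1$. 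Since $0<|\gamma_0|<\tau$, deforming $\gamma_0$ into $C$ crosses the pole at $s=\tau$ precisely when $|a|^{-1}>\tau$, with residue $e^{nF(\tau)}/(1-\tau^2)^{d/2}$, so that
\begin{align*}
I_n(d,\tau;z,z') - \frac{e^{nF(\tau)}}{(1-\tau^2)^{d/2}}\,\mathfrak{1}_{|a|^{-1}>\tau}
= -\frac{1}{2\pi i}\oint_{C}\frac{e^{nF(s)}}{s-\tau}\frac{ds}{(1-s^2)^{d/2}},
\end{align*}
and it remains to bound the integral over $C$.

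Parametrize $s=re^{it}$ with $r:=|a|^{-1}\in(0,1]$ and $t\in(-\pi,\pi]$, so $|ds|=r\,dt$. On $C$ one has $|e^{nF(s)}|=e^{n\operatorname{Re}F(s)}\le e^{n\operatorname{Re}F(a^{-1})}$ by Theorem~\ref{lem:as=1} (the branch of the logarithm in \eqref{eq:defF} is irrelevant for the modulus), and, since $\tau$ is a positive real, $|s-\tau|\ge\big||a|^{-1}-\tau\big|=|1-|a|\tau|/|a|$. Hence
\begin{align*}
\left|\oint_{C}\frac{e^{nF(s)}}{s-\tau}\frac{ds}{(1-s^2)^{d/2}}\right|
\le \frac{|a|\,e^{n\operatorname{Re}F(a^{-1})}}{|1-|a|\tau|}\int_{-\pi}^{\pi}\frac{r\,dt}{|1-r^2e^{2it}|^{d/2}}.
\end{align*}
For general $d$ I use $|1-r^2e^{2it}|\ge 1-r^2=1-|a|^{-2}$, so the remaining integral is at most $2\pi r\,(1-|a|^{-2})^{-d/2}=2\pi|a|^{-1}(1-|a|^{-2})^{-d/2}$; multiplying by $1/(2\pi)$ and using $|a|r=1$ gives exactly \eqref{thm:errorTermUniform1d}.

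For $d=1$ the bound $|1-r^2e^{2it}|\ge 1-r^2$ is too lossy (it would leave a spurious factor $r^{-1/2}$), and instead I must show $\int_{-\pi}^{\pi} r\,|1-r^2e^{2it}|^{-1/2}\,dt\le 2\pi K r$. Substituting $u=2t$ and using $2\pi$-periodicity, $\int_{-\pi}^{\pi} r\,|1-r^2e^{2it}|^{-1/2}\,dt = 2\pi r\cdot\frac{1}{2\pi}\int_{-\pi}^{\pi}|1-\rho e^{iu}|^{-1/2}\,du$ with $\rho:=r^2\le1$, and the last factor is the average of $w\mapsto|1-w|^{-1/2}$ over the circle $\{|w|=\rho\}$. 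Since $|1-w|^{-1/2}=e^{-\frac12\log|1-w|}$ is the exponential of a harmonic function, it is subharmonic on $\mathbb C\setminus\{1\}$, so its circular averages about the origin are non-decreasing in the radius and hence bounded by the value at radius $1$, namely $\frac{1}{2\pi}\int_{-\pi}^{\pi}|1-e^{iu}|^{-1/2}\,du=\frac{1}{2\pi}\int_{-\pi}^{\pi}(2|\sin(u/2)|)^{-1/2}\,du=\frac{1}{\pi}\int_0^{\pi}(2\sin t)^{-1/2}\,dt=K$. Therefore $\int_{-\pi}^{\pi} r\,|1-r^2e^{2it}|^{-1/2}\,dt\le 2\pi K r$, and combining with the displayed estimate and $|a|r=1$ yields $\tfrac{1}{2\pi}\big|\oint_C\big|\le K\,e^{n\operatorname{Re}F(a^{-1})}/|1-|a|\tau|$, which is \eqref{thm:errorTermUniform1d=1}.

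The only genuinely non-routine ingredient is this uniform bound on the circular mean of $|1-w|^{-1/2}$, which is exactly what produces the constant $K$ with the correct normalization and is essential for the cancellation $|a|\,r=1$ to go through; everything else is elementary geometry of the circle together with Theorem~\ref{lem:as=1}. The remaining delicate point is purely technical: justifying the deformation of $\gamma_0$ onto $C$ when $|a|=1$, where $C$ passes through the branch points $\pm1$ of $(1-s^2)^{-1/2}$ and the essential singularities of $e^{nF}$; here one approximates by circles of radius $1-\varepsilon$, uses that $\operatorname{Re}F$ is constant on $|s|=1$ (so the integrand is dominated), and passes to the limit.
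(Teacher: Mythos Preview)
Your proof is correct and follows the same overall strategy as the paper: deform $\gamma_0$ to the circle $|s|=|a|^{-1}$, pick up the residue at $s=\tau$ when applicable, bound $|e^{nF}|$ via Theorem~\ref{lem:as=1}, and estimate the remaining integral directly. The only substantive difference is in the $d=1$ step showing that $\int_{-\pi}^\pi |1-r^2 e^{2it}|^{-1/2}\,dt$ is maximized at $r=1$: the paper expands $(1-a^{-2}e^{2it})^{-1/4}$ as a binomial series and applies Parseval to write the integral as $2\pi\sum_k\binom{-1/4}{k}^2|a|^{-4k}$, which is visibly decreasing in $|a|$, whereas you invoke subharmonicity of $|1-w|^{-1/2}=e^{-\tfrac12\log|1-w|}$ and monotonicity of circular means. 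Both arguments are short and yield exactly the constant $K$; your subharmonicity route is slightly more conceptual and generalizes immediately to any power $|1-w|^{-\alpha}$ with $\alpha>0$, while the Parseval computation gives an explicit closed form for the integral at each radius.
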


\begin{proof}
When $|a|^{-1}=\tau$, there is nothing to prove, thus we exclude this case. For the same reason, we can exclude  $|a|^{-1}=1$ when proving \eqref{thm:errorTermUniform1d} and assume  that $|a|^{-1}<1$. Now we deform $\gamma_0$ to the circle $|s|=|a|^{-1}$. Then, using Theorem \ref{lem:as=1}, we get
\begin{align*}
\left|-\frac{1}{2\pi i} \oint_{\gamma_0} \frac{e^{n F(s)}}{(1-s^2)^\frac{d}{2}} \frac{ds}{s-\tau} - \frac{e^{n F(\tau)}}{(1-\tau^2)^\frac{d}{2}} \mathfrak{1}_{|a|^{-1}>\tau}\right|
&=  \frac{1}{2\pi} \left|\int_{-\pi}^\pi \frac{e^{n F(a^{-1} e^{it})}}{(1-a^{-2} e^{2it})^\frac{d}{2}} \frac{i a^{-1} e^{it}}{a^{-1} e^{it}-\tau} dt\right|\\
&\leq \frac{1}{2\pi} \frac{e^{n \operatorname{Re} F(a^{-1})}}{|1-|a|\tau|} \int_{-\pi}^\pi \frac{dt}{|1-a^{-2} e^{2it}|^\frac{d}{2}},
\end{align*} 
where we used that the residue at $s=\tau$ will contribute exactly when $|a|^{-1}>\tau$. After a trivial estimate of the remaining integral, we obtain \eqref{thm:errorTermUniform1d}. Let us focus on the case $d=1$ now. By Parseval's theorem
\begin{align*}
\int_{-\pi}^\pi \frac{dt}{|1-a^{-2} e^{2it}|^\frac{1}{2}}
&= \int_{-\pi}^\pi \left|\sum_{k=0}^\infty \binom{-1/4}{k} a^{-2k} e^{2i k t}\right|^2 dt
= 2\pi \sum_{k=0}^\infty \binom{-1/4}{k}^2 |a|^{-4k}. 
\end{align*}
This is clearly a decreasing function of $|a|$, and thus
\begin{align*}
\int_{-\pi}^\pi \frac{dt}{|1-a^{-2} e^{2it}|^\frac{1}{2}}
&\leq \int_{-\pi}^\pi \frac{dt}{|1-e^{2it}|^\frac{1}{2}}
= 2 \int_{0}^\pi \frac{dt}{\sqrt{2\sin t}}. 
\end{align*}
Note that for the estimates $d=1$ we can even take $|a|^{-1}=1$. 
\end{proof}

We emphasize that the inequalities \eqref{thm:errorTermUniform1d} and \eqref{thm:errorTermUniform1d=1} are not sharp. Indeed, a saddle point analysis, 
as we see in the next two theorems, will improve the inequality with a factor $ n^{-1/2}$ under further assumptions on the locations of $z,z'$. However, the proof of Proposition \ref{thm:errorTermUniform1} is short and straight to the point, it holds for all $z,z'$ and it already has some important consequences. For instance, it can be used to prove convergence for the local correlations in the bulk for the elliptic Ginibre ensemble and shows that the correction terms are exponentially small.

\begin{figure} 
\centering
\begin{overpic}[width=0.5\textwidth]{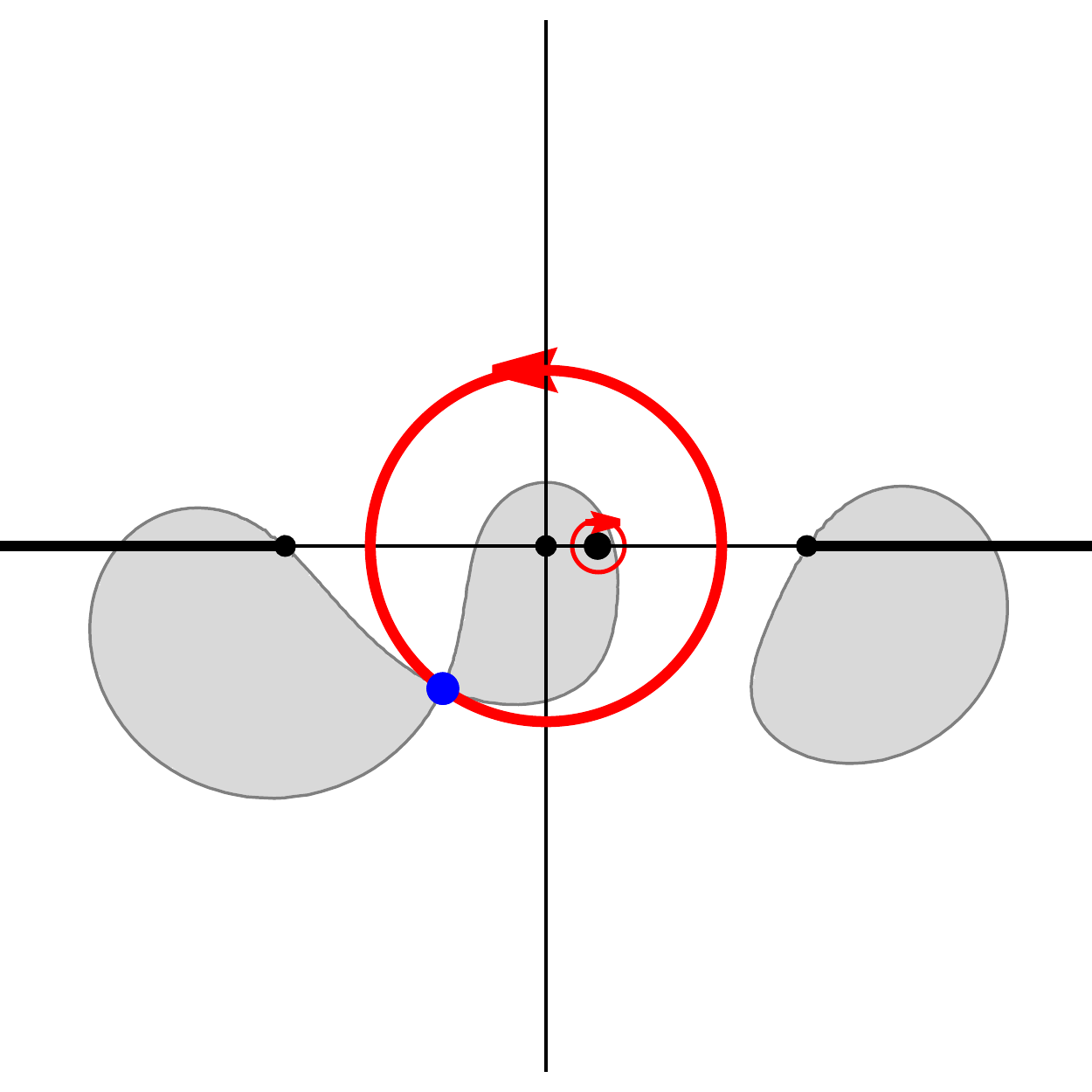}
    \put (22,45) {$-1$}
    \put (74,45) {$1$}
    \put (31,30) {$a^{-1}$}
    \put (53,44) {$\tau$}
    \put (46,46) {$0$}
\end{overpic}
\caption{When $z,z'\not\in [-\sqrt 2, \sqrt 2]$ and $z\neq\pm z'$ we deform the contour $\gamma_0$ to be the circle with radius $|a|^{-1}$ centered at the origin. The shaded areas represent the regions where $\operatorname{Re} F(s)\geq \operatorname{Re} F(a^{-1})$.
It is also possible that the shaded area containing $0$ touches the shaded area going through $1$ in $s=a^{-1}$. In some cases, there is a single shaded area, going through both $-1$ and $1$. The small circle around $s=\tau$ has to be added when deforming $\gamma_0$ to $|s|=|a|^{-1}$. 
  \label{Fig1}}
\end{figure}

We now proceed to perform a steepest descent analysis. It will be convenient to define the subset $S=S(z,z')$ of the saddle points by
\begin{align*}
S(z,z') = 
\begin{cases}
\{a^{-1}\}, & z\not\in [-\sqrt 2, \sqrt 2], z'\not\in [-\sqrt 2, \sqrt 2],\\
\{a^{-1},b^{-1}\}, & z\not\in [-\sqrt 2, \sqrt 2], z'\in [-\sqrt 2, \sqrt 2],\\
\{a^{-1}, b\}, & z\in [-\sqrt 2, \sqrt 2], z'\not\in [-\sqrt 2, \sqrt 2],\\
\{a^{-1}, a, b, b^{-1}\}, & z\in [-\sqrt 2, \sqrt 2], z'\in [-\sqrt 2, \sqrt 2].  
\end{cases}
\end{align*}
Generally, $S$ specifies which saddle point contributions have to be added, except in the special case that $z=\pm z'\in [-\sqrt 2, \sqrt 2]$. For the latter, we define $\Delta$ as the union of the diagonal and anti-diagonal on $[-\sqrt 2, \sqrt 2]^2$, that is
\begin{align*}
\Delta = \{(z,z) : z\in [-\sqrt 2, \sqrt 2]\} \cup \{(z,-z) : z\in [-\sqrt 2, \sqrt 2]\}.
\end{align*}
The case $(z,z')\in \Delta$ will be important, for instance when we consider one-point-correlation functions in later sections.\\

\begin{figure} 
    \centering
    \begin{overpic}[width=0.5\textwidth]{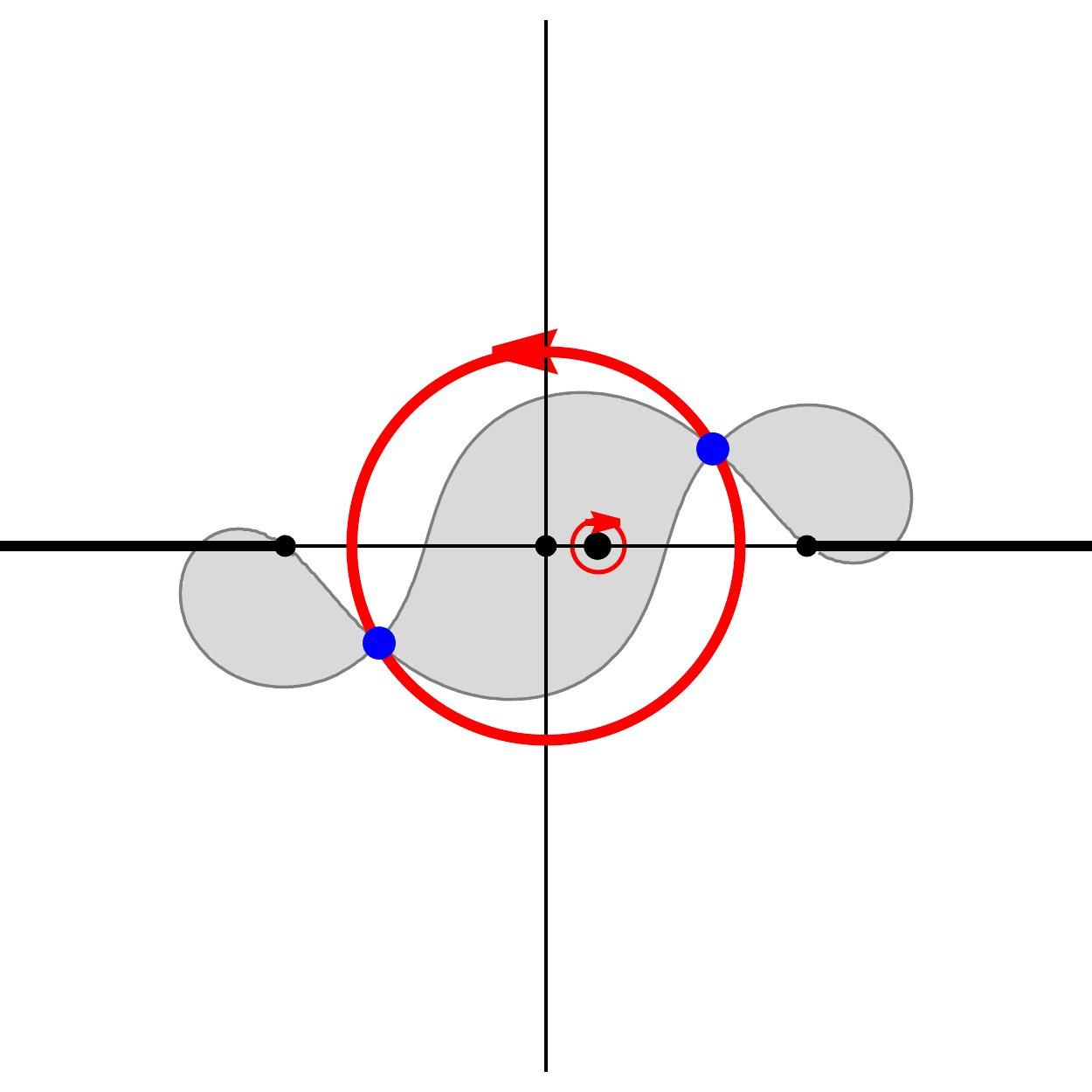}
        \put (22,45) {$-1$}
        \put (74,45) {$1$}
        \put (30,35) {$a^{-1}$}
        \put (65,62) {$b^{-1}$}
        \put (51,44) {$\tau$}
        \put (43,46) {$0$}
    \end{overpic}
    \caption{When $z\not\in [-\sqrt 2, \sqrt 2]$ but $z'\in [-\sqrt 2, \sqrt 2]$ (or vice versa but with $b^{-1}$ replaced by $b$),  we again deform the contour $\gamma_0$ to be the circle with radius $|a|^{-1}$ centered at the origin.   The shaded areas represent the regions where $\operatorname{Re} F(s)\geq \operatorname{Re} F(a^{-1})$. In this case, we have two saddle points on the circle $|s|=|a|^{-1}$. 
      \label{Fig1a}}
    \end{figure}

In both Theorem \ref{thm:InGeneral} and Theorem \ref{thm:InGeneralTau1}, we do not specify the signs of the square root in $\sqrt{\frac{2\pi}{-n F''(s)}}$. These should be taken in a way that respects the direction of the integration contour. 
In the proofs of our results in later sections, we shall specify the signs. 

\begin{theorem} \label{thm:InGeneral}
Let $\tau\in (0,1)$ be fixed and let $z,z'\in \mathbb C\setminus \{-\sqrt 2, \sqrt 2\}$.
\begin{itemize}
\item[(i)] If $(z,z')\not\in \Delta$ and $\tau\not\in S$, then we have 
\begin{align} \label{eq:generalSteepestDresult1}
I_n(d, \tau;z,z')
= \frac{e^{n F(\tau)}}{(1-\tau^2)^\frac{d}{2}} \mathfrak{1}_{|a|^{-1}>\tau} 
- \frac{1}{2\pi i} \sum_{s\in S} \sqrt{\frac{2\pi}{-n F''(s)}} \frac{e^{n F(s)}}{s-\tau} \frac{1}{(1-s^2)^\frac{d}{2}}
+ \mathcal O\left(\frac{e^{n \operatorname{Re} F(a^{-1})}}{n\sqrt n}\right),
\end{align}
as $n\to\infty$, where the $\mathcal O$ term is uniform on compact subsets.
\item[(ii)] If $(z,z')\in \Delta$, then we have 
\begin{align} \nonumber
I_n(d, \tau;z,z')
= \frac{e^{n F(\tau)}}{(1-\tau^2)^\frac{d}{2}} 
&- \frac{1}{2\pi i} \sum_{s\in S\setminus\{-1,1\}} \sqrt{\frac{2\pi}{-n F''(s)}} \frac{e^{n F(s)}}{s-\tau} \frac{1}{(1-s^2)^\frac{d}{2}}\\ \label{eq:generalSteepestDresult2}
&- \sum_{s\in S\cap \{-1,1\}} \frac{e^{n F(s)}}{2^{d-1} \Gamma\left(\frac{d}{2}\right)} \frac{|2 nF'(s)|^{\frac{d}{2}-1}}{s-\tau} 
+ \mathcal O\left((1+n^{\frac{d-1}{2}}) \frac{e^{n \operatorname{Re} F(a^{-1})}}{n\sqrt n}\right),
\end{align}
as $n\to\infty$, where the $\mathcal O$ term is uniform on compact subsets.
\end{itemize}
\end{theorem}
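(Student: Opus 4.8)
\emph{Proof strategy.} The statement is obtained by a steepest-descent analysis of the single integral $I_n$ along the contours already identified in the previous subsection. We fix the contour: by Theorem~\ref{lem:as=1} and the discussion following it, we deform $\gamma_0$ to the circle $|s|=|a|^{-1}$ whenever $z,z'$ are not both in $[-\sqrt2,\sqrt2]$ (Figures~\ref{Fig1}--\ref{Fig1a}); when $z,z'\in[-\sqrt2,\sqrt2]$ with $z\neq\pm z'$ we use the slightly perturbed circle of Figure~\ref{Fig17}; and when $(z,z')\in\Delta$ we use the contour of Figure~\ref{Fig2a}, which coincides with the unit circle except for a small Hankel-type loop wrapping the part of the cut of $(1-s^2)^{-d/2}$ issuing from the point of $S\cap\{-1,1\}$. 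Along each such contour one has $\operatorname{Re}F(s)\le\operatorname{Re}F(a^{-1})$, with equality only at the finitely many points of $S$. Deforming $\gamma_0$ outward past the simple pole $s=\tau$ contributes $\frac{e^{nF(\tau)}}{(1-\tau^2)^{d/2}}\mathfrak1_{|a|^{-1}>\tau}$; in case (ii) one has $|a|^{-1}=1>\tau$, so the indicator equals $1$. In case (i) the hypothesis $\tau\notin S$ — and in case (ii) the automatic fact that $\tau\in(0,1)$ lies neither on the unit circle nor at $\pm1$, hence $\tau\notin\{a^{-1},a,b,b^{-1}\}$ — keeps $\tau$ a fixed positive distance from every critical point; if $|a|^{-1}=\tau$ one first perturbs the circle slightly near $\tau$, which is harmless since $\operatorname{Re}F(\tau)<\operatorname{Re}F(a^{-1})$ there.

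\emph{Regular saddle points.} Each $s_0\in S\setminus\{-1,1\}$ is a simple saddle of $F$ by Proposition~\ref{prop:saddlePointsDef}, with $F''(s_0)\neq0$ given explicitly in Proposition~\ref{prop:F''a-1}, and on compact subsets of the relevant parameter set it stays bounded away from $\{-1,1\}$, from $\tau$, and from the other critical points. The local analysis is the classical one: after the analytic change of variables $F(s)=F(s_0)-w^2$ near $s_0$, the steepest-descent expansion yields the contribution $-\frac{1}{2\pi i}\sqrt{\tfrac{2\pi}{-nF''(s_0)}}\,\frac{e^{nF(s_0)}}{s_0-\tau}(1-s_0^2)^{-d/2}$ with relative error $\mathcal O(1/n)$, the branch of the square root being fixed by the tangent direction of the contour (as announced before the theorem). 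Since $\operatorname{Re}F(s_0)=\operatorname{Re}F(a^{-1})$ for every $s_0\in S$, the absolute error is $\mathcal O\big(n^{-3/2}e^{n\operatorname{Re}F(a^{-1})}\big)$. On the complementary arcs Theorem~\ref{lem:as=1} gives $\operatorname{Re}F\le\operatorname{Re}F(a^{-1})-c$ for some $c>0$ uniformly on compacta, so their contribution $\mathcal O\big(e^{n(\operatorname{Re}F(a^{-1})-c)}\big)$ is negligible. This proves \eqref{eq:generalSteepestDresult1}.

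\emph{Branch points at $\pm1$ and uniformity.} For \eqref{eq:generalSteepestDresult2} there is an additional contribution from $s_0\in S\cap\{-1,1\}$. Here $F$ is analytic near $s_0$ — the double zero of the numerator in \eqref{eq:dFroots} cancels the factor $(s-s_0)^2$ of the denominator — with $F'(s_0)$ real and nonzero (e.g. $F'(1)=-\tfrac14(1-a)(1-a^{-1})=-\sin^2\eta\neq0$ when $z=z'$, since $z\in\{-\sqrt2,\sqrt2\}$ is excluded), and moreover $F$ decreases along the part of the cut of $(1-s^2)^{-d/2}$ issuing from $s_0$, so the Hankel loop lies in $\{\operatorname{Re}F<\operatorname{Re}F(a^{-1})\}$. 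Writing $(1-s^2)^{-d/2}=(1\pm s)^{-d/2}(1\mp s)^{-d/2}\sim 2^{-d/2}(1\mp s)^{-d/2}$ near $s_0=\pm1$ and performing the affine substitution $w\propto s-s_0$ (with the sign of the proportionality chosen so that the relevant part of the cut maps to a standard Hankel contour), the local integral reduces to the Hankel representation of $\Gamma(d/2)^{-1}$; collecting prefactors and using $|2nF'(s_0)|=2|nF'(s_0)|$, together with the $-\tfrac1{2\pi i}$ in front of $I_n$, one obtains precisely $-\frac{e^{nF(s_0)}}{2^{d-1}\Gamma(d/2)}\frac{|2nF'(s_0)|^{d/2-1}}{s_0-\tau}$, of order $n^{d/2-1}e^{n\operatorname{Re}F(a^{-1})}$ with relative error $\mathcal O(1/n)$. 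Adding this to the error from the previous step gives the stated $\mathcal O\big((1+n^{(d-1)/2})n^{-3/2}e^{n\operatorname{Re}F(a^{-1})}\big)$. All estimates are uniform on compact subsets, because there the critical points, their mutual separation, the nonvanishing of $F''$ on $S\setminus\{-1,1\}$ and of $F'$ on $S\cap\{-1,1\}$, and the contours themselves depend continuously on $(z,z')$, while Theorem~\ref{lem:as=1} supplies the uniform global domination of $\operatorname{Re}F$ used for the tails.

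\emph{Main obstacle.} The delicate point is the branch-point analysis: one must build the Hankel loop so that it joins smoothly onto the rest of the steep-descent contour, stays inside $\{\operatorname{Re}F\le\operatorname{Re}F(a^{-1})\}$, and reproduces the $\Gamma(d/2)$-factor with exactly the stated powers of $2$ and of $n$ (and the correct overall sign, which depends on the orientation). A secondary issue is that along the anti-diagonal component of $\Delta$ (where $z=-z'$) the point $a^{-1}$ coincides with the branch point $-1$ and is therefore handled by the Hankel analysis rather than as a regular saddle; one then checks that the regular saddles in that case, namely $b$ and $b^{-1}$, remain bounded away from $\{-1,1\}$ on compact subsets of $\Delta$ avoiding $z=\pm\sqrt2$, which they do, so uniformity is not lost.
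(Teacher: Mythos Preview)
Your argument is essentially the paper's own proof: deform $\gamma_0$ according to Theorem~\ref{lem:as=1}, collect the residue at $\tau$, do a standard quadratic saddle-point expansion at each $s_0\in S\setminus\{-1,1\}$, and treat $s_0\in S\cap\{-1,1\}$ via a Hankel integral reducing to $\Gamma(d/2)^{-1}$. Your handling of the anti-diagonal (where $a=a^{-1}=-1$ becomes the branch-point contributor and $b,b^{-1}$ are the regular saddles) is also correct and matches the paper's symmetric remark for $z=-z'$.

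One descriptive inaccuracy is worth flagging. You write that for $(z,z')\in\Delta$ the contour of Figure~\ref{Fig2a} ``coincides with the unit circle except for a small Hankel-type loop.'' That is not what the figure shows, and it cannot work as stated: by Theorem~\ref{lem:as=1}(iv), $\operatorname{Re}F$ is \emph{constant} on the unit circle when $z,z'\in[-\sqrt2,\sqrt2]$, so a contour that coincides with the unit circle away from $\pm1$ fails your own claim that ``$\operatorname{Re}F(s)\le\operatorname{Re}F(a^{-1})$ with equality only at the finitely many points of $S$.'' The paper's contour consists of an \emph{unbounded} path $\gamma_1$ through the two genuine saddles, beginning and ending at $\infty$ in the region $\{\operatorname{Re}F<\operatorname{Re}F(a^{-1})\}$, together with an unbounded Hankel contour $\gamma_3$ wrapping the half-line $[1,\infty)$ (or $(-\infty,-1]$), plus the small circle $\gamma_2$ around $\tau$. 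Since you reference the correct figure and your subsequent local analyses (quadratic at $a,a^{-1}$; Hankel at $\pm1$ via $s=s_0+t/n$) do not actually rely on the ``unit circle'' picture, this is a wording issue rather than a mathematical gap --- but the sentence should be rewritten to match the unbounded contour that the analysis requires.
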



\begin{figure} 
\centering
\begin{overpic}[width=0.5\textwidth]{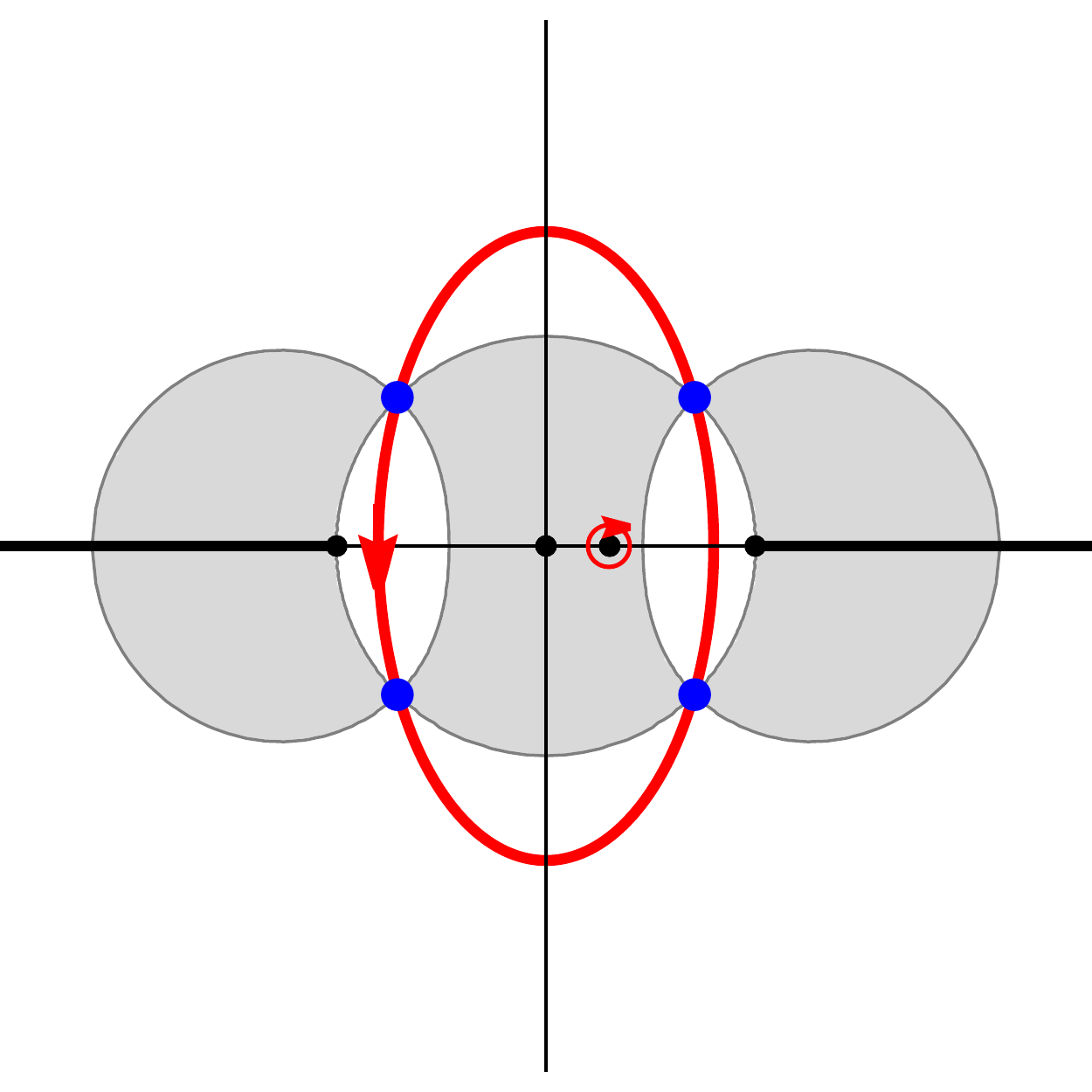}
    \put (35,30) {$b$}
    \put (32,68) {$b^{-1}$}
    \put (63,67) {$a$}
    \put (64,30) {$a^{-1}$}
    \put (25,45) {$-1$}
    \put (70,45) {$1$}
    \put (55,45) {$\tau$}
    \put (47,46) {$0$}
\end{overpic}
\caption{The integration contour when $z,z'\in [-\sqrt 2, \sqrt 2]$ and $z\neq\pm z'$.  In this case, the circle is no longer a good choice, but we can deform it slightly to be any bounded contour passing through the saddle points and avoiding the shaded regions (where, as before, $\operatorname{Re} F(s)\geq \operatorname{Re} F(a^{-1})$).
 \label{Fig17}}
\end{figure}

\begin{proof} Let us start with case (i). Here there are essentially two different subcases, each requiring their own deformation of the contour $\gamma_0$. We first consider the case that $(z,z')\in \mathbb C^2\setminus [-\sqrt 2, \sqrt 2]^2$. As explained above, we deform $\gamma_0$ to the circle $|s|=|a|^{-1}$. If $z,z'\not\in (-\sqrt 2, \sqrt 2)$, then $S(z,z')=\{a^{-1}\}$ and only the saddle point $a^{-1}$ is on $\gamma_0$. This case is illustrated in Figure \ref{Fig1}. If either $z\in (-\sqrt 2, \sqrt 2)$ or $z'\in (-\sqrt 2, \sqrt 2)$  (but not both) then also the saddle point $b$ or $b^{-1}$ respectively  is on  $\gamma_0$, illustrated in Figure \ref{Fig1a}. In all these cases, the main contribution in the asymptotic expansion comes from the saddle points. We cut the contour into several parts:  for each saddle point we consider the part that lies inside a small neighborhood of that saddle point, and then the remaining parts of the contour that stay at a postive distance to the saddle point(s). In the neighborhoods of the saddle point we  change to local variables in a standard way. For instance, the part of the contour close to $a^{-1}$ gives the term
\begin{align} \label{eq:saddleContributiana-}
-\frac{1}{2\pi i} \sqrt{\frac{2\pi}{-n F''(a^{-1})}} \frac{e^{n F(a^{-1})}}{a^{-1}-\tau} \frac{1}{(1-a^{-2})^\frac{d}{2}} \left(1+\mathcal O\left(\frac{1}{n}\right)\right).
\end{align}
If $b\in S(z,z')$ or $b^{-1}\in S(z,z')$ then we get additional terms with $a^{-1}$ replaced by $b$ or $b^{-1}$ respectively.  The integral over the remaining part of the contour will be exponentially smaller and can thus be absorbed into the $\mathcal O\left(\frac{1}{n}\right)$ term. Finally, since the residue at $\tau$ contributes exactly when $|a|^{-1}<\tau$, we arrive at \eqref{eq:generalSteepestDresult1} pointwise.  

We move to the case that $(z,z')\in (-\sqrt 2, \sqrt 2)^2\setminus \Delta$. Here we have saddle points $a = e^{i(\eta+\eta')}, a^{-1}=e^{-i(\eta+\eta')}, b=e^{i(\eta-\eta')}$ and $b^{-1}=e^{i(\eta'-\eta)}$. In this case, there is no descent for $\Re F$ on the contour $\gamma_0$ by Theorem \ref{lem:as=1} and this is not a contour that we can use. However, we can slightly deform it as shown in Figure \ref{Fig17} so that $\gamma_0$ does consist of contour of steep descent. Again, the main asymptotic contributions come from small neighborhoods of the saddle points.  This concludes case (i) of the theorem. 

Now we move to case (ii). Here we have saddle points $a = \pm e^{2i\eta}$ and $a^{-1} = \pm e^{-2i\eta}$. Let us first consider the case $z=z'$ with $z\neq 0$. Also in this case, the circle with radius $|a|^{-1}=1$ is not a steepest descent contour by Theorem \ref{lem:as=1}. We remedy this by further  deforming the  circle $\gamma_0$ to two  contours $\gamma_1$ and $\gamma_3$ as shown In Figure \ref{Fig2a}. The contour $\gamma_1$ goes through the saddle points $a^{-1}$ and $a$,  starts and end at infinity and lies fully in the region where $\Re F(s) < \Re F(a)=\Re F(a^{-1})$. The contour $\gamma_3$ goes around the cut $[1,\infty)$. 

The contour $\gamma_1$ picks up saddle point contributions at $a$ and $a^{-1}$ as before.  The contour $\gamma_3$ gives further contributions that we will now discuss. First of all, note that $\Re F(s)$ is dscreasing for $s>1$. Hence the dominant contribution in the integral over $\gamma_3$ comes from a small neighborhood around $s=1$.  Indeed, under the substitution $s = 1+\frac{t}{n}$ we have
\begin{align*}
\frac{e^{n F(s)}}{s-\tau}\frac{1}{(1-s^2)^\frac{d}{2}} 
= e^{n F(1)+ F'(1) t} n^\frac{d}{2} (-2 t)^{-\frac{d}{2}} \frac{1}{1-\tau} \left(1+\mathcal O\left(\frac{t+t^2}{n}\right)\right).
\end{align*}
Since the tail goes to $0$ fast enough, we infer from this that
\begin{align} \label{eq:gamma3band}
-\frac{1}{2\pi i} \int_{\gamma_3} \frac{e^{n F(s)}}{s-\tau} \frac{1}{(1-s^2)^\frac{d}{2}} ds 
= -\frac{1}{2\pi i} 2^{-\frac{d}{2}} n^{\frac{d}{2}-1} \frac{1}{1-\tau} e^{n F(1)} \int_{-1+\gamma_3} e^{F'(1) t} (-t)^{-\frac{d}{2}} dt \left(1+\mathcal O\left(\frac{1}{n}\right)\right).
\end{align}
When $\operatorname{Re} d<2$ we may take the bandwidth of $\gamma_3$ to $0$, and then we have
\begin{align} \nonumber
\int_{-1+\gamma_3} e^{F'(1) t} (-t)^{-\frac{d}{2}} dt
&= 2i\sin\left(\frac{d \pi}{2}\right) \int_0^\infty e^{F'(1) t} t^{-\frac{d}{2}} dt\\ \label{eq:bendGamma}
&= 2i\sin\left(\frac{d \pi}{2}\right) \Gamma\left(1-\frac{d}{2}\right) |F'(1)|^{\frac{d}{2}-1}
= \frac{2\pi i}{\Gamma\left(\frac{d}{2}\right)} |F'(1)|^{\frac{d}{2}-1}.
\end{align}
Here we used that $F'(1) = \frac{1}{2} z^2 - 1<0$. The identity in \eqref{eq:bendGamma} is actually true for all $d\in\mathbb C$ by analytic continuation, and we are done after plugging it into \eqref{eq:gamma3band}. For the error term, we may again use Theorem \ref{lem:as=1}. 

In the case that $z=-z'$ the picture is rather symmetric, here we use a contour that bends around $(-\infty, -1]$ instead of $[1,\infty)$ (see Figure \ref{Fig2a}). For $z=z'=0$, one has to use both a contour that bends around $[1,\infty)$, and a contour that bends around $(-\infty,-1]$. 

For the uniformity of the $\mathcal O$ term, we remark that for compact subsets in case (i), our deformations stay a finite distance away from the singularities $-1, 0$ and $1$. This, in combination with the analyticity of $F$, essentially gives the uniformity of the $\mathcal O$ term. For compact subsets in case (ii) the argument is a little more subtle, because the contours are not bounded, but an investigation of the tails of the corresponding integrals will give us the uniformity of the $\mathcal O$ term. These arguments can be made more precise, but to avoid being overly technical, we choose not to present this.  
\end{proof} 

\begin{figure}
\centering
\begin{overpic}[width=0.5\textwidth]{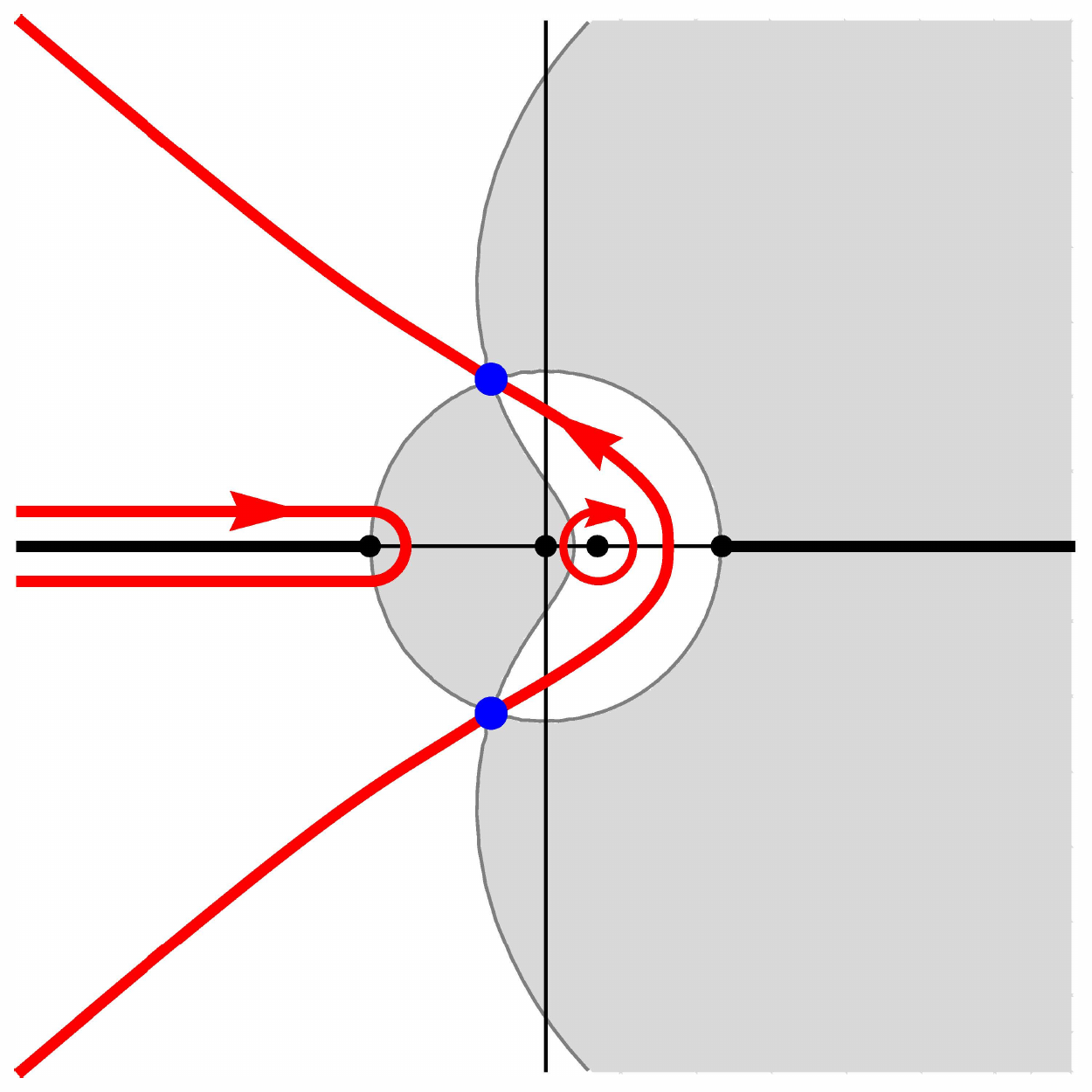}

\put (45,68) {$a^{-1}$}

\put (45,30) {$a$}

\put (28,47) {$-1$}

\put (63,45) {$1$}

\put (54,44) {$\tau$}

\put (46,46) {$0$}

\put (35,55) {$\gamma_3$}

\put (55,55) {$\gamma_2$}

\put (30,80) {$\gamma_1$}

\end{overpic}\begin{overpic}[width=0.5\textwidth]{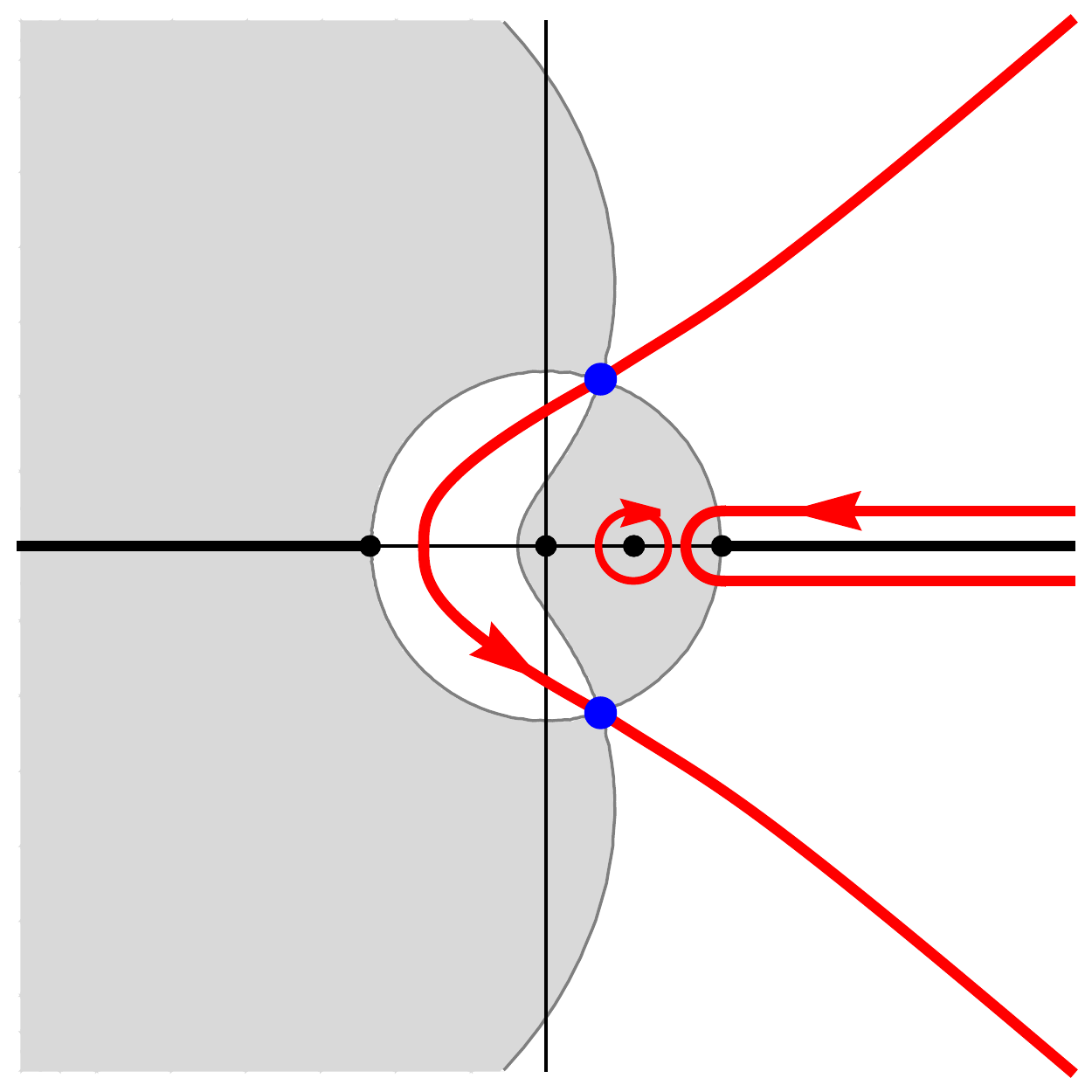}
    \put (51,68) {$a^{-1}$}
    \put (50,30) {$a$}
    \put (28,45) {$-1$}
    \put (63,45) {$1$}
    \put (54,45) {$\tau$} 
    \put (46,46) {$0$}
    \put (54,55) {$\gamma_2$}
    \put (65,55) {$\gamma_3$}
    \put (70,80) {$\gamma_1$}
\end{overpic}

\caption{Left: the integration contour for $z\in (-\sqrt 2, \sqrt 2)$ and $z = -z'$. Right: the integration contour for $z\in (-\sqrt 2, \sqrt 2)$ and $z = z'$. The shaded areas represent the regions where $\operatorname{Re} F(s)\geq \operatorname{Re} F(a^{-1})$, they are bounded. The contours passing through $a$ and $a^{-1}$ are the steepest descent contours. However, it suffices to have any contour (beginning and ending at $\infty$) that passes through $a$ and $a^{-1}$ and avoids the shaded areas. \label{Fig2a}} 
\end{figure}

\begin{theorem} \label{thm:InGeneralTau1}
Let $\tau=1$ and let $z,z'\in \mathbb C\setminus \{-\sqrt 2, \sqrt 2\}$.
\begin{itemize}
\item[(i)] If $(z,z')\not\in \Delta$ and $\tau\not\in S$, then we have 
\begin{align} \label{eq:generalSteepestDresult1Tau1}
I_n(d, \tau;z,z')
= - \frac{1}{2\pi i} \sum_{s\in S} \sqrt{\frac{2\pi}{-n F''(s)}} \frac{e^{n F(s)}}{s-1} \frac{1}{(1-s^2)^\frac{d}{2}} 
+ \mathcal O\left(\frac{e^{n \operatorname{Re} F(a^{-1})}}{n\sqrt n}\right),
\end{align}
as $n\to\infty$, where the $\mathcal O$ term is uniform on compact subsets.
\item[(ii)] If $(z,z')\in \Delta$, then we have 
\begin{align} \label{eq:generalSteepestDresult2Tau1}
I_n(d, \tau;z,z')
= - \frac{1}{2\pi i} \sum_{s\in S\setminus\{-1,1\}} \sqrt{\frac{2\pi}{-n F''(s)}} \frac{e^{n F(s)}}{s-1} \frac{1}{(1-s^2)^\frac{d}{2}} 
+  \frac{e^{n F(1)}}{2^{d} \Gamma\left(\frac{d}{2}+1\right)} |2 nF'(1)|^{\frac{d}{2}} 
+  \mathcal O\left((1+n^{\frac{d+1}{2}}) \frac{e^{n \operatorname{Re} F(a^{-1})}}{n\sqrt n}\right),
\end{align}
as $n\to\infty$, where the $\mathcal O$ term is uniform on compact subsets. 
\end{itemize}
\end{theorem}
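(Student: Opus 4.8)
\emph{Proof strategy.} The plan is to follow the proof of Theorem~\ref{thm:InGeneral} essentially line by line, recording only where the degeneration at $\tau=1$ changes the picture. There is a single structural difference: the simple pole of $1/(s-\tau)$ in the integrand of \eqref{eq:defIn} now sits at $s=1$, which is one of the two branch points of $(1-s^2)^{-d/2}$; in addition, since $\xi_\tau=-\tfrac12\log1=0$, one has $|a|^{-1}=e^{-(\xi+\xi')}\le1=\tau$ for every $z,z'$, with equality precisely on $[-\sqrt2,\sqrt2]^2$.

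\emph{Case (i).} First I would verify that the contour deformations of Theorem~\ref{thm:InGeneral}(i) carry over unchanged: for $(z,z')\in\mathbb C^2\setminus[-\sqrt2,\sqrt2]^2$ one deforms $\gamma_0$ to the circle $|s|=|a|^{-1}$ of Figures~\ref{Fig1}--\ref{Fig1a}, and for $(z,z')\in(-\sqrt2,\sqrt2)^2\setminus\Delta$ to the steep-descent contour of Figure~\ref{Fig17}. Since $|a|^{-1}\le\tau=1$ always, and since whenever $|a|^{-1}=1$ one may push the contour slightly inward near $s=1$, the pole at $s=\tau=1$ is never crossed; this is exactly why the residue term $\tfrac{e^{nF(\tau)}}{(1-\tau^2)^{d/2}}\mathfrak{1}_{|a|^{-1}>\tau}$ appearing in \eqref{eq:generalSteepestDresult1} drops out (it would in any case be ill-defined at $\tau=1$). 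The remaining steps — localising around each $s\in S$, passing to the standard Laplace variable, and absorbing the complementary arcs into the error by Theorem~\ref{lem:as=1} — are verbatim those of Theorem~\ref{thm:InGeneral}(i); the hypothesis $\tau=1\notin S$ guarantees that $1/(s-1)$ is regular at every saddle point used, so the Gaussian contributions are well defined. This yields \eqref{eq:generalSteepestDresult1Tau1} with the stated uniform error.

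\emph{Case (ii).} For $(z,z')\in\Delta$ I would again split $\gamma_0$, as in Figure~\ref{Fig2a}, into a part $\gamma_1$ through the two (simple, by Proposition~\ref{prop:saddlePointsDef}) unit-circle saddle points $a,a^{-1}$ and a Hankel-type loop $\gamma_3$ around the cut of $(1-s^2)^{-d/2}$ — around $[1,\infty)$ if $z=z'$, around $(-\infty,-1]$ if $z=-z'$, and both loops if $z=z'=0$. The $\gamma_1$ piece produces the ordinary $n^{-1/2}$ Gaussian terms forming the sum over $S\setminus\{-1,1\}$. The one genuinely new computation is the $\gamma_3$ piece when $z=z'$: there the pole of $1/(s-\tau)$ and the branch point of $(1-s^2)^{-d/2}$ both sit at $s=1$, so near $s=1$ the integrand behaves like $e^{nF(1)+nF'(1)(s-1)+\cdots}(1-s^2)^{-d/2}(s-1)^{-1}$, carrying an extra factor $(s-1)^{-1}$ relative to the situation in Theorem~\ref{thm:InGeneral}(ii). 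Substituting $s=1+t/n$ and rerunning the computation \eqref{eq:bendGamma} with the exponent $-d/2-1$ in place of $-d/2$ — legitimate for all $d\in\mathbb C$ by analytic continuation, using $F'(1)=\tfrac12z^2-1<0$ (valid on $\Delta$ since $z\ne\pm\sqrt2$) — the Hankel integral now evaluates to $\tfrac{2\pi i}{\Gamma(d/2+1)}|F'(1)|^{d/2}$, the power of $n$ rises from $n^{d/2-1}$ to $n^{d/2}$, and collecting the remaining constants (in particular the $(-2)^{-d/2}$ from expanding $(1-s^2)^{-d/2}$ at $s=1$) gives exactly the term $\tfrac{e^{nF(1)}}{2^d\Gamma(d/2+1)}|2nF'(1)|^{d/2}$ in \eqref{eq:generalSteepestDresult2Tau1}. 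The case $z=-z'$ follows from the reflection $s\mapsto-s$, which carries $F(\tau;z,z';\cdot)$ to $F(\tau;z,-z';\cdot)$ up to an additive constant irrelevant for $I_n$; the distinguished point is then $s=-1$, where there is no coincident pole, so it contributes an $n^{d/2-1}$-type term exactly as in Theorem~\ref{thm:InGeneral}(ii). In the case $z=z'=0$ there are no saddle points and one combines both loops. Finally the error terms are controlled as in Theorem~\ref{thm:InGeneral}(ii): the tails of the $\gamma_3$-integral and the non-saddle arcs of $\gamma_1$ are $\mathcal O\!\big(e^{n\operatorname{Re}F(a^{-1})}\,\mathrm{poly}(n)\big)$ by Theorem~\ref{lem:as=1}, and the extra $(s-1)^{-1}$ upgrades the polynomial weight $1+n^{(d-1)/2}$ of \eqref{eq:generalSteepestDresult2} to $1+n^{(d+1)/2}$; uniformity on compacts follows from the same inspection of these tails.

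\emph{Expected obstacle.} The hard part will not be any single estimate but the bookkeeping: keeping consistent track, across all the contour pieces, of the branch of $(1-s^2)^{-d/2}$ near $s=\pm1$ and of the orientation-dependent sign of $\sqrt{2\pi/(-nF''(s))}$, which in case (ii) interact with the pole at $s=\tau=1$ (and, when $z=z'=0$, with a second branch point at $s=-1$). This is precisely the bookkeeping already confronted and resolved in the proof of Theorem~\ref{thm:InGeneral}, and no new idea is needed.
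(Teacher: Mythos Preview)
Your proposal is correct and follows essentially the same approach as the paper. The paper's own proof is a two-sentence sketch noting precisely the two points you identify: the residue at $s=\tau=1$ never contributes, and in case (ii) the Hankel integral around $[1,\infty)$ must be evaluated with parameter $\tfrac{d}{2}+1$ in place of $\tfrac{d}{2}$ in \eqref{eq:bendGamma}; your write-up supplies the details the paper omits.
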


\begin{proof}
The proof is almost entirely analogous to the proof of Theorem \ref{thm:InGeneral}, with only a few differences. Firstly, the residue at $s=\tau=1$ will never contribute. Secondly, in case (ii) the integral that bends around $[1,\infty)$ now has to be done with parameter $\frac{d}{2}+1$ instead of $\frac{d}{2}$ in \eqref{eq:bendGamma}. 
\end{proof}

In Theorem \ref{thm:InGeneral} and Theorem \ref{thm:InGeneralTau1} we have avoided the cases where $z$ or $z'$ equal $\pm \sqrt 2$. These cases can be done in principle, but they are different in that the saddle points are degenerate of order $2$. Then there are three rather than two steepest descent directions, and we have to pick two of them for the deformation of our integration contour. See also Figure \ref{fig:order2}. In general, rather than being of order $n^{-\frac{1}{2}} e^{n F(a^{-1})}$, the saddle point contribution will be of order $n^{-\frac{1}{3}} e^{n F(a^{-1})}$. An exception is when both $z$ and $z'$ equal $\pm \sqrt 2$, then the saddle point is in $s=\pm 1$, and the saddle point contribution will be of order $n^{\frac{d-2}{6}} e^{n F(1)}$, due to the factor $(1-s^2)^\frac{d}{2}$ in the denominator of the integrand. This special case is treated explicitly in Theorem \ref{thm:onepointAsymp}(iii) and Theorem \ref{thm:onepointAsympd}(iii).

\begin{figure} 
    \centering
    \begin{overpic}[width=0.5\textwidth]{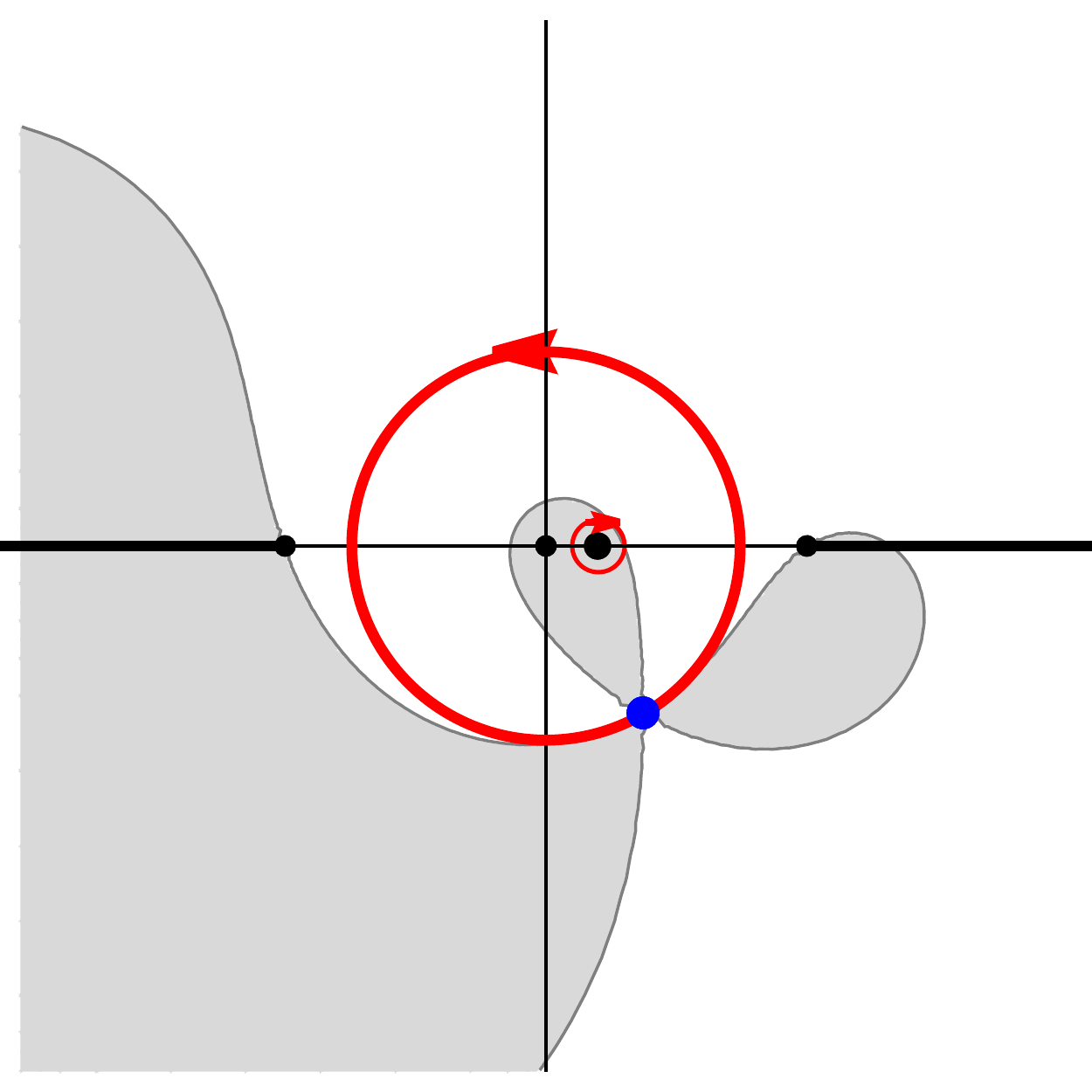}
       
        \put (60,28) {$a^{-1}$}
        \put (22,45) {$-1$}
        \put (75,45) {$1$}
        \put (54,45) {$\tau$}
        \put (48,46) {$0$}
    \end{overpic}
   \caption{The integration contour for $z\in \{-\sqrt 2, \sqrt 2\}$ and $z'\not \in \{-\sqrt 2, \sqrt 2\}$.  The shaded areas represent the regions where $\operatorname{Re} F(s)\geq \operatorname{Re} F(a^{-1})$. The three shaded areas are bounded. In this case, the saddle point $a^{-1}$ is of order 2.}
   \label{fig:order2}
    \end{figure}

\section{The elliptic Ginibre ensemble ($d=1$)} \label{sec:EGE}

In this section we let $\tau\in(0,1)$ be fixed. We recall that for $d=1$  the correlation kernel in \eqref{eq:general_kernel_d=1} can be expressed as 
\begin{align} \label{eq:kernelHermitePols}
\mathbb K_n(Z,Z') = \frac{n}{\pi\sqrt{1-\tau^2}} \sqrt{\omega\left(\sqrt n Z\right) \omega\left(\sqrt n Z'\right)} \sum_{j=0}^{n-1} \frac{1}{j!} \left(\frac{\tau}{2}\right)^j H_j\left(\sqrt{\frac{n}{2 \tau}} Z\right) H_j\left(\sqrt{\frac{n}{2 \tau}} \overline{Z'}\right),
\end{align}
where the weights are given by
\begin{align} \label{weight2}
\omega(Z) &= \exp\left(-\frac{1}{1-\tau^2} \left(|Z|^2 - \frac{\tau}{2} (Z^2+\overline{Z}^2)\right)\right)
= \exp\left(-\frac{\operatorname{Re}(Z)^2}{1+\tau}\right) \exp\left(-\frac{\operatorname{Im}(Z)^2}{1-\tau}\right).
\end{align}
At various places we will express $Z$ and $Z'$ in elliptic coordinates, that is
\begin{align*}
Z = 2\sqrt \tau\cosh(\xi+i\eta) \qquad\text{ and }\qquad Z' = 2\sqrt \tau\cosh(\xi'+i\eta'),
\end{align*}
where $\xi\geq 0$ and $\eta\in (-\pi,\pi]$ when $\xi>0$, $\eta\in [0,\pi]$ when $\xi=0$, and similar for $\xi'$. We identify
\begin{align} \label{eq:defzzprimEGE}
z = \frac{Z}{\sqrt{2\tau}}=\sqrt 2\cosh(\xi+i\eta) \qquad\text{ and }\qquad z' = \frac{\overline{Z'}}{\sqrt{2\tau}}=\sqrt 2\cosh(\xi'-i\eta'),
\end{align}
and then we can write the kernel, using \eqref{eq:integral_representation_kernel}, as
\begin{align} \label{eq:EGEInwithWeights}
\mathbb K_n(Z,Z') = \frac{n}{\pi\sqrt{1-\tau^2}} \sqrt{\omega\left(\sqrt n Z\right) \omega\left(\sqrt n Z'\right)} I_n(1, \tau;z,z'),
\end{align}
with $I_n(1, \tau;z,z')$ as in \eqref{eq:defIn}. Note that the expression for $z'$ in \eqref{eq:defzzprimEGE} has a minus sign in front of $\eta'$ (due to the conjugation). This will be important when using the results from Section \ref{sec:steepest}, where we have worked with \eqref{eq:defEllipticCoord}.

\subsection{Two preliminary lemmas}

Before we come to the proofs of our main results for the elliptic Ginibre ensemble, we first mention two lemmas that we will need. 

To formulate the first lemma we define
\begin{align*}
K_{\rho}^{\infty}(Z,Z') &= \frac{\rho}{\pi} \exp\left(-\rho \frac{|Z|^2+|Z'|^2-2 Z \overline{Z'}}{2}\right)
\exp\left(- i \sqrt{\rho(\rho-1)} \frac{\operatorname{Im}(Z^2-Z'^2)}{2}\right),
\end{align*}
which is a kernel for the $\infty$-Ginibre process with density $\rho>0$. The exponential on the far right is usually not included, but it has no real relevance, since, being a cocycle, it does not change the correlation functions.
\begin{lemma} \label{prop:valueomegaWZeFtau}
For all $n\geq 1$ we have
\begin{align} \label{eq:valueomegaWZeFtau}
\frac{n}{\pi(1-\tau^2)}\sqrt{\omega\left(\sqrt n Z\right) \omega\left(\sqrt n Z'\right)} e^{n F(\tau)}
=   n K_{(1-\tau^2)^{-1}}^{\infty}\left(\sqrt n Z, \sqrt n Z'\right).
\end{align}
In particular, we have
\begin{align} \label{eq:weightsFtauGaussian}
\left|\sqrt{\omega\left(\sqrt n Z\right) \omega\left(\sqrt n Z'\right)} e^{n F(\tau)}\right| = \exp\left(-\frac{n}{2} \frac{|Z-Z'|^2}{1-\tau^2}\right).
\end{align}
\end{lemma}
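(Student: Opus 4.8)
The plan is to verify \eqref{eq:valueomegaWZeFtau} by a direct computation, matching the exponential factors on both sides, and then to read off \eqref{eq:weightsFtauGaussian} as an immediate corollary by taking absolute values. First I would recall that $F(\tau) = F(\tau; z, z'; \tau)$ with $F$ as in \eqref{eq:defF}, so that $\log \tau$ is cancelled by the $+\log\tau$ term and only the first two terms of $F$ survive, evaluated at $s = \tau$:
\begin{align*}
F(\tau) = \frac{\tau (z+z')^2}{2(1+\tau)} - \frac{\tau (z-z')^2}{2(1-\tau)}.
\end{align*}
Next I would substitute the identifications from \eqref{eq:defzzprimEGE}, namely $z = Z/\sqrt{2\tau}$ and $z' = \overline{Z'}/\sqrt{2\tau}$, so that $(z \pm z')^2 = (Z \pm \overline{Z'})^2/(2\tau)$. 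This gives
\begin{align*}
n F(\tau) = \frac{n}{4} \left( \frac{(Z + \overline{Z'})^2}{1+\tau} - \frac{(Z - \overline{Z'})^2}{1-\tau} \right)
= \frac{n}{1-\tau^2}\left( \frac{1-\tau}{4}(Z+\overline{Z'})^2 - \frac{1+\tau}{4}(Z-\overline{Z'})^2 \right),
\end{align*}
and expanding the squares, the terms $Z^2$ and $\overline{Z'}^2$ come with coefficient $-\tau$ while the cross term $Z\overline{Z'}$ comes with coefficient $1$, so that $nF(\tau) = \frac{n}{1-\tau^2}\left( Z\overline{Z'} - \frac{\tau}{2}(Z^2 + \overline{Z'}^2) \right)$.

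The remaining step is to combine this with the weight factors. Using the first form of $\omega$ in \eqref{weight2}, we have
\begin{align*}
\sqrt{\omega(\sqrt n Z)\,\omega(\sqrt n Z')} = \exp\left( -\frac{n}{2(1-\tau^2)}\left( |Z|^2 + |Z'|^2 - \frac{\tau}{2}(Z^2 + \overline Z^2 + Z'^2 + \overline{Z'}^2) \right) \right).
\end{align*}
Adding the exponent of $e^{nF(\tau)}$ and collecting terms, the combined exponent becomes $-\frac{n}{1-\tau^2}$ times
\begin{align*}
\frac{|Z|^2 + |Z'|^2 - 2 Z\overline{Z'}}{2} + \frac{\tau}{4}\left( Z^2 + \overline{Z'}^2 - \overline Z^2 - Z'^2 \right),
\end{align*}
and since $Z^2 - \overline Z^2 = 2 i \operatorname{Im}(Z^2)$ and similarly $\overline{Z'}^2 - Z'^2 = -2i\operatorname{Im}(Z'^2)$, the second piece equals $\frac{i\tau}{2}\operatorname{Im}(Z^2 - Z'^2)$. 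Matching against the definition of $K_\rho^\infty$ with $\rho = (1-\tau^2)^{-1}$ — for which $\sqrt{\rho(\rho-1)} = \tau/(1-\tau^2)$ — and multiplying by the prefactor $n/(\pi(1-\tau^2)) = \rho n/\pi$ after the rescaling $Z \mapsto \sqrt n Z$ (noting $K_\rho^\infty(\sqrt n Z, \sqrt n Z')$ carries a factor $n$ inside each quadratic exponent) yields \eqref{eq:valueomegaWZeFtau}. Finally, for \eqref{eq:weightsFtauGaussian} I would simply take the modulus of $\sqrt{\omega(\sqrt n Z)\omega(\sqrt n Z')}\, e^{nF(\tau)}$: the real part of the combined exponent above is $-\frac{n}{2(1-\tau^2)}\big( |Z|^2 + |Z'|^2 - 2\operatorname{Re}(Z\overline{Z'}) \big) = -\frac{n}{2(1-\tau^2)}|Z - Z'|^2$, since the $\tau$-term is purely imaginary.

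I do not anticipate a genuine obstacle here; the only thing requiring care is the bookkeeping of the square roots and the imaginary parts — in particular keeping straight that $z'$ is defined via $\overline{Z'}$ (so $\overline{z'} \sqrt{2\tau} = Z'$, not $\overline{Z'}$), and that the cocycle factor $\sqrt{\rho(\rho-1)}$ must be identified with $\tau/(1-\tau^2)$ with the correct sign. Everything else is elementary algebra of exponents.
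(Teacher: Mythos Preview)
Your proposal is correct and follows essentially the same approach as the paper: a direct algebraic computation of the combined exponent $\frac{1}{2n}\log(\omega(\sqrt n Z)\omega(\sqrt n Z')) + F(\tau)$, arriving at exactly the expression $-\frac{|Z|^2+|Z'|^2-2Z\overline{Z'}}{2(1-\tau^2)} + \frac{\tau(\overline Z^2 + Z'^2 - Z^2 - \overline{Z'}^2)}{4(1-\tau^2)}$ and then identifying it with the exponent of $K_\rho^\infty$. The only difference is organizational --- you first simplify $nF(\tau)$ in terms of $Z,\overline{Z'}$ before combining with the weights, while the paper writes everything at once --- but the content is identical, including the care about $z' = \overline{Z'}/\sqrt{2\tau}$ and the identification $\sqrt{\rho(\rho-1)} = \tau/(1-\tau^2)$.
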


\begin{proof}
The result follows by a straightforward calculation, namely
\begin{align*}
\frac{1}{2 n} & \log (\omega\left(\sqrt n Z\right) \omega\left(\sqrt n Z'\right)) + F(\tau)\\
&= -\frac{1}{2(1-\tau^2)} \left(|Z|^2 - \frac{\tau}{2} (Z^2+\overline{Z}^2)\right)
 -\frac{1}{2(1-\tau^2)} \left(|Z'|^2 - \frac{\tau}{2} (Z'^2+\overline{Z'}^2)\right)
 - \frac{\frac{\tau}{2} (Z^2+Z'^2) -  
 ZZ'}{1-\tau^2}\\
&= -\frac{1}{2(1-\tau^2)} \left(|Z|^2+|Z'|^2 - \frac{\tau}{2} (Z^2+\overline{Z}^2)- \frac{\tau}{2} (Z'^2+\overline{Z'}^2)+\tau (Z^2+\overline{Z'}^2) -  2 Z \overline{Z'}\right)\\
 &= -\frac{|Z|^2+|Z'|^2-2 Z \overline{Z'}}{2(1-\tau^2)}+\frac{\tau (\overline{Z}^2+Z'^2 - Z^2 - \overline{Z'}^2)}{4(1-\tau^2)}.
\end{align*}
\end{proof}
For the second lemma, we need a definition.
\begin{definition}
We define $\xi_\tau = -\frac{1}{2} \log\tau$, and we define the function $g(\zeta) = g(\tau;z,z';\zeta)$ via
\begin{align} \label{eq:defpthetas}
-(\xi - \xi_\tau)^2 g(\xi+i\eta) =  \frac{1}{2} +\xi -\xi_\tau + \frac{1}{2} e^{-2\xi} \cos(2\eta)
- 2\tau \frac{\cosh^2 \xi \cos^2 \eta}{1+\tau} 
- 2\tau \frac{\sinh^2 \xi \sin^2 \eta}{1-\tau}.
\end{align}
for $\xi\geq 0$ with $\xi\neq \xi_\tau$, and $g\left(\xi_\tau+i\eta\right) 
= \frac{1+\tau^2- 2\tau \cos(2\eta)}{1-\tau^2}$.\\
\end{definition}


\begin{lemma} \label{prop:weightsFa-1}
The function $g$ is continuous and it is bounded from below by a positive number.
We have
\begin{align} \label{eq:weightsFa-1}
\left|\sqrt{\omega\left(\sqrt n Z\right) \omega\left(\sqrt n Z'\right)} e^{n F(a^{-1})}\right|
= e^{- n (\xi-\xi_\tau)^2 g(\xi+i\eta)} e^{ - n (\xi'-\xi_\tau)^2 g(\xi'+i\eta')}. 
\end{align}
\end{lemma}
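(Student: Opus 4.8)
The plan is to prove Lemma \ref{prop:weightsFa-1} in two parts: first the continuity and positivity of $g$, and then the identity \eqref{eq:weightsFa-1}, which reduces to a direct computation using the already-established formula for $F(a^{-1})$ in Proposition \ref{prop:Finsaddleelliptic}.

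\textbf{Step 1: The identity \eqref{eq:weightsFa-1}.} I would compute $\frac{1}{2n}\log(\omega(\sqrt n Z)\omega(\sqrt n Z')) + \operatorname{Re} F(a^{-1})$ directly in elliptic coordinates. Recall from \eqref{eq:defzzprimEGE} that in the EGE we have $z = \sqrt 2\cosh(\xi+i\eta)$ but $z' = \sqrt 2\cosh(\xi'-i\eta')$, i.e. the roles of the second elliptic angle carry a sign flip relative to \eqref{eq:defEllipticCoord}. So Proposition \ref{prop:Finsaddleelliptica-} gives, after that substitution,
\begin{align*}
\operatorname{Re} F(a^{-1}) = 1 + \log\tau + \xi + \xi' + \tfrac12 e^{-2\xi}\cos(2\eta) + \tfrac12 e^{-2\xi'}\cos(2\eta'),
\end{align*}
since the imaginary contributions cancel once we take the real part. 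On the other hand, from \eqref{weight2}, $\frac{1}{2n}\log\omega(\sqrt n Z) = -\frac{(\Re Z)^2}{1+\tau} - \frac{(\Im Z)^2}{1-\tau}$ with $Z = 2\sqrt\tau\cosh(\xi+i\eta)$, so $\Re Z = 2\sqrt\tau\cosh\xi\cos\eta$ and $\Im Z = 2\sqrt\tau\sinh\xi\sin\eta$, giving
\begin{align*}
\tfrac{1}{2n}\log\omega(\sqrt n Z) = -\frac{4\tau\cosh^2\xi\cos^2\eta}{1+\tau} - \frac{4\tau\sinh^2\xi\sin^2\eta}{1-\tau}.
\end{align*}
Adding half of this (and the analogous term for $Z'$) to $\operatorname{Re} F(a^{-1})$, and using $\xi_\tau = -\frac12\log\tau$ so that $\log\tau = -2\xi_\tau$, one sees that the sum splits as a function of $(\xi,\eta)$ plus the same function of $(\xi',\eta')$, and each summand is precisely $\frac12 + \xi - \xi_\tau + \frac12 e^{-2\xi}\cos(2\eta) - \frac{2\tau\cosh^2\xi\cos^2\eta}{1+\tau} - \frac{2\tau\sinh^2\xi\sin^2\eta}{1-\tau}$, which by \eqref{eq:defpthetas} equals $-(\xi-\xi_\tau)^2 g(\xi+i\eta)$. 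Taking absolute values (the modulus kills the phase of $e^{nF(a^{-1})}$, which in any case is $2\pi i$-ambiguous as noted after Proposition \ref{prop:Finsaddleelliptic}) and exponentiating $n$ times yields \eqref{eq:weightsFa-1}.

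\textbf{Step 2: Continuity and positivity of $g$.} Away from $\xi = \xi_\tau$ continuity is clear from \eqref{eq:defpthetas} since the right-hand side is a smooth function of $(\xi,\eta)$ and is divided by the nonzero factor $-(\xi-\xi_\tau)^2$. The issue is continuity at $\xi = \xi_\tau$: one must check that the numerator $N(\xi,\eta) := \frac12 + \xi - \xi_\tau + \frac12 e^{-2\xi}\cos(2\eta) - \frac{2\tau\cosh^2\xi\cos^2\eta}{1+\tau} - \frac{2\tau\sinh^2\xi\sin^2\eta}{1-\tau}$ has a double zero in $\xi$ at $\xi = \xi_\tau$, with the right second-derivative coefficient to match the stated value $g(\xi_\tau + i\eta) = \frac{1+\tau^2 - 2\tau\cos(2\eta)}{1-\tau^2}$. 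Concretely, I would verify $N(\xi_\tau,\eta) = 0$ and $\partial_\xi N(\xi_\tau,\eta) = 0$ (using $e^{-2\xi_\tau} = \tau$, $\cosh^2\xi_\tau = \frac{(1+\tau)^2}{4\tau}$, $\sinh^2\xi_\tau = \frac{(1-\tau)^2}{4\tau}$), and then compute $\frac12\partial_\xi^2 N(\xi_\tau,\eta)$ and check it equals $-\frac{1+\tau^2-2\tau\cos(2\eta)}{1-\tau^2}$, so that a Taylor expansion gives $g(\xi+i\eta) \to g(\xi_\tau+i\eta)$ as $\xi\to\xi_\tau$. For positivity: note $g(\xi_\tau+i\eta) = \frac{1+\tau^2-2\tau\cos(2\eta)}{1-\tau^2} \geq \frac{(1-\tau)^2}{1-\tau^2} = \frac{1-\tau}{1+\tau} > 0$. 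Away from the ellipse, observe that $-(\xi-\xi_\tau)^2 g(\xi+i\eta) = \frac1n\log|\sqrt{\omega(\sqrt n Z)}\, e^{nF(a^{-1})}|$ (the one-variable version of \eqref{eq:weightsFa-1}), and by construction (cf. the discussion of Figures~\ref{Fig1}--\ref{Fig1a} and Theorem~\ref{lem:as=1}) $\operatorname{Re} F(a^{-1})$ is the value at the dominant saddle, which one expects to make this log-quantity strictly negative for $\xi\neq\xi_\tau$; I would make this precise by showing $N(\xi,\eta) < 0$ for $\xi\neq\xi_\tau$, e.g. by checking $N(0,\eta)\leq 0$, $N(\xi,\eta)\to -\infty$ as $\xi\to\infty$, and that $N$ has no interior critical point other than $\xi_\tau$ for fixed $\eta$ — or, more cleanly, by rewriting $N$ in a manifestly sign-definite form. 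Finally, a positive \emph{lower} bound (not just positivity) follows from continuity of $g$ on the compact set $\xi\in[0,\Xi]$, $\eta\in[-\pi,\pi]$ for any $\Xi$, together with the fact that for $\xi\geq\Xi$ large, $N(\xi,\eta)\sim -\frac{2\tau\cosh^2\xi}{1+\tau}$ so $g(\xi+i\eta)\gtrsim \frac{e^{2\xi}}{(\xi-\xi_\tau)^2}$ stays bounded below.

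\textbf{Main obstacle.} The computation in Step 1 is routine but bookkeeping-heavy (the sign flip on $\eta'$ in \eqref{eq:defzzprimEGE} versus \eqref{eq:defEllipticCoord} is the easy place to slip). The genuinely delicate point is Step 2: establishing that $N(\xi,\eta) < 0$ for all $\xi \neq \xi_\tau$ and hence that $g$ is everywhere positive, and getting a \emph{uniform} positive lower bound. The double-zero / removable-singularity check at $\xi=\xi_\tau$ is mechanical, but the global negativity of $N$ — which is really the statement that $a^{-1}$ is the dominant saddle, repackaged — may require either a clever algebraic rewriting of $N$ or invoking the contour analysis of Section~\ref{sec:steepest} rather than a bare-hands inequality.
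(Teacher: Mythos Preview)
Your proposal is correct and follows essentially the same route as the paper. For your flagged obstacle---the global negativity of $N(\xi,\eta)$ for $\xi\neq\xi_\tau$---the paper executes precisely your first suggestion: the equation $\partial_\xi N(\xi,\eta)=0$ is a quadratic in $e^{2\xi}$ with one root $e^{2\xi_\tau}=\tau^{-1}$ and the other root $\lambda=\frac{\cos(2\eta)-\tau}{1-\tau\cos(2\eta)}\leq 1$, so $\xi_\tau$ is the unique critical point of $N(\cdot,\eta)$ on $[0,\infty)$ and hence the global maximum.
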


\begin{proof}
Plugging in the expressions for $Z$ and $Z'$ in elliptic coordinates, we have
\begin{align*}
\frac{1}{2 n} \log(\omega\left(\sqrt n Z\right) \omega\left(\sqrt n Z'\right)) 
= - 2\tau \frac{\cosh^2 \xi \cos^2 \eta + \cosh^2 \xi' \cos^2 \eta'}{1+\tau} 
- 2\tau \frac{\sinh^2 \xi \sin^2 \eta+\sinh^2 \xi' \sin^2 \eta'}{1-\tau}.
\end{align*}
Combining this with \eqref{prop:Finsaddleelliptica-}, we obtain \eqref{eq:weightsFa-1} for $\xi\neq \xi_\tau$. 
Now we define $G_\eta(\xi)$ by the right-hand side of \eqref{eq:defpthetas}, that is,
\begin{align*}
G_\eta(\xi) &= \frac{1}{2} + \xi-\xi_\tau + \frac{1}{2} e^{-2\xi} \cos(2\eta)
- 2\tau \frac{\cosh^2 \xi \cos^2 \eta}{1+\tau} 
- 2\tau \frac{\sinh^2 \xi \sin^2 \eta}{1-\tau}.
\end{align*}
We see that
\begin{align*}
G_\eta(\xi_\tau) &= 
\frac{1}{2} + \frac{\tau}{2} \cos(2\eta) - \frac{1+\tau}{2} \cos^2 \eta - \frac{1-\tau}{2} \sin^2 \eta = 0.
\end{align*}
By differentiation we get
\begin{align*}
G_\eta'(\xi) 
&= 1 - e^{-2\xi} \cos(2\eta) - \frac{2\tau}{1-\tau^2} \sinh(2\xi) (1-\tau \cos(2\eta)).
\end{align*}
We infer from this that 
\begin{align*}
G_\eta'(\xi_\tau)  &= 1 - \tau \cos(2\eta) - (1-\tau \cos(2\eta)) = 0.
\end{align*}
and, differentiating again, that
\begin{align*}
G_\eta''(\xi_\tau) &= 
2\tau \cos(2\eta) - \frac{4\tau}{1-\tau^2} \frac{\tau^{-1}+\tau}{2} (1- \tau \cos(2\eta))
= - 2 \frac{1+\tau^2- 2\tau \cos(2\eta)}{1-\tau^2} < 0.
\end{align*}
We conclude that there is a neighborhood around $\xi = \xi_\tau$ on which $G_\eta$ is negative, except in $\xi=\xi_\tau$ itself. In particular, we have
\begin{align*}
G_\eta(\xi) = - (\xi - \xi_\tau)^2 g(\xi+i\eta),
\end{align*}
for some continuous function $g$ which is positive in a neighborhood of $\xi=\xi_\tau$, and in particular in $\xi=\xi_\tau$ itself. Of course, this is the function $g$ as defined above. Our goal is to show that $g$ is in fact uniformly bounded from below by a positive constant.  
The equation $G_\eta'(\xi)=0$ can be rewritten as a quadratic equation in $e^{2 \xi}$. We already know that $e^{2\xi}=e^{2\xi_\tau} = \tau^{-1}$ is a solution. We want to show that the other solution, let us call it $\lambda$, is smaller than or equal to $1$. We should have that
\begin{multline*}
-\frac{\tau}{1-\tau^2} (1-\tau \cos(2\eta)) \left(e^{2\xi} - \frac{1}{\tau}\right)\left(e^{2\xi}-\lambda\right)
= -\frac{\tau}{1-\tau^2} (1-\tau \cos(2\eta)) e^{4 \xi} + e^{2\xi}- \cos(2\eta) + \frac{\tau}{1-\tau^2} (1-\tau \cos(2\eta)).
\end{multline*}
Thus, comparing coefficients of $e^{2\xi}$ on both sides, we have
\begin{align*} 
\lambda = \frac{\cos(2\eta)-\tau}{1-\tau \cos(2\eta)}.
\end{align*}
This has the highest value when $\cos(2\eta)=1$, thus $\lambda \leq \frac{1-\tau}{1-\tau} = 1$. We conclude that $G_\tau'(\xi)$ is strictly positive for $\xi\in [0,\xi_\tau)$ and strictly negative for $\xi\in (\xi_\tau,\infty)$. This implies that $G_\eta(\xi)$ is negative for all $\xi\neq\xi_\tau$. Hence $g$ is strictly positive on $\mathbb C$. Furthermore, we have $g(\xi+i\eta)\to \infty$ as $\xi\to\infty$, uniformly for all $\eta$. Hence there exists a $\tilde\xi>0$ such that $g(\xi+i\eta)\geq 1$ whenever $\xi>\tilde\xi$. Since $g$ is continuous and strictly positive on the compact (elliptic) region $\xi\leq \tilde\xi$, it must be bounded from below by a positive constant there as well. 
\end{proof}

\begin{remark} \label{remark:ReFtauReFa-1}
A consequence of Lemma \ref{prop:valueomegaWZeFtau} and Lemma \ref{prop:weightsFa-1} is that
\begin{align} \label{eq:RtauRa-Compare}
\operatorname{Re} F(\tau) - \operatorname{Re} F(a^{-1}) 
= -\frac{1}{2} \frac{|Z-Z'|^2}{1-\tau^2} + (\xi-\xi_\tau)^2 g(\xi+i\eta) + (\xi'-\xi_\tau)^2 g(\xi'+i\eta').
\end{align}
We deduce from this that the residue contribution 
in $I_n$ at $s=\tau$  is dominant when $Z$ and $Z'$ are inside  $\mathcal E_\tau$ and close to each other. The residue thus gives the ``bulk'' behavior. On the other hand, when both $Z$ and $Z'$ are on or near the boundary of $\mathcal E_\tau$, then the saddle point contribution at $s=a^{-1}$ is dominant (if $Z$ and $Z'$ are not too close to each other). The ``edge'' behavior is thus given by the saddle point contribution. 
\end{remark}

\subsection{Proofs of Theorems \ref{thm:largenKnZW_intro} and \ref{thm:2ptCluster_intro}}
We now prove our main results for the elliptic Ginibre ensemble. Starting with the

\begin{proof}[Proof of Theorem \ref{thm:largenKnZW_intro}]
The set under consideration falls under case (i) of Theorem \ref{thm:InGeneral}. Hence we have
\begin{align*}
I_n(1, \tau;z,z')
= \frac{e^{n F(\tau)}}{(1-\tau^2)^\frac{1}{2}} \mathfrak{1}_{|a|^{-1}>\tau} 
- \frac{1}{2\pi i} \sqrt{\frac{2\pi}{-n F''(a^{-1})}} \frac{e^{n F(a^{-1})}}{(1-a^{-2})^\frac{1}{2}} \frac{1}{a^{-1}-\tau}
+ \mathcal O\left(\frac{e^{n \operatorname{Re} F(a^{-1})}}{n\sqrt n}\right),
\end{align*}
where the square root of $F''(a^{-1})$ is chosen in such a way that the direction of the integration contour is respected. To find out how to pick the square root, it is somewhat easier to do a substitution $s\to a^{-1} e^{it}$ and look at $f(t):= F(a^{-1} e^{it})$ instead, for $t\neq \eta-\eta'-i(\xi+\xi') \mod \pi$. Then we consider an integral from $-\pi$ to $\pi$ (in that direction), and there is a saddle point in $t=0$. In this setup, it is a well-known fact that the leading order asymptotic behavior of the integral is given by
\begin{align*}
-\frac{1}{2\pi i} \sqrt{\frac{2\pi}{n}} e^{n F(a^{-1})} \frac{1}{\sqrt{1-a^{-2}}} \frac{i a^{-1}}{a^{-1}-\tau} \frac{1}{(-f''(0))^{1/2}},
\end{align*} 
where $(-f''(0))^{1/2}$ has to be picked such that the argument is in $(-\frac{\pi}{4}, \frac{\pi}{4})$. However, it is clear from the proof of Lemma \ref{prop:localmax1} that $-f''(0)$ is in the right half-plane. This means that $(-f''(0))^{1/2} = \sqrt{-f''(0)}$, where, by convention, the square root is taken with a cut $(-\infty,0]$ and it is positive for positive values. Furthermore, $1-a^{-2}$ is also in the right half-plane. This implies that 

\begin{align*}
\sqrt{1-a^{-2}} \sqrt{-f''(0)} = \sqrt{-(1-a^{-2}) f''(0)} = 2 \sqrt{a^{-1} \sinh(\xi+i\eta) \sinh(\xi'-i\eta')},
\end{align*}
where we used Proposition \ref{prop:F''a-1} in the last line. We note that 
\begin{align} \label{eq:pmSqrtSaddle}\sqrt{a^{-1} \sinh(\xi+i\eta) \sinh(\xi'-i\eta')} = \pm \frac{1}{\sqrt a} \sqrt{\sinh(\xi+i\eta) \sinh(\xi'-i\eta')},
\end{align}
for some choice of $\pm$ sign that depends explicitly on $(Z,Z')$. Upon substituting elliptic coordinates in the remaining part of our expressions, and in particular using Lemma \ref{prop:valueomegaWZeFtau} and Lemma \ref{prop:weightsFa-1}, we arrive at \eqref{eq:behavKnZWgeneral_intro}.
\end{proof}
\begin{proof}[Proof of Theorem \ref{thm:2ptCluster_intro}]
    By \eqref{eq:RtauRa-Compare} we have $\operatorname{Re} F(a^{-1}) > \operatorname{Re} F(\tau)$ when $\xi=\xi'=\xi_\tau$.   By continuity, this strict inequality remains true on a small enough open neighborhood of $(Z_0, Z'_0)$. The contribution of the residue at $s=\tau$ is therefore negligible in this open neighborhood, and we may immediately extract from Theorem \ref{thm:largenKnZW_intro} that
    \begin{multline*}
    \mathbb K_n(Z,Z') 
    = \pm \sqrt{\frac{n}{32 \pi^3 \tau(1-\tau^2)}} 
    \frac{e^{- n (\xi-\xi_\tau)^2 g(\xi+i\eta)} e^{- n (\xi'-\xi_\tau)^2 g(\xi'+i\eta')}}{\sinh\left(\xi_+-\xi_\tau+i\frac{\eta-\eta'}{2}\right)\sqrt{\sinh(\xi+i\eta) \sinh(\xi'-i\eta')}}
    e^{i n (\eta-\eta' - \frac{\tau}{2} \sin(2\eta)+\frac{\tau}{2} \sin(2\eta'))}\\
    + \mathcal O\left(\frac{1}{\sqrt n} e^{- n (\xi-\xi_\tau)^2 g(\xi+i\eta)} e^{- n (\xi'-\xi_\tau)^2 g(\xi'+i\eta')}\right)
    \end{multline*} 
    as $n\to\infty$. Taking the modulus squared, we arrive at \eqref{eq:cluster2_intro}.  
    \end{proof}
    
\subsection{Additional results on the large $n$ behavior}

In this subsection we will discuss several other results of interest. 

We start with the following inequality, valid for all $n=1,2,\ldots$  The inequality produces meaningful estimates when the average of $\xi$ and $\xi'$ is of order $\sqrt\frac{\log n}{n}$ away from $\xi_\tau$. We denote the average of $\xi$ and $\xi'$ by $\xi_+$. 

\begin{theorem} \label{thm:errorTermUniform}
For all $Z, Z'\in\mathbb C$, $\tau\in (0,1)$ and all $n=1,2,\ldots$, we have the inequality
\begin{align} \label{eq:errorTermUniform1}
\left|\mathbb K_n(Z,Z') - n K_{(1-\tau^2)^{-1}}^{\infty}\left(\sqrt n Z, \sqrt n Z'\right) \mathfrak{1}_{\xi_+<\xi_\tau}\right|
&\leq  \frac{K}{2\pi \sqrt{1-\tau^2}}
\frac{n}{\lvert 1-e^{2(\xi_+-\xi_\tau)}\rvert} 
e^{- n (\xi-\xi_\tau)^2 g(\xi+i\eta)} e^{- n (\xi'-\xi_\tau)^2 g(\xi'+i\eta')},
\end{align} 
where $K= \displaystyle\frac{1}{\pi} \int_0^\pi \frac{dt}{\sqrt{2 \sin t}}$.
Furthermore, for fixed $\tau\in(0,1)$, if $(Z,Z')$ is in a set such that $(\xi_+ - \xi_\tau)^2+(\eta-\eta')^2$ is bounded from below by a positive constant, then there exists a constant $C>0$, depending only on $\tau$, such that
\begin{align} \label{eq:errorTermUniform2}
\left|\mathbb K_n(Z,Z') - n K_{(1-\tau^2)^{-1}}^{\infty}\left(\sqrt n Z, \sqrt n Z'\right) \mathfrak{1}_{\xi_+<\xi_\tau}\right|
&\leq C n e^{- n (\xi-\xi_\tau)^2 g(\xi+i\eta)} e^{- n (\xi'-\xi_\tau)^2 g(\xi'+i\eta')}.\\ \nonumber
\end{align} 
\end{theorem}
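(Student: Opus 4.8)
The plan is to start from the integral representation \eqref{eq:EGEInwithWeights}, which writes $\mathbb K_n(Z,Z')$ as the Gaussian prefactor $\frac{n}{\pi\sqrt{1-\tau^2}}\sqrt{\omega(\sqrt n Z)\omega(\sqrt n Z')}$ times $I_n(1,\tau;z,z')$, and then apply the a priori bound \eqref{thm:errorTermUniform1d=1} of Proposition \ref{thm:errorTermUniform1} for the case $d=1$. That proposition already isolates the residue term $\frac{e^{nF(\tau)}}{\sqrt{1-\tau^2}}\mathfrak 1_{|a|^{-1}>\tau}$ and bounds the remaining contour integral by $K\,e^{n\operatorname{Re}F(a^{-1})}/|1-|a|\tau|$. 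So the first step is simply to multiply this inequality by the modulus of the Gaussian prefactor and identify the pieces: by Lemma \ref{prop:valueomegaWZeFtau}, $\frac{n}{\pi(1-\tau^2)}\sqrt{\omega(\sqrt n Z)\omega(\sqrt n Z')}\,e^{nF(\tau)}$ is exactly $n\,K^{\infty}_{(1-\tau^2)^{-1}}(\sqrt n Z,\sqrt n Z')$, which produces the subtracted term on the left-hand side of \eqref{eq:errorTermUniform1}; and by Lemma \ref{prop:weightsFa-1}, $|\sqrt{\omega(\sqrt n Z)\omega(\sqrt n Z')}\,e^{nF(a^{-1})}| = e^{-n(\xi-\xi_\tau)^2 g(\xi+i\eta)}e^{-n(\xi'-\xi_\tau)^2 g(\xi'+i\eta')}$, which produces the exponential factors on the right.

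The second step is to translate the indicator and the denominator into elliptic coordinates. Here one uses the identities $a = e^{\xi+\xi'}e^{i(\eta+\eta')}$ from \eqref{eq:defab} and $\xi_\tau = -\tfrac12\log\tau$: since the relevant elliptic coordinates for the kernel are $z = \sqrt2\cosh(\xi+i\eta)$ and $z' = \sqrt 2\cosh(\xi'-i\eta')$ (note the conjugation in \eqref{eq:defzzprimEGE}), one gets $|a| = e^{\xi+\xi'} = e^{2\xi_+}$, so $|a|^{-1} > \tau$ becomes $e^{-2\xi_+} > e^{-2\xi_\tau}$, i.e. $\xi_+ < \xi_\tau$; and $|1-|a|\tau| = |1 - e^{2\xi_+}e^{-2\xi_\tau}| = |1 - e^{2(\xi_+-\xi_\tau)}|$. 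Plugging these in, together with the constant $\frac{n}{\pi\sqrt{1-\tau^2}}\cdot\frac{1}{\pi}\!\int_0^\pi\!\frac{dt}{\sqrt{2\sin t}} = \frac{Kn}{\pi\sqrt{1-\tau^2}}$... wait, one must be careful: \eqref{eq:EGEInwithWeights} carries the prefactor $\frac{n}{\pi\sqrt{1-\tau^2}}$ while Lemma \ref{prop:valueomegaWZeFtau} uses $\frac{n}{\pi(1-\tau^2)}$, so there is a compensating $\sqrt{1-\tau^2}$ in the denominator of the error bound; collecting everything gives precisely the factor $\frac{K}{2\pi\sqrt{1-\tau^2}}$ claimed (the $\frac{1}{2\pi}$ coming from the $\frac{1}{2\pi}$ in the trivial estimate inside the proof of Proposition \ref{thm:errorTermUniform1}, since there $K$ was defined to already absorb one power of $\pi$). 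This yields \eqref{eq:errorTermUniform1}.

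For the second inequality \eqref{eq:errorTermUniform2}, the point is that when $(\xi_+-\xi_\tau)^2 + (\eta-\eta')^2$ is bounded below by a positive constant, the factor $\frac{1}{|1-e^{2(\xi_+-\xi_\tau)}|}$ need not be bounded — it blows up as $\xi_+\to\xi_\tau$ even while $\eta-\eta'$ stays away from multiples of $2\pi$. The remedy is to split into two regimes. When $|\xi_+-\xi_\tau|$ is bounded below (say $\geq\delta$), the denominator $|1-e^{2(\xi_+-\xi_\tau)}|$ is itself bounded below by a constant depending only on $\tau$ (using that $\xi,\xi'\geq 0$ are bounded above on the region of interest, or more simply that $|1-e^{2x}|$ is bounded below on $\{|x|\geq\delta\}$), and \eqref{eq:errorTermUniform1} directly gives \eqref{eq:errorTermUniform2}. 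When $|\xi_+-\xi_\tau| < \delta$, so that $\sin^2\tfrac{\eta-\eta'}{2}$ is bounded below, one cannot simply invoke Proposition \ref{thm:errorTermUniform1}; instead one goes back to the contour integral on $|s| = |a|^{-1}$ and observes, using Theorem \ref{lem:as=1} together with a more careful Laplace-type estimate near the saddle $a^{-1}$ (exploiting that $|a^{-1} - \tau|$, not merely $|1-|a|\tau|$, is what actually appears, and that $a^{-1}$ stays a positive distance from $\tau$ when $\eta-\eta'\not\equiv 0$), that the integral is bounded by a constant times $e^{n\operatorname{Re}F(a^{-1})}$ with no vanishing denominator. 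Multiplying by the Gaussian prefactor and invoking Lemma \ref{prop:weightsFa-1} again finishes the case. The main obstacle is precisely this last estimate in the regime $\xi_+\approx\xi_\tau$, $\eta\neq\eta'$: one has to see that although $1-|a|\tau$ is small, the relevant singular factor $1/(a^{-1}-\tau)$ stays bounded because $a^{-1} = \tau e^{-2(\xi_+-\xi_\tau)}e^{-i(\eta-\eta')}$ is bounded away from the positive real point $\tau$, and then control the contour integral uniformly — this requires either redoing the trivial estimate of Proposition \ref{thm:errorTermUniform1} keeping track of $|a^{-1}-\tau|$ in place of $|1-|a|\tau|$, or a compactness argument using continuity of $F$ and its saddle structure. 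Everything else is bookkeeping in elliptic coordinates.
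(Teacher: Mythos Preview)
Your derivation of \eqref{eq:errorTermUniform1} is exactly what the paper does: multiply \eqref{thm:errorTermUniform1d=1} by the weight prefactor in \eqref{eq:EGEInwithWeights} and identify the two pieces via Lemmas~\ref{prop:valueomegaWZeFtau} and~\ref{prop:weightsFa-1}, together with $|a|\tau=e^{2(\xi_+-\xi_\tau)}$. (Your aside about where the $\tfrac{1}{2\pi}$ comes from is muddled; just track the constant through and don't try to reconstruct it from the inside of Proposition~\ref{thm:errorTermUniform1}.)

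For \eqref{eq:errorTermUniform2} your case split is the same as the paper's, but one of your two suggested routes in the regime $\xi_+\approx\xi_\tau$ does not work. Redoing the \emph{trivial} estimate on the circle $|s|=|a|^{-1}$ while ``keeping track of $|a^{-1}-\tau|$'' fails: as $t$ runs over $(-\pi,\pi]$ the point $a^{-1}e^{it}$ sweeps the whole circle, and at $t=\arg a$ it hits $|a|^{-1}\in\mathbb R_{>0}$, which is within $||a|^{-1}-\tau|$ of $\tau$ regardless of where $a^{-1}$ sits. So $\min_t|a^{-1}e^{it}-\tau|=||a|^{-1}-\tau|$ and you recover only the original denominator. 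To get $|a^{-1}-\tau|$ instead you must first localise the integral near $s=a^{-1}$ and show the rest is negligible, which is precisely a saddle-point (Laplace) argument, not a trivial estimate.

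The paper handles this cleanly by noting that the troublesome region $\xi+\xi'\in[\xi_\tau,3\xi_\tau]$ with $(\xi_+-\xi_\tau)^2+(\eta-\eta')^2\geq\mu$ corresponds to a \emph{compact} set of $(z,z')$ satisfying the hypotheses of Theorem~\ref{thm:InGeneral}(i) (in particular $a^{-1}\neq\tau$ there), so one can simply invoke that theorem to bound the saddle contribution(s) by $C n^{-1/2}e^{n\operatorname{Re}F(a^{-1})}$ uniformly, then multiply by the weights via Lemma~\ref{prop:weightsFa-1}. Your ``Laplace-type estimate'' and ``compactness argument using the saddle structure'' are pointing at exactly this; the cleanest execution is to quote Theorem~\ref{thm:InGeneral}(i) rather than redo the local analysis.
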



\begin{proof}
To obtain the first inequality \eqref{eq:errorTermUniform1}, we simply multiply \eqref{thm:errorTermUniform1d=1} with the relevant factor involving the weights as in \eqref{eq:EGEInwithWeights}, and then insert \eqref{eq:valueomegaWZeFtau} and \eqref{eq:weightsFa-1}. 
Let us move to the second inequality \eqref{eq:errorTermUniform2}. Then we have $(\xi_+ - \xi_\tau)^2+(\eta-\eta')^2\geq \mu$ for some $\mu>0$. When, say, $\xi+\xi' \not\in [\xi_\tau, 3\xi_\tau]$, we have $|1-e^{2(\xi_+-\xi_\tau)}|>1-\sqrt \tau$, and \eqref{eq:errorTermUniform2} follows directly from \eqref{eq:errorTermUniform1} for some choice of $C$. In the remaining cases we have $0<\xi_\tau \leq\xi+\xi' \leq 3\xi_\tau<\infty$, and this situation is described by Theorem \ref{thm:InGeneral}(i) (our set is compact in particular). The only difference with Theorem \ref{thm:largenKnZW_intro}, is that there can be more saddle points that contribute. By Theorem \ref{lem:as=1}, these contributions are of the same size as the one in $s=a^{-1}$ though. We conclude that there exists a constant $C>0$ such that, uniformly for $\xi+\xi'\in [\xi_\tau, 3\xi_\tau]$ and $(\xi_+ - \xi_\tau)^2+(\eta-\eta')^2\geq \mu$, we have
\begin{align*}
\left|\mathbb K_n(Z,Z') - n K_{(1-\tau^2)^{-1}}^{\infty}\left(\sqrt n Z, \sqrt n Z'\right) \mathfrak{1}_{\xi_+<\xi_\tau}\right|
\leq C\sqrt n e^{- n (\xi-\xi_\tau)^2 g(\xi+i\eta)} e^{- n (\xi'-\xi_\tau)^2 g(\xi'+i\eta')}.
\end{align*}
Of course, we may replace $\sqrt n$ by $n$, and, by possibly taking $C$ larger, we get \eqref{eq:errorTermUniform2} uniformly for $(\xi_+ - \xi_\tau)^2+(\eta-\eta')^2\geq \mu$, with no restriction on $\xi+\xi'$. 
\end{proof}

It is likely that \eqref{eq:errorTermUniform2} holds under the weaker condition that $(\xi - \xi_\tau)^2+(\xi' - \xi_\tau)^2+(\eta-\eta')^2$ is bounded from below by a positive constant, but we were not able to prove this. It is an immediate consequence of Theorem \ref{thm:errorTermUniform} that the mean density satisfies:
\begin{align} \label{eq:limitingDensityG}
    \frac{1}{n} \mathbb K_n(Z,Z) = \frac{\mathfrak{1}_{Z\in \mathcal E_\tau}}{\pi (1-\tau^2)} + \mathcal O\left(e^{- 2 n (\xi-\xi_\tau)^2 g(\xi+i\eta)}\right)
    \end{align}
as $n\to\infty$, uniformly on compact subsets of $\mathbb C\setminus \partial \mathcal E_\tau$. This should be compared to a general result in \cite{AmHeMa1}, that the mean density of random normal matrix models converges to an explicitly prescribed density, with uniform error term $\mathcal O\left(\frac{1}{n}\right)$ on compact subsets of the droplet. It is an interesting question whether a version of Theorem \ref{thm:errorTermUniform} holds for normal matrix models with general potentials.

Our methods even allow us to compute the correction term to the limiting density in \eqref{eq:limitingDensityG}. Interestingly, these first correction terms has a slightly different form for $z \in \mathcal C\setminus [-2\sqrt \tau,2\sqrt \tau]$, $z\in (-2\sqrt \tau, 2 \sqrt \tau)$  and $z=\pm 2 \sqrt{\tau}$.  This appears to be a trace of the zeros of the Hermite polynomials, that accumulate precisely on the interval $[-2\sqrt \tau, 2\sqrt \tau]$.


\begin{theorem} \label{thm:onepointAsymp}
The one-point correlation function of the elliptic Ginibre ensemble behaves asymptotically as follows. 
\begin{itemize}
\item[(i)] When $Z\in \mathbb C\setminus [-2\sqrt \tau, \sqrt \tau]$ and $Z\not\in \partial \mathcal E_\tau$, we have uniformly on compact subsets that, as $n\to\infty$
\begin{align*}
\frac{1}{n} \mathbb K_n(Z, Z)
= \frac{\mathfrak{1}_{Z\in \mathcal E_\tau}}{\pi (1-\tau^2)}
+\frac{1}{\sqrt{32 \pi^3 n}} \frac{1}{\sqrt{\tau(1-\tau^2)}} 
\frac{e^{-2 n (\xi-\xi_\tau)^2 g(\xi+i\eta)}}{\sinh(\xi-\xi_\tau) \lvert \sinh(\xi+i\eta)\rvert} 
+ \mathcal O\left(\frac{1}{n\sqrt n} e^{-2 n (\xi-\xi_\tau)^2 g(\xi+i\eta)}\right).
\end{align*}
\item[(ii)] When $Z\in (-2\sqrt\tau,2\sqrt\tau)$, we have uniformly on compact subsets that, as $n\to\infty$
\begin{align} \label{eq:EGEiiOnePointC}
\frac{1}{n} \mathbb K_n(Z, Z)
= \frac{1}{\pi (1-\tau^2)}
+\frac{1}{\sqrt{2 \pi^3 n}} \frac{1}{\sqrt{1-\tau^2}} f_n(Z) 
\frac{e^{-2 n \xi_\tau^2 g(i\eta)}}{\sin\eta}
+ \mathcal O\left(\frac{1}{n\sqrt n} e^{-2 n \xi_\tau^2 g(i\eta)}\right),
\end{align}
where $f_n:(-2\sqrt \tau, 2\sqrt\tau)\to \mathbb R$ is defined by
\begin{align*}
f_n(Z) =- \frac{1}{1-\tau} + 
\begin{cases}
\displaystyle  \frac{(1-\tau) \cos \eta \sin\left(n(2\eta-\sin 2\eta) - \frac{\pi}{4}\right) + (1+\tau) \sin \eta \cos\left(n(2\eta-\sin 2\eta) - \frac{\pi}{4}\right)}{1+\tau^2-2\tau\cos 2\eta}, & Z> 0,\\
\displaystyle\frac{(-1)^n}{1+\tau}, & Z=0,\\
\displaystyle \frac{(1-\tau) \cos \eta \sin \left(n(2\eta-\sin 2\eta) + \frac{\pi}{4}\right) + (1+\tau) \sin \eta \cos\left(n(2\eta-\sin 2\eta) + \frac{\pi}{4}\right)}{1+\tau^2-2\tau\cos 2\eta}, & Z< 0.
\end{cases}
\end{align*}
\item[(iii)] When $Z = \pm 2\sqrt\tau$, we have as $n\to\infty$
\begin{align*}
\frac{1}{n} \mathbb K_n(Z, Z) 
= \frac{1}{\pi (1-\tau^2)}  
-\frac{\Gamma\left(\frac{1}{6}\right)}{\pi^2 2^\frac{7}{6} 3^\frac{1}{3}
n^\frac{1}{6}} \frac{1}{\sqrt{1-\tau^2}} \frac{e^{- 2n \xi_\tau^2 g(0)}}{1-\tau}
+ \mathcal O\left(n^{-\frac{2}{3}} e^{- 2n \xi_\tau^2 g(0)}\right).
\end{align*}
\end{itemize}
\end{theorem}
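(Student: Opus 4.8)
\emph{Set-up.} The plan is to read off all three items from the steepest-descent analysis of Section~\ref{sec:steepest} through the identity \eqref{eq:EGEInwithWeights}, specialised to $Z'=Z$. Writing $Z=2\sqrt\tau\cosh(\xi+i\eta)$ we have $z=z'=Z/\sqrt{2\tau}=\sqrt 2\cosh(\xi+i\eta)$, and because of the conjugation in \eqref{eq:defzzprimEGE} the Section~\ref{sec:steepest}-data of $z'$ is $(\xi,-\eta)$ when $\xi>0$ and $(\xi,\eta)$ when $\xi=0$; hence by Corollary~\ref{cor:saddleinZW} and Proposition~\ref{prop:saddlePointsDef} the relevant saddle is $a^{-1}=e^{-2\xi}$ when $\xi>0$, while for $\xi=0$ one falls into the case $z=z'\in(-\sqrt2,\sqrt2)$ (saddles $e^{\pm 2i\eta}$), unless $z=z'=\pm\sqrt2$, a single saddle of order two at $s=\pm1$. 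These are precisely the three regimes $Z\notin[-2\sqrt\tau,2\sqrt\tau]$, $Z\in(-2\sqrt\tau,2\sqrt\tau)$, $Z=\pm2\sqrt\tau$ of the theorem. In every regime the residue of the integrand of $I_n$ at $s=\tau$ contributes $e^{nF(\tau)}/\sqrt{1-\tau^2}$ — switched on by $\mathfrak{1}_{|a|^{-1}>\tau}=\mathfrak{1}_{\xi<\xi_\tau}=\mathfrak{1}_{Z\in\mathcal E_\tau}$ in item~(i), and always present in items~(ii),~(iii) since $(z,z')\in\Delta$ — and by Lemma~\ref{prop:valueomegaWZeFtau} this gives exactly the leading term $\mathfrak{1}_{Z\in\mathcal E_\tau}/(\pi(1-\tau^2))$ after multiplying by the weight and dividing by $n$. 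All further terms come from saddle/branch-point contributions; they are exponentially small here, since $\xi\neq\xi_\tau$ and $g$ is bounded below by a positive constant (Lemma~\ref{prop:weightsFa-1}).

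\emph{Item (i).} Here $z,z'\notin[-\sqrt2,\sqrt2]$, $(z,z')\notin\Delta$, and $\tau\notin S=\{a^{-1}\}$ because $a^{-1}=e^{-2\xi}=\tau$ would force $Z\in\partial\mathcal E_\tau$; so Theorem~\ref{thm:InGeneral}(i) applies with the single saddle $a^{-1}=e^{-2\xi}$, which is Theorem~\ref{thm:largenKnZW_intro} restricted to the diagonal $Z'=Z$ (there the cocycles $C_\tau^n,D_\tau^n$ equal $1$). I would substitute $F(a^{-1})=1+\log\tau+2\xi+e^{-2\xi}\cos(2\eta)$, which is real by Proposition~\ref{prop:Finsaddleelliptic}, and $F''(a^{-1})$ from Proposition~\ref{prop:F''a-1}, and insert Lemmas~\ref{prop:valueomegaWZeFtau} and~\ref{prop:weightsFa-1}. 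Since $a^{-1}$ is real positive and $\sinh(\xi+i\eta)\sinh(\xi-i\eta)=|\sinh(\xi+i\eta)|^2>0$, the square roots $\sqrt{1-a^{-2}}$ and $\sqrt{-F''(a^{-1})}$ are unambiguously positive, so there is no sign ambiguity and the stated formula, with error $\mathcal O(n^{-3/2}e^{-2n(\xi-\xi_\tau)^2 g(\xi+i\eta)})$, drops out.

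\emph{Item (ii).} Now $z=z'=Z/\sqrt{2\tau}\in(-\sqrt2,\sqrt2)$, so $(z,z')\in\Delta$ and Theorem~\ref{thm:InGeneral}(ii) applies. If $Z\neq0$ there are two nondegenerate saddles $a^{\pm1}=e^{\pm 2i\eta}$ on the unit circle, plus the branch-point term at $s=1$ from the contour $\gamma_3$ around $[1,\infty)$ (right panel of Figure~\ref{Fig2a}); note $F'(1)=\frac12 z^2-1<0$ and that $s=1$ is a branch point of $(1-s^2)^{1/2}$ (not a pole), while $F$ has a genuine pole at $s=-1$, so the cut $(-\infty,-1]$ contributes nothing. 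The $s=1$ term at $d=1$ is $-e^{nF(1)}(2n|F'(1)|)^{-1/2}/(\sqrt\pi\,(1-\tau))$ with $F(1)=z^2+\log\tau$, and it yields the $-1/(1-\tau)$ summand of $f_n$. The two saddle terms are complex conjugates (the integrand of $I_n$ is Schwarz symmetric because $z=z'\in\mathbb R$), so their sum is $2\operatorname{Re}$ of the $a^{-1}$-term; with $F(a^{-1})=1+\log\tau+\cos(2\eta)+i(2\eta-\sin2\eta)$ the oscillatory phase $e^{in(2\eta-\sin2\eta)}$ appears, and the $\pm\pi/4$ in $f_n(Z)$ is $\arg\sqrt{-nF''(a^{-1})}$ for the branch dictated by the direction of the contour, which jumps as $\eta$ passes $\pi/2$, i.e. as $Z$ changes sign — the source of the two sub-cases $Z\gtrless0$. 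When $Z=0$ there are no saddles at all (Proposition~\ref{prop:saddlePointsDef}); one deforms $\gamma_0$ to wrap both $[1,\infty)$ and $(-\infty,-1]$, and the two branch-point contributions at $s=1$ and $s=-1$, with $F(1)=\log\tau$ and $F(-1)=\log\tau-i\pi$ (so $e^{nF(-1)}=(-1)^n\tau^n$), give $-1/(1-\tau)$ and $(-1)^n/(1+\tau)$, matching $f_n(0)$.

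\emph{Item (iii) --- the main obstacle.} Here $z=z'=\pm\sqrt2$, exactly the case deliberately excluded from Theorem~\ref{thm:InGeneral}: the unique saddle sits at $s=\pm1$, has order two, and coincides with the branch point of $(1-s^2)^{1/2}$, so a dedicated argument is needed (take $Z=2\sqrt\tau$; the other sign is symmetric). First one computes $F'(s)=-(s-1)^2/(s(1+s)^2)$, hence $F(s)-F(1)=-(s-1)^3/12+\mathcal O((s-1)^4)$ and $(1-s^2)^{-1/2}\sim(-2(s-1))^{-1/2}$ near $s=1$. Rescaling $s-1=(12/n)^{1/3}w$ and deforming the contour so that near $s=1$ it follows two of the three steepest-descent rays of $-w^3$ emanating from $s=1$ (respecting the cut and the branch of $(1-s^2)^{1/2}$), the local integral becomes $(\mathrm{const})\,n^{-1/6}e^{nF(1)}\int e^{-w^3}w^{-1/2}\,dw$; since $\int_0^\infty e^{-w^3}w^{-1/2}\,dw=\frac13\Gamma(\frac16)$ via $t=w^3$, this accounts both for the exponent $n^{-1/6}$ (equal to $\frac{d-2}{6}$ at $d=1$) and for the appearance of $\Gamma(1/6)$. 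Collecting the constants — the factor $2^{-1/2}$ from the branch point, $(12)^{1/6}=2^{1/3}3^{1/6}$ from the rescaling, the trigonometric factor from the pair of rays, the $1/(2\pi i)$, and the weight via Lemmas~\ref{prop:valueomegaWZeFtau}--\ref{prop:weightsFa-1} with $F(1)=2+\log\tau$ (which matches $2\xi_\tau^2 g(0)$, so that $\omega(\sqrt n Z)e^{nF(1)}=e^{-2n\xi_\tau^2 g(0)}$) — produces the prefactor $\Gamma(1/6)/(\pi^2 2^{7/6}3^{1/3})$, while the residue at $s=\tau$ again supplies $1/(\pi(1-\tau^2))$. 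The places where I would spend the most care are this degenerate cubic-saddle evaluation (the choice of the pair of rays, the branch of the square root at the coinciding branch point, and hence the exact constant) and, in item~(ii), the sign of the $\pm\pi/4$ and its dependence on $\mathrm{sign}(Z)$.
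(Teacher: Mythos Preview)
Your proposal is correct and follows essentially the same route as the paper: item~(i) is read off from Theorem~\ref{thm:largenKnZW_intro} on the diagonal, item~(ii) from Theorem~\ref{thm:InGeneral}(ii) with the two conjugate saddle contributions at $e^{\pm 2i\eta}$ plus the branch-point term at $s=1$ (and both branch points when $Z=0$), and item~(iii) via a direct cubic-saddle analysis at $s=1$ along the two steepest-descent rays at angles $\pm 2\pi/3$. The paper computes the direction of the contour at $a^{-1}$ explicitly as $-2\eta+\pi/2$ to pin down the $-\pi/4$ for $Z>0$, and in item~(iii) writes the two ray contributions as $\mp\frac{1}{2\pi i}\frac{\Gamma(1/6)}{3}\frac{e^{nF(1)}}{(1-\tau)\sqrt 2}\bigl(\tfrac{3!}{|F'''(1)|n}\bigr)^{1/6}e^{\pm 2\pi i/3}$ before combining; these are precisely the careful points you flagged, and your outline leads to the same constants.
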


\begin{proof}
Case (i) can directly be extracted from Theorem \ref{thm:largenKnZW_intro}. When $Z=Z'$, we always have to take the $+$ sign in \eqref{eq:pmSqrtSaddle}. In this case, we can write
\begin{align*}
\sinh\left(\xi_+-\xi_\tau + i \frac{\eta-\eta'}{2}\right)\sqrt{\sinh(\xi+i\eta) \sinh(\xi'-i\eta')}
= \sinh(\xi-\xi_\tau) \lvert \sinh(\xi+i\eta)\rvert.
\end{align*}
Case (ii) falls under Theorem \ref{thm:InGeneral}(ii). 
Let us first consider the case that $Z\neq 0$. By symmetry, we may assume without loss of generality that $Z>0$, i.e., $\eta\in (0,\frac{\pi}{2})$. Then $a = e^{2i\eta}$ is in the upper half-plane, while $a^{-1}=e^{-2i\eta}$ is in the lower half-plane. 
The path through $a$ and $a^{-1}$ begins in $e^{i(2\eta-\sin 2\eta)} \infty$ and ends at $e^{-i(2\eta-\sin 2\eta)}\infty$ (see Figure \ref{Fig2a}). 
Note that, by Proposition \ref{prop:F''a-1}, 
\begin{align*}
F''(a^{-1}) = 2 i e^{4i\eta} \frac{\sin^2 \eta}{\sin 2\eta}.
\end{align*}
Thus the direction of our path in $s=a^{-1}$ is $-2\eta+\frac{\pi}{2}$ (the argument of the integration contour increases in this direction). The corresponding saddle point contribution for $I_n$ is thus given by
\begin{align*}
-\frac{1}{2\pi i} \sqrt{\frac{2\pi}{n}} \frac{e^{n F(a^{-1})}}{\sqrt{1-a^{-2}}} \frac{1}{a^{-1}-\tau} \frac{e^{i(-2\eta+\frac{\pi}{2})}}{|F''(a^{-1})|^{1/2}}
&= \frac{1}{2\pi i} \sqrt{\frac{2\pi}{n}} e^{n F(a^{-1})} \frac{1-i}{2\sqrt 2} \frac{e^{-i\eta}}{e^{-2i\eta}-\tau} \frac{1}{\sin \eta},
\end{align*}
where we used Proposition \ref{prop:F''a-1} for the value of $F''(a^{-1})$. 
The saddle point contribution of $a$ is just the complex conjugate of this, hence the sum of these contributions equals
\begin{multline*}
\frac{1}{4\pi i} \sqrt{\frac{2\pi}{n}} \frac{\tau^n e^{n (1+\cos 2\eta)}}{\sin \eta} 
\left(\frac{e^{i n (2\eta - \sin 2\eta) - i \frac{\pi}{4}}}{e^{-i\eta} - \tau e^{i\eta}}-
\frac{e^{i n (-2\eta + \sin 2\eta) + i \frac{\pi}{4}}}{e^{i\eta} - \tau e^{-i\eta}}\right)\\
= \frac{1}{\sqrt{2\pi n}} \frac{\tau^n e^{n (1+\cos 2\eta)}}{\sin \eta} 
\frac{(1-\tau) \cos \eta \sin\left(n(2\eta-\sin 2\eta) - \frac{\pi}{4}\right) + (1+\tau) \sin \eta \cos\left(n(2\eta-\sin 2\eta) - \frac{\pi}{4}\right)}{1+\tau^2-2\tau\cos 2\eta}
\end{multline*}
The contribution of $s=1$ gives 
\begin{align*}
- \frac{e^{n F(1)}}{\Gamma\left(\frac{1}{2}\right)} \frac{|2 nF'(1)|^{\frac{1}{2}-1}}{1-\tau}
= - \frac{e^{n(1+\cos 2\eta)} \tau^n}{\sqrt{2 \pi n}} \frac{1}{\sin \eta} \frac{1}{1-\tau}.
\end{align*}
Combining these contributions gives \eqref{eq:EGEiiOnePointC}, with $f$ as defined in the theorem.  
In the case $Z=0$, there are no saddle point contributions and, instead, the contributions come from $-1$ and $1$ (the integration contour is deformed to a band around $[1,\infty)$ and a band around $(-\infty,-1])$, see the proof of Theorem \ref{thm:InGeneral}(ii)). The dominant contribution of $I_n$ is then given by
\begin{align*}
 -\frac{e^{n F(-1)}}{\Gamma\left(\frac{1}{2}\right)} \frac{|2 nF'(-1)|^{\frac{1}{2}-1}}{-1-\tau}
- \frac{e^{n F(1)}}{\Gamma\left(\frac{1}{2}\right)} \frac{|2 nF'(1)|^{\frac{1}{2}-1}}{1-\tau}
= \frac{\tau^n}{\sqrt{2\pi n}} \left(\frac{(-1)^n}{1+\tau}-\frac{1}{1-\tau}\right)
= \frac{e^{n \operatorname{Re} F(a^{-1})}}{\sqrt{2\pi n}} f(0).
\end{align*} 

Finally, we treat the case (iii). By symmetry, we may limit ourselves to $Z=2\sqrt \tau$. We have only one saddle point, $s= 1$, which is degenerate of order $2$. Clearly, one steepest ascent direction in $s=1$ has angle $\pi$. Then the three steepest descent directions have angles $\frac{2\pi}{3}, -\frac{2\pi}{3}$ and $0$. The integration contour $\gamma_0$ can only be deformed to a contour that uses the first two directions.  The saddle point contribution of $I_n$ is then given by 
\begin{multline*}
-\frac{1}{2\pi i} \frac{\Gamma\left(\frac{1/2}{3}\right)}{3} \frac{e^{n F(1)}}{(1-\tau) \sqrt 2} \left(\frac{3!}{|F'''(1)|n}\right)^{\frac{1/2}{3}} e^{\frac{2\pi i}{3}} 
+\frac{1}{2\pi i} \frac{\Gamma\left(\frac{1/2}{3}\right)}{3} \frac{e^{n F(1)}}{(1-\tau) \sqrt 2} \left(\frac{3!}{|F'''(1)|n}\right)^{\frac{1/2}{3}} e^{-\frac{2\pi i}{3}} 
+ \mathcal O\left(n^{-\frac{2}{3}} e^{n F(1)}\right)\\
= -\frac{\sqrt{3}\Gamma\left(\frac{1}{6}\right)}{4\sqrt 2 \pi} \frac{(12)^\frac{1}{6} e^{n F(1)}}{n^\frac{1}{6}(1-\tau^2)}
+ \mathcal O\left(n^{-\frac{2}{3}} e^{n F(1)}\right). 
\end{multline*}
The residue will also contribute, and, after multiplying by the factor involving the weights, we obtain the result. 
\end{proof}

%






\section{Fermionic process in $d$ dimensions} \label{sec:fermions}

In this section we consider a model of non-interacting fermions in $d$ dimensions. 
After setting 
\begin{align*}
z &= \frac{\lvert X+X' \rvert + \lvert X-X' \rvert}{2},\\
z' &= \frac{\lvert X+X' \rvert - \lvert X-X' \rvert}{2},
\end{align*}
or equivalently $z\pm z'=|X\pm X'|$, then we see that we can write $\mathbb K_n^{\rm Fermi}$ in \eqref{eq:kernelferm} as 
\begin{equation}\label{eq:fermionsKnI}
\mathbb K_n^{\rm Fermi}(X,X') 
= \left(\frac{n}{\pi}\right)^\frac{d}{2} e^{-\frac{1}{2} n (z^2+z'^2)} I_n(d;1;z,z'). 
\end{equation}
where $I_n$ is as in \eqref{eq:defIn}.
 As before, we will write $z$ and $z'$ in elliptic coordinates as defined in \eqref{eq:defEllipticCoord}.  Note that since we are in the real line, using these coordinates we find  
 $$
   \lvert X\rvert = 
        \sqrt 2 \cosh \xi, \qquad \xi>0, $$
if $\lvert X\rvert>\sqrt 2$, and 
$$
\lvert X\rvert =    \sqrt 2 \cos \eta, \quad \eta \in[0,\frac{\pi}{2}]
 $$
 for  $\lvert X\rvert<\sqrt 2$.

\begin{proof}[Proof of Theorem \ref{thm:fermionsBulkDensity_intro}.]
Since $\tau=1$, we have to apply Theorem \ref{thm:InGeneralTau1}. When $X=X'$, we have $z=z'=\lvert X\rvert$. Let us first look at the case $0<\lvert X\rvert < \sqrt 2$. Then we are in case (ii) of Theorem \ref{thm:InGeneralTau1}. The saddle points are $a=e^{2i\eta}$ and $a^{-1}=e^{-2i\eta}$ with $\eta\in (0,\frac{\pi}{2})$. The contribution of $s=-1$ in \eqref{eq:generalSteepestDresult2Tau1} should be excluded, since it only occurs when $z=-z'$. Hence we have
\begin{align*}
I_n(d;1;z,z) &= -2\operatorname{Re}\left( \frac{1}{2\pi i} \sqrt{\frac{2\pi}{- n F''(a)}} \frac{e^{n F(a)}}{a-1} \frac{1}{(1-a^2)^\frac{d}{2}} \right) +  \frac{e^{n F(1)}}{2^d \Gamma\left(\frac{d}{2}+1\right)} \lvert 2 n F'(1)\rvert^\frac{d}{2} 
+ \mathcal O\left(n^{\frac{d}{2}-2} e^{n \operatorname{Re} F(a^{-1})}\right).
\end{align*}
Substituting $|2 F'(1)| = |z^2-2| = 2-|X|^2$, and plugging in elliptic coordinates where applicable, we find
\begin{align*}
I_n(d;1;z,z) &=
-2\frac{1}{\sqrt{2\pi n}} \operatorname{Re}\left(\frac{e^{n z^2 - i n(2\eta-\sin 2\eta) - i d \eta}}{\sin^2 \eta \sin^{\frac{d}{2}-1}2\eta} \right)
+ \frac{e^{n z^2}}{2^d \Gamma\left(\frac{d}{2}+1\right)} n^\frac{d}{2} (2-|X|^2)^\frac{d}{2} 
+ \mathcal O\left(n^{\frac{d}{2}-1} e^{n z^2}\right).
\end{align*}
Multiplying by $e^{-n z^2} \frac{1}{n^d/d!} (n/\pi)^{\frac{d}{2}}$, we arrive at \eqref{eq:fermionsBulkDensity1_intro}. The case $X=0$ follows immediately from Theorem \ref{thm:InGeneralTau1}(ii). 
When $\lvert X\rvert >\sqrt 2$, we use Theorem \ref{thm:InGeneralTau1}(i). Here we have $\xi>0$ and $\eta=0$. Only the saddle point $a^{-1} = e^{-2\xi}$ contributes, hence
\begin{align*}
I_n(d;1;z,z) = - \frac{1}{\sqrt{\pi n}} \frac{e^{d\xi}}{2^{\frac{d}{2}+2}\sinh^2 \xi \sinh^{\frac{d}{2}-1} 2\xi} e^{n(1+2\xi+e^{-2\xi})}
\left(1+\mathcal O\left(\frac{1}{n}\right)\right),
\end{align*}
as $n\to\infty$. Multiplying by $e^{-n z^2} \frac{1}{n^d/d!} (n/\pi)^{\frac{d}{2}}$, and realizing that $2\xi - e^{2\xi} < 0$, we obtain the result.\\
\end{proof}



\begin{proof}[Proof of Theorem \ref{thm:fermionsScalingBulk_intro}]
We deform $\gamma_0$ to the contour in Figure \ref{Fig2a} (where $\eta = \arccos(\lvert X\rvert/\sqrt 2)$). Similar to the proof of Theorem \ref{thm:fermionsBulkDensity_intro} above, we can show that the saddle point contributions in $e^{\pm i \eta}$ are negligible. For the band $\gamma_3$ around $[1,\infty)$, we argue as follows. Using local coordinates $s=1-t/n$, we have
\begin{align} \label{eq:fermionsScalingBulk1}
n F(s) = n \lvert X\rvert^2 - \frac{\lvert X\rvert^2 t}{2} + t 
+2\frac{\langle X,U+V\rangle}{(2-|X|^2)^{\frac12}}
- 2\frac{\lvert U-V\rvert^2}{(2-\lvert X\rvert^2)\, t} + \mathcal O\left(\frac{1}{n}\right)
\end{align}
as $n\to\infty$, where the $\mathcal O$ term is uniform for $t$ in compact sets. Note that, in these coordinates, the integration contour, let us call it $\tilde\gamma$, 
bends around $(-\infty, 0]$, with positive orientation. 
Consequently, we have as $n\to\infty$ that
\begin{align*}
-\frac{1}{2\pi i}\int_{\gamma_3} \frac{e^{n F(s)}}{1-s} \frac{1}{(1-s^2)^\frac{d}{2}} ds
&= \frac{1}{2\pi i} \frac{n^{\frac{d}{2}}}{2^\frac{d}{2}} 
\exp\left(n\lvert X\rvert^2+2\frac{\langle X,U+V\rangle}{(2-|X|^2)^{\frac12}}\right)
\int_{\tilde\gamma} e^{\frac{2-\lvert X\rvert^2}{2} t - 2\frac{\lvert U-V\rvert^2}{(2-\lvert X\rvert^2)\, t}} \frac{dt}{t^{\frac{d}{2}+1}}\left(1+ \mathcal O\left(\frac{1
}{n}\right)\right)\\
&= \frac{1}{2\pi i} \frac{n^{\frac{d}{2}}}{2^d} \left(2-\lvert X\rvert^2\right)^\frac{d}{2} 
\exp\left(n\lvert X\rvert^2+2\frac{\langle X,U+V\rangle}{(2-|X|^2)^{\frac12}}\right) \int_{\tilde\gamma} e^{t - \frac{\lvert U-V\rvert^2}{t}} \frac{dt}{t^{\frac{d}{2}+1}}
\left(1+\mathcal O\left(\frac{1
}{n}\right)\right).
\end{align*}
Here we have tacitly extended \eqref{eq:fermionsScalingBulk1} to $t$ on the entire contour $\tilde\gamma$, which is allowed because the tail of the integrand tends to $0$ fast enough. By \cite[10.9.19]{DLMF}, we have
\begin{align*}
\frac{1}{2\pi i}\int_{\tilde\gamma} e^{t - \frac{\lvert U-V\rvert^2}{4 t}} \frac{dt}{t^{\frac{d}{2}+1}}
= 
\frac{J_\frac{d}{2}\left(2\lvert U-V\rvert\right)}{\lvert U-V\rvert^\frac{d}{2}}.
\end{align*}
We conclude that
\begin{align*}
n^{-\frac{d}{2}} e^{-n \lvert X\rvert^2} I_n\left(d;1;X+\frac{U}{\nu(X)^\frac{1}{d} n}, X + \frac{V}{\nu(X)^\frac{1}{d} n}\right)
= 2^{1-d}(2-\lvert X\rvert^2)^\frac{d}{2} \frac{J_\frac{d}{2}\left(\lvert U-V\rvert\right)}{\lvert U-V\rvert^\frac{d}{2}}
+\mathcal O\left(\frac{1}{n}\right)
\end{align*}
as $n\to\infty$. Inserting the relevant remaining factors, we arrive at \eqref{eq:fermionsScalingBulk_intro}. Everywhere in our derivation, the $\mathcal O$ terms are uniform for $(U,V)$ in compact sets.\\ 
\end{proof}



\begin{figure}
    \centering
    \begin{overpic}[width=0.5\textwidth]{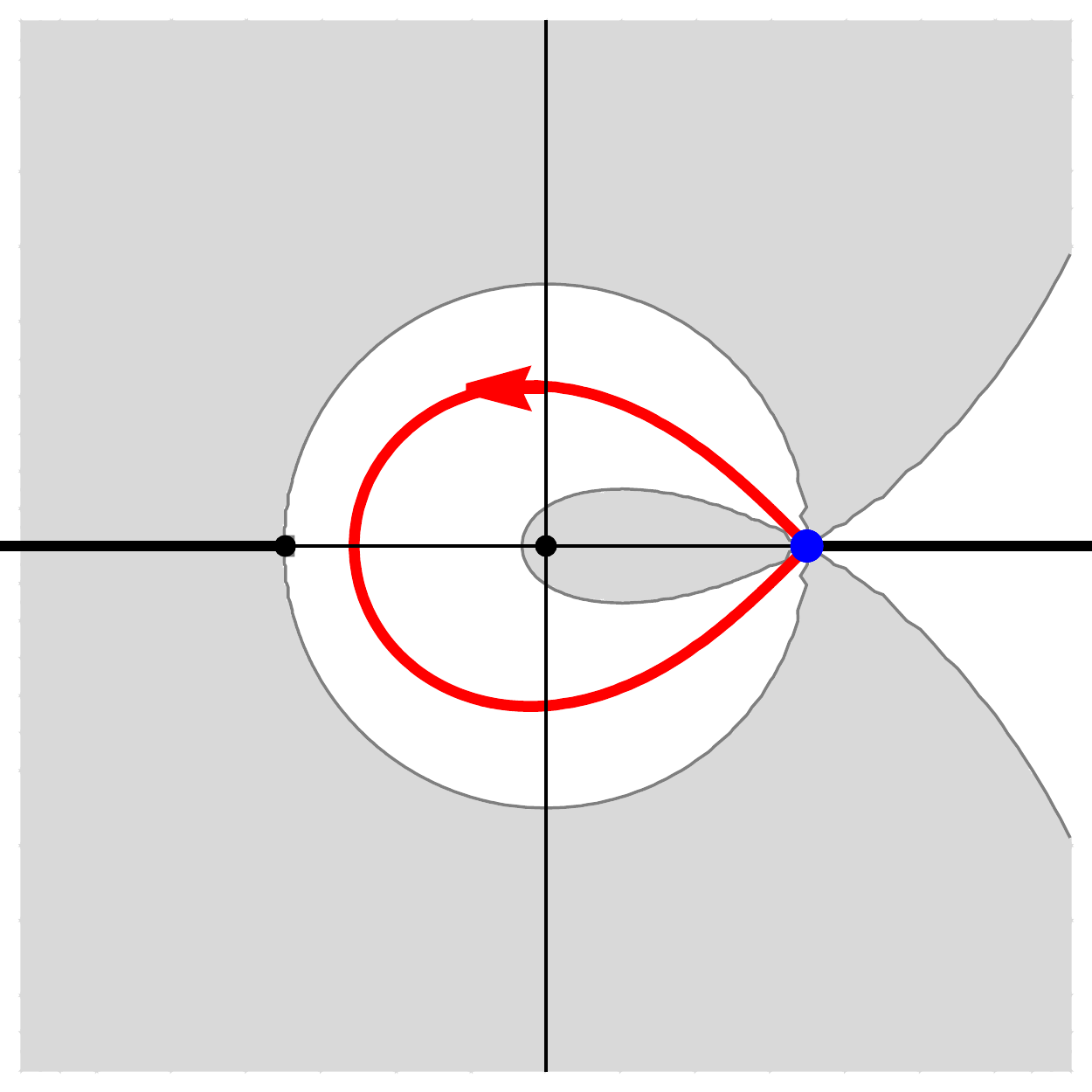}
       
        \put (25,45) {$-1$}
        \put (75,45) {$1$}
       
        \put (46,46) {$0$}
      
    \end{overpic}
    \caption{In the proof of Theorem \ref{thm:fermionsEdgeDensity_intro}  we deform the contour $\gamma_0$ to the contour in the picture. The shaded areas represent the regions where $\operatorname{Re} F(s)\geq \operatorname{Re} F(1)$. The two shaded areas are bounded. \label{Fig2aaa}} 
    \end{figure}

\begin{proof}[Proof of Theorem \ref{thm:fermionsEdgeDensity_intro}.]
If $U,V=0$ then $z,z'\in \sqrt 2$ and, as indicated in Proposition \ref{prop:saddlePointsDef}, we have that all four  saddle points for $\Re F(z)$ coalesce with the pole at $s=1$, forming a critical point of order 2 in $s=1$.  In Figure \ref{Fig2aaa} we have plotted the regions where $\Re F(s)>\Re F(1)$ in gray.  We first deform the contour $\gamma_0$ so that it lies entirely in the white region but passing through $s=1$. However, the integrand has a singularity in the term $(1-s^2)^{-d/2}(s-1)^{-1}$. While deforming $\gamma_0$ we make sure that the contour intersects the real axis slightly at the left of $1$. We take this choice of contour for all choices of $U$ and $V$.

The main contribution in the asymptotic expansions now comes from a small neighborhood near $s=1$. We introduce the local variable $s = 1- n^{-\frac{1}{3}} t $. 
Some bookkeeping will yield 
\begin{align*}
n F(s) =& 2 n + \frac{1}{\sqrt 2} \langle X,U+V\rangle n^{\frac{1}{3}} - \frac{1}{2} \frac{1}{\sqrt 2} \langle X,U+V\rangle t + \frac{t^3}{12} - \frac{\lvert U-V\rvert^2}{4 t} +  \mathcal O(n^{-1/3}).
\end{align*}
Notice that we have
\begin{align*}
e^{-\frac{1}{2}n\left(\lvert X+U n^{-\frac{2}{3}}/\sqrt 2 \rvert^2+\lvert X+V n^{-\frac{2}{3}}/\sqrt 2 \rvert^2\right)} 
= e^{-2n - \langle X,U+V\rangle n^{\frac{1}{3}}/\sqrt 2} (1+\mathcal O(n^{-1/3})).
\end{align*}
Consequently, we have as $n\to\infty$ that
\begin{align}
\frac{1}{(n^\frac{2}{3} \sqrt 2)^d}{\mathbb K}^{\rm Fermi}_n\left(X+\frac{U}{n^\frac{2}{3}\sqrt 2}, X + \frac{V}{n^\frac{2}{3}\sqrt 2}\right) 
&=  \frac{1}{(2 \pi i) \pi^\frac{d}{2}}\frac{n^\frac{d}{2}}{n^\frac{2 d}{3} 2^\frac{d}{2}} \frac{n^\frac{d}{6}}{2^\frac{d}{2}} \int_{\tilde\gamma} e^{-\frac{1}{2\sqrt 2} \langle X,U+V\rangle t + \frac{t^3}{12} - \frac{\lvert U-V\rvert^2}{4 t}} \frac{dt}{t^{\frac{d}{2}+1}}+ \mathcal O\left(n^{-\frac{1}{3}}\right)
\nonumber\\
&= \frac{1}{(2\pi i) (2^\frac{4}{3} \sqrt \pi)^d} \int_{\tilde\gamma} e^{-2^{-1/3} \langle X,U+V\rangle t/\lvert X\rvert  + \frac{t^3}{3} - \frac{\lvert U-V\rvert^2}{2^{8/3} t}} \frac{dt}{t^{\frac{d}{2}+1}} + \mathcal O\left(n^{-\frac{1}{3}}\right),
\label{eq:KAiRd}
\end{align}
where $\tilde\gamma$ is a curve in the right half place that starts at $e^{-\pi i/3} \infty$ and ends at $e^{\pi i/3} \infty$. 
This is the expression (19) in \cite{DeDoMaSc}, except that they use a Bromwich contour $\Gamma$, to arrive at an integral representation of the Airy function
\begin{equation}\label{eq:Airy-rep}
\operatorname{Ai}(z)=\frac{1}{2\pi i}\int_{\Gamma}\exp[-z t +\frac13 t^3]dt .
\end{equation}
Indeed, we may simply deform our integration contour $\tilde\gamma$ to a vertical contour in the right half-plane $\Gamma$. 
For completeness and later comparison to the weak non-Hermiticity limit we repeat the steps presented in 
\cite{DeDoMaSc} to arrive at \eqref{eq:fermionsAiryd_intro}. As first step, we need to get rid of the inverse power in $t$ in the exponential in \eqref{eq:KAiRd}. This is achieved by the following $d$ dimensional integral representation over a vector $Q\in\mathbb{R}^d$
\begin{equation}\label{eq:propagator}
(4\pi D\tau)^{-\frac{d}{2}}e^{-\frac{|U-V|^2}{4D\tau}}
=\frac{1}{(2\pi)^d}\int_{\mathbb{R}^d} e^{-D|Q|^2\tau-i\langle Q, U-V\rangle} d^dQ.
\end{equation}
There is a mismatch by one power $1/t$ in \eqref{eq:KAiRd} compared to the normalisation of this integral. It is moved to the exponent by expressing it though an integral, valid for $\Re(t)>0$
\begin{equation}\label{eq:1/t}
\frac{1}{t} =\int_0^\infty e^{-st} ds.
\end{equation}
Following these steps we arrive at 
\begin{multline*}
\frac{1}{(n^\frac{2}{3} \sqrt 2)^d}{\mathbb K}^{\rm Fermi}_n\left(X+\frac{U}{n^\frac{2}{3}\sqrt 2}, X + \frac{V}{n^\frac{2}{3}\sqrt 2}\right) 
\\
=\frac{1}{2\pi i} \int_{\Gamma} \int_0^\infty \int_{\mathbb{R}^d} 
\exp\left(-t(s+2^{2/3}|Q|^2+\frac{\langle X,U+V\rangle}{2^{1/3}|x|})+\frac13 t^3-i\langle Q,U-V\rangle\right) \frac{d^dQ}{(2\pi)^d}ds dt,\nonumber
+ \mathcal O\left(n^{-\frac{1}{3}}\right).
\end{multline*}
which upon using \eqref{eq:Airy-rep} and an appropriate interchange of integrals leads to the desired result in Theorem \ref{thm:fermionsEdgeDensity_intro}.
\end{proof}

%
%
%
%
%
%
%
%

\section{The $d$-dimensional elliptic Ginibre ensemble} \label{sec:general}

In this section we analyze the the generalization to $\mathbb C^d$ of the elliptic Ginibre ensemble for $d >1$.  

In what follows, we will use the short-hand notation $\zeta^2 = \zeta_1^2+\ldots+\zeta_d^2$ for any $\zeta\in\mathbb C^d$, not to be confused with the squared norm $\lvert \zeta\rvert^2 = |\zeta_1|^2+\ldots+|\zeta_d|^2$. With this notation we start with the following lemma. 

\begin{lemma}\label{lem:general_kernel}
    The kernel $\mathbb K_n$ in  \eqref{eq:general_kernel_d=1} can be written as
\begin{align} \label{eq:defEllipdIn2}
\mathbb K_n(Z, Z')
= \frac{n^d}{\pi^d (1-\tau^2)^\frac{d}{2}} 
\sqrt{ \omega\left(\sqrt{2\tau n} z\right)  \omega\left(\sqrt{2\tau n} z'\right)} I_n\left(d;\tau; z, z'\right). 
\end{align}
where $I_n$ is as in \eqref{eq:defIn} and $z,z'$ are chosen such that 
\begin{align} \label{eq:ztoZ1}
(z+z')^2&=\frac{1}{2\tau} (Z+\overline{Z'})^2,\\
(z-z')^2&=\frac{1}{2\tau} (Z-\overline{Z'})^2.\label{eq:ztoZ2}
\end{align}
\end{lemma}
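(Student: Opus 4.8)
The plan is to start from the definition of $\mathcal K_n$ in \eqref{eq:general_kernel_pre}, pass to the rescaled kernel $\mathbb K_n$ via \eqref{eq:gener_kernel_d}, and then recognize the truncated Hermite sum as the contour integral $K_n$ from \eqref{eq:integral_representation_kernel}. Concretely, I would first recall from \eqref{eq:kernelboldKtoK} that
\begin{equation*}
 \mathbb K_n(Z,Z')= \frac{n^d}{\pi^d (1-\tau^2)^{d/2}}
 \sqrt{\prod_{\ell=1}^d \omega(\sqrt n Z_\ell)\, \omega(\sqrt n Z'_\ell)}\; K_n\!\left(\sqrt{\tfrac{n}{2\tau}} Z,\ \sqrt{\tfrac{n}{2\tau}}\, \overline{Z'}\right),
\end{equation*}
where I have inserted the complex conjugate on the second argument because the Hermite sum in \eqref{eq:general_kernel_pre} carries $\overline{H_{j_k}(Z'_k/\sqrt{2\tau})}$, which equals $H_{j_k}(\overline{Z'_k}/\sqrt{2\tau})$ since the Hermite polynomials have real coefficients; this is the same identification already used around \eqref{eq:defzzprimEGE} and noted after \eqref{eq:defIn}.

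Next I would connect $K_n$ to $I_n$. From the second line of \eqref{eq:integral_representation_kernel}, $K_n(z,w)$ has integrand $\exp\big(\tfrac{s\sum (z_j+w_j)^2}{2(1+s)}-\tfrac{s\sum(z_j-w_j)^2}{2(1-s)}\big)\tfrac{(\tau/s)^n}{s-\tau}\tfrac{ds}{(1-s^2)^{d/2}}$, while $I_n(d,\tau;z,z')$ in \eqref{eq:defIn}–\eqref{eq:defF} has integrand $\tfrac{e^{nF(s)}}{s-\tau}\tfrac{ds}{(1-s^2)^{d/2}}$ with $F(s)=\tfrac{s(z+z')^2}{2(1+s)}-\tfrac{s(z-z')^2}{2(1-s)}-\log s+\log\tau$, so that $e^{nF(s)}=(\tau/s)^n\exp\big(n\big[\tfrac{s(z+z')^2}{2(1+s)}-\tfrac{s(z-z')^2}{2(1-s)}\big]\big)$ using the principal branch. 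Here the bracketed quantities are $\sum_{j=1}^d$ of the per-coordinate expressions, since the paper's convention (stated in Section \ref{sec:steepest}) is $z^2=\sum z_j^2$. Thus $K_n$ evaluated at arguments whose $(z+z')^2$ and $(z-z')^2$ scale by $n$ is exactly $I_n$; matching the scalings, $K_n\big(\sqrt{n/(2\tau)}\,Z,\ \sqrt{n/(2\tau)}\,\overline{Z'}\big)=I_n(d;\tau;z,z')$ precisely when $(z+z')^2=\tfrac{1}{2\tau}(Z+\overline{Z'})^2$ and $(z-z')^2=\tfrac{1}{2\tau}(Z-\overline{Z'})^2$, which are \eqref{eq:ztoZ1}–\eqref{eq:ztoZ2}.

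Finally I would reconcile the weight factors: $\prod_{\ell=1}^d \omega(\sqrt n Z_\ell)\,\omega(\sqrt n Z'_\ell)$ must be rewritten as $\omega(\sqrt{2\tau n}\, z)\,\omega(\sqrt{2\tau n}\, z')$, where on the right $\omega$ is understood multiplicatively over coordinates and the conjugation convention is absorbed (note $\omega$ as defined in \eqref{weight2} satisfies $\omega(\overline W)=\overline{\omega(W)}$ and is real-valued, so $\omega(\sqrt n \overline{Z'_\ell})=\omega(\sqrt n Z'_\ell)$). Using \eqref{eq:ztoZ1}–\eqref{eq:ztoZ2} one checks $2\tau\,(z_\ell^2+z_\ell'^2) = \tfrac12[(Z_\ell+\overline{Z'_\ell})^2+(Z_\ell-\overline{Z'_\ell})^2]=Z_\ell^2+\overline{Z'_\ell}^2$ and similarly $2\tau\,|z|^2$-type terms reproduce the $|Z_\ell|^2+|Z'_\ell|^2$ combination appearing in the exponent of $\omega(\sqrt n Z_\ell)\omega(\sqrt n Z'_\ell)$ after expanding the definition of $\omega$; carrying this through gives the identity. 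Substituting everything into the displayed formula for $\mathbb K_n$ yields \eqref{eq:defEllipdIn2}.

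The main obstacle I anticipate is purely bookkeeping: keeping the square roots and branch choices consistent, in particular that the $d$-dimensional change of variables \eqref{eq:ztoZ1}–\eqref{eq:ztoZ2} only determines $z\pm z'$ up to sign and up to the coordinate-by-coordinate decomposition, whereas the right-hand side of \eqref{eq:defEllipdIn2} depends only on the combinations $(z\pm z')^2$ (equivalently on $z^2+z'^2$ and $z_jz_j'$-type sums) through $F$ and through the weight — so one must verify that all quantities entering the final formula are indeed functions of $Z,\overline{Z'}$ alone and independent of the sign/decomposition ambiguity, which is exactly why the lemma phrases it as "$z,z'$ are chosen such that." Once that invariance is checked, the computation is routine algebra with the explicit exponential forms of $\omega$ and $F(\tau)$ recorded in \eqref{weight2} and \eqref{eq:defF}.
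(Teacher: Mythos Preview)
Your strategy is the same as the paper's: invoke \eqref{eq:kernelboldKtoK} (with the conjugate on $Z'$, as you correctly note), match $K_n$ to $I_n$ via the integrand of \eqref{eq:integral_representation_kernel} and \eqref{eq:defF}, and then check that the $d$-fold weight product equals $\omega(\sqrt{2\tau n}\,z)\,\omega(\sqrt{2\tau n}\,z')$; the paper carries out this last step by collapsing $\prod_\ell\omega(\sqrt n Z_\ell)\omega(\sqrt n Z'_\ell)$ into a single exponential and applying the parallelogram law to pass to the combinations $Z\pm\overline{Z'}$, after which \eqref{eq:ztoZ1}--\eqref{eq:ztoZ2} finish it. One caution on your weight computation: in this lemma $z,z'\in\mathbb C$ are \emph{scalars} determined only through the aggregate sums $(Z\pm\overline{Z'})^2=\sum_\ell(Z_\ell\pm\overline{Z'_\ell})^2$, so the coordinate-wise objects $z_\ell,z'_\ell$ you introduce are not defined --- drop them and work with $(z\pm z')^2$ and $z^2+z'^2$ directly, as the paper does.
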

Before we come to the proof we note that $Z,Z'$ are in $\mathbb C^d$ and $z,z'\in \mathbb C$. Note also that there is an ambiguity in this definition: if $z,z'$ satisfy \eqref{eq:ztoZ1}--\eqref{eq:ztoZ2}, then so do $-z,-z'$. However, by \eqref{eq:defF} the integral $I_n(d;\tau;z,z')$ only depends on the values of $(z+z')^2$ and $(z-z')^2$. The same is true for the product of the weights, as will be clear from the proof of the lemma. The ambiguity is thus irrelevant and one can make either choice. 

\begin{proof}
    We start by writing 
    \begin{equation}
        \prod_{j=1}^d\omega(\sqrt n Z_j) \omega(\sqrt n Z_j')= \exp\left(-\frac{n}{1-\tau^2} \left(|Z|^2+|Z'|^2-\frac{\tau}{2} (Z^2+\overline Z ^2+(Z')^2+\overline{ (Z')} ^2)\right)\right)=\omega(\sqrt n Z)\omega(\sqrt n Z'),
    \end{equation}
    which, by the parallelogram law, we can write as
    \begin{equation} \label{eq:weightproducts}
        \prod_{j=1}^d\omega(\sqrt n Z_j) \omega(\sqrt n Z_j')= \exp\left(-\frac{n}{1-\tau^2} \left(\frac12 |Z+\overline {Z'}|^2+\frac12|Z-\overline {Z'}|^2-\frac{\tau}{4} ((Z+ \overline{Z'})^2+(\overline Z + {Z'})^2+(Z- \overline{Z'})^2+(\overline Z- {Z'})^2\right)\right).
    \end{equation}
   By \eqref{eq:ztoZ1}--\eqref{eq:ztoZ2} this shows that 
   \begin{equation}
        \prod_{j=1}^d\omega(\sqrt n Z_j) \omega(\sqrt n Z_j')=\omega(\sqrt{2 \tau n} z) \omega(\sqrt {2 \tau n} z').
   \end{equation}
The statement now follows from  \eqref{eq:kernelboldKtoK}, \eqref{eq:integral_representation_kernel} and \eqref{eq:defIn}.
\end{proof}

\subsection{Strong non-Hermiticity: proofs of Theorems \ref{thm:limitingDensity} and \ref{thm:Cd_bulk_intro}}
The proofs of Theorem \ref{thm:limitingDensity} and \ref{thm:Cd_bulk_intro} will follow (partly) from the following proposition, which is a generalization of \text{Theorem \ref{thm:errorTermUniform}} to $d>1$. We remind the reader that $\xi_+$ is the average of $\xi$ and $\xi'$. For the proofs below, it is convenient to define the (closed) $d$-dimensional ball
\begin{align*}
B_\tau^d = \{Z\in \mathbb R^d : \lvert Z\rvert \leq 2\sqrt \tau\}.
\end{align*}
Its boundary is a $d$-dimensional sphere with radius $2\sqrt \tau$, which we denote by $\partial B_\tau^d$. 

\begin{proposition} \label{thm:errorTermUniformd>1}
    Let $d$ be a positive integer. For all $Z, Z'\in\mathbb C$, $\tau\in (0,1)$ and all $n=1,2,\ldots$, we have the inequality
    \begin{multline} \label{eq:errorTermUniform1d>1}
    \left|\mathbb K_n(Z,Z') - \mathfrak{1}_{\xi_+<\xi_\tau} n^d \prod_{j=1}^d K_{(1-\tau^2)^{-1}}^{\infty}\left(\sqrt n Z_j, \sqrt n Z'_j\right)\right|\\
    \leq  \frac{n^d}{\lvert 1-e^{-2(\xi_+-\xi_\tau)}\rvert} \left(\frac{\tau}{\pi(1-\tau^2) \sqrt{\sinh \xi_+}}\right)^d 
     e^{- n (\xi-\xi_\tau)^2 g(\xi+i\eta)} e^{- n (\xi'-\xi_\tau)^2 g(\xi'+i\eta')}.
    \end{multline} 
    \end{proposition}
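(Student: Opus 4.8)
The plan is to reduce the $d$-dimensional estimate to the one-dimensional machinery already assembled in Section~\ref{sec:steepest} and in Lemma~\ref{prop:valueomegaWZeFtau}--Lemma~\ref{prop:weightsFa-1}. Starting from Lemma~\ref{lem:general_kernel}, I would write $\mathbb K_n(Z,Z') = \frac{n^d}{\pi^d(1-\tau^2)^{d/2}}\sqrt{\omega(\sqrt{2\tau n}z)\omega(\sqrt{2\tau n}z')}\, I_n(d;\tau;z,z')$ with $z,z'\in\mathbb C$ chosen via \eqref{eq:ztoZ1}--\eqref{eq:ztoZ2}, and express these $z,z'$ in elliptic coordinates $z=\sqrt2\cosh(\xi+i\eta)$, $z'=\sqrt2\cosh(\xi'+i\eta')$. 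The crucial point is that $I_n(d;\tau;z,z')$ is the \emph{same} single contour integral \eqref{eq:defIn} as in the $d=1$ case, only with the exponent $d/2$ in the factor $(1-s^2)^{-d/2}$. Deform $\gamma_0$ to the circle $|s|=|a|^{-1}$ exactly as before; by the inequality \eqref{eq:as=1} of Theorem~\ref{lem:as=1} we have $\operatorname{Re}F(s)\le\operatorname{Re}F(a^{-1})$ on that circle. Picking up the residue at $s=\tau$ precisely when $|a|^{-1}>\tau$, i.e. when $\xi_+<\xi_\tau$ (note $|a|^{-1}=e^{-(\xi+\xi')}=e^{-2\xi_+}$ and $\tau=e^{-2\xi_\tau}$), gives
\begin{align*}
\Bigl|I_n(d;\tau;z,z') - \frac{e^{nF(\tau)}}{(1-\tau^2)^{d/2}}\mathfrak{1}_{|a|^{-1}>\tau}\Bigr|
\le \frac{1}{2\pi}\frac{e^{n\operatorname{Re}F(a^{-1})}}{|1-|a|\tau|}\int_{-\pi}^{\pi}\frac{dt}{|1-a^{-2}e^{2it}|^{d/2}},
\end{align*}
which is the $d$-dimensional analogue of the estimate in the proof of Proposition~\ref{thm:errorTermUniform1}.

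Next I would estimate the remaining integral. Writing $|a|^{-1}=e^{-2\xi_+}<1$ and using $|1-a^{-2}e^{2it}|\ge 1-|a|^{-2}=1-e^{-4\xi_+}$ gives the crude bound $\int_{-\pi}^{\pi}|1-a^{-2}e^{2it}|^{-d/2}\,dt\le 2\pi(1-e^{-4\xi_+})^{-d/2}$, but this does not match the claimed $(\sinh\xi_+)^{-d}$; instead one should write $1-e^{-4\xi_+} = e^{-2\xi_+}(e^{2\xi_+}-e^{-2\xi_+}) = 2e^{-2\xi_+}\sinh(2\xi_+)$ and, more to the point, track the $\tau$-dependent prefactors carefully. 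The target RHS has the shape $\frac{n^d}{|1-e^{-2(\xi_+-\xi_\tau)}|}\bigl(\frac{\tau}{\pi(1-\tau^2)\sqrt{\sinh\xi_+}}\bigr)^d e^{-n(\xi-\xi_\tau)^2g(\xi+i\eta)}e^{-n(\xi'-\xi_\tau)^2g(\xi'+i\eta')}$, so I would (i) identify $|1-|a|\tau| = e^{-2\xi_+}\cdot|e^{2\xi_+}\cdot\ldots|$... more simply $|a|\tau = e^{\xi+\xi'}\cdot\tau = e^{2\xi_+-2\xi_\tau}$ wait — need to be careful about which of $|a|^{\pm1}$; since here $|a|^{-1}=e^{-2\xi_+}$ we have $|a|=e^{2\xi_+}$, so $|a|\tau = e^{2\xi_+}e^{-2\xi_\tau}=e^{2(\xi_+-\xi_\tau)}$ and $|1-|a|\tau| = |1-e^{2(\xi_+-\xi_\tau)}|$; I'd reconcile this with the stated $|1-e^{-2(\xi_+-\xi_\tau)}|$ by noting $|1-e^{2u}| = e^u\cdot 2|\sinh u|$ and $|1-e^{-2u}| = e^{-u}\cdot 2|\sinh u|$ differ by a bounded factor that can be absorbed, or by relabelling $a\leftrightarrow a^{-1}$ conventions — this bookkeeping is the fiddly part. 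Then (ii) use Lemma~\ref{prop:valueomegaWZeFtau} and Lemma~\ref{prop:weightsFa-1} to turn $\sqrt{\omega(\sqrt{2\tau n}z)\omega(\sqrt{2\tau n}z')}\,e^{nF(\tau)}$ into $\prod_{j=1}^d K^\infty_{(1-\tau^2)^{-1}}(\sqrt n Z_j,\sqrt n Z'_j)$-type factors and $\sqrt{\omega\cdot\omega}\,e^{nF(a^{-1})}$ into $e^{-n(\xi-\xi_\tau)^2g(\xi+i\eta)}e^{-n(\xi'-\xi_\tau)^2g(\xi'+i\eta')}$ — this works because the weight product in \eqref{eq:weightproducts} factorises through $(z+z')^2,(z-z')^2$ exactly as in $d=1$, and the identity $\prod_j K^\infty(\sqrt n Z_j,\sqrt n Z'_j)$ equals the $d$-fold product because $K^\infty$ is multiplicative under this correspondence.

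The main technical obstacle, beyond the prefactor bookkeeping just described, is controlling $\int_{-\pi}^{\pi}|1-a^{-2}e^{2it}|^{-d/2}\,dt$ sharply enough to recover the $(\sinh\xi_+)^{-d/2}$ behaviour claimed (note $(\sqrt{\sinh\xi_+})^{-d} = (\sinh\xi_+)^{-d/2}$). A clean way is: since $\int_{-\pi}^{\pi}|1-re^{2it}|^{-d/2}\,dt$ is increasing in $r\in[0,1)$, and here $r=|a|^{-2}=e^{-4\xi_+}$, one does \emph{not} want to bound it by the $r\to1$ value. Instead, for $0<r<1$ one has the elementary bound $|1-re^{2it}|^2 = (1-r)^2 + 4r\sin^2 t \ge (1-r)^2 + \frac{16r}{\pi^2}t^2$ on $[-\pi/2,\pi/2]$ (and symmetrically), giving $\int |1-re^{2it}|^{-d/2}\,dt \le C_d (1-r)^{1-d/2}r^{-1/2}$ for $d\ge3$ (with logarithmic/constant corrections for $d=1,2$); combining $1-r = 1-e^{-4\xi_+}$ with $r^{-1/2}=e^{2\xi_+}$ and the factors $\frac1{1-\tau^2}$, $\tau$ coming from $K^\infty$'s density $(1-\tau^2)^{-1}$ and from the $\sqrt{2\tau n}$ rescaling of the weights, one assembles the stated RHS after absorbing bounded ratios of the form $\frac{e^{\pm(\xi_+-\xi_\tau)}\text{-type factors}}{1}$ into the $|1-e^{-2(\xi_+-\xi_\tau)}|$ denominator. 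I would present this as: (1) contour deformation + residue, (2) the integral estimate via the $\sin^2 t$ lower bound, (3) substitution of Lemmas~\ref{prop:valueomegaWZeFtau}--\ref{prop:weightsFa-1}, (4) collecting constants, and I expect step (2) together with matching the exact form of the $\tau$-dependent constant in step (4) to require the most care while being entirely routine in nature.
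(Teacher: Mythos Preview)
Your approach is essentially the same as the paper's: deform to the circle $|s|=|a|^{-1}$, pick up the residue at $\tau$, invoke Theorem~\ref{lem:as=1} to bound $e^{n\operatorname{Re}F}$ on the circle, and then multiply through by the weight factor using Lemmas~\ref{prop:valueomegaWZeFtau}--\ref{prop:weightsFa-1} to produce the Ginibre-product term and the $e^{-n(\xi-\xi_\tau)^2 g}$ factors. The paper in fact just cites \eqref{thm:errorTermUniform1d} of Proposition~\ref{thm:errorTermUniform1} directly (which already contains the trivial estimate $|1-a^{-2}e^{2it}|\ge 1-|a|^{-2}$, giving $\int_{-\pi}^\pi |1-a^{-2}e^{2it}|^{-d/2}\,dt\le 2\pi(1-|a|^{-2})^{-d/2}$), so your proposed $\sin^2 t$ refinement in step~(2) is unnecessary; the crude bound suffices and is exactly what the paper uses. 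The one device the paper makes explicit that you only allude to is the decomposition $F(\tau;z,z';s)=-(d-1)(\log s-\log\tau)+\sum_{j=1}^d F(\tau;Z_j/\sqrt{2\tau},\overline{Z'_j}/\sqrt{2\tau};s)$, which at $s=\tau$ makes the factorisation into $\prod_j K^\infty(\sqrt n Z_j,\sqrt n Z'_j)$ immediate. Your observation about the mismatch between $|1-e^{2(\xi_+-\xi_\tau)}|$ and $|1-e^{-2(\xi_+-\xi_\tau)}|$ is correct bookkeeping; the paper is silent on how the constants are reconciled, and indeed $(1-|a|^{-2})^{-d/2}$ versus $(\sinh\xi_+)^{-d/2}$ differ by factors of $e^{-2\xi_+}\cosh\xi_+$ that the paper does not track explicitly.
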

    
    \begin{proof}
    We notice that
    \begin{align*}
    F(\tau;z,z';s) = -(d-1) (\log s - \log \tau) + \sum_{j=1}^d F\left(\tau;\frac{Z_j}{\sqrt{2\tau}}, \frac{Z'_j}{\sqrt{2\tau}};s\right).
    \end{align*}
    Thus, taking $s=\tau$, by 
    Lemma \ref{prop:valueomegaWZeFtau} we have
    \begin{align} \label{eq:errorTermUniformd>1,1}
    \frac{1}{\pi^d (1-\tau^2)^\frac{d}{2}} e^{n F(\tau;z,z';\tau)}
    \sqrt{ \omega\left(\sqrt{2\tau n} z\right)  \omega\left(\sqrt{2\tau n} z'\right)}
    = \prod_{j=1}^d K_{(1-\tau^2)^{-1}}^{\infty}\left(\sqrt n Z_j, \sqrt n Z'_j\right).
    \end{align}
    Similarly, we have by 
    Lemma \ref{prop:weightsFa-1} that
    \begin{align} \label{eq:errorTermUniformd>1,2}
   \left| e^{n F(\tau;z,z';a^{-1})}
    \sqrt{ \omega\left(\sqrt{2\tau n} z\right)  \omega\left(\sqrt{2\tau n} z'\right)}\right|
    &= e^{- n (\xi-\xi_\tau)^2 g(\xi+i\eta)} e^{- n (\xi'-\xi_\tau)^2 g(\xi'+i\eta')}.
    \end{align}
    We plug these formulas into \eqref{eq:defEllipdIn2}. Combining \eqref{eq:errorTermUniformd>1,1} and \eqref{eq:errorTermUniformd>1,2}, we get, using arguments analogous to the proof of Proposition  \ref{thm:errorTermUniform1}, that \eqref{eq:errorTermUniform1d>1} holds. 
    \end{proof}
    
\begin{proof}[Proof of Theorem \ref{thm:limitingDensity}]
Denote by $\mathcal N$ a small compact neighborhood of $B_\tau^d$. For $Z\not\in \mathcal N$, the result follows trivially by \text{Proposition \ref{thm:errorTermUniformd>1}}. Now consider $Z\in \mathcal N$. We cannot directly apply Proposition \ref{thm:errorTermUniformd>1}, because there is singular behavior in \eqref{eq:errorTermUniform1d>1} when $Z\in B_\tau^d$. Therefore, we use an adaptation of the argument in Proposition \ref{thm:errorTermUniform1}. First, we notice that
\begin{align*}
F(s) = 2 \operatorname{Re}(z)^2 \frac{s}{1+s} + 2 \operatorname{Im}(z)^2 \frac{s}{1-s} - \log s + \log \tau. 
\end{align*}
Now let us consider the circle $|s|=r$ for some $\tau<r<|a|^{-1}$. The maps $s\mapsto s/(1+s)$ and $s\mapsto s/(1-s)$ are M\"obius transformations, mapping $|s|=r$ to circles. These circles are symmetric in the real line, and they intersect with the real line in $s=-r$ and $s=r$. The latter gives the maximum of the real part for both transformations, hence
\begin{align*}
\operatorname{Re} F(s) \leq 
2 \operatorname{Re}(z)^2 \frac{r}{1+r} + 2 \operatorname{Im}(z)^2 \frac{r}{1-r} - \log r + \log \tau 
=\operatorname{Re} F(r). 
\end{align*}
Since these M\"obius transformations are increasing on $(0,1)$, we infer that for all $s$ on the circle $|s|=r$
\begin{align} \nonumber
\operatorname{Re} F(s) &\leq 
2 \operatorname{Re}(z)^2 \frac{|a|^{-1}}{1+|a|^{-1}} + 2 \operatorname{Im}(z)^2 \frac{|a|^{-1}}{1-|a|^{-1}} - \log r + \log \tau \\ \nonumber
&= \operatorname{Re} F(|a|^{-1}) + \log |a|^{-1} - \log r\\
 \label{thm:limitingDensityIneqFsFr}
&= \operatorname{Re} F(a^{-1}) + \log |a|^{-1} - \log r.
\end{align}
In the last line, we used that $a^{-1}=|a|^{-1}$ when $Z\not\in B_\tau^d$, and we used Theorem \ref{lem:as=1}(iv) when $Z\in B_\tau^d$ (in which case $a^{-1} = e^{-2i\eta}$). Now denote by $\varepsilon$ the minimum of $\frac{1}{2}(\operatorname{Re} F(\tau) - \operatorname{Re} F(a^{-1})) > 0$ (see Remark \ref{remark:ReFtauReFa-1}) for $Z\in \mathcal N$. Given $Z\in \mathcal N$, we pick
\begin{align*}
r = \max\left(|a|^{-1} e^{-\varepsilon},\frac{|a|^{-1}+\tau}{2}\right).
\end{align*}
According to \eqref{thm:limitingDensityIneqFsFr}, with this choice of $r$, we have 
$
\operatorname{Re} F(s) \leq  \operatorname{Re} F(|a|^{-1} e^{-\varepsilon})
\leq \operatorname{Re} F(a^{-1}) + \varepsilon$. We also used here that $\operatorname{Re} F$ is strictly decreasing on $(0,|a|^{-1})$. 
The argument behind \eqref{eq:errorTermUniform1d>1}, i.e., the proof of Proposition \ref{thm:errorTermUniform1}, can, instead of the circle $|s|=|a|^{-1}$, also be applied to the smaller circle $|s|=r$.
        Repeating the arguments from Proposition \ref{thm:errorTermUniform1} and Proposition \ref{thm:errorTermUniformd>1} then yields
        \begin{align*}
        \left|\mathbb K_n(Z,Z) - \frac{n^d}{\pi^d (1-\tau^2)^d}\right|
        &\leq 
        \frac{n^d}{1-\tau/r} \left(\frac{\tau}{\pi(1-\tau^2) \sqrt{|\sinh \log \sqrt r|}}\right)^d 
     e^{n (\operatorname{Re} F(a^{-1}) + \varepsilon - \operatorname{Re} F(\tau))}\\
     &\leq \frac{n^d}{1-\tau/r} \left(\frac{\tau}{\pi(1-\tau^2) \sqrt{|\sinh \log \sqrt r|}}\right)^d 
     e^{-n \varepsilon}
        \end{align*}
        uniformly, where we used Lemma \ref{prop:valueomegaWZeFtau} (for $Z=Z'$) in the first line. We note that $r\leq \max(e^{-\varepsilon}, \frac{\tau+1}{2}) < 1$ uniformly, and the theorem follows.  
    \end{proof}

    \begin{proof}[Proof of Theorem \ref{thm:Cd_bulk_intro}.]
Let $\mathcal N$ be a small compact neighborhood of $B_\tau^d$. When $Z\not\in \mathcal N$, the result follows directly from Proposition \ref{thm:errorTermUniformd>1} and the translation invariance of the Ginibre kernel (and $\xi, \xi'< \xi_\tau$ for $n$ big enough).  Now let us consider $Z\in \mathcal N$. 
Using \eqref{eq:MehlerFordFactors}, we have
\begin{align} \label{eq:sum-sumjgeqn}
\mathbb K_n(\zeta,\zeta')
-\mathfrak{1}_{\xi_+<\xi_\tau} n^d \prod_{j=1}^d K_{(1-\tau^2)^{-1}}^{\infty}\left(\sqrt n \zeta_j, \sqrt n \zeta'_j\right)
= 
\frac{-1}{\pi^d (1-\tau^2)^\frac{d}{2}} \sum_{\lvert j\rvert \geq n} \frac{\left(\frac{\tau}{2}\right)^{\lvert j\rvert}}{j_1!\cdots j_d!} \prod_{k=1}^d \sqrt{\omega(\zeta_{k}) \omega(\zeta'_{k})} H_{j_k} \left({\frac{\zeta_{k}}{\sqrt{2 \tau}}} \right)\overline{ H_{j_k}\left({\frac{\zeta'_{k}}{\sqrt{2 \tau}}} \right)}
\end{align}
for any $\zeta, \zeta'\in \mathbb C^d$. By the inequality of arithmetic and geometric means, we have
\begin{multline*}
\left|\prod_{k=1}^d \sqrt{\omega\left(\sqrt n \zeta_k\right) \omega\left(\sqrt n \zeta'_k\right)} 
H_{j_k}\left(\sqrt{\frac{n}{2\tau}} \zeta_k\right) \overline{H_{j_k}\left(\sqrt{\frac{n}{2\tau}} \zeta'_k\right)}\right|\\
\leq \frac{1}{2} \prod_{k=1}^d \omega\left(\sqrt n \zeta_k\right)
\left|H_{j_k}\left(\sqrt{\frac{n}{2\tau}}\zeta_k\right)\right|^2
+\frac{1}{2} \prod_{k=1}^d \omega\left(\sqrt n \zeta'_k\right)
\left|H_{j_k}\left(\sqrt{\frac{n}{2\tau}}\zeta'_k\right)\right|^2.
\end{multline*}
Combining this with \eqref{eq:sum-sumjgeqn}, first for general $(\zeta, \zeta')$, and then for $(\zeta, \zeta)$ and $(\zeta', \zeta')$, we get
\begin{align*}
\left|\mathbb K_n(\zeta,\zeta') - \mathfrak{1}_{\xi_+<\xi_\tau} n^d \prod_{j=1}^d K_{(1-\tau^2)^{-1}}^{\infty}\left(\sqrt n \zeta_j, \sqrt n \zeta'_j\right)\right|
\leq \left|\mathbb K_n(\zeta,\zeta) - \frac{n^d}{\pi^d (1-\tau^2)^d}\right| +  \left|\mathbb K_n(\zeta',\zeta') - \frac{n^d}{\pi^d (1-\tau^2)^d}\right|.
\end{align*}
Putting $\zeta = Z+\frac{U}{\sqrt n}$ and $\zeta'= Z+\frac{V}{n}$, the result follows from Theorem \ref{thm:limitingDensity}.\\
\end{proof}

\subsection{Further results for strong non-Hermiticity}



We investigate the average density of points. As expressed in Theorem \ref{thm:limitingDensity}, the model has a bulk, given by
\begin{align*}
\mathcal E_\tau^d = \left\{Z\in \mathbb C^d : \frac{\lvert \operatorname{Re} Z\rvert^2}{(1+\tau)^2}
+ \frac{\lvert \operatorname{Im} Z\rvert^2}{(1-\tau)^2} < 1\right\}. 
\end{align*}
Consequently, the edge is identified as the $2d-1$ dimensional manifold $\partial \mathcal E_\tau^d$, that consists of all $Z\in \mathbb C^d$ such that
\begin{align*}
\frac{\lvert \operatorname{Re} Z\rvert^2}{(1+\tau)^2}
+ \frac{\lvert \operatorname{Im} Z\rvert^2}{(1-\tau)^2} = 1.
\end{align*}

We can be more precise about the rate of convergence, if we restrict to appropriate subsets. The next theorem is the equivalent of Theorem \ref{thm:onepointAsymp} for $d>1$. Here, we shall be less precise about the error terms, but we mention that our method can be used to get more explicit asymptotic formulas. We remind the reader that $
B_\tau^d = \{Z\in \mathbb R^d : \lvert Z\rvert \leq 2\sqrt \tau\}
$. 

\begin{theorem} \label{thm:onepointAsympd}
Fix $\tau\in (0,1)$ and $d>1$. Let $Z\in \mathbb C^d$ and assume that $Z\not\in \partial \mathcal E_\tau^d$. Then we have
\begin{align} \label{eq:onepointAsympd}
\mathbb K_n(Z, Z) = \frac{n^d \mathfrak{1}_{Z\in \mathcal E_\tau^d}}{\pi^d (1-\tau^2)^d} 
+ \mathcal O\left(n^{d+\frac{\gamma}{2}} e^{-2 n (\xi-\xi_\tau)^2 g(\xi+i\eta)}\right)
\end{align}
as $n\to\infty$, where
\begin{align*}
\gamma = 
\begin{cases}
-1, & Z\in \mathbb C^d\setminus B_\tau^d,\\
d-2, & Z\in B_\tau^d\setminus \partial B_\tau^d,\\
d, & Z\in \partial B_\tau^d.
\end{cases}
\end{align*} 
In each of these three regions, the $\mathcal O$ term is uniform on compact subsets.
\end{theorem}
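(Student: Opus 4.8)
The plan is to reduce $\mathbb{K}_n(Z,Z)$ to the scalar integral $I_n$ of Section~\ref{sec:steepest} via Lemma~\ref{lem:general_kernel}, and then to read off the exponent $\gamma$ region by region from Theorem~\ref{thm:InGeneral} together with the residue at $s=\tau$. Setting $Z'=Z$ in \eqref{eq:ztoZ1}--\eqref{eq:ztoZ2}, one may take $z,z'$ with $(z+z')^2=\frac{2}{\tau}\lvert\operatorname{Re}Z\rvert^2$ and $(z-z')^2=-\frac{2}{\tau}\lvert\operatorname{Im}Z\rvert^2$, that is $z=\frac{1}{\sqrt{2\tau}}\bigl(\lvert\operatorname{Re}Z\rvert+i\lvert\operatorname{Im}Z\rvert\bigr)$ and $z'=\overline z$; writing $z=\sqrt2\cosh(\xi+i\eta)$ with $\eta\in[0,\frac{\pi}{2}]$ then forces $\xi'=\xi$, $\eta'=\pm\eta$, and since the right-hand side of \eqref{eq:defpthetas} depends on $\eta$ only through $\cos2\eta,\cos^2\eta,\sin^2\eta$, we get $g(\xi'+i\eta')=g(\xi+i\eta)$. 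As in the proof of Theorem~\ref{thm:limitingDensity}, the residue of the integrand of $I_n$ at $s=\tau$ times the prefactor in \eqref{eq:defEllipdIn2} equals exactly $\frac{n^d\mathfrak{1}_{Z\in\mathcal E_\tau^d}}{\pi^d(1-\tau^2)^d}$ (this residue occurs precisely when $\lvert a\rvert^{-1}=e^{-2\xi}>\tau$, i.e.\ $\xi<\xi_\tau$, which one checks is equivalent to $Z\in\mathcal E_\tau^d$), while Proposition~\ref{thm:errorTermUniformd>1} gives $\bigl\lvert e^{nF(a^{-1})}\sqrt{\omega(\sqrt{2\tau n}z)\,\omega(\sqrt{2\tau n}z')}\bigr\rvert=e^{-2n(\xi-\xi_\tau)^2 g(\xi+i\eta)}$. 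As $Z\notin\partial\mathcal E_\tau^d$ is equivalent to $\xi\neq\xi_\tau$, we also have $\tau\notin S(z,z')$, so it remains to bound the non-residue part of $I_n$ in each region and multiply by $\frac{n^d}{\pi^d(1-\tau^2)^{d/2}}$ and the weight.

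If $Z\in\mathbb{C}^d\setminus B_\tau^d$, then $z,z'\notin[-\sqrt2,\sqrt2]$ and $\xi>0$, so Theorem~\ref{thm:InGeneral}(i) applies with $S=\{a^{-1}\}$; the saddle contribution at $a^{-1}$ is of order $n^{-1/2}e^{n\operatorname{Re}F(a^{-1})}$ and the error of order $n^{-3/2}e^{n\operatorname{Re}F(a^{-1})}$, which gives $\gamma=-1$. If $Z\in B_\tau^d\setminus\partial B_\tau^d$, then $Z\in\mathbb{R}^d$ and $z=z'\in[0,\sqrt2)$, so $(z,z')\in\Delta$ and $B_\tau^d\subset\mathcal E_\tau^d$, and we are in Theorem~\ref{thm:InGeneral}(ii); here $\operatorname{Re}F(1)=\operatorname{Re}F(a^{-1})$ by Theorem~\ref{lem:as=1}(iv), and for $d>1$ the dominant correction is the $s=1$ term, of order $n^{d/2-1}e^{n\operatorname{Re}F(1)}$, which dominates both the $a^{\pm1},b^{\pm1}$ saddle contributions (order $n^{-1/2}$) and the $\mathcal O$-term (order $n^{(d-4)/2}$); this gives $\gamma=d-2$. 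The subcase $Z=0$ is identical, with the contributions now coming from $s=\pm1$ rather than from saddles and $\operatorname{Re}F(\pm1)=\operatorname{Re}F(a^{-1})$.

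Finally let $Z\in\partial B_\tau^d$, so $z=z'=\sqrt2$ and $F'(s)=-\frac{(s-1)^2}{s(1+s)^2}$: then $s=1$ is a critical point of $F$ of order three that coincides with the singularity of $(1-s^2)^{-d/2}$, which is why this case was excluded from Theorem~\ref{thm:InGeneral}. Following the proof of Theorem~\ref{thm:onepointAsymp}(iii) (see Figure~\ref{fig:order2}), I would deform $\gamma_0$ to a contour encircling $0$, contained in $\{\operatorname{Re}F\le\operatorname{Re}F(1)\}$, which leaves the bounded component of $\{\operatorname{Re}F>\operatorname{Re}F(1)\}$ around $0$ through two of the three descent directions $\arg(s-1)\in\{0,\pm\frac{2\pi}{3}\}$ at $s=1$ while staying a distance at least $n^{-1/3}$ from $s=1$ so as to avoid the singularity. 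On the part within $\mathcal O(n^{-1/3})$ of $s=1$ one has $\lvert e^{nF}\rvert\le e^{n\operatorname{Re}F(1)}$ and $\lvert1-s\rvert^{-d/2}\le Cn^{d/6}$ over a length $\mathcal O(n^{-1/3})$; along the descent valleys $\operatorname{Re}F(s)\le\operatorname{Re}F(1)-c\lvert s-1\rvert^3$, so $\int_{n^{-1/3}}^{\varepsilon}r^{-d/2}e^{-cnr^3}\,dr=\mathcal O(n^{(d-2)/6})$; and on the remainder $\operatorname{Re}F\le\operatorname{Re}F(1)-c'$. Since the residue at $s=\tau$ is still picked up ($\lvert a\rvert^{-1}=1>\tau$), this yields $I_n=\frac{e^{nF(\tau)}}{(1-\tau^2)^{d/2}}+\mathcal O\bigl(n^{(d-2)/6}e^{n\operatorname{Re}F(1)}\bigr)$ with $\operatorname{Re}F(1)=\operatorname{Re}F(a^{-1})$; since $(d-2)/6\le d/2$, multiplying by the prefactor and the weight gives the asserted bound with $\gamma=d$. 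In each of the three regions uniformity on compact subsets is inherited from Theorem~\ref{thm:InGeneral} (resp.\ from the uniform choice of deformed contour), as such compacta stay a fixed distance from $-1,0,1$ and from $\partial\mathcal E_\tau^d$.

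The hard part is the edge case $Z\in\partial B_\tau^d$: obtaining a \emph{sharp} leading constant there would require a confluent-hypergeometric type analysis of the merging order-three critical point and order-$d/2$ singularity at $s=1$ (and that leading coefficient even depends on $d\bmod6$), but since Theorem~\ref{thm:onepointAsympd} only asserts a polynomial bound, the crude estimate above, on a contour held a distance $n^{-1/3}$ from $s=1$, is enough; spelling out this deformation and these estimates, together with the $Z\leftrightarrow(z,z')\leftrightarrow$ elliptic-coordinate dictionary in the other two regions, is the remaining (routine) work.
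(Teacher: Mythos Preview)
Your proof is correct, and for the first two regions it matches the paper's argument essentially verbatim (same identification $z=\frac{1}{\sqrt{2\tau}}(|\Re Z|+i|\Im Z|)$, $z'=\bar z$, same appeal to Theorem~\ref{thm:InGeneral}(i) and~(ii)). The interesting difference is the case $Z\in\partial B_\tau^d$. The paper does not use your crude contour-at-distance-$n^{-1/3}$ estimate; instead it first tames the singularity $(1-s)^{-d/2}$ by $k=\lfloor d/2\rfloor$ integrations by parts (for $d=2k$ this converts $(1-s)^{-k}$ into $\log(1-s)$, for $d=2k+1$ into $(1-s)^{-1/2}$), and only then runs a genuine order-three saddle point analysis through $s=1$ along the $\pm\tfrac{2\pi}{3}$ directions. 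This buys the paper an explicit leading order $n^{(d-2)/6}e^{n\Re F(1)}$ with a computable constant, whereas your argument only yields the bound $\mathcal O(n^{(d-2)/6}e^{n\Re F(1)})$; since the theorem asserts only the (weaker) bound with $\gamma=d$, your route is both simpler and sufficient. One small imprecision: on the arc $|s-1|=n^{-1/3}$ you do not literally have $\Re F(s)\le\Re F(1)$ (at $\arg(s-1)=\pi$ the cubic term gives $\Re F(s)-\Re F(1)\approx\tfrac{1}{12n}>0$), but $n(\Re F(s)-\Re F(1))=\mathcal O(1)$ there, so the estimate survives with an extra bounded factor.
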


\begin{proof}
When $Z=Z'$, the expressions for $z$ and $z'$ simplify, and we take 
\begin{align*}
z = \frac{\lvert \operatorname{Re} Z\rvert + i \lvert \operatorname{Im} Z\rvert}{\sqrt{2\tau}}
\qquad\text{ and }\qquad
z' = \frac{\lvert \operatorname{Re} Z\rvert - i \lvert \operatorname{Im} Z\rvert}{\sqrt{2\tau}}. 
\end{align*}
This means that $\xi'=\xi$ and $\eta'= -\eta$ in elliptic coordinates. The situation is then the same as in Theorem \ref{thm:onepointAsymp}. There are two minor differences. We have a different overall constant factor, and the integrand of $I_n$ has an extra factor $(1-s^2)^{\frac{d-1}{2}}$ in the denominator. The proofs for the cases $Z\in \mathbb C^d\setminus B_\tau^d$ and $Z\in B_\tau^d\setminus \partial B_\tau^d$ are virtually analoguous to the one of Theorem \ref{thm:onepointAsymp}, we omit the details. 

For the case $Z\in \partial B_\tau^d$ we make a distinction based on the parity of $d$. Let us first assume that $d=2k$ is even. Then, after $k$ integrations by parts, we have
\begin{align*}
I_n(d;1;z,z') &= -\frac{(-1)^k}{2\pi i} \oint_{\gamma_0} \frac{d^k}{ds^k} \left(\frac{e^{n F(s)}}{s-\tau} \frac{1}{(1+s)^k}\right) \log(1-s) ds\\
&= -\frac{(-1)^k}{2\pi i} \oint_{\gamma_0} e^{n F(s)} \left(n^k h_k(s) + n^{k-1} h_{k-1}(s) + \ldots + h_0(s) \right) \log(1-s) ds,
\end{align*}
where $h_0, \ldots, h_k$ are $n$-independent functions that are analytic for $s\neq -1,0,\tau$. Next, we deform $\gamma_0$ such that it goes through the saddle point $1$, but stays away from $s=-1$, and take the steepest descent directions $\pm \frac{2\pi}{3}$ in $s=1$. The dominant contribution of the residue at $s=\tau$ is given by
\begin{align*}
k! \operatorname{Res}_\tau\left(\frac{e^{n F(s)}}{(\tau-s)^{k+1}} \frac{1}{(1+s)^k}\right)
= n^k F'(\tau)^k \frac{e^{n F(\tau)}}{(1+\tau)^k}+ \mathcal O\left(n^{k-1}\right)
= \mathcal O\left(n^k\right)
\end{align*}
as $n\to\infty$. The saddle point contributions are smaller. Let us focus on $h_k$ for example. Explicitly, we have
\begin{align*}
h_k(s) = \frac{F'(s)^k}{s-\tau} \frac{1}{(1+s)^k}.
\end{align*}
On the steepest descent paths through $s=1$ we have $\log (1-s) = \log|1-s| \mp \frac{\pi}{3}$. Then the $\log |1-s|$ parts will effectively cancel with each other, in the sense that they will not give the dominant behavior, which will come from the argument of the logarithm. The sum of the saddle point contributions thus yields, up to a sign,
\begin{align*}
\frac{1}{2\pi i} \frac{2\pi i}{3} \left(\frac{|F'''(1)|}{4}\right)^k \frac{1}{1-\tau} \frac{(3!)^\frac{2k+1}{3}}{3} \frac{\Gamma\left(\frac{2k+1}{3}\right)}{|n F'''(1)|^\frac{2k+1}{3}} \left(1+ \mathcal O\left(\frac{\log n}{n^\frac{1}{3}}\right)\right)
= \mathcal O\left(n^{-\frac{2k+2}{3}}\right).
\end{align*}
Here we used that $F'(s) = \frac{1}{2} F'''(1) (s-1)^2 + \mathcal O((s-1)^3)$. Together with the factor $n^k$, we end up with $\mathcal O(n^{\frac{d-2}{6}})$ as $n\to\infty$. The reader may convince oneself that the saddle point contributions concerning $h_0, \ldots, h_{k-1}$ are of lower or equal order.  

When $d = 2k+1$ is odd, similarly, we have
\begin{align*}
I_n(d;1;z,z') &= -\frac{(-1)^k}{2\pi i} \oint_{\gamma_0} e^{n F(s)} \left(n^k h_k(s) + n^{k-1} h_{k-1}(s) + \ldots + h_0(s) \right) \frac{ds}{\sqrt{1-s}}. 
\end{align*}
From here, the arguments are analogous, and slightly easier, because we do not have a logarithm in this case. 
\end{proof}

\begin{remark}
One may point out that, to get the average density of points, it is natural to divide $\mathbb K_n(Z, Z)$ by the total number of points $N_n = \binom{n+d-1}{d}$ of the process, and this effectively destroys the sharp error terms in \eqref{eq:onepointAsympd}, which should be replaced by $\mathcal O(1/n)$. However, the choice to scale with $\sqrt n$ in \eqref{eq:cluster1_intro} was mainly for convenience. We may instead consider to scale as
$$d! N_n \mathcal K_n\left((d! N_n)^\frac{1}{2d} \, Z, (d! N_n)^\frac{1}{2d}\, Z' \right).$$
Using $(d! N_n)^\frac{1}{2d} = n + \frac{d-1}{2} + \mathcal O(1/n)$, some rewriting shows that the corresponding steepest descent analysis is essentially unchanged, except that the integrand in \eqref{eq:defIn} gets an extra factor 
$
\exp{\frac{d-1}{2} \left(\frac{s(z+z')^2}{2(1+s)}-\frac{s (z-z')^2}{2(1-s)}\right)}.
$
This different scaling does not change our results up to leading order, but may change the error terms. A version of Theorem \ref{thm:errorTermUniformd>1} still holds, and one may use it to argue that we get the sharp bounds in Theorem \ref{thm:onepointAsympd}, that is, as $n\to\infty$
$$
\mathcal K_n\left((d! N_n)^\frac{1}{2d} \, Z, (d! N_n)^\frac{1}{2d}\, Z' \right) 
= \frac{d! \mathfrak{1}_{Z\in \mathcal E_\tau^d}}{\pi^d (1-\tau^2)^d} 
+ \mathcal O\left(n^{\frac{\gamma}{2}} e^{-2 n (\xi-\xi_\tau)^2 g(\xi+i\eta)}\right).
$$
\end{remark}


We show that the product of Ginibre kernels also emerges as a bulk limit. As in the $d=1$ case, we have a more general asymptotic behavior for the correlation kernel. 


\begin{theorem}
Let $\tau\in (0,1)$ be fixed. Suppose that $Z,Z'\in \mathbb C^d\setminus B_\tau^d$ and $(\xi_+-\xi_\tau)^2+(\eta+\eta')^2>0$. Then we have
\begin{multline*} 
\mathbb K_n(Z, Z') = \mathfrak{1}_{\xi_+<\xi_\tau} n^d \prod_{j=1}^d K_{(1-\tau^2)^{-1}}^{\infty}\left(\sqrt n Z_j, \sqrt n Z'_j\right)\\
\pm  \frac{n^{d-\frac{1}{2}}}{\sqrt{32\pi\tau}} \frac{1}{(\pi\sqrt{1-\tau^2})^d} 
\frac{\displaystyle e^{- n (\xi-\xi_\tau)^2 g(\xi+i\eta)} e^{- n (\xi'-\xi_\tau)^2 g(\xi'+i\eta')} D_{\tau}^{n}(Z,Z') }{\sinh\left(\xi_+-\xi_\tau + i \frac{\eta+\eta'}{2}\right) \sqrt{\sinh\left(\xi+i\eta\right) \sinh\left(\xi'+i\eta'\right)} \left(1-e^{-\xi_+-2i\eta-2i\eta'}\right)^\frac{d-1}{2} } 
\left(1+ \mathcal O\left(\frac{1}{n}\right)\right)
\end{multline*} 
as $n\to\infty$, uniformly for $(Z,Z')$ in compact subsets, where
\begin{align*}
D_{\tau}^n(Z,Z')= \exp\left(i n (\eta+\eta' - \frac{e^{-2\xi}}{2} \sin(2\eta)-\frac{e^{-2\xi'}}{2} \sin(2\eta'))\right),
\end{align*}
and the $\pm$ sign depends explicitly on $(Z,Z')$. 
\end{theorem}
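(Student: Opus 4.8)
The plan is to transcribe the proof of Theorem~\ref{thm:largenKnZW_intro}, replacing the exponent $\tfrac12$ of $(1-s^2)$ by $\tfrac d2$ throughout, and then reassemble in elliptic coordinates. By Lemma~\ref{lem:general_kernel} (equation~\eqref{eq:defEllipdIn2}),
\begin{align*}
\mathbb K_n(Z,Z') = \frac{n^d}{\pi^d (1-\tau^2)^{d/2}}\,\sqrt{\omega\!\left(\sqrt{2\tau n}\,z\right)\omega\!\left(\sqrt{2\tau n}\,z'\right)}\; I_n(d;\tau;z,z'),
\end{align*}
where $z,z'\in\mathbb C$ are determined by \eqref{eq:ztoZ1}--\eqref{eq:ztoZ2} and $z=\sqrt2\cosh(\xi+i\eta)$, $z'=\sqrt2\cosh(\xi'+i\eta')$; note that for $d>1$ no conjugation-induced sign flip of $\eta'$ is built in, which is why $\eta+\eta'$ rather than $\eta-\eta'$ appears in the statement. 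The hypotheses $Z,Z'\notin B_\tau^d$ and $(\xi_+-\xi_\tau)^2+(\eta+\eta')^2>0$ are to be read as placing us in case~(i) of Theorem~\ref{thm:InGeneral}, i.e.\ as forcing $z,z'\notin[-\sqrt2,\sqrt2]$ and $a^{-1}\neq\tau$ (for real vectors this is exactly $Z,Z'\notin B_\tau^d$; in general a short computation from \eqref{eq:ztoZ1}--\eqref{eq:ztoZ2} is required). Applying Theorem~\ref{thm:InGeneral}(i) gives
\begin{align*}
I_n(d;\tau;z,z') = \frac{e^{nF(\tau)}}{(1-\tau^2)^{d/2}}\,\mathfrak{1}_{|a|^{-1}>\tau} - \frac{1}{2\pi i}\sqrt{\frac{2\pi}{-nF''(a^{-1})}}\,\frac{e^{nF(a^{-1})}}{a^{-1}-\tau}\,\frac{1}{(1-a^{-2})^{d/2}} + \mathcal O\!\left(\frac{e^{n\operatorname{Re}F(a^{-1})}}{n\sqrt n}\right),
\end{align*}
with the branch of $\sqrt{2\pi/(-nF''(a^{-1}))}$ fixed by the orientation of the contour exactly as in the proof of Theorem~\ref{thm:largenKnZW_intro}.

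Next I would multiply through by the weight prefactor and reassemble. For the residue term, the $d>1$ form of Lemma~\ref{prop:valueomegaWZeFtau} (the weight identity established in the proof of Proposition~\ref{thm:errorTermUniformd>1}, which rests on the parallelogram-law splitting of $\omega$) turns $\tfrac{n^d}{\pi^d(1-\tau^2)^{d/2}}\sqrt{\omega\omega}\,e^{nF(\tau)}$ into $n^d\prod_{j=1}^d K_{(1-\tau^2)^{-1}}^{\infty}(\sqrt n Z_j,\sqrt n Z_j')$; since $|a|^{-1}=e^{-2\xi_+}$ and $\tau=e^{-2\xi_\tau}$ one has $\mathfrak{1}_{|a|^{-1}>\tau}=\mathfrak{1}_{\xi_+<\xi_\tau}$, which is the first line of the statement. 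For the saddle term I would: (i) use the $d>1$ form of Lemma~\ref{prop:weightsFa-1} (again from the proof of Proposition~\ref{thm:errorTermUniformd>1}) to write $\lvert e^{nF(a^{-1})}\sqrt{\omega\omega}\rvert=e^{-n(\xi-\xi_\tau)^2 g(\xi+i\eta)}e^{-n(\xi'-\xi_\tau)^2 g(\xi'+i\eta')}$; (ii) collect the remaining unimodular factor of $e^{nF(a^{-1})}\sqrt{\omega\omega}$ into the cocycle $D_\tau^n(Z,Z')$, using Proposition~\ref{prop:Finsaddleelliptic} for $F(a^{-1})$ and the elliptic-coordinate form of $\log(\omega\omega)$; (iii) evaluate $F''(a^{-1})$ by Proposition~\ref{prop:F''a-1} and combine, as for $d=1$, $\sqrt{1-a^{-2}}\,\sqrt{-f''(0)}=2\sqrt{a^{-1}\sinh(\xi+i\eta)\sinh(\xi'+i\eta')}$ with $f(t)=F(a^{-1}e^{it})$ and $-f''(0)=a^{-2}F''(a^{-1})$; (iv) observe that the extra power in $(1-s^2)^{-d/2}$, relative to $d=1$, leaves precisely the residual factor $(1-a^{-2})^{-(d-1)/2}$, i.e.\ the power of $(1-a^{-2})$ displayed in the statement; (v) rewrite $\tfrac1{a^{-1}-\tau}$ in elliptic coordinates as a constant multiple of $\sinh\!\bigl(\xi_+-\xi_\tau+i\tfrac{\eta+\eta'}{2}\bigr)^{-1}$ and track the overall numerical constant down to $\pm(32\pi\tau)^{-1/2}$, the $\pm$ being the sign of \eqref{eq:pmSqrtSaddle}. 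Collecting these pieces yields the second line, the $\mathcal O\!\left(e^{n\operatorname{Re}F(a^{-1})}/(n\sqrt n)\right)$ error being absorbed into the relative factor $1+\mathcal O(1/n)$ against the $n^{d-\frac12}$ leading term; uniformity on compacta is inherited from Theorem~\ref{thm:InGeneral}(i), since there the deformed contour and the saddle points remain a fixed distance from $-1,0,1$ on compact subsets of the parameter region in play.

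I expect the genuinely new work, everything else being a transcription of the $d=1$ computation, to be concentrated in steps~(ii) and~(iv): isolating the unimodular cocycle $D_\tau^n(Z,Z')$ from $e^{nF(a^{-1})}\sqrt{\omega\omega}$ so that its modulus matches the two decay factors of Lemma~\ref{prop:weightsFa-1}, \emph{while} simultaneously carrying the $(1-a^{-2})^{-(d-1)/2}$ coming from the $d$-dependent power, and making the square-root branch choices there consistent with the orientation of the steepest-descent contour---this is where the explicit $\pm$ sign and the precise oscillatory factor $D_\tau^n$ are pinned down. A secondary, more routine, obstacle is the translation of $Z,Z'\notin B_\tau^d$ into the scalar condition $z,z'\notin[-\sqrt2,\sqrt2]$ that puts us in case~(i) of Theorem~\ref{thm:InGeneral}: this is transparent for real vectors but needs a short argument via \eqref{eq:ztoZ1}--\eqref{eq:ztoZ2} in general.
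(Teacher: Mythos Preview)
Your proposal is correct and follows exactly the approach the paper takes: the paper's own proof reads in its entirety ``The proof is analogous to that of Theorem \ref{thm:largenKnZW_intro}, and is therefore omitted.'' Your outline (reduce via Lemma~\ref{lem:general_kernel}, apply Theorem~\ref{thm:InGeneral}(i), use the $d$-dimensional weight identities from the proof of Proposition~\ref{thm:errorTermUniformd>1}, and then redo the elliptic-coordinate bookkeeping of the $d=1$ proof while carrying the extra $(1-a^{-2})^{-(d-1)/2}$) is precisely the intended transcription, with more detail than the paper itself supplies.
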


The proof is analogous to that of Theorem \ref{thm:largenKnZW_intro}, and is therefore omitted.

\subsection{Weak non-Hermiticity: Proofs of Theorem \ref{thm:weakNonHd_intro} and Theorem \ref{thm:EGEdbulkWeakNonH_intro}}
We move on to the weak non-Hermiticity limits, to begin with the limit in the bulk, Theorem \ref{thm:weakNonHd_intro}.\\

\begin{figure}
    \centering
    \begin{overpic}[width=0.5\textwidth]{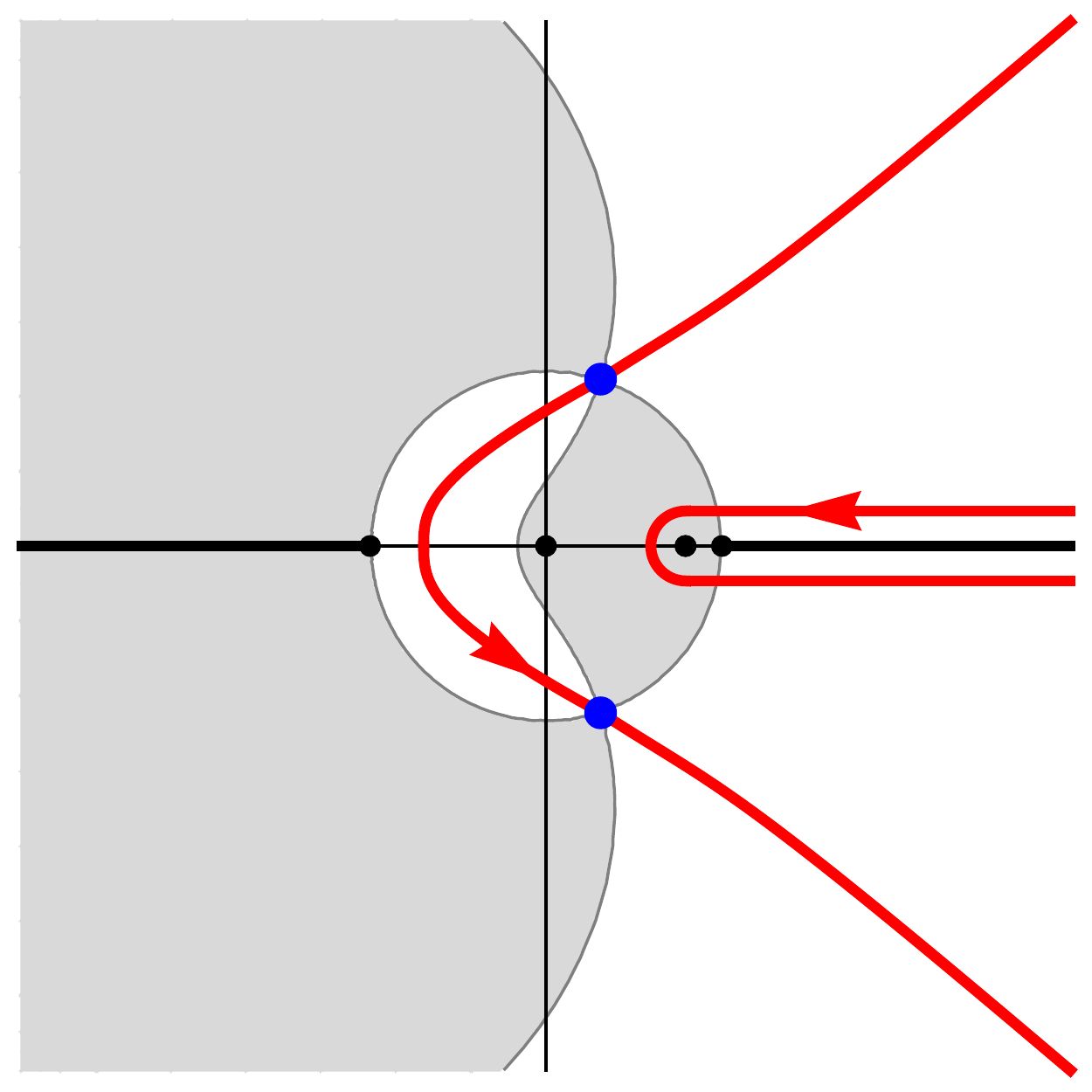}
        \put (51,68) {$a^{-1}$}
        \put (51,30) {$a$}
        \put (28,45) {$-1$}
        \put (66,43) {$1$}
        \put (62,43) {$\tau$} 
        \put (46,46) {$0$}
        \put (65,55) {$\gamma_2$}
        \put (70,80) {$\gamma_1$}
    \end{overpic}
    \caption{In the proof of Theorem \ref{thm:weakNonHd_intro} we change the choice of contours as in Figure \ref{Fig2a} as follows: we merge $\gamma_2$ and $\gamma_3$ into a single contour $\gamma_2$ that goes around $[\tau,\infty)$. The shaded areas represent the regions where $\operatorname{Re} F(s)\geq \operatorname{Re} F(a^{-1})$. \label{Fig2aa}} 
    \end{figure}

\noindent \textit{Proof of Theorem \ref{thm:weakNonHd_intro}.}
Our starting point is \eqref{eq:defEllipdIn2}. For convenience, we let $\tilde\kappa = \kappa/\tilde{\nu}(X)^2$ and $\tilde \nu(X) = \nu(X)^\frac{1}{d}$. We use the short-hand notation $\zeta \zeta'=\zeta_1\zeta'_1+\zeta_2 \zeta'_2 + \ldots + \zeta_d \zeta'_d$, for any $\zeta, \zeta'\in \mathbb C^d$, and write
\begin{align*}
n \frac{(z+z')^2}{2} &= n \frac{\left(2 X + \frac{U+\overline V}{(\nu(X) N)^\frac{1}{d}}\right)^2}{4\tau} 
= (n+\tilde\kappa) \lvert X\rvert^2 + \frac{X(U+\overline V)}{\tilde\nu(X)} + \mathcal O\left(\frac{1}{n}\right),\\
n \frac{(z-z')^2}{2} &= \frac{(U-\overline V)^2}{4 n \tilde\nu(X)^2}+ \mathcal O\left(\frac{1}{n^2}\right) ,
\end{align*}
as $n\to\infty$. We deform $\gamma_0$ to the path from Figure \ref{Fig2aa} (with $U=V=0$, effectively). As before, we can argue that the saddle point contributions are negligible. For the integral $\gamma_2$ bending around $[1,\infty)$, we use local coordinates $s = 1 - \frac{t}{n}$. Then we have
\begin{align*}
n F(s) = \frac{1}{2} (n+\tilde\kappa) \lvert X\rvert^2 + \frac{X(U+\overline V)}{2 \tilde\nu(X)} - \tilde\kappa
+(1- \frac{1}{4} \lvert X\rvert^2) t - \frac{(U-\overline V)^2}{4 \tilde\nu(X)^2} \frac{1}{t} + \mathcal O\left(\frac{1}{n}\right).
\end{align*}
By $\tilde\gamma$ we denote a curve that bends around $(-\infty, 0]$, with positive orientation. Then
\begin{align} \label{eq:defTnu3}
-\frac{1}{2\pi i} \int_{\gamma_3} \frac{e^{n F(s)}}{s-\tau} \frac{ds}{(1-s^2)^\frac{d}{2}}
= -\frac{1}{2\pi i} e^{\frac{1}{2} (n+\tilde\kappa) \lvert X\rvert^2 + \frac{X(U+\overline V)}{2 \tilde\nu(X)}-\tilde\kappa} \frac{n^\frac{d}{2}}{2^\frac{d}{2}} \int_{\tilde\gamma} e^{\pi^2 \tilde\nu(X)^2 t - \frac{(U-\overline V)^2}{4 \tilde\nu(X)^2} \frac{1}{t}} 
\frac{-dt}{t^\frac{d}{2} (t-\tilde\kappa)} \left(1+\mathcal O\left(\frac{1}{n}\right)\right).
\end{align}
Now, for any $\zeta\in\mathbb C$, let us, for arbitrary variable $\nu$, define the integral
\begin{align} \label{eq:defTnu}
T(\nu) = -\frac{1}{2\pi i} \int_{\tilde\gamma} e^{\pi^2 \nu^2 t - \frac{\zeta^2}{4 t}} \frac{dt}{t^\frac{d}{2} (t-\tilde\kappa)}. 
\end{align}
We notice that
\begin{align*}
\frac{d}{d\nu} \left(e^{-\tilde\kappa\pi^2 \nu^2} T(\nu)\right)
= -\frac{1}{2\pi i} 2\pi^2 \nu e^{-\tilde\kappa \pi^2 \nu^2} \int_{\tilde\gamma} e^{\pi^2 \nu^2 t - \frac{\zeta^2}{4 t}} \frac{dt}{t^\frac{d}{2}}
= -\frac{1}{2\pi i} 2 \pi^d \nu^{d-1} e^{-\tilde\kappa \pi^2 \nu^2} \int_{\tilde\gamma} e^{t - \frac{(\pi \nu \zeta)^2}{4 t}} \frac{dt}{t^\frac{d}{2}}.
\end{align*}
By \cite[10.9.19]{DLMF}, we have
\begin{align*}
\frac{1}{2\pi i} \int_{\tilde\gamma} e^{t - \frac{(\pi \nu \zeta)^2}{4 t}} \frac{dt}{t^\frac{d}{2}}
= 
\left(\frac{\pi\nu\zeta}{2}\right)^{1-\frac{d}{2}} J_{\frac{d}{2}-1}(\pi \nu \zeta). 
\end{align*}
We conclude that
\begin{align*}
e^{-\tilde\kappa \pi^2\nu^2} T(\nu) = T(0^+) + \frac12(2 \pi)^{1+\frac{d}{2}} \int_{0}^\nu e^{-\tilde\kappa \pi^2 t^2} t^{d-1} (\zeta t)^{1-\frac{d}{2}} J_{\frac{d}{2}-1}(\pi \zeta t) dt.
\end{align*}
Notice that $\widehat{J}_\nu(z^2)=(2/z)^\nu J_\nu(z)$ is an even function in $z$.
By the substitution $t\to \frac{1}{(\pi \nu)^2}t $ in \eqref{eq:defTnu}, we find that $T(0^+)=0$. 
Substituting $\nu = \tilde\nu(X)$ and $\zeta^2 = \frac{1}{\tilde\nu(X)^2} (U-\overline V)^2$, and applying a substitution $t\to \tilde\nu(X) t$ in the integration, we get
\begin{multline} \label{eq:weakNonHd2}
-\frac{1}{2\pi i} \int_{\gamma_3} \frac{e^{n F(s)}}{(1-s^2)^\frac{d}{2}} \frac{ds}{s-\tau}
=
\pi^{1+\frac{d}{2}} n^\frac{d}{2} e^{\frac{1}{2} (n+\tilde\kappa) \lvert X\rvert^2 + \frac{X(U+\overline V)}{2 \tilde\nu(X)}-\tilde\kappa} 
e^{\tilde\kappa\pi^2\tilde\nu(X)^2} \tilde\nu(X)^{d} 
(2/\pi)^{1-\frac{d}{2}}
\\
\times 
\int_0^{1}  e^{-\kappa \pi^2t^2} t^{d-1} \widehat{J}_{\frac{d}{2}-1}\left((U-\overline V)^2\pi^2 t^2\right) dt
\left(1 + \mathcal O\left(\frac{1}{n}\right)\right),
\end{multline}
as $n\to\infty$. Concerning the weights, we have for each $j=1,\ldots, d$ that
\begin{align*}
\omega_n\left(X_j+\frac{U_j}{\nu(X)^\frac{1}{d} n}\right) 
= e^{-\frac{1}{2} n |X_j|^2} e^{-\frac{\tilde\kappa}{4} |X_j|^2- X_j \frac{\operatorname{Re}(U_j)}{\tilde\nu(X)}}
e^{- \frac{\operatorname{Im}(U_j)^2}{\kappa}} \left(1 + \mathcal O\left(\frac{1}{n}\right)\right)
\end{align*}
as $n\to\infty$. Hence we have
\begin{multline} \label{eq:weakNonHdAA}
\frac{n^d}{\pi^d (1-\tau^2))^\frac{d}{2}}\prod_{j=1}^d \sqrt{\omega_n\left(X_j+\frac{U_j}{\nu(X)^\frac{1}{d} n}\right) \omega_n\left(X_j+\frac{V_j}{\nu(X)^\frac{1}{d} n}\right)}\\
= \tilde\nu(X)^d \frac{n^\frac{3 d}{2}}{\pi^\frac{d}{2} (2\pi \kappa)^\frac{d}{2}} e^{-\frac{1}{2} n \lvert X\rvert^2} 
e^{-\frac{\tilde\kappa}{4} \lvert X\rvert^2} 
e^{- \frac{X \operatorname{Re} (U+V)}{2\tilde\nu(X)}}
e^{- \frac{\lvert \operatorname{Im}(U)\rvert^2+\lvert \operatorname{Im}(V)\rvert^2}{2 \kappa}}
\left(1 + \mathcal O\left(\frac{1}{n}\right)\right).
\end{multline}
Combining \eqref{eq:weakNonHdAA} with \eqref{eq:weakNonHd2}, and inserting these in \eqref{eq:defEllipdIn2}, we obtain the result.\\
\qed

Finally, we prove the edge limit in the regime of weak non-Hermiticity, Theorem \ref{thm:EGEdbulkWeakNonH_intro}.\\
\begin{figure}
    \centering
    \begin{overpic}[width=0.5\textwidth]{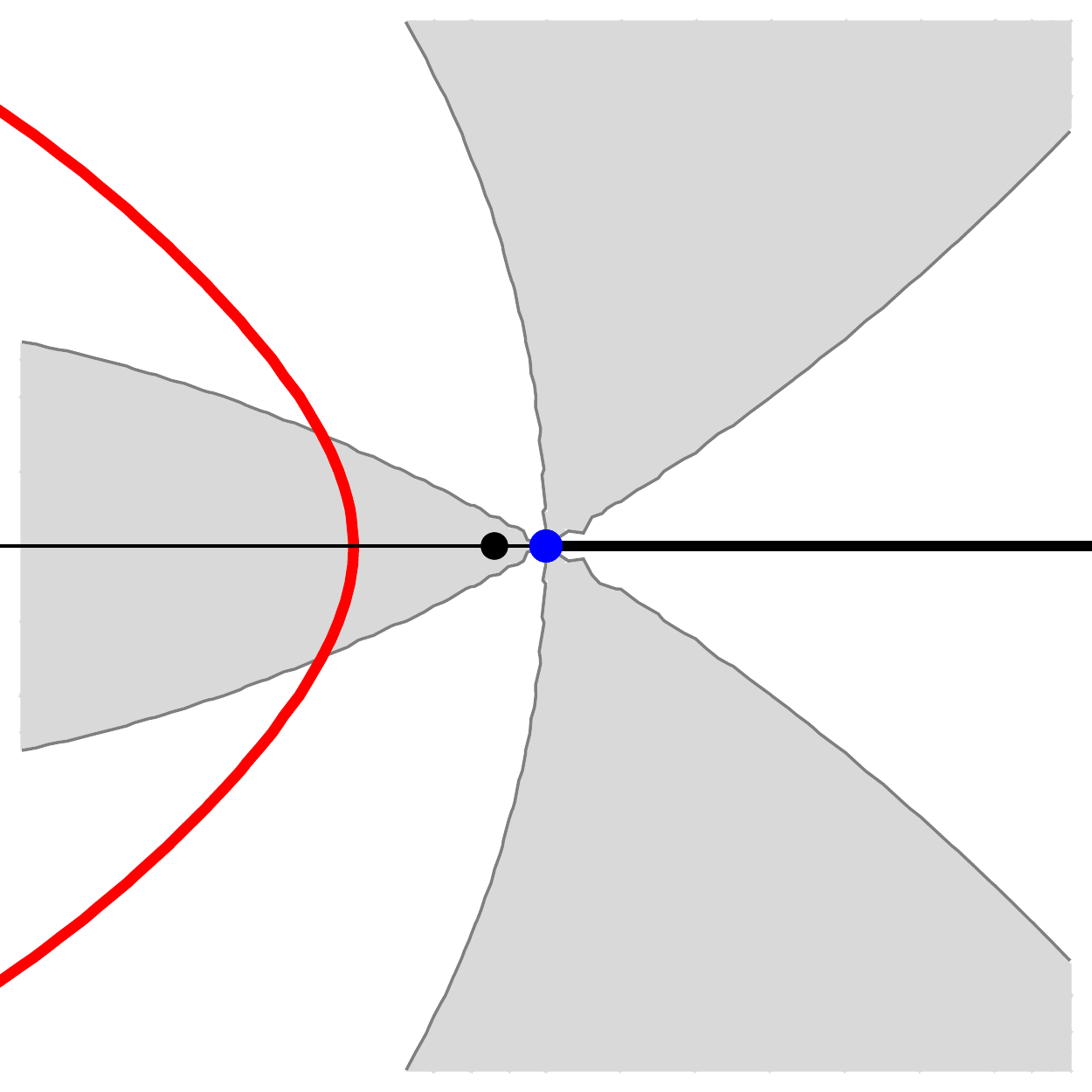}    
        \put (50,45) {$1$}
        \put (45,45) {$\tau$} 
    \end{overpic}
    \caption{In the proof of Theorem \ref{thm:EGEdbulkWeakNonH_intro}  we change the choice of contours as in Figure \ref{Fig2aaa}, but we make sure that it lies at the left of $\tau=1-\kappa/n^{1/3}$. In the picture we have zoomed in around $s=1$ to illustrate this. The shaded areas represent the regions where $\operatorname{Re} F(s)\geq \operatorname{Re} F(1)$. \label{Fig2aab}} 
    \end{figure}

\begin{proof}[Proof of Theorem \ref{thm:EGEdbulkWeakNonH_intro}.]
The proof is similar to the 
proof of Theorem \ref{thm:fermionsEdgeDensity_intro}.
First, we notice that 
\begin{align*}
\frac{n}{2} (z+z')^2 &= n\frac{\lvert X\rvert^2}{\tau} + n^\frac{1}{3} \frac{X (U+\overline V)}{\tau} + \mathcal O(n^{-\frac{1}{3}})
= 4 n + \kappa^2 n^\frac{1}{3} + \kappa^3 + (2 n^\frac{1}{3} + \kappa) \Omega (U+\overline V) + \mathcal O(n^{-\frac{1}{3}}),\\
\frac{n}{2} (z-z')^2 &= n^{-\frac{1}{3}} \frac{(U-\overline V)^2}{4\tau} = n^{-\frac{1}{3}} \frac{(U-\overline V)^2}{4} + \mathcal O(n^{-\frac{2}{3}}). 
\end{align*}
As in the proof of Theorem \ref{thm:fermionsEdgeDensity_intro}, we deform $\gamma_0$ to the contour from Figure \ref{Fig2aaa}. However, now $ \tau\neq 1$ (if $\kappa \neq 0$), and we have to make sure that $\gamma_0$ is at the left of $\tau$ but intersects the real line at a point at distance of order $n^{-1/3}$ to 1. See also Figure \ref{Fig2aab}.

Again, we use the local coordinates $s=1-t/n^{-1/3}$. Some bookkeeping  yields
\begin{align*}
\frac{n}{2} (z+z')^2 \frac{s}{s+1} &= 2n + \frac{1}{2}\kappa^2 n^\frac{1}{3} + \frac{1}{2}\kappa^3 + (n^\frac{1}{3} + \frac{1}{2}\kappa) \Omega (U+\overline V)\\
&\quad - (n^\frac{2}{3} + \frac{1}{4} \kappa^2) t - \frac{1}{2} \Omega (U+\overline V) t
 - \frac{1}{2} n^\frac{1}{3} t^2 - \frac{1}{4} t^3
 + \mathcal O(n^{-\frac{1}{3}})
\end{align*}
and
\begin{align*}
\frac{n}{2} (z-z')^2 \frac{s}{s-1} &= -\frac{(U-\overline V)^2}{4 t} + \mathcal O(n^{-\frac{1}{3}}).
\end{align*}
Hence, using $-n \log s = n^{\frac{2}{3}} t + \frac{1}{2} n^{\frac{1}{3}} t^2 + \frac{1}{3} t^3 +  \mathcal O(n^{-\frac{1}{3}})$, we have
\begin{align*}
nF(s) &=  2n - \kappa n^\frac{2}{3} + \frac{1}{6}\kappa^3 + (n^\frac{1}{3} + \frac{1}{2}\kappa) \Omega (U+\overline V)\\
&\quad - \frac{1}{4} (\kappa^2 + 2 \Omega (U+\overline V)) t + \frac{1}{12} t^3 -\frac{(U-\overline V)^2}{4 t} + \mathcal O(n^{-\frac{1}{3}}).
\end{align*}
Hence
\begin{multline} \label{eq:EGEdBulkWeakNonH2a}
-\frac{1}{2\pi i} \int_{\gamma_3} \frac{e^{n F(s)}}{(1-s^2)^\frac{d}{2}} \frac{ds}{s-\tau}\\
= -\frac{1}{2\pi i} e^{2n - \kappa n^\frac{2}{3} + \frac{1}{6}\kappa^3 + (n^\frac{1}{3} + \frac{1}{2}\kappa) \Omega (U+\overline V)} \frac{n^\frac{d}{6}}{2^\frac{d}{2}} \int_{\tilde\gamma} e^{- \frac{1}{4} (\kappa^2 + 2 \Omega (U+\overline V)) t + \frac{1}{12} t^3 -\frac{(U-\overline V)^2}{4 t}} \frac{dt}{t^\frac{d}{2} (t-\kappa)} \left(1+\mathcal O\left(n^{-\frac{1}{3}}\right)\right),
\end{multline}
where $\tilde\gamma$ is a curve in the right half place that starts at $e^{-\pi i/3} \infty$ and ends at $e^{\pi i/3} \infty$.
On the other hand, we have for each $j=1,\ldots, d$ that
\begin{align*}
\omega_n\left(X_j+\frac{U_j}{n^\frac{2}{3}}\right)
= e^{-(2n - \kappa n^\frac{2}{3}) \Omega_j^2} 
e^{-n^\frac{1}{3} 2\Omega_j \operatorname{Re}U_j
}
e^{- \frac{\lvert \operatorname{Im} U_j\rvert^2}{\kappa}}
\left(1 + \mathcal O\left(n^{-\frac{1}{3}}\right)\right).
\end{align*}
Thus we have
\begin{multline} \label{eq:weakNonHd}
\frac{n^d}{\pi^d (1-\tau^2)^\frac{d}{2}}\prod_{j=1}^d \sqrt{\omega_n\left(X_j+\frac{U_j}{n^\frac{2}{3}}\right)\omega_n\left(X_j+\frac{V_j}{n^\frac{2}{3}}\right)}\\
= \frac{n^\frac{7 d}{6}}{\pi^\frac{d}{2} (2\pi \kappa)^\frac{d}{2}} e^{-2n + \kappa n^\frac{2}{3}} 
e^{- n^\frac{1}{3} \Omega \operatorname{Re}(U+V)}
e^{- \frac{\lvert \operatorname{Im}(U)\rvert^2+\lvert \operatorname{Im}(V)\rvert^2}{2 \kappa }}
\left(1 + \mathcal O\left(n^{-\frac{1}{3}}\right)\right).
\end{multline}
Inserting \eqref{eq:EGEdBulkWeakNonH2a} and \eqref{eq:weakNonHd} in \eqref{eq:defEllipdIn2}, we obtain
\begin{multline*}
\frac{1}{n^\frac{4 d}{3}} \mathbb K_n\left(X+\frac{U}{n^\frac{2}{3}}, X+\frac{V}{n^\frac{2}{3}}\right)
= 
\frac{1}{(2\pi)^\frac{d}{2}}e^{\frac{1}{6}\kappa^3+\frac{1}{2}\kappa \Omega (U+\overline V)}
e^{i n^\frac{1}{3} \Omega \operatorname{Im}(U-V)}
\frac{e^{-\frac{\lvert \operatorname{Im} U\rvert^2+\lvert \operatorname{Im} V\rvert^2}{2\kappa}}}{(2\pi \kappa)^\frac{d}{2}}\\
\times \frac{1}{2\pi i} \int_{\tilde\gamma} e^{- \frac{1}{4} (\kappa^2 + 2 \Omega (U+\overline V)) t + \frac{1}{12} t^3 -\frac{(U-\overline V)^2}{4 t}} \frac{dt}{t^\frac{d}{2} (t-\kappa)} +\mathcal O\left(n^{-\frac{1}{3}}\right).
\end{multline*}
In order to move the factor $t-\kappa$ to the exponent we use \eqref{eq:1/t}.
Furthermore, we apply again \eqref{eq:propagator} to the term with $1/t$ in the exponent, and after changing variables $t\to 2^{2/3}t$, we obtain
\begin{multline*}
\frac{1}{n^\frac{4 d}{3}} \mathbb K_n\left(X+\frac{U}{n^\frac{2}{3}}, X+\frac{V}{n^\frac{2}{3}}\right)
= 
\frac{2^{2/3}}{(\pi\kappa)^\frac{d}{2}}e^{\frac{1}{6}\kappa^3+\frac{1}{2}\kappa \Omega (U+\overline V)}
e^{i n^\frac{1}{3} \Omega \operatorname{Im}(U-V)}
e^{-\frac{\lvert \operatorname{Im} U\rvert^2+\lvert \operatorname{Im} V\rvert^2}{2\kappa}}\\
\times \frac{1}{2\pi i} \int_{\tilde\gamma} 
\int_0^\infty\int_{\mathbb{R}^d}
e^{-iQ(U-\overline{V})+ \kappa s} e^{-t\Xi+\frac13 t^3}dtds \frac{d^dQ}{(2\pi)^d}+\mathcal O\left(n^{-\frac{1}{3}}\right),
\end{multline*}
with $\Xi=2^{2/3}(Q^2+\frac14(\kappa^2+2\Omega(U+\overline{V})+s)$. Using again the integral representation of the Airy function \eqref{eq:Airy-rep} we arrive at the desired result. 
\end{proof}

\section*{Acknowledgments}

It is a great pleasure and honour to submit this work to the special issue dedicated to the memory of Freeman Dyson. We are indebted to him for his work on the mathematical foundations of Random Matrix Theory that has found so many beautiful applications, apart from his further ground breaking contributions in theoretical physics, notably quantum field theory.

GA and LM are partly funded by the Deutsche Forschungsgemeinschaft (DFG, German Research Foundation) – SFB 1283/2 2021 – 317210226 "Taming uncertainty and profiting from randomness and low regularity in analysis, stochastics and their applications". The Knut and Alice Wallenberg Foundation are thanked for funding (GA) and the Department of Mathematics at KTH Stockholm for hospitality (GA, LM), where this work was initiated. LM was also supported by a PhD fellowship of the Flemish Science Foundation (FWO). 
MD was supported  by the Swedish Research Council (VR), grant no 2016-05450 and grant no. 2021-06015, and the European Research Council (ERC),  Grant Agreement No. 101002013. 

\addcontentsline{toc}{section}{References}

\end{document}